\newcommand{\oo}{\mathscr{O}}
\newcommand{\1}{\textbf{\textup{1}}}
\newcommand{\Ru}{R_U}
\newcommand{\Rvh}{R_{\widehat V}}
\newcommand{\Ruh}{R_{\widehat U}}
\newcommand{\A}{\mathscr{A}}
\newcommand{\vv}{\mathcal{V}}
\newcommand{\R}{\mathbb{R}}
\newcommand{\E}{\mathbb{E}}
\newcommand{\pr}{\mathbb{P}}
\newcommand{\N}{\mathbb{N}}
\newcommand{\pp}{{\mathcal{P}}}
\newtheorem{lemma}{Lemma}[section]
\newtheorem{theorem}{Theorem}[section]
\newtheorem{definition}{Definition}
\newtheorem{corollary}{Corollary}[section]
\newtheorem{prop}{Proposition}[section]
\newtheorem{remark}{Remark}[section]
\newtheorem{assumption}{Assumption}
\title{Vintage Factor Analysis with Varimax Performs Statistical Inference}
\author{Karl Rohe and Muzhe Zeng}
\begin{document}
\maketitle

\begin{abstract}

Psychologists developed Multiple Factor Analysis to decompose multivariate data into a small number of interpretable factors without any \textit{a priori} knowledge about those factors  \citep{thurstone1935vectors}. 
In this form of factor analysis, the Varimax ``factor rotation'' is a key step to make the factors interpretable \citep{kaiser}. Charles Spearman and many others objected to factor rotations because the factors seem to be rotationally invariant
 \citep{thurstone1947, anderson1956statistical}.  These objections are still reported in all contemporary multivariate statistics textbooks.  
This is an engima because this vintage form of factor analysis has survived and is widely popular because, empirically, the factor rotation often makes the factors easier to interpret. 
We argue that the rotation makes the factors easier to interpret because, in fact,  the Varimax factor rotation performs  statistical inference.
We show that Principal Components Analysis (PCA) with the Varimax rotation
provides a unified spectral estimation strategy for a broad class of modern factor models, including the Stochastic Blockmodel and a natural variation of Latent Dirichlet Allocation (i.e., ``topic modeling'').
In addition, we show that Thurstone's widely employed sparsity diagnostics implicitly assess a key ``leptokurtic'' condition that makes the rotation statistically identifiable in these models. 
Taken together, this shows that the know-how of Vintage Factor Analysis performs statistical inference, reversing nearly a century of statistical thinking on the topic. 
With a sparse eigensolver, 
PCA with Varimax is both fast and stable.
Combined with Thurstone's straightforward diagnostics, 
this vintage approach is suitable for a wide array of  modern  applications.

\end{abstract}

\vspace{.05in} \hspace{.1in}
\textbf{Keywords}: Factor analysis, Independent Component Analysis, Spectral Clustering

\vspace{.1in} 

Outside the language of mathematical statistics, Louis~Leon Thurstone, Henry Kaiser, and other psychologists developed the first forms of Multiple Factor Analysis, or what is referred to herein as Vintage Factor Analysis 
\citep{thurstone1935vectors, thurstone1947, kaiser}.  There are two simultaneous aims of Vintage Factor Analysis.  The first aim is to provide a low dimensional approximation of the observed data; in this sense, it is like Principal Components Analysis (PCA).\footnote{PCA 
is not the preferred approach in Vintage Factor Analysis.  See Remark \ref{remark:pcavsfactor} for a further discussion.}
 The second aim is to ensure that each factor in the lower dimensional representation, each coordinate, represents a ``scientifically meaningful category'' \citep{thurstone1935vectors}.
 A Varimax rotation of the principal components is a simple and popular way to find such meaningful dimensions \citep{kaiser, jolliffe2002principal}.

 For example, suppose $n$ students take an exam with $d$ questions, producing a $d$ dimensional vector of data for each individual.  
Principal components analysis with $k$=2 dimensions will roughly approximate the students' $d$ dimensional data; this is the first aim of factor analysis.
In order to make those two dimensions more interpretable, these principal components are rotated with the Varimax rotation.  In other words, Varimax provides a different coordinate basis for the two dimensional space.  Selecting the basis does not change the quality of the lower dimensional approximation.  However, after inspecting the $k$=2 Varimax coordinates, an analyst might find that one coordinate represents ``linguistic intelligence'' and the other coordinate represents ``logical-mathematical intelligence.''  This form of data analysis is often called ``exploratory'' because the factor dimensions are computed from the data without needing any hypothesis that specifies them.

Factor analysis is an enigma.  The key source of the controversy in Vintage Factor Analysis is the second aim, producing coordinates that correspond to ``scientifically meaningful categories.''
\cite{anderson1956statistical} formalized the concern by showing that under the Gaussian factor model, all rotations achieve the same fit.
This result implies that under the Gaussian factor model, the individual Varimax coordinates  cannot estimate anything meaningful. 
\cite{cosma} gives the conventional interpretation of this result,
 ``If we can rotate the factors as much as we like without consequences, how on Earth can we interpret them?'' 
Contemporary multivariate analysis textbooks all discuss the result from \cite{anderson1956statistical}, but then go on to report the empirical benefits of the factor rotation.  \cite{ramsay2007applied} says ``It is well known in classical multivariate analysis that an appropriate rotation of the principal components can, on occasion, give components ... more informative than the original components themselves.''  \cite{johnson2007applied} says ``A rotation of the factors often reveals a simple structure and aids interpretation.''  \cite{bartholomew2011latent} says ``Rotation assumes a very important role when we come to the interpretation of latent variables.''  \cite{jolliffe2002principal} says  ``The simplification achieved by rotation can help in interpreting the factors or rotated PCs.''  
These empirical findings appear to be inconsistent with the results of \cite{anderson1956statistical} that are described in those same textbooks.  


Varimax is the most popular way of computing a factor rotation \citep{kaiser}.  It is discussed in all of the textbooks cited in the previous paragraph. 
\cite{fda} describes Varimax as an ``invaluable tool in multivariate analysis.''  
It is contained in the base \texttt{R} packages, akin to \texttt{kmeans}, and is so popular that it is often not properly cited.
 Given an $n \times k$ matrix $U$, with columns that form an orthonormal basis (e.g. as in PCA), Varimax 
finds a $k\times k$ orthogonal matrix $R$ 
to maximize the following function over the set of $k \times k$ orthonormal matrices
\begin{equation}\label{eq:Varimax}
v(R, U) = 
\sum_{\ell=1}^k \frac{1}{n} \sum_{i=1}^n \left([UR]_{i \ell}^4 - 
\left(\frac{1}{n} \sum_{q=1}^n [UR]_{q \ell}^2\right)^2\right).
\end{equation}
\cite{kaiser} suggests normalizing each row of $U$ such that each row has a sum of squares equal to one. For simplicity, we do not use this normalization herein.\footnote{In \texttt{R}, the function \texttt{varimax} has a default argument \texttt{normalize = TRUE}.  Note that when $U$ has orthogonal columns (as is the case for PCA) and normalization is not used, then the second term in Varimax is a  constant function of the matrix $R$.  In such cases, this term can be ignored without changing the optimum.}


%


Factor rotations have survived for nearly a century because a rotation often makes the factors more interpretable. Yet the classical theoretical results do not explain how or why.  
Maxwell's Theorem  resolves the enigma
[\cite{maxwell} and III,4 in \cite{feller}].
It characterizes the multivariate  Gaussian distribution as the only distribution of independent random variables that is rotationally invariant.  This implies that
the rotation is partially identifiable, so long as the factors are independent and come from \textit{any} non-Gaussian distribution.  As such, 
if the independent latent factors are generated from a non-Gaussian distribution, then
the factor rotation has the potential to identify these factors as ``scientifically meaningful categories.'' 
See Figure \ref{fig:rotationallyInvariant} for an example in $k=2$ dimensions.\footnote{A common point of confusion is to presume that the factors must be Gaussian if we are using PCA; see Section \ref{sec:intuition} and Remark \ref{remark:pca} to see how PCA performs with non-Gaussian factors.}


 \begin{figure}[h] 
   \centering
   \includegraphics[width=5.5in]{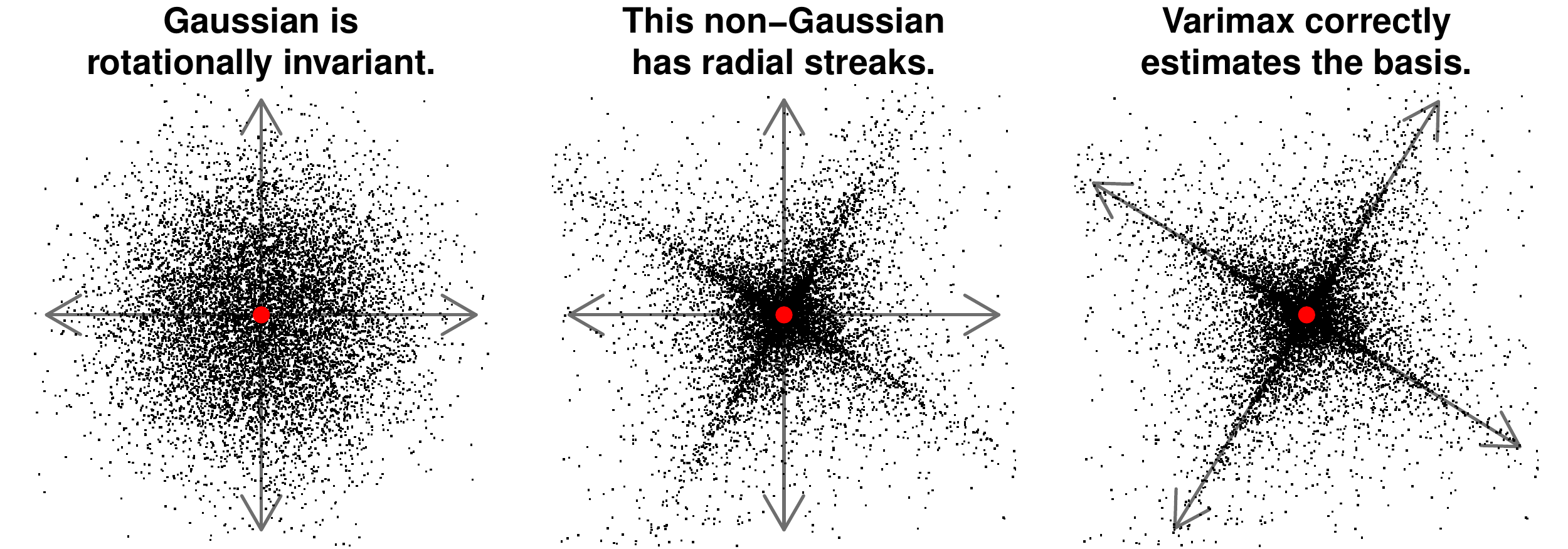} 
   \caption{
   Maxwell's Theorem characterizes the multivariate Gaussian distribution (left panel) as the only rotationally invariant distribution of independent variables. 
   The center panel and the right panel give the same data; the only difference is that the right panel gives the basis that is well estimated by Varimax. 
   }
   \label{fig:rotationallyInvariant}
\end{figure}

\begin{figure}[h]
\centering
\textbf{In this data example, the principal components (left) have radial streaks.
\\ Varimax aligns the streaks with the axes (right). \\
Varimax rotated PCA is Vintage Sparse PCA, \texttt{vsp}.
}

\subfigure[Principal Components]{\label{fig:scatterRotateA} \includegraphics[width = 2.8in]{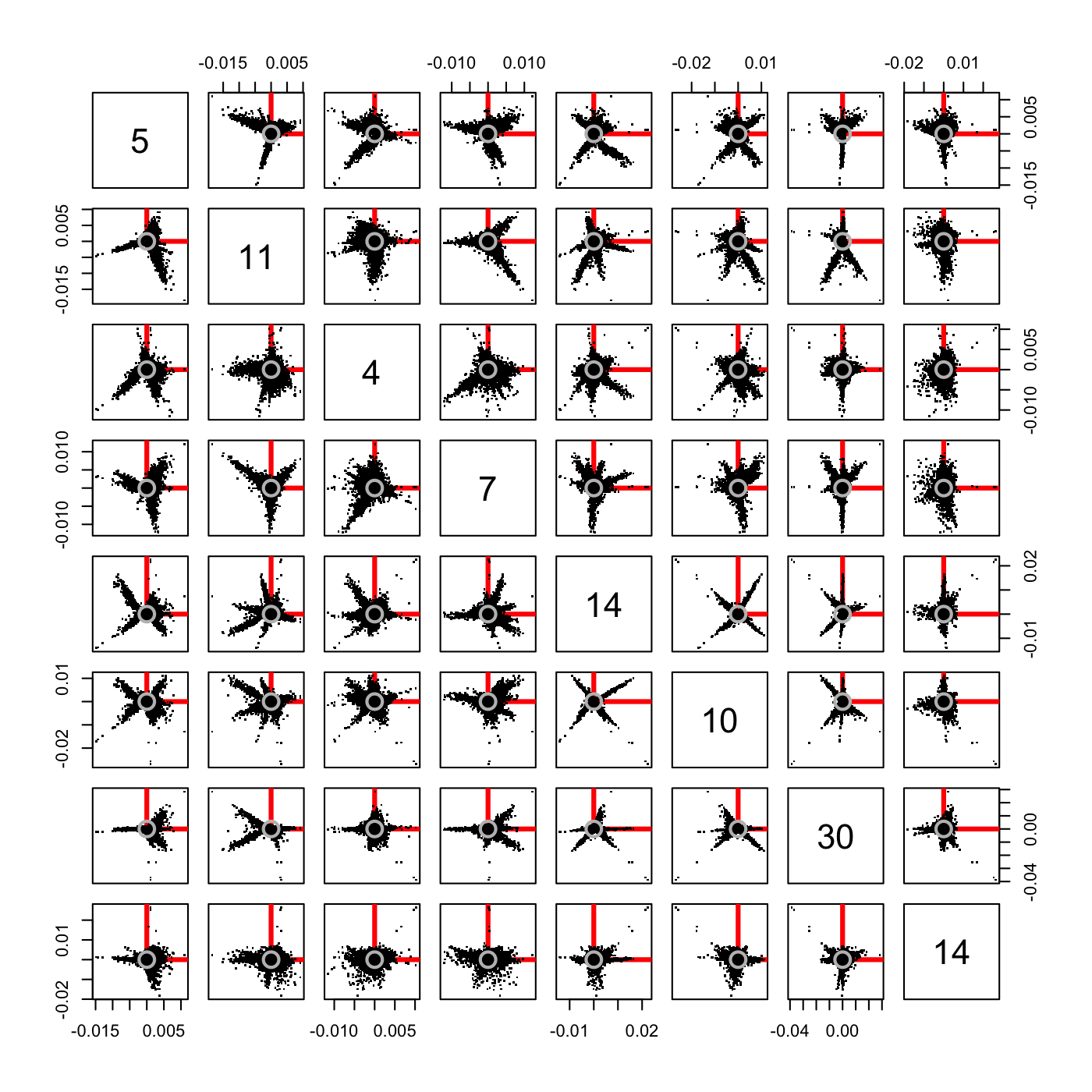} }
\subfigure[After Varimax rotation]{\label{fig:scatterRotateB} \includegraphics[width = 2.8in]{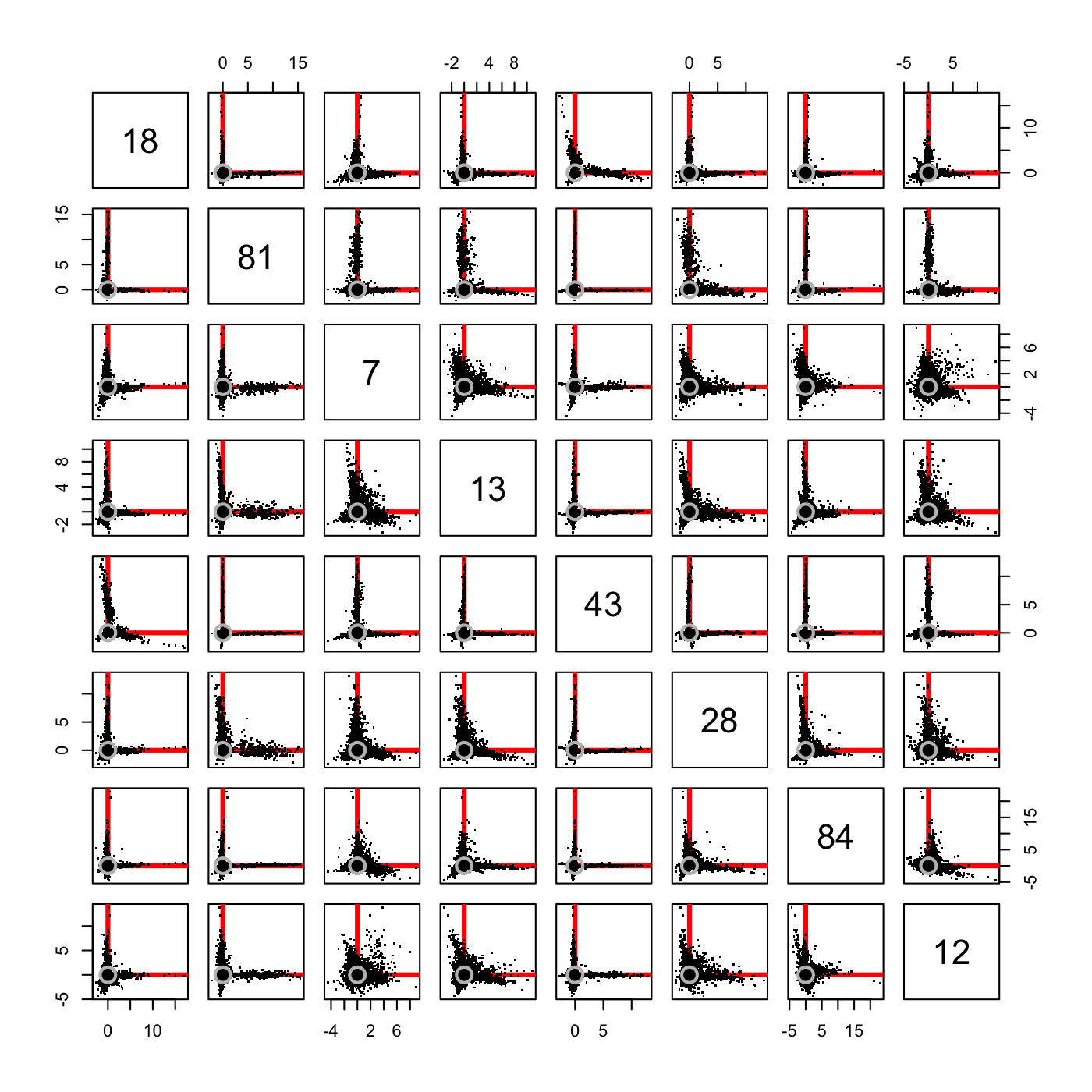} }
\caption{In this example, the data is a $300,000 \times 102,660$ document-term matrix of $300,000$ New York Times articles. Each small panel on the left is a scatter plot of two principal components. Each small panel on the right is a scatter plot of two Varimax rotated components.   The numbers down the diagonal give the sample kurtosis of the corresponding component. 
See Section \ref{sec:nyt} for more details. 
}

\label{fig:nytFactors}
\end{figure}


Maxwell's theorem and some of the core factor analysis methodologies have been rediscovered and further developed in the literature on Independent Components Analysis (ICA) \citep{ica}.  More recently, \cite{anandkumar2014tensor} showed how a tensor decomposition can estimate a broad class of factor models that is closely related to the class studied below.  
This paper demonstrates that an old approach with historical precedence to ICA is sufficient; tensor methods are not required.  This old approach comes with a suite of know-how and diagnostic practices that are described in Section \ref{sec:diagnostics}.  This old approach provides a unified spectral estimation strategy and diagnostic practices that can be applied to many different problems in multivariate statistics.  It relates Projection Pursuit, Independent Components Analysis, Non-Negative Matrix Decompositions, Latent Dirichlet Allocation, and Stochastic Blockmodeling.

Figure \ref{fig:nytFactors} shows a motivating data example with 
a set of 300,000 New York Times articles \citep{nyt}.  
In this example, the data matrix $A$ is a $300,000 \times 102,660$ document-word matrix, where $A_{ij} \in \{0,1\}$ indicates if document $i$ contains word $j$.
Figure \ref{fig:scatterRotateA} plots nine of the leading principal components.  Figure \ref{fig:scatterRotateB} plots these components after a Varimax rotation.  
  Section \ref{sec:alg} describes this procedure in more detail.  See Section \ref{sec:nyt} for further details on the data analysis in Figure \ref{fig:nytFactors}.  


All the plots in Figure \ref{fig:nytFactors} display ``radial streaks,'' a phrase used in \cite{thurstone1947} to diagnose factor rotations.  
In Figure \ref{fig:scatterRotateB}, the Varimax rotation aligns the streaks with the coordinate axes.  This is precicely the desired outcome of a factor rotation, to  make the rotated components approximately sparse.  
For this reason, this paper refers to Varimax rotated PCA as Vintage Sparse PCA (\texttt{vsp}). Modern notions of Sparse PCA (e.g. \cite{d2005direct}) presume that the principal components are themselves sparse.  In the vintage notion of sparse PCA, it is presumed that there exists a set of sparse basis vectors for the principal component subspace. That is, perhaps the principal components are not sparse, but they become sparse after a rotation.  These are two distinct notions of subspace sparsity.  
\cite{vince} referred to the vintage notion of sparsity as \textit{column-wise sparsity}.  


Theorem \ref{thm:main} shows that, under certain conditions, \texttt{vsp}  estimates the following semi-parametric factor model that generalizes the Stochastic Blockmodel and Latent Dirichlet Allocation.

\begin{definition}\label{def:model}  
Let $Z \in \R^{n \times k}$ and $Y \in \R^{d \times k}$ be  latent factor matrices.  
Under the \textbf{semi-parametric factor model}, we observe $A \in \R^{n \times d}$ which has independent elements and has expectation 
\begin{equation}\label{eq:model}
\E(A| Z, Y) = ZBY^T, \quad  \mbox{where } B \in \R^{k \times k} \mbox{ is not necessarily diagonal.}
\end{equation}
\end{definition}
Importantly, in the semi-parametric factor model, the columns of $Z$ are not the principal components.  
However, if the elements of $Z$ are independently generated from a ``leptokurtic'' distribution, then a Varimax rotation of the principal components estimates the columns of $Z$.  This leptokurtic condition this is the key identifying assumption for Varimax and \texttt{vsp}.
\begin{definition} \label{def:leptokurtic} 
For a random variable $X \in \R$ with four finite moments, let $\eta = \E(X)$ and define the $j$th centered moment as $\eta_j = \E(X - \eta)^j$ for $j = 2, 4$.  The kurtosis of $X$ is $\kappa =  \eta_4 / \eta_2^2 $. The random variable $X$ and its distribution are \textbf{leptokurtic} if $\kappa>3$.
\end{definition}
\noindent
Kurtosis was originally named and used by Pearson around 1900 to measure whether a symmetric distribution was Gaussian \citep{fiori2009karl}. 
 For any Gaussian random variable, $\kappa=3$.  As such, $\kappa \ne 3$ indicates a non-Gaussian distribution.  
Roughly speaking, when $\kappa >3$, the distribution has a heavier tail than Gaussian.  

Section \ref{sec:alg} describes the \texttt{vsp} algorithm and some variations on the algorithm.  Section \ref{sec:diagnostics} reinterprets the sparsity diagonstics developed in \cite{thurstone1935vectors, thurstone1947} to show that they implicitly assess the key identifying assumption for \texttt{vsp} to estimate the semi-parametric factor model (i.e., whether the factors in the columns of $Z$ appear leptokurtic).
In particular, Section \ref{sec:thurstone} discusses Thurstone notion of ``simple structure'' (a form of sparsity), his conjecture that simple structure resolves the rotational invariance, and his sparsity diagnostics that are described in modern textbooks, built into the base \texttt{R} packages for factor analysis, and used routinely in practice.  
Then, Theorem \ref{thm:sparsity} shows that any random variable $X$ that satisfies $P(X = 0) > 5/6$ (i.e., it is sparse) is necessarily leptokurtic.  
%
%
%
In this way, Thurstone's sparsity diagonstics and the know-how of Vintage Factor Analysis 
can be reinterpreted as assessing a key identifiability assumption for Varimax.

Sections \ref{sec:intuition} gives intuition for why \texttt{vsp} can estimate the latent factors by giving population results.  The first results show that the column space of the principal components of $\A = E(A|Z,Y) = ZBY^T$ equals to the column space of $Z$. 
Then, a Varimax rotation of the principal components specifies a new set of basis vectors for that column space.
Under the identifying assumption where the elements of $Z$ are  generated independently from a leptokurtic distribution and the entire distribution of $Z$ is known (i.e., infinite sample size), Theorem \ref{thm:varimax1} shows that each of the new basis vectors is estimating an individual column of $Z$ (up to a sign change).  
%
If the elements of $Y$ satisfy the same conditions required for $Z$, then $Y$ and $B$ can also be estimated, \textit{even when $B$ is not diagonal}.  Section \ref{sec:theoremstatement} gives the main theoretical result, Theorem \ref{thm:main}, which shows that \texttt{vsp} can estimate $Z$, when the matrix $A$ is high dimensional and random. This result allows for $A$ to be sparse and is enabled by recent technical developments that provide ``element-wise'' eigenvector bounds for random graphs \citep{erdHos2013spectral, cape2019signal, mao2018overlapping}.  
%
%
%
%
%
%
Section \ref{sec:examples} describes how the broad class of semi-parametric factor models includes the Stochastic Blockmodel, several of its generalizations, and a natural extension of Latent Dirichlet Allocation. Corollaries \ref{corollary:dcsbm} and \ref{corollary:pois} extend Theorem \ref{thm:main} to these models.

\vspace{.1in}\noindent
\textbf{Key Notation:} Let $\oo(k) = \{R \in \R^{k \times k}: R^T R= R R^T = I_k\}$ denote the set of $k\times k$ orthonormal matrices.  Let $\1_a \in \R^a$ be a column vector of ones.  Let $I_d$ denote the $d \times d$ identity matrix.  For $x \in \R^d$, let $diag(x) \in \R^{ d\times d}$ be a diagonal matrix with $diag(x)_{ii} = x_i$.  For $M \in \R^{a \times b}$, define $M_i \in \R^b$ as the $i$th row of $M$ and $\|M\|_{p\rightarrow \infty} = \max_i \|M_i\|_p$, for $p\ge 1$ and $\ell_p$ norm for vectors $\|\cdot\|_p$. Let $\|M\|_F$ be the Frobenius norm, $\|M\|$ be the spectral norm, $\|M\|_{\infty}$ be the maximum absolute row sum of $M$, and $\|M\|_{\max}$ be the maximum element of $M$ in absolute value. 
For sequences $x_n,y_n \in \R$, define $x_n \asymp y_n$ to mean that $x_n \rightarrow \infty$ and $y_n \rightarrow \infty$  and there exists an $N, \epsilon,$ and $c$ all in $(0, \infty)$ such that $x_n/y_n\in (\epsilon, c)$ for all $n>N$.  Define  $x_n \succeq y_n$ to mean that for any $\epsilon\in (0, \infty)$, there exists an $N < \infty$ such that for all $n>N$, $x_n / y_n > \epsilon>0$.   Define $[k] = \{1, \dots, k\}$.

\section{\texttt{vsp}: Vintage Sparse PCA}\label{sec:alg}

This section describes the methodological details of 
Vintage Sparse PCA (\texttt{vsp}).
First, the algorithm is stated.  Then, 
Remarks \ref{remark:scalingstep}, \ref{remark:powerMethod}, and \ref{remark:uncentering} describe ways in which  \texttt{vsp} can be modified for certain settings; Table \ref{table:options} summarizes these settings.   Section \ref{sec:nyt} illustrates the algorithm with a corpus of New York Times articles; this is the analysis that generated  Figure \ref{fig:nytFactors} above.  



\begin{enumerate}
\item[] Algorithm: \texttt{vsp}
\item[-]  Input $A \in \R^{n \times d}$ and desired number of dimensions $k$. 
%
%
%
\item \textbf{Centering} (optional).  Define row, column, and grand means, 
\[\widehat \mu_r = A\1_d/d \in \R^n, \quad 
\widehat \mu_c = \1_n^T A/n \in \R^d, \quad
\widehat \mu_. = \1_n^T A \1_d /(nd) \in \R.\] 
Here $\widehat \mu_r$ is a column vector and $\widehat \mu_c$ is a row vector.  Define 
\begin{equation} \label{eq:centeredA}
\widetilde A = A - \widehat \mu_r\1_d^T - \1_n \widehat \mu_c + \widehat \mu_. \1_n \1_d^T \in \R^{n \times d}.
\end{equation}
If $A$ is large and sparse, step 1 and 2 can be accelerated. See Remark \ref{remark:powerMethod}.

\item \textbf{SVD}.  If centering is being used, then compute the top $k$ left and right singular vectors of $\widetilde A$, $\widehat U\in \R^{n \times k}$ and $\widehat V\in \R^{d \times k}$. These are the principal components and their loadings.  Let $\widehat D \in \R^{k \times k}$ be a diagonal matrix  containing the corresponding singular values.  So, $\widetilde A \approx \widehat U\widehat D\widehat V^T$.  If centering is not being used, then use the original input matrix $A$ instead of $\widetilde A$.
\item \textbf{Varimax}.  
Compute   the orthogonal matrices that maximize Varimax, 
$v(R, \widehat U)$ and $v(R, \widehat V)$.  Define them as $\Ruh, \Rvh \in \oo(k)$ respectively.

\item[-]  Output:
\begin{equation}
\widehat Z = \sqrt{n} \widehat U \Ruh, \quad \widehat Y = \sqrt{d} \widehat V \Rvh, \quad \mbox{ and } \quad \widehat B =  R_{\widehat U}^T\widehat D R_{\widehat V} / \sqrt{nd}
\end{equation}
\end{enumerate}


In modern applications where the row sums (or column sums) of $A$ are highly heterogeneous, the scaling step in the next remark is often considered before a spectral decomposition. One can apply this step before \texttt{vsp} and input the scaled matrix $L$ into \texttt{vsp}. This step does not appear as part of Vintage Factor Analysis.  Rather, it has emerged from recent work on spectral clustering \citep{chaudhuri2012spectral, amini2013pseudo}. 

\begin{remark} \label{remark:scalingstep} [Optional scaling step]   Define the row ``degree'', the row regularization parameter, and the diagonal degree matrix as
\[deg_r = A\1_d \in \R^n, \quad
\tau_r = \1_n^Tdeg_r /n \in \R,\quad
D_r = diag(deg_r + \tau_r \1_n) \in \R^{n \times n}.\]
Similarly, define the column quantities $deg_c, \tau_c, D_c$ with $deg_c = \1_n^T A \in \R^d$ and $\tau_c = deg_c \1_d / d$. 
%
Define the scaled (or normalized) adjacency matrix as $L = D_r^{-1/2}AD_c^{-1/2}.$  Then, input $L$ to \texttt{vsp} (instead of $A$).  When using $L$, \texttt{vsp} estimates a scaled version of $Z$ and $Y$.  To undo this, the output of \texttt{vsp} could be ``rescaled'' as $D_r^{1/2} \widehat Z$ and $D_c^{1/2} \widehat Y$.  Even when $L$ is used, this paper never rescales the output.
\end{remark}


Normalizing the adjacency matrix with the regularizer $\tau$ improves the statistical performance of spectral estimators derived from a sparse random matrix  \citep{le2017concentration}.  
In many empirical examples, the $\tau_r$ and $\tau_c$ prevent large outliers in the elements of the singular vectors  that are created as an artifact of noise in sparse matrices \citep{zhang2018understanding}.   
In this paper, the scaling step is used for the analysis of the New York Times data, but it is not studied in the main theorem.


\begin{remark}\label{remark:powerMethod} [Fast computation for sparse data matrices] 
In many contemporary applications, $A$ is sparse (i.e., most elements $A_{ij}$ are zero).  In this case, the SVD step should be computed with power methods.  These methods are faster and require less memory  because they only require matrix-vector multiplication.  Moreover, if step 1 is being used, then the centered matrix $\widetilde A$ should not be explicitly computed.  
Instead,
the matrix-vector multiplications can be computed as the right hand side of the following equality,
\begin{equation}\label{eq:fastmatrixvector}
\widetilde A x =  Ax - \widehat \mu_r(\1_d^Tx) - \1_n (\widehat \mu_cx) + \widehat \mu_. \1_n (\1_d^T x),
\end{equation}
and similarly for $y \widetilde A$.
When computed naively, the left hand side of Equation \eqref{eq:fastmatrixvector} requires $O(nd)$ operations.  
However, the right hand size requires $O(\texttt{nnz})$ operations, 
where $\texttt{nnz}$ is the number of nonzero elements in $A$.  In the New York Times example displayed in Figure \ref{fig:nytFactors}, $\texttt{nnz}$ is three orders of magnitude smaller than $nd$.
Using Equation \eqref{eq:fastmatrixvector} also dramatically reduces the amount of memory required to store the matrices.  This can be used in conjunction  with the scaling step in Remark \ref{remark:scalingstep}. This is implemented in an \texttt{R} package available on GitHub \citep{githubCode} 
 using the \texttt{R} packages Matrix and rARPACK \citep{Matrix, rARPACK}. 
\end{remark}

The optional centering step (step 1 of \texttt{vsp}) plays a surprising role.  In particular, 
Proposition \ref{prop:centering} in Section \ref{sec:intuition} shows that if $A$ is centered in step 1, then \texttt{vsp} estimates the centered factors in the semi-parametric factor model (i.e., $Z - \E(Z)$). See Remark \ref{remark:centering} for more discussion.  To estimate $Z$, recenter  $\widehat Z$ as follows.

\begin{remark}  \label{remark:uncentering} [Optional recentering step]
After running \texttt{vsp} with the centering step, it is possible to use the quantities already computed to  recenter the estimated factors $\widehat Z$ and $\widehat Y$ as a post-processing step. This enables \texttt{vsp} to estimate $Z$ instead of $Z - \E(Z)$.
Define
\begin{equation} \label{eq:muhat}
\widehat \mu_Z=  \sqrt{n} \widehat \mu_c \widehat V \widehat D^{-1} \Ruh, \quad \mbox{ and } \quad \widehat \mu_Y=  \sqrt{d} \widehat \mu_r^T \widehat U \widehat D^{-1} \Rvh
\end{equation}
and recenter the estimated factors as follows: $\widehat Z  +  \1_n\widehat\mu_Z$ and  $\widehat Y +  \1_d\widehat\mu_Y$.  If the rescaling in Remark \ref{remark:scalingstep} is also used, then recenter before rescaling.  Section \ref{sec:intuition} and Appendix \ref{app:recentering} justify the estimator $\widehat \mu_Z$.
\end{remark}

Table \ref{table:options} below lists the variations of \texttt{vsp} that are defined above and discussed in this paper. 

\vspace{.2in}
\noindent
\begin{table}[h]
\begin{center}
\begin{tabular}{|l|l|}
\hline
\textbf{Option} & \textbf{Motivated when ...} \\
\hline 
Centering   & factor modeling, topic modeling, soft-clustering. \\
& See Remarks \ref{remark:uncentering} and \ref{remark:centering}, Theorem \ref{thm:main}, Section \ref{sec:lda}, Corollary \ref{corollary:pois} \\
\hline
Recentering & the factor means are desired. \\ 
& See Theorem \ref{thm:main}, Remark \ref{remark:centering}, Section \ref{app:recentering}.\\
\hline
Avoid centering & hard-clustering, Stochastic Blockmodeling. \\
&  See Section \ref{sec:block}, Corollary \ref{corollary:dcsbm}. \\
\hline
Scaling  & heterogeneous column sum or row sums in $A$. \\
& Used in the data example. \\
\hline
Rescaling  & we want to estimate the distribution of the factors $Z$. \\
& See Remark \ref{remark:scalingstep}.\\
\hline
\end{tabular}
\caption{The motivation for each of the optional steps in \texttt{vsp}.} \label{table:options}
\end{center}
\label{defaulttable}
\end{table}

\subsection{Data example} \label{sec:nyt}

In Figure \ref{fig:nytFactors}, the data matrix $A$ is a  $300,000 \times 102,660$ document-term matrix from a collection of $300,000$ New York Times articles. 
In this example, the row and column sums of $A$ are highly heterogeneous, ranging several orders of magnitude. As such, the matrix $A$ was scaled as in Remark \ref{remark:scalingstep} and \texttt{vsp} was given $L$. 
In \texttt{vsp}, the centering step (step 1) and the recentering step (\ref{remark:uncentering}) were used.
Given that the signs of the principal components and the factors are arbitrary, the sign of each principal component and each Varimax factor was chosen to make the third sample moment (i.e., skew) positive.  

After computing the leading $k=50$ principal components, twelve were removed because they localized on a relatively few number of articles (i.e., these twelve principal components were dominated by a few outliers) \citep{zhang2018understanding}.  Figure \ref{fig:nytscree} shows the screeplot of the remaining 38 singular values and a gap at $k=8$.  The Varimax rotation for these $k=8$ principal components was recomputed.  These are the eight principal components and eight Varimax factors displayed in Figure \ref{fig:nytFactors}. Each panel should display 300,000 points. To prevent overplotting, the display only shows a sample of 5000 points. The inclusion probability for point $i$ is proportional to $\|\widehat Z_i\|_2$, where $\widehat Z_i$ is the $i$th row of $\widehat Z$.

With  $k=50$ dimensions \texttt{vsp} takes roughly two minutes in \texttt{R} on a 3.5 GHz 2017 MacBook Pro with the packages \texttt{Matrix} for sparse matrix calculations and \texttt{rARPACK} for sparse eigencomputations \citep{Matrix, rARPACK}.  Recomputing the Varimax rotation for the leading $k=8$ principal components takes roughly two seconds. This example is documented at \url{github.com/RoheLab/vsp-paper}. The  \texttt{R} package is available at \url{github.com/RoheLab/vsp} \citep{githubCode}.


\section{Rotational invariance, simple structure, Thurstone's diagnostics, kurtosis, and sparsity} \label{sec:diagnostics}

\begin{quote}

Any rotation of the factors fits the data equally well; this is what is meant by ``rotational invariance.''
Thurstone proposed using sparsity to remove this invariance. His sparsity diagnostics are still used routinely in practice.  Theorem \ref{thm:sparsity} shows that sparsity implies the key leptokurtic condition that is sufficient for Varimax to identify the rotation.  In this way, Vintage Factor Analysis performs statistical inference.
\end{quote}

Step 2 of \texttt{vsp} approximates $\widetilde A$ with the leading $k$ singular vectors, $\widetilde A \approx \widehat U \widehat D \widehat V^T$. Step 3 computes the Varimax rotations of $\widehat U$ and $\widehat V$.  However, for any rotation matrices $R_1, R_2\in \oo(k)$, rotating $\widehat U$ and $\widehat V$ does not change the approximation to $\widetilde A$,
\[\widehat U \widehat D \widehat V^T = (\widehat U R_1) (R_1^T \widehat DR_2)  (\widehat VR_2)^T,\]
where the rotated factor matrices $\widehat U R_1$ and $\widehat VR_2$ still have orthonormal columns. As such, no rotation can improve the approximation to $\widetilde A$. Many have interpreted this to imply that we can never estimate factor rotations from data.  This is the misunderstanding of rotational invariance.  

In an attempt to resolve the rotational invariance, Thurstone developed a new type of data analysis to find rotations $R_{\widehat U} \in \oo(k)$ such that $\widehat U R_{\widehat U}$ is sparse \citep{thurstone1935vectors, thurstone1947}.  He developed a suite of tools and diagnostics to assess this sparsity and many of these remain in use today.  They are described in modern textbooks, built into the base \texttt{R} packages for factor analysis, and used routinely in practice.  Section \ref{sec:thurstone} describes these diagnostic practices. Section \ref{sec:leptokurtosis} and Theorem \ref{thm:sparsity} show how these diagnostics can be reinterpreted as assessing whether the factors come from a leptokurtic distribution which is a key condition for Varimax to be able to identify the correct factor rotation in Theorems \ref{thm:varimax1} and \ref{thm:main}.

%
%
%
%

\subsection{Thurstone's simple structure and diagnostics} \label{sec:thurstone}

\cite{thurstone1935vectors} and \cite{thurstone1947} propose using sparsity to remove the rotational invariance. 
``In numerical terms this is a demand for the [rotation which provides] the smallest number of non-vanishing entries in each row of the ... factor matrix.  It seems strange indeed, and it was entirely unexpected, that so simple and plausible an idea should meet with a storm of protest from the statisticians'' [p333 \cite{thurstone1947}].  Thurstone refers to this sparsity in the rotated factor matrix as \textit{simple structure}.  Thurstone's use of sparsity is analogous to the modern use of sparsity in high dimensional regression and underdetermined systems of linear equations.  In these more modern problems, without any sparsity constraint, there is a large space of plausible solutions. However, under certain conditions, the sparse solution is unique.
This intuition is analogous to Thurstone's intuition  for resolving rotational invariance. 

Thurstone implemented techniques to find rotations which produce sparse solutions, but he struggled to find any assurance that the computed solution is the sparsest solution. ``When [a solutions has] been found which produces a simple structure, it is of considerable scientific interest to know whether the simple structure is unique...  The necessary and sufficient conditions for uniqueness of a simple structure need to be investigated. In the absence of a complete solution to this problem, five criteria will here be listed which probably constitute sufficient conditions for the uniqueness of a simple structure''  [p334 \cite{thurstone1947}].    
Thurstone's five conditions motivate his ``radial streaks'' diagnostic, illustrated in Figure \ref{fig:nytFactors}.  In the quote below, Thurstone's original mathematical notation has been replaced with the notation in this paper.  

  \vspace{.1in}
  \noindent
\framebox{%
\begin{centering}
   \begin{minipage}{5.6in}
     
     \vspace{-.15in}
     
\textbf{\center{ Five rules for simple factor structure; quoted from \cite{thurstone1947} p335}}
     
     \vspace{.05in}   
     
We shall describe five useful criteria by which the $k$ reference vectors [i.e., the columns of $R_{\widehat U}$] can be determined. These are as follows:
\begin{enumerate}
\item Each row of the .... matrix $\widehat UR_{\widehat U}$ should have at least one zero.  
\item For each column $\ell$ of the factor matrix $\widehat UR_{\widehat U}$ there should be a distinct set of $k$ linearly independent [rows] whose factor loadings $[\widehat UR_{\widehat U}]_{j\ell}$ are zero. [sic\footnote{There cannot be $k$ linearly independent vectors in a $k-1$ dimensional hyperplane.}]
\item For every pair of columns of $\widehat UR_{\widehat U}$ there should be several [rows] whose entries $[\widehat UR_{\widehat U}]_{jp}$ vanish in one column but not in the other.
\item For every pair of columns of $\widehat UR_{\widehat U}$, a large proportion of the tests should have zero entries in both columns. This applies to factor problems with four or five or more common factors. 
\item For every pair of columns there should preferably be only a small number of [rows] with non-vanishing entries in both columns.
\end{enumerate}
When these [five] conditions are satisfied, the plot of each pair of columns shows (1) a large concentration of points in two radial streaks, (2) a large number of points at or near the origin, and (3) only a small number of points off the two radial streaks. For a configuration of $k$ dimensions there are $\frac{1}{2}k(k-1)$ diagrams. When all of them satisfy the three characteristics, we say that the structure is `compelling,' and we have good assurance that the simple structure is unique. In the last analysis it is the appearance of the diagrams that determines, more than any other criterion, which of the hyperplanes of the simple structure are convincing and whether the whole configuration is to be accepted as stable and ready for interpretation.*

------------------------------------------

\small{*Ever since I found the simple-structure solution for the factor problem, I have never attempted interpretation of a factorial result without first inspecting the diagrams. [footnote original to text]}
 \end{minipage}
 \end{centering}
  }
  \vspace{.1in}

An example of the diagrams (i.e., plots) that Thurstone proposes are given in Figure \ref{fig:nytFactors}  for the New York Times data.  Each of those plots displays radial streaks.  After the Varimax rotation, those radial streaks align with the coordinate axes, making the rotated factors approximately sparse.

If the diagnostic plots do not show radial streaks, Thurstone suggests that one should proceed more cautiously.    
  A few pages after the quote above, Thurstone gives a diagram with points evenly spaced inside a circle (i.e., rotationally invariant) and explains what happens when you have loadings that appear to come from a rotationally invariant distribution. ``A figure such as [this] leaves one unconvinced, no matter where the axes are drawn, unless an interpretation can be found that seems right.  Random configurations like this seldom yield clear interpretations, but they are not, of course, physically impossible.''

The current paper creates a statistical theory around Thurstone's key ideas by presuming that the factors are generated as random variables from a statistical model and using the Varimax estimator. 
 Thurstone does not presume the latent factors are generated from a probability distribution per se, and as such, does not cite or recognize the importance of Maxwell's Theorem.  Moreover, Thurstone computed rotations by hand and human judgement.  Only after Thurstone's death in 1955 did it become popular to compute factor rotations such as Varimax on ``electronic computers'' with numerical optimization techniques.

\subsubsection{Simple structure in contemporary multivariate statistics}

Contemporary textbooks on multivariate statistics still suggest that the rotated factors or the rotated principal components should be inspected to see if they are sparse  \citep{mardia1979multivariate,  jolliffe2002principal, johnson2007applied, bartholomew2011latent}.  
These textbooks all share the empirical observation that it is often easier to interpret factors which have been rotated for sparsity. The given reason is that sparse factors are ``simpler.'' While this appears to use Thurstone's word, these texts do not discuss whether or not this simple structure might resolve the problem of rotational invariance.  Rather, it is an empirical observation that sparse and simple solutions are easier to interpret.    
For example, 
``The simplification achieved by rotation can help in interpreting the factors or rotated PCs'' \citep{jolliffe2002principal}. Similarly,  ``A rotation of the factors often reveals a simple structure and aids interpretation''  \citep{johnson2007applied}.  
The notion that the data analyst should inspect the factors for sparsity is built into the \texttt{print} function for factor loadings in the base \texttt{R} packages; if a loading is less than the  \texttt{print} argument \texttt{cutoff}  then instead of printing a number, it appears as a whitespace.


This paper shows that sparsity does not merely make the factors simpler;  sparsity enables statistical identification and inference.  \textit{Sparsity} and ``\textit{radial streaking}'' are two distinctively non-Gaussian patterns.  As such, Thurstone's visualizations  and diagnostics can be reinterpreted as assessing whether the factors are generated from a non-Gaussian distribution and thus, by Maxwell's theorem, whether the rotation is statistically identifiable. The next section shows that if a distribution is sufficiently sparse, then it is leptokurtic. 

\subsection{Kurtosis and sparsity} \label{sec:leptokurtosis}



The next theorem shows that sparsity implies leptokurtosis.
In this way, Thurstone's sparsity diagnostics can be reinterpreted as assessing an identifying assumption for Varimax.  Moreover, sparsity can replace leptokurtosis in the identifying assumptions for Varimax.  



\begin{theorem} \label{thm:sparsity}
Any random variable $X$ that satisfies $\frac{5}{6} < \pr(X=0) <1$ and has four finite moments is leptokurtic.
\end{theorem}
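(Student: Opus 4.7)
The plan is to reduce the inequality $\eta_4 > 3\eta_2^2$ to the two-point distribution on $\{0,a\}$, which minimizes the kurtosis over all laws with a given atom $p = \pr(X=0)$ at zero, and then verify the reduced bound by direct computation. Write $q = 1 - p$ and $\mu = \E(X)$, and split on whether $\mu$ is zero.

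If $\mu = 0$, let $W$ denote $X$ conditional on $X \ne 0$, so $\eta_2 = q\E(W^2)$, $\eta_4 = q\E(W^4)$, and Cauchy--Schwarz applied to $\E(W^2\cdot 1)^2 \le \E(W^4)$ gives
\[
\kappa \;=\; \frac{\E(W^4)}{q\,(\E(W^2))^2} \;\ge\; \frac{1}{q} \;>\; 6 \;>\; 3.
\]

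If $\mu \ne 0$, I may assume $\mu > 0$, since kurtosis is invariant under $X\mapsto -X$. Let $Y = X - \mu$ and let $W$ be $Y$ conditional on $Y \ne -\mu$. The constraint $\E(Y) = 0$ forces $\E(W) = p\mu/q$, so $\sigma^2 := \E(W^2) \ge p^2\mu^2/q^2$, and Cauchy--Schwarz again gives $\E(W^4) \ge \sigma^4$. Plugging these into $\eta_2 = p\mu^2 + q\sigma^2$ and $\eta_4 \ge p\mu^4 + q\sigma^4$, and parametrizing $\sigma^2 = p^2\mu^2 t/q^2$ for $t \ge 1$, the factor $\mu^4$ cancels and I would obtain
\[
\kappa \;\ge\; h(t) \;:=\; \frac{q^3 + p^3 t^2}{pq\,(q + pt)^2}.
\]
A direct differentiation shows $h'(t)$ is proportional to $p^2 t - q^2$, which is nonnegative whenever $p \ge 1/2$ and $t \ge 1$, so $h$ is nondecreasing on $[1,\infty)$. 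Hence $\kappa \ge h(1) = (p^3+q^3)/(pq) = 1/(pq) - 3$, where the final equality uses $p^3+q^3 = 1 - 3pq$.

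To finish, the map $p \mapsto p(1-p)$ is decreasing on $(1/2,1)$, so the hypothesis $p > 5/6$ yields $pq < (5/6)(1/6) = 5/36 < 1/6$, and therefore $\kappa > 3$. Equality in the lower bound is approached by the two-point law $X \in \{0,\mu/q\}$ with weights $(p,q)$, which has kurtosis exactly $1/(pq) - 3$. The main obstacle is the monotonicity of $h$, but once the parametrization $\sigma^2 \propto t$ is chosen it reduces to the single sign check $p^2 t \ge q^2$; note that $p > 5/6$ is actually slightly stronger than this argument requires (the sharp threshold it produces is $p > (3+\sqrt{3})/6 \approx 0.789$), but $5/6$ is a cleaner constant and leaves ample slack.
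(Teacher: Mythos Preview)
Your proof is correct, and it takes a genuinely different route from the paper's argument. The paper writes $X = SB$ with $B$ the Bernoulli indicator of $X\ne 0$ and $S$ distributed as $X\mid X\ne 0$, then expands the centered second and fourth moments of $X$ in terms of the raw moments $\mu_1,\mu_2,\mu_3,\mu_4$ of $S$. The target inequality $\eta_4 > 3\eta_2^2$ is then attacked by a Cauchy--Schwarz integral lemma (with the choice $g=S^2$, $h=2pS$) together with Jensen and the algebraic fact $(6p-1)(p-1)>0$; the third moment $\mu_3$ appears and must be handled along the way.

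Your approach centers first, conditions on $X\ne 0$, and uses only two applications of Cauchy--Schwarz ($\E(W^2)\ge (\E W)^2$ and $\E(W^4)\ge (\E W^2)^2$), which eliminates any third-moment term and reduces the problem to the single-variable function $h(t)=(q^3+p^3t^2)/\bigl(pq(q+pt)^2\bigr)$ on $[1,\infty)$. This is cleaner: it identifies the extremal two-point law explicitly, gives the exact lower bound $\kappa \ge 1/(pq)-3$, and reveals the sharp threshold $p>(3+\sqrt{3})/6$. The paper's argument, by contrast, is tailored to the constant $5/6$ and does not isolate the extremizer. Both arguments are short, but yours is more transparent about \emph{why} the bound holds and what is lost at the threshold.
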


This theorem does not make any parametric assumptions and the moment assumptions are only so that kurtosis is defined.  See Section \ref{appendix:sparsityproof} in the Appendix for a proof.  This theorem assumes  ``hard sparsity''  (i.e., $\pr(X=0)>0$) for technical convenience.  See Appendix \ref{app:sparsitylepto} for a discussion about softer forms of sparsity.


%
%
%
%
%
%
%
%
%
%
%





\section{Gaining intuition for \texttt{vsp} with the population results} \label{sec:intuition}

\begin{quote}
This section studies each of the three steps in \texttt{vsp} by studying their population behavior.  
Statistical convergence around the population quantities is rigorously treated in Theorem \ref{thm:main} in Section \ref{sec:theoremstatement}.
\end{quote}

The semi-parametric factor model is a latent variable model with two sequential layers of randomness.  In the first layer of randomness, the latent variables $Z$ and $Y$ are generated.  In the second layer, the observed matrix $A$ is generated, conditionally on the latent variables.  
To parallel these two layers, there are two types of population results given in this section. 

The first two steps of \texttt{vsp} compute the principal components. 
Propositions \ref{prop:centering}  and  \ref{prop:svd} study these steps
applied to the
 population matrix 
\begin{equation}\label{eq:factorpopulation}
 \A = \E(A|Z, Y) = ZBY^T,
 \end{equation}
instead of $A$.    These propositions imply that the population principal components can be expressed as  $\widetilde ZR$, where $\widetilde Z \in \R^{n \times k}$ is $Z$ after column centering and  $R \in \R^{k \times k}$ is defined below.  If the $nk$ many random variables in $Z \in \R^{n \times k}$ are mutually independent, then $R$ converges to a rotation matrix.    These results allows for the randomness in $Z$ and $Y$, but they remove the second layer of randomness by using $\A$ instead of $A$. 
Then, Theorem \ref{thm:varimax1} studies the population version of the Varimax step.  To do this, take the expectation of the Varimax objective function, evaluated at the population principal components (i.e., $\widetilde ZR$), where the expectation is over the distribution of $Z$. This expectation removes the randomness in $Z$.
Under the identification assumptions for Varimax defined below, Theorem \ref{thm:varimax1} shows that the rotation that maximizes this function is $R^T \in \oo(k)$. So, rotating the population principal components with the population Varimax rotation yields the original factors, $(\widetilde ZR)R^T = \widetilde Z$.

Define $\bar Z \in \R^{n \times k}$ such that $\bar Z_{ij}$ equals the sample mean of the $j$th column of $Z$. Similarly for $\bar Y \in \R^{d \times k}$.  Define 
\begin{equation}\label{eq:tz}
\mbox{$\widetilde Z = Z- \bar Z \quad$ and $\quad \widetilde Y = Y- \bar Y$.}
\end{equation}

\begin{prop} \label{prop:centering} [Step 1 of \texttt{vsp}]
Centering $\A$ to get $\widetilde \A$ as in Equation \eqref{eq:centeredA}, has the effect of centering $Z$ and $Y$.
\[\widetilde \A = \widetilde Z B  \widetilde Y^T\]
This does not require any distributional assumptions on $Z$ or $Y$.  
\end{prop}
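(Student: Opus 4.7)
The plan is a direct algebraic calculation; the only insight required is to recognize that the centering operation in Equation \eqref{eq:centeredA} factors as a product of two orthogonal projection matrices applied on the left and on the right.

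First I would rewrite the sample means in matrix form. Setting $P_n = I_n - n^{-1}\1_n\1_n^T$ and $P_d = I_d - d^{-1}\1_d\1_d^T$, the definitions of $\bar Z$ and $\bar Y$ in Equation \eqref{eq:tz} give $\widetilde Z = P_n Z$ and $\widetilde Y = P_d Y$. Plugging $\A = ZBY^T$ into the definitions of $\widehat\mu_r, \widehat\mu_c, \widehat\mu_.$ and regrouping yields
\begin{equation*}
\widetilde \A \;=\; \A - d^{-1}\A\1_d\1_d^T - n^{-1}\1_n\1_n^T\A + (nd)^{-1}\1_n\1_n^T\A\1_d\1_d^T \;=\; P_n \,\A\, P_d.
\end{equation*}
Since $P_d$ is symmetric, one then computes
\begin{equation*}
\widetilde \A \;=\; P_n\, Z B Y^T\, P_d \;=\; (P_n Z)\, B\, (P_d Y)^T \;=\; \widetilde Z \,B\, \widetilde Y^T,
\end{equation*}
which is the claimed identity.

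The only step that could be called a ``main obstacle'' is verifying that the four-term centering formula truly factors as $P_n \A P_d$; this is a routine expansion, and once it is in hand the proposition follows immediately. No distributional assumptions on $Z$ or $Y$ enter the argument because the derivation is purely linear-algebraic and uses only the factored form $\A = ZBY^T$ together with the definitions of the empirical means, which matches the stated generality of the proposition.
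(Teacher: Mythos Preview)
Your proof is correct and follows essentially the same direct algebraic route as the paper, which expands $\widetilde\A$ term by term using $\mu_r = ZB\mu_Y$, $\mu_c = \mu_Z BY^T$, and $\mu = \mu_Z B\mu_Y$ and then factors. Your use of the centering projections $P_n$ and $P_d$ is a slightly cleaner packaging of the same computation, but the underlying argument is identical.
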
  
A proof is given in Appendix \ref{app:popproofs}. The next proposition gives the SVD of $\widetilde \A = \widetilde Z  B \widetilde Y^T$.  
Define 
\[\widehat \Sigma_Z = \widetilde Z^T \widetilde Z / n, 
\quad \widehat \Sigma_Y = \widetilde Y^T \widetilde Y / d,\] 
and define  $ \widetilde R_U, \widetilde R_V \in \oo(k),$ and  diagonal matrix $\widetilde D$ to be the SVD of $\widehat \Sigma_Z^{1/2} B \widehat \Sigma_Y^{1/2} \in \R^{k \times k}$,
\[\widehat \Sigma_Z^{1/2} B \widehat \Sigma_Y^{1/2} = \widetilde R_U^T \widetilde D \widetilde R_V.\]
The next proposition shows that the rotation matrices $\widetilde R_U$ and  $\widetilde R_V$ convert the factor matrices $\widetilde Z$ and $\widetilde Y$ into the principal components and loadings $U$ and $V$.



\begin{prop} \label{prop:svd} [Step 2 of \texttt{vsp}]  Define the following matrices, 
\begin{equation} \label{eq:svdA}
U =  n^{-1/2} \widetilde{Z}\ \widehat \Sigma_Z^{-1/2} \widetilde R_U^T, \quad D = \sqrt{nd} \widetilde D, \quad V = d^{-1/2} \widetilde Y \widehat \Sigma_Y^{-1/2} \widetilde R_V^T.
\end{equation}
Then, $\widetilde \A = U D V^T$, where $U$ and $V$ contain the left and right singular vectors of $\widetilde \A$ and $D$ contains the singular values of $\widetilde \A$.  This does not require any distributional assumptions on $Z$ or $Y$.  
\end{prop}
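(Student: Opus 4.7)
The plan is to verify directly that the three matrices $U$, $D$, $V$ defined in Equation \eqref{eq:svdA} satisfy the three characteristic properties of the SVD: (i) $U^TU = I_k$ and $V^TV = I_k$, (ii) $D$ is a diagonal matrix with non-negative entries, and (iii) $U D V^T = \widetilde \A$. Since $k$ is already the number of columns of $Z$ and $Y$, it suffices to exhibit a thin SVD; no truncation or ordering argument is needed beyond what is inherited from the SVD of the $k\times k$ matrix $\widehat \Sigma_Z^{1/2} B \widehat \Sigma_Y^{1/2}$.

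First, for the orthonormality of the columns of $U$, I would compute
\[
U^T U = \tfrac{1}{n}\,\widetilde R_U\, \widehat \Sigma_Z^{-1/2} \widetilde Z^T \widetilde Z\, \widehat \Sigma_Z^{-1/2}\, \widetilde R_U^T
     = \widetilde R_U\, \widehat \Sigma_Z^{-1/2} \widehat \Sigma_Z\, \widehat \Sigma_Z^{-1/2}\, \widetilde R_U^T
     = \widetilde R_U \widetilde R_U^T = I_k,
\]
using the definition $\widehat \Sigma_Z = \widetilde Z^T \widetilde Z/n$ and the orthogonality $\widetilde R_U \in \oo(k)$. The argument for $V^TV=I_k$ is identical with $Y,d,\widetilde R_V$ in place of $Z,n,\widetilde R_U$. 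The diagonality of $D$ is immediate since $D = \sqrt{nd}\,\widetilde D$ and $\widetilde D$ is the diagonal matrix of singular values of $\widehat \Sigma_Z^{1/2} B \widehat \Sigma_Y^{1/2}$, which are non-negative by definition.

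For the reconstruction identity, I would chain the definitions and cancel the square roots against their inverses:
\[
U D V^T = \tfrac{\sqrt{nd}}{\sqrt{nd}}\,\widetilde Z\, \widehat \Sigma_Z^{-1/2} \widetilde R_U^T\, \widetilde D\, \widetilde R_V\, \widehat \Sigma_Y^{-1/2} \widetilde Y^T
         = \widetilde Z\, \widehat \Sigma_Z^{-1/2} (\widehat \Sigma_Z^{1/2} B \widehat \Sigma_Y^{1/2})\, \widehat \Sigma_Y^{-1/2} \widetilde Y^T
         = \widetilde Z B \widetilde Y^T,
\]
which equals $\widetilde \A$ by Proposition \ref{prop:centering}. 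Once (i)--(iii) are in hand, the uniqueness (up to signs and reordering) of the thin SVD guarantees that the $U,V,D$ constructed here are a valid choice of singular vectors and singular values of $\widetilde \A$.

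The only genuine subtlety is implicit: the construction requires $\widehat \Sigma_Z$ and $\widehat \Sigma_Y$ to be invertible (so that $\widehat \Sigma_Z^{-1/2}$ and $\widehat \Sigma_Y^{-1/2}$ are well-defined) and $B$ to be of full rank $k$ (otherwise the singular values in $\widetilde D$ could collapse and $\widetilde \A$ would have rank less than $k$). I would state this as a standing non-degeneracy assumption on $\widetilde Z, \widetilde Y, B$, which holds generically and is the natural sample-level analogue of the identifiability conditions invoked later in Theorem \ref{thm:main}. Given that assumption, the proof is a one-line algebraic verification; the conceptual point the proposition is really making is that the population principal components $U$ are a rotation of $\widetilde Z\,\widehat \Sigma_Z^{-1/2}$, i.e.\ a whitened and rotated version of the centered factors, which is exactly the input that the subsequent Varimax step needs.
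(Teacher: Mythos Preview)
Your proposal is correct and matches the paper's approach exactly: the paper itself only remarks that ``the proof requires demonstrating the equality $\widetilde \A = U D V^T$ and showing that $U$ and $V$ have orthonormal columns; substituting in the definitions reveals this result.'' Your verification of $U^TU=I_k$, $V^TV=I_k$, and $UDV^T=\widetilde Z B\widetilde Y^T$ is precisely that substitution spelled out, and your remark on the implicit full-rank assumption is a sensible clarification that the paper leaves tacit.
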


The proof requires demonstrating the equality $\widetilde \A = U D V^T$ and showing that $U$ and $V$ have orthonormal columns. Substituting in the definitions reveals this result.  
Taken together, Propositions \ref{prop:centering} and \ref{prop:svd} show that  the first two steps of \texttt{vsp} on $\A$ compute $U \propto \widetilde{Z}\ \widehat \Sigma_Z^{-1/2} \widetilde R_U^T$; these are the principal components of $\A$.

\begin{remark} \label{remark:pca}[Relationship between PCA and the factors]
Proposition \ref{prop:svd} relates PCA on the population matrix $\A$ to the factors $Z$. 
This is because the population principal components are the columns of the matrix 
\begin{equation}\label{eq:pca}
U =  n^{-1/2} \widetilde{Z}\ \widehat \Sigma_Z^{-1/2} \widetilde R_U^T.
\end{equation}
So, the principal components are the centered latent factors $\widetilde Z$, ``whitened'' with $\widehat \Sigma_Z^{-1/2}$, and rotated by a $k \times k$ nuisance matrix $\widetilde R_U^T$.  
Despite the fact that PCA is typically considered a second order technique, this result implies that the principal components themselves do not retain any first or second order information about the latent factors, but retain all other distributional information. With Maxwell's Theorem, this suggests that higher order techniques such as Varimax hold the possibility of identifying the nuisance matrix.  In fact, Theorem \ref{thm:varimax1} below shows that Varimax can identify the nuisance matrix. 
\end{remark}

The Varimax problem applied to the population principal components $U$ in Equation \eqref{eq:pca} is 
\begin{equation}\label{eq:poppcavarimax}
\arg \max_{R \in \oo(k)} v(R, \widetilde{Z}\ \widehat \Sigma_Z^{-1/2} \widetilde R_U^T).
\end{equation}
Despite the fact that these are the population principal components, this is still a sample quantity because $Z$ is random.  This randomness is from the first stage of randomness in the semi-parametric factor model. 
The next theorem  gives a population result for the M-estimator in \eqref{eq:poppcavarimax} by studying the expected value of $v$ over $Z$, to show that it can identify $\widetilde R_U$.
Assumption \ref{assumption:Varimax} gives the identification assumptions on the distribution of $Z$ that will be used in both the population result for Varimax (Theorem \ref{thm:varimax1}) and the main theorem (Theorem \ref{thm:main}).

 \begin{assumption} \label{assumption:Varimax} [The identification assumptions for Varimax]
The matrix $Z \in \R^{n \times k}$ satisfies the identification assumptions for Varimax if all of the following conditions hold on the rows  $Z_i \in \R^k$ for $i=1, \dots, n$:
\begin{enumerate}
\item[i)] the vectors $Z_1, Z_2, \dots, Z_n$ are iid, 
\item[ii)] each vector $Z_i \in \R^k$ is composed of $k$ independent random variables (not necessarily identically distributed), 
\item[iii)] $Var(Z_{ij}) = 1$ for all $j$,\footnote{The third assumption in Varimax is not restrictive because the matrix $B$ can absorb a rescaling of the variables.  That is, let $Z^{rescaled} \in \R^{n \times k}$ satisfy the first two conditions and presume that $\A = Z^{rescaled} B^{rescaled} Y^T$.  Define $\Sigma_Z = Cov(Z^{rescaled}_i)$,  $Z = Z^{rescaled} \Sigma^{-1/2}$, and $B = \Sigma^{1/2}B^{rescaled}$.  Because $Z^{rescaled}$ satisfies the second condition, $\Sigma_Z$ is diagonal.   So, $Z = Z^{rescaled} \Sigma^{-1/2}$ retains independent components and now satisfies the third condition.  Moreover, $\A = Z B Y^T$.} and 
\item[iv)] the elements of $Z_i$ are leptokurtic.
\end{enumerate}
\end{assumption}

%
%
%
%
%
%
 Let $\widetilde Z_1$ be the first row of $\widetilde Z$.  Define $Z^o = Z_1 -\E(Z_1) \in \R^k$.  
Theorem \ref{thm:varimax1} shows that  the rotation matrix $R$ that maximizes the expected Varimax objective function, $\E(v(R, Z^o \widetilde R_U^T))$, is $\widetilde R_U$.  In this formulation, several quantities from the sample maximization problem \eqref{eq:poppcavarimax} have been replaced.  First, 
 the sample objective function $v$ in Equation \eqref{eq:Varimax} has been replaced with its expectation over the distribution of $Z$.  Then,  $\bar Z$ has been replaced by $\E(Z_1)$ and $\Sigma_Z^{-1/2}$ has been replaced with its limiting quantity under Assumption \ref{assumption:Varimax} (i.e., the identity matrix).


Because the Varimax objective function does not change if the estimated factors are reordered or if some of the estimated factors have a sign change, the maximizer of Varimax is actually an equivalence class that allows for these operations.  Define the set
\begin{equation} \label{eq:pp}
\pp(k) = \{ P \in \oo(k): P_{ij} \in \{-1,0,1\}\}.
\end{equation}
It is the full set of matrices that allow for column reordering and sign changes.  

  \begin{theorem}\label{thm:varimax1}[step 3]
Suppose that $Z \in \R^{n \times k}$ satisfies the identification assumptions for Varimax (Assumption \ref{assumption:Varimax}).  Let $Z_1 \in\R^k$ be the first row of $Z$.  
Define $Z^o = Z_1 - \E(Z_1)$. 
For any nuisance rotation matrix $\tilde R \in \oo(k)$, 
%
\begin{equation}\label{eq:popvarimax}
\arg \max_{R \in \oo(k)} \E(v(R, Z^o \tilde R^T)) = \{\widetilde R P: P\in \pp(k)\}
\end{equation}
\end{theorem}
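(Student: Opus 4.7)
The plan is to rewrite the population Varimax objective in closed form as a polynomial in the entries of a nuisance-absorbed rotation $Q = \widetilde R^T R$ and then maximize that polynomial over $\oo(k)$ using the strict positivity of $\kappa_j - 3$. First I would reparametrize by $Q = \widetilde R^T R$; as $R$ sweeps $\oo(k)$ so does $Q$, and the maximizer set in $R$ is $\widetilde R$ times the maximizer set in $Q$. So it suffices to identify the maximizer set over $Q \in \oo(k)$.

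Next I would compute $\E(v(R, Z^o \widetilde R^T))$ in closed form. Under Assumption \ref{assumption:Varimax} the coordinates $Z^o_1,\dots,Z^o_k$ are independent, centered, have unit variance, and have kurtoses $\kappa_j > 3$. Because $Q$ is orthogonal, $\E[(Z^o Q)_\ell^2] = \sum_j Q_{j\ell}^2 = 1$ does not depend on $R$, so the subtracted term in the definition \eqref{eq:Varimax} contributes only an $R$-independent constant (the same phenomenon flagged in the footnote after \eqref{eq:Varimax}). For the fourth moment, the multinomial expansion
\[
\E\!\left(\sum_j Z^o_j Q_{j\ell}\right)^4 = \sum_{i_1,i_2,i_3,i_4} Q_{i_1\ell}Q_{i_2\ell}Q_{i_3\ell}Q_{i_4\ell}\,\E(Z^o_{i_1}Z^o_{i_2}Z^o_{i_3}Z^o_{i_4})
\]
collapses, by independence and the vanishing of first and third moments, to the two surviving index patterns ``all four equal'' and ``two disjoint pairs.'' Collecting terms (the $3$ below counts the three pair-partitions of $\{1,2,3,4\}$) and using $\sum_j Q_{j\ell}^2 = 1$ gives
\[
\E[(Z^o Q)_\ell^4] = \sum_j \kappa_j Q_{j\ell}^4 + 3\sum_{j\neq p} Q_{j\ell}^2 Q_{p\ell}^2 = 3 + \sum_j (\kappa_j - 3) Q_{j\ell}^4.
\]
Summing over $\ell$ reduces the problem, modulo additive constants, to maximizing $\Phi(Q) = \sum_{j=1}^k (\kappa_j - 3)\sum_{\ell=1}^k Q_{j\ell}^4$ over $Q \in \oo(k)$.

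Finally, leptokurtosis gives $\kappa_j - 3 > 0$ for every $j$, so it suffices to maximize each row contribution $\sum_\ell Q_{j\ell}^4$ subject to the orthogonality constraint $\sum_\ell Q_{j\ell}^2 = 1$. The elementary estimate $\sum_\ell x_\ell^4 \le (\max_\ell x_\ell^2)\sum_\ell x_\ell^2$ bounds each row contribution by $1$, with equality if and only if the row has a single entry equal to $\pm 1$. The one non-mechanical step is joint attainment across rows: every row saturating the bound, combined with orthonormality of the columns of $Q$, forces the nonzero positions in distinct rows to land in distinct columns, i.e., $Q$ must be a signed permutation matrix, $Q \in \pp(k)$. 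Translating back via $R = \widetilde R Q$ yields the claimed maximizer set $\{\widetilde R P : P \in \pp(k)\}$; all other steps are calculational and use only Assumption \ref{assumption:Varimax}.
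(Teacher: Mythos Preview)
Your proposal is correct and follows essentially the same route as the paper: reparametrize by $Q=\widetilde R^T R$, expand the fourth moment using independence and zero mean to reach $\sum_j(\kappa_j-3)\sum_\ell Q_{j\ell}^4$ plus a constant, and then use $\kappa_j>3$ together with $\sum_\ell Q_{j\ell}^2=1$ to bound each row's contribution by $1$ with equality only for signed permutations. The paper packages the last step as a relaxation to doubly stochastic matrices (setting $O^{(2)}_{ij}=Q_{ij}^2$ and using $\sum_j O^{(2)}_{ij}{}^2\le \sum_j O^{(2)}_{ij}=1$), whereas you bound row-by-row via $\sum_\ell x_\ell^4 \le (\max_\ell x_\ell^2)\sum_\ell x_\ell^2$; these are equivalent inequalities and the uniqueness argument is the same.
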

%
%
%
%
The output step of \texttt{vsp}  right multiplies the principal components $\sqrt{n} U \approx \widetilde Z \widetilde R_U^T$ with a matrix which maximizes Varimax.  In the population results, this matrix is $\widetilde R_UP$.  Thus, the Varimax rotation reveals the unrotated factors, $(\widetilde Z \widetilde R_U^T)\widetilde R_U P = \widetilde ZP$. 

Remark \ref{remark:uncentering} describes a method to recenter the factors $\widetilde Z$ to get $Z$. Section \ref{app:recentering} in the appendix gives a population justification for this recentering step.

\begin{remark} \label{remark:centering} [The role of centering]
A version of Proposition \ref{prop:svd} still holds for the SVD of $\A$ (without centering) by replacing 
$\widehat \Sigma_Z$ with $Z^T Z / n$ and replacing $\widehat \Sigma_Y$ with $Y^T Y/d$ in Equation \eqref{eq:svdA}.   Even if the elements of the matrix $Z$ are independent and have unit variance, then the columns of $Z$ will not be asymptotically orthogonal (unless $\E(Z) = 0$).  As such, right multiplying 
$U = Z (Z^T Z / n)^{-1/2} \widetilde R_U^T$ with an orthogonal rotation (i.e., the one estimated by Varimax) cannot reveal $Z$.
This highlights the role of centering in \texttt{vsp}; centering $\A$ has the effect of centering the latent variables, which in turn makes the latent factors asymptotically orthogonal under the assumption of independence.  This allows Varimax to unmix them with an orthogonal matrix. 
\end{remark}

\begin{remark} \label{remark:pcavsfactor}
PCA is not the standard approach in Vintage Factor Analysis.
To see why, define $\A = \E(A|Z,Y) = ZBY^T$ and notice that the diagonal elements of $n^{-1}\A\A^T$ 
are less than or equal to the diagonal elements of the expected sample covariance matrix $n^{-1} \E(A^TA|Z,Y)$. 
PCA does not adjust for this excess along the diagonal of the sample covariance matrix and this makes PCA biased. 
However, as Theorem \ref{thm:main} shows in the next section, the estimates from PCA with a Varimax rotation converge to the desired quanties.  Thus, in the asymptote studied herein, PCA with a Varimax rotation is asymptotically unbiased.  It is possible that a different approach could have increased statistical efficiency, but this is not studied in this paper.  
\end{remark}

%
%

\section{The main theorem}\label{sec:theoremstatement}


\begin{quote}
Theorem \ref{thm:main} is the main result for this paper.  This theorem does not presume a parametric form for the random variables in $Z$ or $A$.  Instead, it uses the identifying assumptions for Varimax (Assumption \ref{assumption:Varimax}) and two further assumptions on the tails of these distributions. 
\end{quote}

 Recall that $\widehat \mu_Z$   estimates the column means of $Z$ defined in Remark \ref{remark:uncentering}.  
 Let $\widehat Z_i$ be the $i$th row of $\widehat Z$.  Theorem \ref{thm:main}  shows that for \textit{every} $i \in 1, \dots, n$, $\widehat Z_i +\widehat \mu_Z$ converges to $Z_i$ (after allowing for a permutation and sign flip).



\begin{assumption}\label{assumption:ztail}  
Each column of $Z$ and $Y$ is generated from a distribution that does not change asymptotically and has a moment generating function in some fixed $\epsilon >0$ neighborhood around zero. 
\end{assumption}
Let $\A$ be defined in Equation \eqref{eq:factorpopulation}. Define the mean and maximum of $\A$ as
\begin{equation} \label{eq:rhodef}
\rho_n = \frac{1}{nd} \sum_{i,j} \A_{ij} \quad \mbox{ and }  \quad \bar \rho_{n} = \underset{i,j}{\max}{| \A_{ij}|}. 
\end{equation}
Theorem \ref{thm:main} allows for $A$  to contain mostly zeros 
by assuming that as $n$ and $d$ grow,  $B_n = \rho_n B$ for some fixed matrix $B \in \R^{k \times k}$.  If $\rho_n \rightarrow 0$, then $A$ is sparse.   This is analogous to the asymptotics in \cite{bickel2009nonparametric} for the Stochastic Blockmodel.  

\begin{assumption}\label{assumption:atail} For any valid subscripts $i$ and $j$, eventually in $n$, $$ \mathbb{E} [(A_{ij} - \A_{ij})^m ] \leq \max\{(m-1)!(\bar{\rho}_n)^{m/2},\bar{\rho}_n\} ,  \ \mbox{ for all } m \geq 2, $$ where this expectation is conditional on $Z, Y$.
%
 \end{assumption} 
 
 Assumption \ref{assumption:atail} controls the tail behavior of the random variables in the elements of $A$.  This assumption is more inclusive than sub-Gaussian. For example, this assumption is satisfied when $A$ contains Poisson random variables, as happens in Latent Dirichlet Allocation in Section \ref{sec:lda}.  This assumption is also satisfied if $A$ contains Bernoulli random variables, as happens in Stochastic Blockmodeling.  See Sections \ref{sec:assumption3bern} and \ref{sec:assumption3pois} in the Appendix for further discussion.

The quantity 
\[\Delta_n = n \rho_n\]
controls the asymptotic rate in Theorem \ref{thm:main}.  So, it is helpful to have some sense for it.  For example, suppose that (i) $A$ contains Bernoulli elements, (ii) each row and column sum of $\A$ grows at a similar rate,  (iii) $n \asymp d$, and (iv) $\rho_n\rightarrow 0$, then $\Delta_n$ is roughly the expected number of ones in each row and column of $A$. 


\begin{theorem}\label{thm:main}  
Suppose that $A \in \R^{n \times d}$ is generated from a semi-parametric factor model that satisfies Assumptions \ref{assumption:Varimax}, \ref{assumption:ztail}, and \ref{assumption:atail}. Presume that asymptotically,  $\A = \rho_n ZBY^T$ for some fixed and full rank matrix $B$.  In the asymptotic regime where $n \asymp d$  and $  \Delta_{n} \succeq \log^{ 11.1 } n$, 
\begin{equation}\label{Zconvergence}
||(\widehat{Z}+\1_n\widehat{\mu}_{Z}) - ZP_n||_{2\to \infty}  = O_{p}(\Delta_n^{-.24}\log^{2.75}n),
\end{equation}
where  $\widehat{Z}$ is the estimate produced by \texttt{vsp} (with step 1) applied to $A$ and $\widehat{\mu}_{Z}$ is the estimate defined in Equation \eqref{eq:muhat}.
\end{theorem}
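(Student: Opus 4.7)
The plan is to decompose the error $\|(\widehat Z + \1_n \widehat\mu_Z) - ZP_n\|_{2\to\infty}$ into three pieces matching the three steps of \texttt{vsp}: spectral estimation of the column space of $\widetilde \A$, approximation of the population Varimax rotation by its sample counterpart, and recentering via $\widehat \mu_Z$. Combining Propositions \ref{prop:centering} and \ref{prop:svd}, the population principal components of $\widetilde \A$ can be written as $U = n^{-1/2}\widetilde Z \widehat \Sigma_Z^{-1/2}\widetilde R_U^T$. Under Assumption \ref{assumption:Varimax}, concentration for $\widetilde Z^T \widetilde Z/n$ gives $\widehat \Sigma_Z \to I_k$ at rate $n^{-1/2}$ using Assumption \ref{assumption:ztail}, so $\sqrt{n}\,U$ is row-wise close to $\widetilde Z \widetilde R_U^T$. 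By Theorem \ref{thm:varimax1}, the population Varimax maximizer applied to this matrix recovers $\widetilde Z$ up to a signed permutation $P \in \pp(k)$. If the sample analogs track the population quantities in the $2\to\infty$ norm, then $\widehat Z + \1_n\widehat\mu_Z \approx Z P$ follows.

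The crux is a row-wise spectral perturbation bound $\|\widehat U - U Q\|_{2\to\infty} = O_p(\Delta_n^{-1/2}\cdot \mathrm{polylog}(n))$ for some $Q \in \oo(k)$. This is not attainable from the Davis--Kahan theorem (which controls only operator or Frobenius norm) and instead requires a leave-one-out decoupling together with an entrywise Bernstein bound, as in \cite{cape2019signal, mao2018overlapping, erdHos2013spectral}. Assumption \ref{assumption:atail} is exactly the Bernstein tail condition needed on the entries of $A - \A$; the spectral gap of $\widetilde \A$ is $\asymp \sqrt{nd}\,\rho_n = \sqrt{d\Delta_n}$ because $\widehat \Sigma_Z^{1/2} B \widehat \Sigma_Y^{1/2}$ converges to the fixed full-rank matrix $B$. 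The hypothesis $\Delta_n \succeq \log^{11.1} n$ provides the slack that absorbs the polylogarithmic factors, and the exponents $.24$ and $2.75$ in the final rate come from balancing Bernstein, Davis--Kahan, and entrywise perturbation contributions in the composite bound.

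Once the row-wise handle on $\widehat U$ is in place, the Varimax step is treated as an M-estimator problem. The objective $v(R, M)$ is a degree-four polynomial in the entries of $M$ and smooth in $R$ over the compact manifold $\oo(k)$. A standard stability argument then says: if $\widehat U$ is entrywise close to $U Q$, and the population objective $R \mapsto \E(v(R, Z^o (\widetilde R_U Q^{-1})^T))$ has nondegenerate maxima on the discrete class $\widetilde R_U Q^{-1}\pp(k)$, then $\Ruh$ converges at the input rate to one of these representatives. Nondegeneracy of the Hessian transverse to the class is where leptokurtosis enters: the second-order expansion of the population objective produces coefficients proportional to $\kappa - 3$, which Assumption \ref{assumption:Varimax}(iv) forces to be nonzero. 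Combining the spectral step with this stability bound gives $\widehat Z = \sqrt n\,\widehat U\,\Ruh = \widetilde Z P + \text{error}$ row-wise at the advertised rate. The recentering quantity $\widehat \mu_Z = \sqrt n\,\widehat \mu_c \widehat V \widehat D^{-1} \Ruh$ concentrates on $\E(Z_1)P$ by the argument of Appendix \ref{app:recentering}, so adding $\1_n \widehat\mu_Z$ upgrades the estimate from $\widetilde Z P$ to $Z P$.

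The hardest step is promoting the operator-norm perturbation of $\widehat U$ to the $2\to\infty$ norm, because the latter is what controls \emph{every} row of $\widehat Z$ uniformly. This requires a genuinely row-dependent expansion (for example, a leave-one-out representation of $\widehat U_i$ in terms of an auxiliary matrix independent of row $i$) together with Assumption \ref{assumption:ztail} to prevent anomalously large rows of $\widetilde Z$ from spoiling the entrywise bound. A secondary but non-trivial issue is the $\pp(k)$ ambiguity in Varimax: one must first fix a representative of the equivalence class of population maximizers (this is the role of $P_n$ in the statement) and then show that the sample maximizer lies in a shrinking neighborhood of that representative, which is where the transverse nondegeneracy secured by the leptokurtic assumption is indispensable.
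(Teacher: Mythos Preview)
Your outline matches the paper's strategy: the same three-piece decomposition (row-wise spectral perturbation of $\widehat U$ via the framework of \cite{cape2019signal}, stability of the Varimax maximizer via curvature at the optimum, and recentering through the argument of Appendix \ref{app:recentering}), and you correctly isolate the $2\to\infty$ eigenvector bound as the hardest step and leptokurtosis as the source of nondegeneracy.

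There is one substantive correction. You write that ``$\Ruh$ converges at the input rate'' to the population maximizer, but the Varimax step does \emph{not} preserve the rate: it loses a square root. The paper's argument (Propositions \ref{prop:converge} and \ref{prop:varifun}, Lemma \ref{lemma:SOC}) first bounds the uniform gap in the Varimax \emph{objective} by the input perturbation $\epsilon$, and then converts this into a bound on the \emph{maximizer} via a second-order Taylor expansion in the Lie algebra of $\oo(k)$ (writing $R=\exp K$ for skew-symmetric $K$). Because the first-order term vanishes at a critical point and the second-order term is strictly negative definite (Corollary \ref{corollary:soc}, with the strictness coming from $\kappa>3$), an objective gap of order $\epsilon$ yields a rotation error of order $\sqrt{\epsilon}$. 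Concretely, $\|\widehat U - UW\|_{2\to\infty}=O_p(\Delta_n^{-1/2}n^{-1/2}\cdot\mathrm{polylog})$ produces only $\|R_{\widehat U}-R_{UW}P\|=O_p(\Delta_n^{-1/4}\cdot\mathrm{polylog})$, and this square-root loss is precisely why the final exponent is $-.24\approx -1/4$ rather than $-1/2$. Relatedly, the paper does not jump directly from $R_{\widehat U}$ to the population optimum as you propose; it inserts the intermediate object $R_{UW}$ (sample Varimax applied to the unperturbed $UW$) and handles the two comparisons---sample-on-$\widehat U$ versus sample-on-$UW$ (Proposition \ref{prop:varifun}), then sample versus population on $UW$ (Proposition \ref{prop:converge})---separately, which is what makes the second-order analysis tractable.
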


Theorem \ref{thm:main}  shows convergence in $2 \rightarrow \infty$ norm.  This means that every row of $\widehat{Z}+\1_n\widehat{\mu}_{Z}$ converges to the corresponding row of $Z$ in $\ell_2$. The $P_n$ matrix accounts for the fact that we do not attempt to identify the order of the columns in $Z$, or their sign.  If $\widehat{Z}$ is used without recentering by $\1_n\widehat{\mu}_{Z}$, then a similar result holds for estimating $\widetilde Z$.
By symmetry, if $Y$ satisfies the identification assumptions for Varimax, then \texttt{vsp} can also estimate $Y$.  If both $Z$ and $Y$ satisfy the identification assumptions for Varimax, then $B$ can also be recovered, even when it is not diagonal.  The proof for Theorem \ref{thm:main} begins in Appendix \ref{appendix:main}.

\section{Modern factor models as semi-parametric factor models}\label{sec:examples}

\begin{quote}
The semi-parametric factor model is related to ICA, the Stochastic Blockmodel, and Latent Dirichlet Allocation.  
Corollaries \ref{corollary:dcsbm} and \ref{corollary:pois} show that with some slight variations on the preprocessing of $A$, \texttt{vsp} can estimate the Stochastic Blockmodel and Latent Dirichlet Allocation.
\end{quote}

\subsection{Relationship to Independent Components Analysis}\label{sec:ica}


Independent Components Analysis (ICA) uses a type of semi-parametric factor model that is motivated by blind-source separation in signal processing.  In the typical formulation of ICA, we observe a multivariate time series $\A_t = Z_t M \in \R^k$ for $t = 1, \dots, n$, where $Z_t \in \R^k$ contains independent and non-Gaussian random variables.  The aim is to estimate $M^{-1}$, to unmix the observed signals in $\A_t$, and reveal the independent components  $Z_t$.  There are multiple ICA results that share some similarities to Theorems \ref{thm:varimax1} and \ref{thm:main}  (e.g.  \cite{comon1994independent, ica, chen2005consistent,chen2006efficient, ieeeICA, miettinen2015fourth, samworth2012independent}).  
To see the connection to the current paper, let $M \in \R^{k \times d}$ be potentially rectangular and defined as $M = BY^T$.  
To enable the regime $d=k$, the results for ICA typically presume that $\A = ZM$ is observed with little or no noise. 
In contrast, Theorem \ref{thm:main} covers situations where (i) $d$ grows at the same rate as $n$, (ii) there is an abundance of noise in $A$,  and (iii) $A$ is mostly zeros (i.e., sparse).  This allows the theorem to cover  the contemporary factor models in Section \ref{sec:examples}.  

\subsection{Tensor decompositions}\label{sec:ica}
Motivated in part by the issue of rotational invariance of PCA, \cite{kruskal1977three} showed how a tensor decomposition called the CP decomposition is unique; it decomposes a tensor into a set of factors that are not rotationally invariant. In Section 4, Kruskal discusses how this three way decomposition does not suffer from the same problem of rotation that ``consumes considerable attention and effort'' in factor analysis.  In an elegant formulation, \cite{anandkumar2014tensor} showed how these tensor spectral methods could be applied to  estimate the latent factors in a model class similar to the semi-parametric factor model. 
Where the principal components of $A$ are the eigenvectors of a matrix that contains the second order moments, $n^{-1} \E(\tilde A^T \tilde A)_{uv} = \E( \tilde A_{iu } \tilde A_{iv})$,
the elements of this higher order tensor contain the third order (or higher) moments; for example,  $T \in \R^{d \times d \times d}$ with $T_{u,v,w} = \E(A_{iu} A_{iv} A_{iw})$.  Then, for various formulations of $T$ and latent variable models, the CP tensor decomposition of $T$ has components that are equal to the latent factors \citep{janzamin2019spectral}.



The issue of rotational invariance  motivates for the extension from matrices to tensors. For example, in a recent book on using tensors for latent variable modeling, \cite{janzamin2019spectral} writes in the abstract ``PCA and other spectral techniques applied to matrices have several limitations. By limiting to only pairwise moments, they are effectively making a Gaussian approximation on the underlying data.'' 
However, despite the fact that PCA is typically imagined as a second order technique,
the principal components of $A$ retain the higher-order distributional properties of the latent variables (see Remark \ref{remark:pca}). 
As such, we need not consider the higher order moments of the manifest variables $A$ in the tensor $T$.  
\texttt{vsp} uses the higher order moments of the principal components themselves, by applying Varimax directly to the principal components. 
Given our heuristics around rotational invariance, it is surprising that this can work.

\subsection{Stochastic Blockmodels}\label{sec:block}

In social network analysis, $A \in \{0,1\}^{n \times n}$ is the adjacency matrix of a graph on $n$ people. 
\[A_{ij}  =  \left\{\begin{array}{cl}1 & \mbox{$i$ friends with $j$} \\0 & o.w.\end{array}\right.\]
The Stochastic Blockmodel \citep{holland} is a semi-parametric factor model for generating a random adjacency matrix. Under this model,  each individual $i$ is assigned to a single block $z(i) \in \{1, \dots, k\}$ and the probability that $i$ and $j$ are friends is 
\[\pr(A_{ij} = 1 | z(i),z(j)) = B_{z(i), z(j)}, \mbox{ where } B \in [0,1]^{k \times k}.\]  
Define $\A = E(A| Z, B, Y)$. To express $\A$ in the factor model as $ZBZ^T$, define $Z \in \{0,1\}^{n \times k}$ such that $Z_{ij}= 1$ when $z(i) = j$ and $Z_{ij} = 0$ otherwise.  When friendships are symmetric, so is $A$; in this setting $Y=Z$ and the elements above the diagonal of $A$ are independent. 
There are four popular generalizations of the Stochastic Blockmodel that have the structure $ZBZ^T$, and are thus other types of semi-parametric factor models.  The Degree-Corrected Stochastic Blockmodel includes an additional degree parameters $\theta_{i, z(i)}>0$ for each individual $i$.  The probability of friendship becomes $\A_{ij} = \theta_{i, z(i)} \theta_{j, z(j)} B_{z(i), z(j)}$ \citep{karrer2011stochastic}.  To express this model as $ZBZ^T$,
define $Z_{ij} = \theta_{i, z(i)} \mathbb{I}\{z(i)=j\}$, where $\mathbb{I} \in \{0,1\}$ is the indicator function.
In the Overlapping Stochastic Blockmodel, $Z \in \{0,1\}^{n \times k}$  is sparse \citep{latouche2011overlapping}.\footnote{The original paper paper on the overlapping Stochastic Blockmodel is not exactly the factor model used here because it includes a logistic link function, $\pr(A_{ij} = 1) = logit(Z_i B Z_j^T)$.}
In the mixed-membership Stochastic Blockmodel, each row of $Z$ is an independent sample from the Dirichlet distribution \citep{airoldi2008mixed}.  Later, \cite{zhang2014detecting} and \cite{jin2017sharp} generalized these models to only presume that $Z_i \in \R^k$ is element-wise non-negative.  
Table \ref{tab:sbm} summarizes all of these models.
While this discussion focuses on unipartite and undirected graphs, graphs that are ``two-way,'' ``bipartite,''  or ``directed,'' can also be modeled in the form $\A = ZBY^T$ \citep{pnas}.

\begin{table}[htbp]
   \centering
\begin{tabular}{|l|l|l|}
\hline SBM & the vector $Z_i \in \R^k$  contains & distribution of $Z_i$\\
\hline 0) Standard SBM &  a single one, the rest zeros  & multinomial\\
1) Degree-Corrected &  a single positive entry, the rest zeros & not specified\\
2) Overlapping  &   a mix of $1$s and $0$s & independent Bernoulli\\
3) Mixed Membership  & non-negative entries that sum to one & Dirichlet \\
4) Degree-Corrected, & & \\
\ \ \  \ \ Mixed Membership & non-negative entries &  not specified\\
\hline \end{tabular}
   \caption{Restrictions on the factor matrix $Z$ create variations on the Stochastic Blockmodel (SBM).  
   There are further differences between these models that are not emphasized by this table.}
   \label{tab:sbm}
\end{table}

\vspace{.1in}
\noindent
\textbf{Estimating the Degree-Corrected Stochastic Blockmodel with \texttt{vsp}.}
Under the Stochastic Blockmodel and the Degree Corrected version, each node $i$ belongs to exactly one cluster.  In such ``hard clustering'' models, the elements in the same row of $Z$ cannot be independent.  This implies that $Z$ cannot satisfy Assumption \ref{assumption:Varimax} of Theorem \ref{thm:main}. 
The next corollary shows that \texttt{vsp} \textit{without the centering step} can estimate these models. 

%
%


Let $\pi \in \R^k$ be a probability  distribution on $[k]$.  Suppose that $z(1), \dots, z(n) \sim Multinomial(\pi)$, independently.  For each block $j$, suppose that $\theta_{1,j}, \dots, \theta_{n,j} \in \R$ are independent random variables generated from a bounded probability distribution $f_j$.  The scale of this distribution is unidentifiable; so for technical convenience, it is presumed that $\E(Z_{ij}^2) = 1$, or equivalently, that $\E(\theta_{i, j}^2) = 1/\pi_j$.  This is akin to the third assumption in the Varimax assumption.
This scaling ensures that $\E(Z^T Z)/n$ (i.e. without centering) converges to the identity matrix.   If each $f_j$ is a point mass, then this model is equivalent to the SBM. 

\begin{corollary} \label{corollary:dcsbm}
Suppose that $A_n \in \R^{n \times n}$ is generated from the Degree Corrected Stochastic Blockmodel with $\E(A_n|Z_n) = Z_n B_n Z_n$, where $Z_n$ is generated as described in the proceeding paragraph. Suppose that the probability distributions $f_j$ for $j \in [k]$ are bounded.  Define $\rho_n$ as in Equation \eqref{eq:rhodef} and suppose that there exists a fixed matrix $B \in \R^{k \times k}$ such that $B_n = \rho_n B$. 

Define $\widehat Z \in \R^{n \times d}$ as the output of \texttt{vsp} without centering (i.e. skip step 1).  In the asymptotic regime where $\Delta_n \succeq \log^{11.1} n$, 
there exists a sequence permutation and sign-flip matrices $P_n \in \mathcal{P}(k)$ such that 
\begin{equation}\label{Zconvergence}
||\widehat{Z} - ZP_n||_{2\to \infty}  = O_{p}(\Delta_n^{-.24}\log^{2.75}n).
\end{equation}

%
%
\end{corollary}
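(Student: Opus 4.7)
The plan is to adapt the proof of Theorem~\ref{thm:main} to the present setting, where each row of $Z$ has exactly one nonzero entry and therefore violates Assumption~\ref{assumption:Varimax} (the coordinates of $Z_i$ are deterministically dependent). Because step~1 is skipped, I replace Propositions~\ref{prop:centering}--\ref{prop:svd} with their no-centering variants from Remark~\ref{remark:centering}, establish a DCSBM-specific version of Theorem~\ref{thm:varimax1}, and then feed the result into the element-wise eigenvector perturbation machinery already used in Theorem~\ref{thm:main}.

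\textbf{Population analysis without centering.} Under the scaling $\E(Z_{ij}^2)=1$, the disjoint-support property $Z_{ij}Z_{i\ell}=0$ for $j\neq \ell$ makes $Z^T Z / n \to I_k$ in probability. Substituting $Z^TZ/n$ for $\widehat\Sigma_Z$ in Equation~\eqref{eq:svdA} (per Remark~\ref{remark:centering}) gives population principal components $U = n^{-1/2}Z(Z^TZ/n)^{-1/2}\widetilde R_U^T$, so $\sqrt{n}\,U\widetilde R_U$ is asymptotically equal to $Z$ row-wise.

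\textbf{Varimax identification for DCSBM.} Let $R\in\oo(k)$ and set $Q=\widetilde R_U^T R$. Because $UR$ has orthonormal columns, the second term in $v(R,\sqrt{n}U)$ is constant in $R$, so maximizing Varimax reduces to maximizing $n^{-1}\sum_{i,\ell}[\sqrt{n}UR]_{i\ell}^4$. The crucial observation is that $(ZQ)_i = \theta_{i,z(i)} Q_{z(i),\cdot}$, so the objective converges in probability to
\[ g(Q) \;=\; \sum_{j=1}^k \pi_j\,\E(\theta_{i,j}^4)\,\|Q_{j,\cdot}\|_4^4. \]
Orthogonality of $Q$ forces $\|Q_{j,\cdot}\|_2=1$, hence $\|Q_{j,\cdot}\|_4^4 \leq \|Q_{j,\cdot}\|_\infty^2 \leq 1$, with equality iff $Q_{j,\cdot}$ is a signed standard basis vector. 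Since $\pi_j\E(\theta_{i,j}^4) > 0$ for every $j$ (the $f_j$ are bounded with nonzero second moment), the maximizer of $g$ is exactly $\pp(k)$, i.e., $R \in \{\widetilde R_U P : P\in\pp(k)\}$. A quantitative refinement of the same inequality, e.g.\ $\|Q_{j,\cdot}\|_4^4 \leq 1 - c\,\mathrm{dist}(Q_{j,\cdot},\{\pm e_\ell\})^2$, supplies local quadratic curvature of $g$ around each maximizer.

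\textbf{Sample concentration and the main obstacle.} Apply the $2\to\infty$ eigenvector perturbation bounds from \cite{erdHos2013spectral,cape2019signal,mao2018overlapping} used in Theorem~\ref{thm:main}; Assumption~\ref{assumption:atail} holds for Bernoulli noise (Section~\ref{sec:assumption3bern}), so $\|\widehat U - U P_U\|_{2\to\infty}$ concentrates at the required rate. The curvature from the previous paragraph feeds into the same M-estimator argument that transfers eigenvector perturbation to $R_{\widehat U}$, and hence to $\widehat Z = \sqrt{n}\widehat U R_{\widehat U}$, producing the stated $O_p(\Delta_n^{-.24}\log^{2.75}n)$ bound with a signed-permutation $P_n$ absorbing the remaining ambiguity. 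The chief obstacle is quantitatively matching the curvature: verifying that the $\ell_4$-vs-$\ell_2$ inequality of the previous paragraph yields curvature uniform in $Q$ away from $\pp(k)$ and strong enough to match the $\Delta_n^{-.24}$ exponent of Theorem~\ref{thm:main} without the within-row independence on which the leptokurtic argument in Theorem~\ref{thm:varimax1} relied; boundedness of the $f_j$ plus the positivity of each $\pi_j\E(\theta_{i,j}^4)$ are exactly what make this verification feasible.
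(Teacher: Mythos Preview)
Your proposal is essentially the paper's own argument: both exploit the single-nonzero-entry structure of each row of $Z$ to reduce the population Varimax objective to $\sum_j c_j\|Q_{j\cdot}\|_4^4$ with strictly positive coefficients, identify the maximizers as $\pp(k)$, and then plug into the same $2\to\infty$ eigenvector-perturbation pipeline underlying Theorem~\ref{thm:main}. The one substantive difference is in how local curvature at the optimum is certified: the paper verifies the first- and second-order optimality conditions of the population Varimax explicitly (re-deriving the analogues of Corollaries~\ref{corollary:foc} and~\ref{corollary:soc} in the DC-SBM case, using $Z_{ij}Z_{i\ell}=0$ to make the FOC trivial and a direct quadratic-form computation for the SOC), whereas you propose the elementary row-wise inequality $\|Q_{j\cdot}\|_4^4\le 1-c\,\mathrm{dist}(Q_{j\cdot},\{\pm e_\ell\})^2$. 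Both yield the needed uniform strong concavity near $\pp(k)$; the paper's route slots directly into its existing Lie-algebra/Taylor-expansion machinery without change, while yours would require an extra line to pass from per-row distances to $\|Q-P\|_F$ for some $P\in\pp(k)$ using orthogonality of $Q$.
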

A proof is contained in Appendix \ref{sec:dcsbm}.

To see why the centering step creates bias for \texttt{vsp} under a hard clustering model, note that \texttt{vsp} with the centering step (step 1) estimates $\widetilde Z$ (i.e., $Z$ after centering). By construction, $\widehat Z$ contains orthogonal columns.  However, under the Stochastic Blockmodel, $\widetilde Z$ does not.  Interestingly, $Z$ \textit{without centering}  does contain orthogonal columns and \texttt{vsp} without centering can estimate it.


%
%
%

\vspace{.1in}
\noindent
\textbf{Overlapping and Mixed Membership.}  Under the Overlapping SBM, $$Z_{ij} \sim Bernoulli(p_j)$$ independently for all $i$ and $j$. This will satisfy the identification assumptions for Varimax so long as $p_j \not \in [1/2 \pm 1/\sqrt{12}]$ for $j = 1, \dots, k$.  This rather strange condition ensures that $Z_{ij}$ is leptokurtic and thus Varimax can identity the rotation. If Varimax were replaced with an alternative rotation from the ICA literature, then one could remove the awkward condition on the $p_j$'s. 


Under the Mixed Membership SBM, $Z_i$ is on the simplex.  As such, its elements must sum to one and cannot be statistically independent.  This restriction to the simplex also limits the ability of the Mixed Membership model to create a large amount of degree heterogeneity, a common property in empirical networks.  As discussed in Section \ref{sec:lda}, this  problem also arrises for  Latent Dirichlet Allocation (LDA).  Section \ref{sec:lda} discusses a natural generalization of LDA that allows for more heterogeneous document lengths.  A similar generalization could be applied to the Mixed Membership SBM.  This would create a ``Degree-Corrected Mixed Membership model.''  Under such a model, a result analogous to Corollary \ref{corollary:pois} could be derived.


%


\vspace{.1in}
\noindent
\textbf{Degree-Corrected Mixed Membership.}  
The  papers which proposed the Degree-Corrected Mixed Membership model only presume that $Z_i$ is element-wise non-negative 
\citep{zhang2014detecting, jin2017sharp}.  As such, if the elements of $Z_i$ are sampled in a way which satisfy the identification assumptions for Varimax, then Theorem \ref{thm:main} shows that \texttt{vsp} can estimate this model.

\subsection{Latent Dirichlet Allocation}\label{sec:lda}
In the setting of text analysis and natural language processing,  let $A \in \N^{n \times d}$ be a document-term matrix on $n$ documents and $d$ unique words,
\begin{equation} \label{eq:docterm}
A_{ij} = \mbox{number of times that word $j$ appears in document $i$}.
\end{equation}
Latent Dirichlet Allocation (LDA) is a popular generative model for $A$ that is used for modeling the topics of documents \citep{blei}.

The LDA model has parameters $\xi>0$, $\alpha \in \R_+^k$, and $\beta \in \R_+^{d \times k}$ with $\1_d^T \beta = \1_k$.  The rows of $\beta$ index the unique words $1, \dots, d$. Because the elements of $\beta$ are positive and each column sums to one, each column makes a probability distribution on the unique words.
 LDA generates a single document $i = 1, \dots, n$ with the following steps, (1) choose $Z_i \sim Dirichlet(\alpha)$ to be the topic distribution for that document, (2) sample $N_i \sim Poisson(\xi)$ to be the number of words in the document, (3) for each of the words in the document $w = 1, \dots, N_i$, 
choose the topic for that word $z_w \sim Multinomial(Z_i) \in \{1, \dots, k\}$, and then sample the word $w$ as multinomial with probabilities specified by the $z_w$ column of $\beta$ (i.e., $w$ is the $j$th unique word with probability $\beta_{j,z_w}$).

\begin{lemma} \label{lemma:ldaFactor}
Under the LDA model, conditionally on the Dirichlet variables $Z_1, \dots, Z_n$, the document-term matrix $A$ has independent Poisson entries with 
\begin{equation}\label{eq:ldamodel}
\E(A|Z)   = \xi Z \beta^T,
\end{equation}
where $Z \in \R_+^{n \times k}$ has rows $Z_1, \dots, Z_n$.
\end{lemma}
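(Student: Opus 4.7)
The plan is to reduce the lemma to the classical Poissonization identity: if $N \sim \mathrm{Poisson}(\xi)$ and, conditional on $N$, a vector of counts $(X_1,\dots,X_d)$ follows $\mathrm{Multinomial}(N,p)$, then $X_1,\dots,X_d$ are independent with $X_j \sim \mathrm{Poisson}(\xi p_j)$. All remaining work is bookkeeping that identifies the relevant probability vector $p$ as $\beta Z_i$.

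First I would fix a document index $i$ and work conditionally on $Z_i$. The per-word sampling process in steps~(3) of the generative model makes each word a two-stage draw: first a topic $z_w\sim\mathrm{Multinomial}(Z_i)$, then a word given the topic. Marginalizing out the latent topic $z_w$ by the law of total probability gives, for each word position $w$ in document $i$,
\begin{equation*}
\Pr(w = j \mid Z_i) \;=\; \sum_{\ell=1}^k \Pr(z_w=\ell \mid Z_i)\,\Pr(w=j \mid z_w=\ell) \;=\; \sum_{\ell=1}^k Z_{i\ell}\,\beta_{j\ell} \;=\; (\beta Z_i)_j \,=:\, p_j.
\end{equation*}
Because the $N_i$ words of document $i$ are drawn independently given $Z_i$, the counts $(A_{i,1},\dots,A_{i,d})$ follow $\mathrm{Multinomial}(N_i, p)$ conditionally on $(Z_i, N_i)$.

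Next I would apply Poissonization. Since $N_i \sim \mathrm{Poisson}(\xi)$ is independent of $Z_i$, the standard Poissonization lemma yields that conditional on $Z_i$ alone, the entries $A_{i,1},\dots,A_{i,d}$ are mutually independent with $A_{ij} \sim \mathrm{Poisson}(\xi p_j) = \mathrm{Poisson}\!\left(\xi\,(Z\beta^T)_{ij}\right)$. Taking expectation gives $\mathbb{E}[A_{ij}\mid Z] = \xi (Z\beta^T)_{ij}$, which is the matrix identity \eqref{eq:ldamodel}.

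Finally, independence across documents follows because the generative process for document $i$ (the draw of $N_i$ and of its $N_i$ words) uses only $Z_i$ and fresh randomness independent of everything else. Combining row-wise independence within each document with document-wise independence across rows gives full entry-wise independence of $A$ conditional on $Z$. There is no real obstacle here; the only subtle point is remembering that the multinomial topic variable $z_w$ must be integrated out before Poissonization is invoked, which is why the effective per-word probability is the mixture $\beta Z_i$ rather than a single column of $\beta$.
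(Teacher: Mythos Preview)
Your proof is correct and follows essentially the same approach as the paper: marginalize out the latent topic to identify the per-word probability as $(\beta Z_i)_j$, recognize the resulting counts as multinomial given $(Z_i,N_i)$, and then invoke the Poisson--Multinomial relationship to obtain independent Poisson entries with mean $\xi(\beta Z_i)_j$. The paper's proof is slightly terser but proceeds through the identical steps.
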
 

A short proof in Section \ref{sec:LDAlemmas} relies upon the Poisson-Multinomial relationship.  While Equation \eqref{eq:ldamodel} has the form of the semi-parametric factor model (e.g. set $B=I$ and $Y= \beta$), it does not satisfy the identification assumptions for Varimax because the elements in $Z_i$ sum to one and as such, they must be dependent. Moreover, this has the unnatural consequence of making $\E(A|Z)$ have rank $k-1$ or less.  However, the following modification makes $\E(A|Z)$ have rank $k$ and enables the application of Theorem \ref{thm:main}. 

In the original formulation of LDA, the number of words in document $i$ is $N_i \sim Poisson(\xi)$, for $\xi \in \R_+$. About this step, \cite{blei} says, ``more realistic document length distributions can be used as needed.'' 
If document lengths are more heterogenous than what is modeled by Poisson($\xi$), then a convenient way to increase the heterogeneity is to use Poisson overdispersion; first sampling $\xi_i$, then sampling $N_i \sim Poisson(\xi_i)$. 

\begin{quote}
\noindent \textbf{Natural modification to LDA:} Sample $N_i$, the number of words in  document $i$,  as overdispersed Poisson via (1)  $\xi_i \sim Gamma(\small{\sum}_i \alpha_i, s)$   for some scale parameter $s>0$ and (2) $N_i \sim Poisson(\xi_i)$. 
\end{quote}

This ``Gamma-Poisson mixture'' is a well studied model of Poisson overdispersion; under this model, $N_i$ has the negative binomial distribution.  Define $\Xi \in \R^{n \times n}$ as a diagonal matrix with $\Xi_{ii} = \xi_i$.  


\begin{lemma} \label{lemma:naturalLDA}
Under the LDA model with the natural modification to $N_i$, conditionally on $Z_1, \dots, Z_n$ and $\Xi$, the document-term matrix $A$ has independent Poisson entries satisfying 
\begin{equation}\label{eq:ldamodel}
\E(A|\Xi, Z) = (\Xi Z) \beta^T.
\end{equation}
Moreover, each element $(\Xi Z)_{ij}$ is independent Gamma$(\alpha_j, s)$ and this distribution is leptokurtic. 
Define $\Sigma$ as a diagonal matrix with $\Sigma_{jj} = \alpha_j s^2 $, the variance of Gamma$(\alpha_j,s)$.  Then, the factor matrix 
\begin{equation}\label{eq:zstar}
Z_* = (\Xi Z)\Sigma^{-1/2}
\end{equation}
satisfies the identification assumptions for Varimax.
\end{lemma}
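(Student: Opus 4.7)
The plan is to verify the three claims of the lemma in sequence, relying mainly on (i) Lemma \ref{lemma:ldaFactor} applied conditionally on the overdispersion variable $\xi_i$, and (ii) the classical Gamma representation of the Dirichlet distribution. The key enabling fact is that $\mathrm{Dirichlet}(\alpha)$ and $\mathrm{Gamma}(\sum_j \alpha_j, s)$ combine to ``un-normalize'' into independent Gammas.

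For the first claim about the Poisson entries, condition on $(\xi_i, Z_i)$ and observe that this recovers the setting of the original LDA model with the (now random) rate parameter $\xi_i$ in place of the fixed $\xi$. Lemma \ref{lemma:ldaFactor}, applied row-by-row to each document, then says that row $i$ of $A$ consists of independent Poissons with mean $\xi_i Z_i^T \beta^T$. Stacking rows and using independence across documents gives $\E(A \mid \Xi, Z) = \Xi Z \beta^T$, with all entries of $A$ independent.

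For the second claim, I would invoke the standard construction of the Dirichlet: if $G_1, \dots, G_k$ are independent with $G_j \sim \mathrm{Gamma}(\alpha_j, s)$ and $S = \sum_j G_j$, then $S \sim \mathrm{Gamma}(\sum_j \alpha_j, s)$, the normalized vector $(G_1/S, \dots, G_k/S) \sim \mathrm{Dirichlet}(\alpha)$, and the normalized vector is \emph{independent} of $S$. Running this representation in reverse: given $Z_i \sim \mathrm{Dirichlet}(\alpha)$ and an \emph{independent} $\xi_i \sim \mathrm{Gamma}(\sum_j \alpha_j, s)$, the vector $\xi_i Z_i$ has the same joint distribution as $(G_1, \dots, G_k)$, hence has independent $\mathrm{Gamma}(\alpha_j, s)$ coordinates. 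Independence across $i$ follows from the independence of documents. Leptokurticity is then a direct computation: the kurtosis of $\mathrm{Gamma}(\alpha, s)$ is $3 + 6/\alpha$, which exceeds $3$ for every $\alpha > 0$.

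For the third claim, the four conditions of Assumption \ref{assumption:Varimax} can be checked almost mechanically using step two. Rows of $\Xi Z$ are i.i.d.\ across $i$ since documents are generated independently with identical parameters, so the rescaled rows of $Z_* = (\Xi Z)\Sigma^{-1/2}$ are i.i.d.\ as well. Within a row, the coordinates are independent by step two, and right-multiplication by the diagonal $\Sigma^{-1/2}$ preserves this coordinatewise independence. The variance condition holds by construction, since $\Sigma_{jj}$ is defined to be $\mathrm{Var}((\Xi Z)_{ij}) = \alpha_j s^2$. Finally, kurtosis is invariant under positive rescaling, so each column of $Z_*$ inherits the leptokurtic property from the corresponding Gamma column of $\Xi Z$. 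The only non-routine step is invoking the Dirichlet-Gamma identity cleanly; everything else is bookkeeping.
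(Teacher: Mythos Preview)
Your proposal is correct and matches the paper's own proof essentially step for step: the paper also reduces Part~1 to Lemma~\ref{lemma:ldaFactor} with $\xi_i$ in place of $\xi$, proves Part~2 by the Gamma--Dirichlet independence representation (constructing independent $X_{ij}\sim\mathrm{Gamma}(\alpha_j,s)$ and noting $\xi_i Z_i \stackrel{d}{=} X_i$), and then checks Assumption~\ref{assumption:Varimax} exactly as you do. There is no substantive difference in approach or level of detail.
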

See Section \ref{sec:LDAlemmas} for a short proof.  The next result shows that \texttt{vsp} applied to the column centered version of $A$ (i.e., $\breve A = A - \1_n (\1_n^T A/n)$) can estimate the LDA model with the natural modification.  
Similar to $\breve A$, define $\breve \A$ be the column centered version of $\A$.


%

\begin{corollary} \label{corollary:pois}
Let $A$ be generated from the natural modification to LDA given above with $k$ topics and let  ${\A} = \E(A|\Xi, Z)$. 
Define $Z_{*}$ as in Equation \eqref{eq:zstar}.
Let $\widehat{Z} $ be the output of \texttt{vsp} using $\breve A$ as input (and skipping step 1).  
In the asymptotic regime where  
$$ \Delta_n \succeq \log^{15.1} n, \quad \sigma_{\min}(\beta) \geq c_1, $$
for universal constant $c_1 \in (0,1)$, almost surely there exists $P_n \in \mathcal{P}(k)$ s.t. 
\begin{equation} \label{eq:docmember}
||\widehat{Z} - (Z_{*}-\E(Z_{*}))P_n||_{2\to \infty}  = O_{p}(\Delta_n^{-.24}\log^{2.75}n).
\end{equation}
Define the matrix $\Phi= \widehat{Z}^{T}\breve{A} \in \mathbb{R}^{k\times d}$ and estimate $\widehat{\beta} = (\Lambda_b^{-1} \Phi)^T\in \mathbb{R}^{d\times k}$, where $\Lambda_b$ is a diagonal matrix with $i$th diagonal element equals to $\ell_1$-norm of $i$th row of $\Phi$. Under this construction,
\begin{equation}
||\widehat{\beta}^T - P_n^T\beta^T||_{\infty} =O_p(\Delta_n^{-.24}\log^{3.75}n).\end{equation}
\end{corollary}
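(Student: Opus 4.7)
The strategy is two-stage: first invoke Theorem \ref{thm:main} to obtain the bound on $\widehat Z$, then invert the signal equation $\E(\breve A) = \widetilde Z_* \Sigma^{1/2}\beta^T$ using the near-orthonormality of $\widehat Z$ to reverse-engineer $\beta$.

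By Lemma \ref{lemma:naturalLDA}, conditional on $(\Xi,Z)$ the matrix $A$ has independent Poisson entries with $\E(A\mid \Xi,Z) = Z_* \Sigma^{1/2}\beta^T$, and $Z_*$ has iid rows of independent Gamma-distributed components normalized to unit variance. This verifies Assumptions \ref{assumption:Varimax} and \ref{assumption:ztail} (Gamma variables possess an MGF in a neighborhood of zero), while Assumption \ref{assumption:atail} for Poisson data is handled in Section \ref{sec:assumption3pois}. Column-centering produces $\E(\breve A\mid \Xi,Z) = \widetilde Z_* \Sigma^{1/2}\beta^T$, where $\widetilde Z_*$ is $Z_*$ minus its empirical column means. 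Feeding $\breve A$ into \texttt{vsp} without step 1 is thus equivalent, modulo a row-centering correction that is negligible by the LLN, to applying the algorithm analyzed in Theorem \ref{thm:main} with $B := \Sigma^{1/2}$ and $Y := \beta$; that theorem yields (\ref{eq:docmember}), with the shift from $\widetilde Z_*$ to $Z_* - \E(Z_*)$ absorbed into the error term since $\|\widetilde Z_* - (Z_* - \E(Z_*))\|_{2\to\infty}$ is uniformly $O_p(\sqrt{(\log n)/n})$.

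For the $\beta$ bound, decompose
\[ \Phi \;=\; \widehat Z^T\, \widetilde Z_* \Sigma^{1/2}\beta^T \;+\; \widehat Z^T\bigl(\breve A - \E(\breve A)\bigr). \]
Write $\widehat Z = \widetilde Z_* P_n + E$ with $\|E\|_{2\to\infty}$ controlled by (\ref{eq:docmember}). Sub-exponential concentration (Assumption \ref{assumption:ztail}) gives $n^{-1}\widetilde Z_*^T \widetilde Z_* = I_k + O_p(n^{-1/2}\log^{1/2}n)$, so the signal term equals $n P_n^T \Sigma^{1/2}\beta^T$ up to perturbations whose entrywise size is dominated by $n\|E\|_{2\to\infty}\|\Sigma^{1/2}\beta^T\|_{2\to\infty}$. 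The noise term $\widehat Z^T(\breve A - \E(\breve A))$ is controlled entrywise by a Bernstein inequality for independent Poisson variables, using $\|\widehat Z_i\|_2 = O_p(\sqrt{k})$ from the $2\to\infty$ bound and variance proxy $\bar\rho_n$ per entry. Because each row of $\beta^T$ is a probability vector with unit $\ell_1$ norm, the $i$th diagonal entry of $\Lambda_b$ concentrates around $n\sqrt{\alpha_{\pi_n(i)}}\,s$, which is bounded away from zero by $\sigma_{\min}(\beta)\geq c_1$ and the fixed Dirichlet parameters. A perturbation identity for $\Lambda_b^{-1}\Phi$ then gives the claimed $O_p(\Delta_n^{-.24}\log^{3.75}n)$ bound in $\|\cdot\|_\infty$, with the extra power of $\log n$ relative to (\ref{eq:docmember}) arising from the union bound over $kd$ entries in the Bernstein step and the row-$\ell_1$ aggregation.

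The main obstacle is bounding the noise $\widehat Z^T(\breve A - \E(\breve A))$ uniformly over all $kd$ entries while handling the dependence between $\widehat Z$ and $\breve A$ (since $\widehat Z$ is computed from $A$ itself). A clean remedy is a leave-one-column-out or decoupling argument that replaces $\widehat Z$ with a surrogate $\widehat Z^{(j)}$ independent of the $j$th column of $\breve A$ and exploits the $2\to\infty$ bound to show $\|\widehat Z - \widehat Z^{(j)}\|_{2\to\infty}$ is negligible. This decoupling is what forces the strengthened condition $\Delta_n \succeq \log^{15.1}n$ relative to the $\log^{11.1}n$ required in Theorem \ref{thm:main}, and the lower bound $\sigma_{\min}(\beta) \geq c_1$ is essential to keep $\Lambda_b$ stably invertible so that the normalization step does not amplify the error.
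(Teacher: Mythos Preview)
Your reduction of \eqref{eq:docmember} to Theorem \ref{thm:main} is essentially what the paper does: verify Assumptions \ref{assumption:Varimax}--\ref{assumption:atail} for the factorization $\E(\breve A)=\widetilde Z_*\,\Sigma^{1/2}\beta^T$ (the paper writes this as $\widetilde Z_*(\sqrt n\,\Sigma^{1/2})(n^{-1/2}\beta^T)$ and separately checks that $\|n^{-1/2}\beta^T\|_{2\to\infty}=O(n^{-1/2})$), with the Poisson moment bound supplied by Section \ref{sec:assumption3pois}.

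For the $\beta$ bound your route diverges from the paper in a way that makes things harder than they need to be. You propose controlling $\widehat Z^T(\breve A-\E(\breve A))$ entrywise via Bernstein plus a leave-one-column-out decoupling to break the dependence between $\widehat Z$ and $\breve A$. The paper sidesteps this entirely: it bounds the noise term by the crude submultiplicative estimate
\[
\tfrac{1}{n}\,\|\widehat Z^T(\breve A-\breve{\A})\|_{2\to\infty}\;\le\;\tfrac{1}{n}\,\|\widehat Z^T\|\,\|\breve A-\breve{\A}\|,
\]
where $\|\widehat Z\|=\sqrt n$ because $\widehat Z$ has orthonormal (up to scaling) columns, and $\|\breve A-\breve{\A}\|$ is controlled by the matrix concentration Lemma \ref{lemma:Chung}. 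No decoupling is required because the spectral-norm bound on $\breve A-\breve{\A}$ holds uniformly and the dependence of $\widehat Z$ on $A$ is immaterial once both factors are bounded separately. The signal term is handled the same way you suggest, via $\|\widehat Z-\widetilde Z_* P_n\|$ and the LLN for $\widetilde Z_*^T\widetilde Z_*/n$.

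Consequently your diagnosis of the ``main obstacle'' and your explanation for the strengthened condition $\Delta_n\succeq\log^{15.1}n$ are off. That condition is not forced by any decoupling step; it simply ensures that the final rate $O_p(\Delta_n^{-1/4+\delta/2}\log^{15/4}n)$ for $\|\widehat\beta^T-P_n^T\beta^T\|_\infty$ tends to zero (the $\log^{15/4}$ arises from combining the $\log^{11/4}$ in \eqref{eq:docmember} with the extra logarithmic factors in $\|\breve A-\breve{\A}\|$ and the $\sqrt d$ coming from passing from $\ell_2$ to $\ell_1$ row norms). Your entrywise-plus-decoupling program might be made to work, but it is both more delicate and unnecessary here.
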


The elements of $Z_*$ are independent Gamma random variables that have been rescaled by the diagonal matrix $\Sigma^{-1/2}$ to ensure that they have unit variance. Corollary \ref{corollary:pois} shows that \texttt{vsp} using the column-centered matrix $\breve A$ estimates $Z_* - \E(Z_*)$; similar to the previous results, this $2 \rightarrow \infty$ convergence implies that each row of $\widehat Z$ converges to the corresponding row of $Z_* - \E(Z_*)$. Using $\widehat Z$, the corollary constructs $\hat \beta \in \R^{d \times k}$, a simple estimator for the probability distribution of   words within each of the $k$ topics.  Each of the $k$ estimated topic distributions converges in $\ell_1$ norm just a little slower than $\Delta_n^{-1/4}$.  A proof of Corollary \ref{corollary:pois} is given in Section \ref{sec:LDAproof}.

\section{Discussion}\label{sec:discussion}

%
%
%
%
%
%
%
%


PCA with Varimax is a vintage data analysis technique. Theorem \ref{thm:main} shows that it provides a unified spectral estimation strategy for a broad class of semi-parametric factor models.  The reason for this is that (1) the principal components have the same column space as the latent factors and (2) under the identification assumptions for Varimax, Varimax specifies a basis for that column space in which each basis vector corresponds to a latent factor; this is the intuition gained in Section \ref{sec:intuition} and Theorem \ref{thm:main}.  Leptokurtosis is a key identifiability condition in the identification assumptions for Varimax. This condition is satisfied if the factors are sparse.  Moreover, this condition can be examined in the data.  In fact, Section \ref{sec:diagnostics} reinterprets the diagnostics practices developed in \cite{thurstone1935vectors, thurstone1947} as examining that leptokurtic condition.  Taken together, the results in this paper show that the Vintage Factor Analysis know-how developed by Thurstone and Kaiser performs statistical inference. This know-how has survived for nearly a century, despite the conventional wisdom that the factor rotation cannot  perform statistical inference.

\begin{quote}
\textit{Things don't necessarily happen for a reason; but things survive for a reason.}\newline \vspace{.1in}\hspace{3.4in}Nassim Nicholas Taleb.  
\end{quote}

\vspace{.3in}
\textbf{Acknowledgements:} Thank you to De Huang for valuable discussions. Thank you to Joshua Cape for helpful comments on an early draft on this paper.   Thank you to Alex Hayes for help  creating an \texttt{R} package of the code.  Thank you to E Auden Krauska, Dan Bolt, Alex Hayes, Fan Chen, Stephen Stigler, Anru Zhang, Miaoyan Wang, and Sebastien Roch for helpful discussions during the course of this research.  This research is supported in part by NSF Grants DMS-1612456 and DMS-1916378 and ARO Grant W911NF-15-1-0423.

\bibliographystyle{unsrtnat}
\bibliography{refs.bib}

\begin{thebibliography}{57}
\providecommand{\natexlab}[1]{#1}
\providecommand{\url}[1]{\texttt{#1}}
\expandafter\ifx\csname urlstyle\endcsname\relax
  \providecommand{\doi}[1]{doi: #1}\else
  \providecommand{\doi}{doi: \begingroup \urlstyle{rm}\Url}\fi

\bibitem[Thurstone(1935)]{thurstone1935vectors}
Louis~Leon Thurstone.
\newblock \emph{The vectors of mind: Multiple-factor analysis for the isolation
  of primary traits.}
\newblock University of Chicago Press, 1935.

\bibitem[Kaiser(1958)]{kaiser}
Henry~F Kaiser.
\newblock The varimax criterion for analytic rotation in factor analysis.
\newblock \emph{Psychometrika}, 23\penalty0 (3):\penalty0 187--200, 1958.

\bibitem[Thurstone(1947)]{thurstone1947}
Louis~Leon Thurstone.
\newblock \emph{Multiple factor analysis.}
\newblock University of Chicago Press: Chicago, 1947.

\bibitem[Anderson and Rubin(1956)]{anderson1956statistical}
Theodore~W Anderson and Herman Rubin.
\newblock Statistical inference in factor analysis.
\newblock In \emph{Proceedings of the third Berkeley symposium on mathematical
  statistics and probability}, volume~5, pages 111--150, 1956.

\bibitem[Jolliffe(2002)]{jolliffe2002principal}
I.T. Jolliffe.
\newblock \emph{Principal Component Analysis}.
\newblock Springer Series in Statistics. Springer, 2002.
\newblock ISBN 9780387954424.

\bibitem[Shalizi(2009)]{cosma}
Cosma Shalizi.
\newblock Lecture notes on factor analysis, September 2009.
\newblock URL
  \url{http://www.stat.cmu.edu/~cshalizi/350/lectures/12/lecture-12.pdf}.

\bibitem[Ramsay and Silverman(2007)]{ramsay2007applied}
J~O Ramsay and B~W Silverman.
\newblock \emph{Applied functional data analysis: methods and case studies}.
\newblock Springer, 2007.

\bibitem[Johnson and Wichern(2007)]{johnson2007applied}
R.A. Johnson and D.W. Wichern.
\newblock \emph{Applied Multivariate Statistical Analysis}.
\newblock Pearson Education International. Pearson Prentice Hall, 2007.
\newblock ISBN 9780135143506.

\bibitem[Bartholomew et~al.(2011)Bartholomew, Knott, and
  Moustaki]{bartholomew2011latent}
D.J. Bartholomew, M.~Knott, and I.~Moustaki.
\newblock \emph{Latent Variable Models and Factor Analysis: A Unified
  Approach}.
\newblock Wiley Series in Probability and Statistics. Wiley, 2011.
\newblock ISBN 9780470971925.

\bibitem[Ramsay and Silverman(2005)]{fda}
J.~Ramsay and B.W. Silverman.
\newblock \emph{Functional Data Analysis}.
\newblock Springer Series in Statistics. Springer, 2005.
\newblock ISBN 9780387400808.

\bibitem[Maxwell(1860)]{maxwell}
James~Clerk Maxwell.
\newblock V. illustrations of the dynamical theory of gases. part i. on the
  motions and collisions of perfectly elastic spheres.
\newblock \emph{The London, Edinburgh, and Dublin Philosophical Magazine and
  Journal of Science}, 19\penalty0 (124):\penalty0 19--32, 1860.

\bibitem[Feller(1971)]{feller}
W.~Feller.
\newblock \emph{An Introduction to Probability Theory and its Applications,
  Volume 2}.
\newblock John Wiley and Sons, Inc., 1971.

\bibitem[Hyv{\"a}rinen et~al.(2004)Hyv{\"a}rinen, Karhunen, and Oja]{ica}
Aapo Hyv{\"a}rinen, Juha Karhunen, and Erkki Oja.
\newblock \emph{Independent component analysis}, volume~46.
\newblock John Wiley \& Sons, 2004.

\bibitem[Anandkumar et~al.(2014)Anandkumar, Ge, Hsu, Kakade, and
  Telgarsky]{anandkumar2014tensor}
Animashree Anandkumar, Rong Ge, Daniel Hsu, Sham~M Kakade, and Matus Telgarsky.
\newblock Tensor decompositions for learning latent variable models.
\newblock \emph{The Journal of Machine Learning Research}, 15\penalty0
  (1):\penalty0 2773--2832, 2014.

\bibitem[Dua and Graff(2017)]{nyt}
Dheeru Dua and Casey Graff.
\newblock {UCI} machine learning repository, 2017.
\newblock URL \url{http://archive.ics.uci.edu/ml}.

\bibitem[d'Aspremont et~al.(2005)d'Aspremont, Ghaoui, Jordan, and
  Lanckriet]{d2005direct}
Alexandre d'Aspremont, Laurent~E Ghaoui, Michael~I Jordan, and Gert~R
  Lanckriet.
\newblock A direct formulation for sparse pca using semidefinite programming.
\newblock In \emph{Advances in neural information processing systems}, pages
  41--48, 2005.

\bibitem[Vu and Lei(2013)]{vince}
Vincent~Q Vu and Jing Lei.
\newblock Minimax sparse principal subspace estimation in high dimensions.
\newblock \emph{The Annals of Statistics}, 41\penalty0 (6):\penalty0
  2905--2947, 2013.

\bibitem[Fiori and Zenga(2009)]{fiori2009karl}
Anna~M Fiori and Michele Zenga.
\newblock Karl pearson and the origin of kurtosis.
\newblock \emph{International Statistical Review}, 77\penalty0 (1):\penalty0
  40--50, 2009.

\bibitem[Erd{\H{o}}s et~al.(2013)Erd{\H{o}}s, Knowles, Yau, and
  Yin]{erdHos2013spectral}
L{\'a}szl{\'o} Erd{\H{o}}s, Antti Knowles, Horng-Tzer Yau, and Jun Yin.
\newblock Spectral statistics of erd{\H{o}}s--r{\'e}nyi graphs i: local
  semicircle law.
\newblock \emph{The Annals of Probability}, 41\penalty0 (3B):\penalty0
  2279--2375, 2013.

\bibitem[Cape et~al.(2019{\natexlab{a}})Cape, Tang, and Priebe]{cape2019signal}
Joshua Cape, Minh Tang, and Carey~E Priebe.
\newblock Signal-plus-noise matrix models: eigenvector deviations and
  fluctuations.
\newblock \emph{Biometrika}, 106\penalty0 (1):\penalty0 243--250,
  2019{\natexlab{a}}.

\bibitem[Mao et~al.(2018)Mao, Sarkar, and Chakrabarti]{mao2018overlapping}
Xueyu Mao, Purnamrita Sarkar, and Deepayan Chakrabarti.
\newblock Overlapping clustering models, and one (class) svm to bind them all.
\newblock \emph{arXiv preprint arXiv:1806.06945}, 2018.

\bibitem[Chaudhuri et~al.(2012)Chaudhuri, Chung, and
  Tsiatas]{chaudhuri2012spectral}
Kamalika Chaudhuri, Fan Chung, and Alexander Tsiatas.
\newblock Spectral clustering of graphs with general degrees in the extended
  planted partition model.
\newblock In \emph{Conference on Learning Theory}, pages 35--1, 2012.

\bibitem[Amini et~al.(2013)Amini, Chen, Bickel, Levina,
  et~al.]{amini2013pseudo}
Arash~A Amini, Aiyou Chen, Peter~J Bickel, Elizaveta Levina, et~al.
\newblock Pseudo-likelihood methods for community detection in large sparse
  networks.
\newblock \emph{The Annals of Statistics}, 41\penalty0 (4):\penalty0
  2097--2122, 2013.

\bibitem[Le et~al.(2017)Le, Levina, and Vershynin]{le2017concentration}
Can~M Le, Elizaveta Levina, and Roman Vershynin.
\newblock Concentration and regularization of random graphs.
\newblock \emph{Random Structures \& Algorithms}, 51\penalty0 (3):\penalty0
  538--561, 2017.

\bibitem[Zhang and Rohe(2018)]{zhang2018understanding}
Yilin Zhang and Karl Rohe.
\newblock Understanding regularized spectral clustering via graph conductance.
\newblock \emph{arXiv preprint arXiv:1806.01468}, 2018.

\bibitem[Rohe et~al.(2020)Rohe, Muzhe, and Hayes]{githubCode}
Karl Rohe, Zeng Muzhe, and Alex Hayes.
\newblock Vintage sparse pca for non-parametric factor analysis.
\newblock \url{https://github.com/karlrohe/vsp}, 2020.

\bibitem[Bates and Maechler(2017)]{Matrix}
Douglas Bates and Martin Maechler.
\newblock \emph{Matrix: Sparse and Dense Matrix Classes and Methods}, 2017.
\newblock URL \url{https://CRAN.R-project.org/package=Matrix}.
\newblock R package version 1.2-12.

\bibitem[Qiu et~al.(2016)Qiu, Mei, and authors of the ARPACK library. See file
  AUTHORS~for details.]{rARPACK}
Yixuan Qiu, Jiali Mei, and authors of the ARPACK library. See file AUTHORS~for
  details.
\newblock \emph{rARPACK: Solvers for Large Scale Eigenvalue and SVD Problems},
  2016.
\newblock URL \url{https://CRAN.R-project.org/package=rARPACK}.
\newblock R package version 0.11-0.

\bibitem[Mardia et~al.(1979)Mardia, Kent, and Bibby]{mardia1979multivariate}
K.V. Mardia, J.T. Kent, and J.M. Bibby.
\newblock \emph{Multivariate analysis}.
\newblock Probability and mathematical statistics. 10th printing in 1995.
  Academic Press, 1979.
\newblock ISBN 9780124712508.

\bibitem[Bickel and Chen(2009)]{bickel2009nonparametric}
Peter~J Bickel and Aiyou Chen.
\newblock A nonparametric view of network models and newman--girvan and other
  modularities.
\newblock \emph{Proceedings of the National Academy of Sciences}, 106\penalty0
  (50):\penalty0 21068--21073, 2009.

\bibitem[Comon(1994)]{comon1994independent}
Pierre Comon.
\newblock Independent component analysis, a new concept?
\newblock \emph{Signal processing}, 36\penalty0 (3):\penalty0 287--314, 1994.

\bibitem[Chen and Bickel(2005)]{chen2005consistent}
Aiyou Chen and Peter~J Bickel.
\newblock Consistent independent component analysis and prewhitening.
\newblock \emph{IEEE Transactions on Signal Processing}, 53\penalty0
  (10):\penalty0 3625--3632, 2005.

\bibitem[Chen and Bickel(2006)]{chen2006efficient}
Aiyou Chen and Peter~J Bickel.
\newblock Efficient independent component analysis.
\newblock \emph{The Annals of Statistics}, 34\penalty0 (6):\penalty0
  2825--2855, 2006.

\bibitem[Wei(2015)]{ieeeICA}
Tianwen Wei.
\newblock A convergence and asymptotic analysis of the generalized symmetric
  fastica algorithm.
\newblock \emph{IEEE transactions on signal processing}, 63\penalty0
  (24):\penalty0 6445--6458, 2015.

\bibitem[Miettinen et~al.(2015)Miettinen, Taskinen, Nordhausen, and
  Oja]{miettinen2015fourth}
Jari Miettinen, Sara Taskinen, Klaus Nordhausen, and Hannu Oja.
\newblock Fourth moments and independent component analysis.
\newblock \emph{Statistical science}, 30\penalty0 (3):\penalty0 372--390, 2015.

\bibitem[Samworth and Yuan(2012)]{samworth2012independent}
Richard~J Samworth and Ming Yuan.
\newblock Independent component analysis via nonparametric maximum likelihood
  estimation.
\newblock \emph{The Annals of Statistics}, 40\penalty0 (6):\penalty0
  2973--3002, 2012.

\bibitem[Kruskal(1977)]{kruskal1977three}
Joseph~B Kruskal.
\newblock Three-way arrays: rank and uniqueness of trilinear decompositions,
  with application to arithmetic complexity and statistics.
\newblock \emph{Linear algebra and its applications}, 18\penalty0 (2):\penalty0
  95--138, 1977.

\bibitem[Janzamin et~al.(2019)Janzamin, Ge, Kossaifi, Anandkumar,
  et~al.]{janzamin2019spectral}
Majid Janzamin, Rong Ge, Jean Kossaifi, Anima Anandkumar, et~al.
\newblock Spectral learning on matrices and tensors.
\newblock \emph{Foundations and Trends{\textregistered} in Machine Learning},
  12\penalty0 (5-6):\penalty0 393--536, 2019.

\bibitem[Holland et~al.(1983)Holland, Laskey, and Leinhardt]{holland}
Paul~W Holland, Kathryn~Blackmond Laskey, and Samuel Leinhardt.
\newblock Stochastic blockmodels: First steps.
\newblock \emph{Social networks}, 5\penalty0 (2):\penalty0 109--137, 1983.

\bibitem[Karrer and Newman(2011)]{karrer2011stochastic}
Brian Karrer and Mark~EJ Newman.
\newblock Stochastic blockmodels and community structure in networks.
\newblock \emph{Physical review E}, 83\penalty0 (1):\penalty0 016107, 2011.

\bibitem[Latouche et~al.(2011)Latouche, Birmel{\'e}, and
  Ambroise]{latouche2011overlapping}
Pierre Latouche, Etienne Birmel{\'e}, and Christophe Ambroise.
\newblock Overlapping stochastic block models with application to the french
  political blogosphere.
\newblock \emph{The Annals of Applied Statistics}, pages 309--336, 2011.

\bibitem[Airoldi et~al.(2008)Airoldi, Blei, Fienberg, and
  Xing]{airoldi2008mixed}
Edoardo~M Airoldi, David~M Blei, Stephen~E Fienberg, and Eric~P Xing.
\newblock Mixed membership stochastic blockmodels.
\newblock \emph{Journal of Machine Learning Research}, 9\penalty0
  (Sep):\penalty0 1981--2014, 2008.

\bibitem[Zhang et~al.(2014)Zhang, Levina, and Zhu]{zhang2014detecting}
Yuan Zhang, Elizaveta Levina, and Ji~Zhu.
\newblock Detecting overlapping communities in networks using spectral methods.
\newblock \emph{arXiv preprint arXiv:1412.3432}, 2014.

\bibitem[Jin and Ke(2017)]{jin2017sharp}
Jiashun Jin and Zheng~Tracy Ke.
\newblock A sharp lower bound for mixed-membership estimation.
\newblock \emph{arXiv preprint arXiv:1709.05603}, 2017.

\bibitem[Rohe et~al.(2016)Rohe, Qin, and Yu]{pnas}
Karl Rohe, Tai Qin, and Bin Yu.
\newblock Co-clustering directed graphs to discover asymmetries and directional
  communities.
\newblock \emph{Proceedings of the National Academy of Sciences}, 113\penalty0
  (45):\penalty0 12679--12684, 2016.

\bibitem[Blei et~al.(2003)Blei, Ng, and Jordan]{blei}
David~M Blei, Andrew~Y Ng, and Michael~I Jordan.
\newblock Latent dirichlet allocation.
\newblock \emph{Journal of machine Learning research}, 3\penalty0
  (Jan):\penalty0 993--1022, 2003.

\bibitem[Wang and Rohe(2016)]{song}
Song Wang and Karl Rohe.
\newblock Discussion of ?coauthorship and citation networks for statisticians?
\newblock \emph{The Annals of Applied Statistics}, 10\penalty0 (4):\penalty0
  1820--1826, 2016.

\bibitem[Cape et~al.(2019{\natexlab{b}})Cape, Tang, Priebe,
  et~al.]{cape2019two}
Joshua Cape, Minh Tang, Carey~E Priebe, et~al.
\newblock The two-to-infinity norm and singular subspace geometry with
  applications to high-dimensional statistics.
\newblock \emph{The Annals of Statistics}, 47\penalty0 (5):\penalty0
  2405--2439, 2019{\natexlab{b}}.

\bibitem[Tropp(2012)]{tropp2012user}
Joel~A Tropp.
\newblock User-friendly tail bounds for sums of random matrices.
\newblock \emph{Foundations of computational mathematics}, 12\penalty0
  (4):\penalty0 389--434, 2012.

\bibitem[Mao et~al.(2017)Mao, Sarkar, and Chakrabarti]{mao2017estimating}
Xueyu Mao, Purnamrita Sarkar, and Deepayan Chakrabarti.
\newblock Estimating mixed memberships with sharp eigenvector deviations.
\newblock \emph{arXiv preprint arXiv:1709.00407}, 2017.

\bibitem[Van~de Geer(2000)]{van2000applications}
Sara~A Van~de Geer.
\newblock \emph{Applications of empirical process theory}, volume~91.
\newblock Cambridge University Press Cambridge, 2000.

\bibitem[Pollard(1990)]{pollard1990empirical}
David Pollard.
\newblock Empirical processes: theory and applications.
\newblock In \emph{NSF-CBMS regional conference series in probability and
  statistics}, pages i--86. JSTOR, 1990.

\bibitem[Chu and Trendafilov(1998)]{chu1998orthomax}
Moody~T Chu and Nickolay~T Trendafilov.
\newblock Orthomax rotation problem. a differential equation approach.
\newblock \emph{Behaviormetrika}, 25\penalty0 (1):\penalty0 13--23, 1998.

\bibitem[Sherin(1966)]{sherin1966matrix}
Richard~J Sherin.
\newblock A matrix formulation of kaiser's varimax criterion.
\newblock \emph{Psychometrika}, 31\penalty0 (4):\penalty0 535--538, 1966.

\bibitem[Neudecker(1981)]{neudecker1981matrix}
H~Neudecker.
\newblock On the matrix formulation of kaiser's varimax criterion.
\newblock \emph{Psychometrika}, 46\penalty0 (3):\penalty0 343--345, 1981.

\bibitem[ten Berge(1984)]{ten1984joint}
Jos~MF ten Berge.
\newblock A joint treatment of varimax rotation and the problem of
  diagonalizing symmetric matrices simultaneously in the least-squares sense.
\newblock \emph{Psychometrika}, 49\penalty0 (3):\penalty0 347--358, 1984.

\bibitem[Riordan(1937)]{riordan1937moment}
John Riordan.
\newblock Moment recurrence relations for binomial, poisson and hypergeometric
  frequency distributions.
\newblock \emph{The Annals of Mathematical Statistics}, 8\penalty0
  (2):\penalty0 103--111, 1937.

\end{thebibliography}

\newpage
\appendix
\section{Scree plot for New York Times analysis}\label{sec:nytscree}

\begin{figure}[h] 
   \centering
   \includegraphics[width=3in]{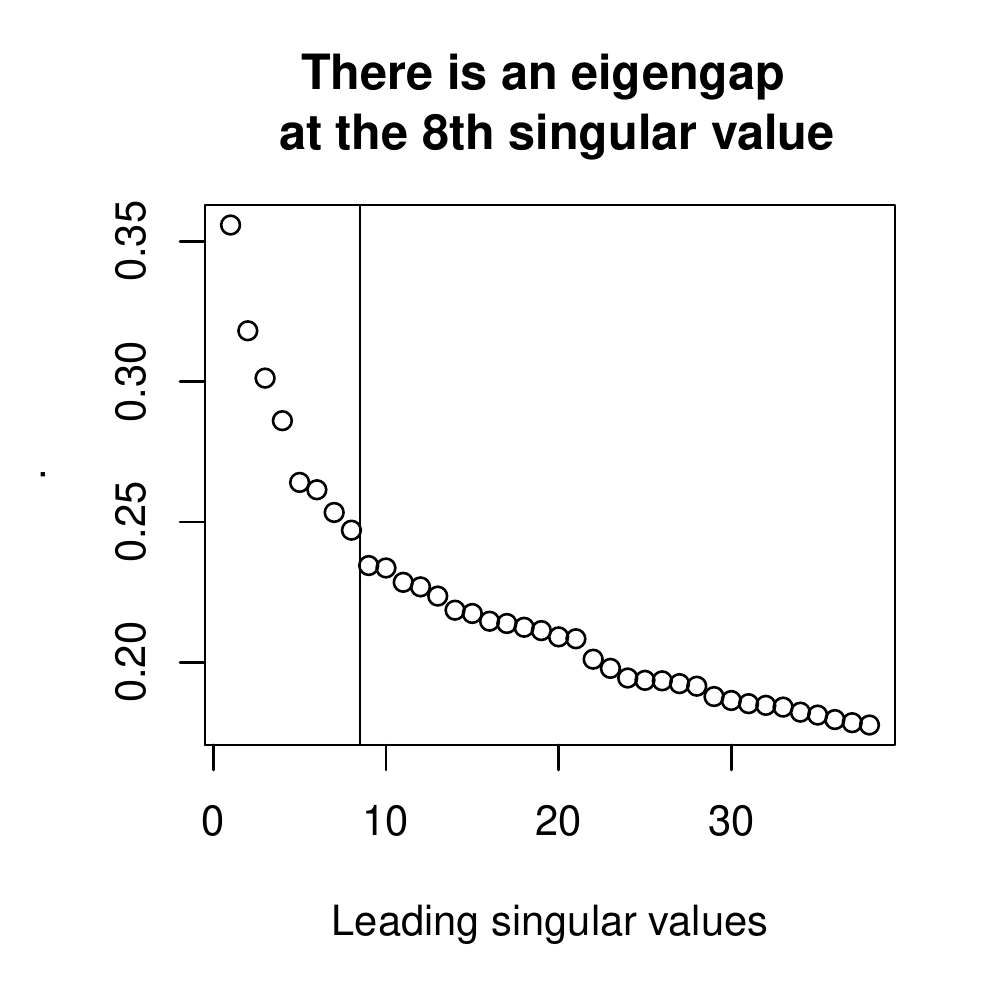} 
   \caption{This is a scree plot of the singular values $1, ..., 50$ of the matrix $L$ (defined in Remark \ref{remark:scalingstep}) after centering (i.e., step 1 of \texttt{vsp}).  A classical rule for selecting $k$ is to look for a ``gap'' in the scree plot.  Here, there is a gap after the 8th singular value that is illustrated with a vertical line.  This value is used for the illustrative analysis.  However, it should be noted that there is often more information contained in the principal components that correspond to singular values beyond the elbow \citep{song}. }
   \label{fig:nytscree}
\end{figure}

\section{Proofs for Proposition \ref{prop:centering} and Theorem \ref{thm:varimax1}} \label{app:popproofs}
The following is a proof of Proposition \ref{prop:centering}.  
\begin{proof}
With input $\A$, recall that the row, column, and grand means are
\[\mu_r = \A \1_d/d \in \R^n, \quad 
\mu_c = \1_n^T \A/n \in \R^d, \quad
\mu = \1_n^T \A \1_d /(nd) \in \R.\] 
The centered version of $\A$ is defined  to be $\widetilde \A = \A - \mu_r\1_d^T - \1_n \mu_c + \mu \1_n \1_d^T \in \R^{n \times d}$.  Define the column means of $Y$ and $Z$ as
\[\mu_Y = Y^T\1_d/d \quad\mbox{and} \quad  
\mu_Z = \1_n^TZ/n.\]
Note that 
\[\mu_r = ZB\mu_Y \quad 
\mu_c = \mu_ZBY^T \quad\mbox{and}\quad 
\mu = \mu_ZB\mu_Y
.\]
Also note that $\bar Y = \1_d \mu_Y^T$ and $\bar Z = \1_n\mu_Z$.  Putting the pieces together gives the result.
\begin{eqnarray*}
\widetilde \A &=& \A - \mu_r\1_d^T - \1_n \mu_c + \mu \1_n \1_d^T  \\
 &=& ZBY^T - ZB\mu_Y\1_d^T - \1_n \mu_ZBY +  \1_n \mu_ZB\mu_Y \1_d^T  \\
  &=& ZBY^T - ZB\bar Y^T - \bar Z BY^T +  \bar Z B\bar Y^T  \\
    &=& (Z - \bar Z) B(Y - \bar Y)^T
\end{eqnarray*}
\end{proof}

The following is a proof of Theorem \ref{thm:varimax1}.

\begin{proof} 
By the Assumption \ref{assumption:Varimax}, 
\[\E(Z^o_{i}) = 0, \E((Z^o_{i})^2) = 1, \E((Z^o_{i})^4) = \eta_i > 3, \forall i \in [k].\]
Thus, 
\[\E(v(R, Z^o \widetilde{R}^T)) = \sum_{j=1}^k  \E[Z^o \widetilde{R}^T R]_j^4.\]

To simplify notation, the proof reparameterizes the optimization parameter $R$ as follows.  For the rotation matrix $R$, define $O = \widetilde{R}^T R \in \oo(k)$.  We want to choose $O \in \oo(k)$ to optimize the quantities $\sum_j \E[Z^o O]_j^4 = \sum_j \E(Z^o O_{\cdot j})^4$, where $O_{\cdot j} \in \R^k$ is the $j$th column of $O$. Notice elements of $Z^o$ are independent and each has zero-mean. We have
\begin{eqnarray} \label{eq:reformulate1}
\sum_j \E(Z^o O_{\cdot j})^4 &=& \sum_{j=1}^k \left( \sum_{i=1}^k \E((Z^o_i)^4)O_{ij}^4 + 3\sum_{i\neq \ell}\E((Z^o_i)^2(z^o_\ell)^2)O_{ij}^2O_{\ell j}^2 \right) \nonumber \\
&=& \sum_{j=1}^k \left( \sum_{i=1}^k \eta_iO_{ij}^4 + 3\sum_{i\neq \ell}O_{ij}^2O_{\ell j}^2 \right). \label{eq:O2}
\end{eqnarray}

The above equation only depends on the squared elements of $O$.  Define $O^{(2)}  \in \R^{k \times k}$ such that 
$O^{(2)}_{ij} = O_{ij}^2$. Because $O \in \oo(k)$,  $O^{(2)}$ is a doubly stochastic matrix, where each element is non-negative and all row and column sums are equal to one.  
Define 
\begin{equation} \label{eq:reformulate2}
F_{\eta}(Q) =   \sum_{j=1}^k \left( \sum_{i=1}^k \eta_iQ_{ij}^2 + 3\sum_{i\neq \ell}Q_{ij}Q_{\ell j} \right)
\end{equation}
and define $\mathcal{S}(k)$ as the set of $k \times k$ doubly stochastic matrices. Note that
\[\sum_j \E(Z^o O_{\cdot j})^4 = F_{\mu}(O^{(2)}) \le \underset{Q\in \mathcal{S}(k)}{\max}  F_{\eta}(Q).\]
In this way, the Varimax problem relaxes from orthonormal matrices to doubly stochastic matrices.  
%


The rest of the proof will show that 
\begin{equation} \label{eq:doublyStochastic}
\underset{Q\in \mathcal{S}(k)}{\max}  F_{\eta}(Q) = \sum_{i=1}^k \eta_i.
\end{equation} 
Because $\sum_j \E(Z^o O_{\cdot j})^4$ evaluated with $O$ as the identity matrix, is equal to $\sum_{i=1}^k \eta_i$, it follows that $O=I$ or $R = \widetilde R$ obtains the maximum. Moreover, for $P \in \mathcal{P}(k)$, $O= P$ (i.e. $R = \widetilde R P$) obtains the maximum value.  It only remains to show Equation \eqref{eq:doublyStochastic}.

\begin{eqnarray*}
F_{\eta}(Q) &=& \sum_{j=1}^k \left( \sum_{i=1}^k \eta_iQ_{ij}^2 + 3\sum_{i\neq \ell}Q_{ij}Q_{\ell j} \right) \\
		  &=& \sum_{j=1}^k \left( \sum_{i=1}^k \eta_iQ_{ij}^2 + 3\left(\sum_{i=1}^kQ_{ij} \right)^2 - 3\sum_{i=1}^kQ_{ij}^2 \right) \\
		  &=& \sum_{j=1}^k \left( \sum_{i=1}^k \eta_iQ_{ij}^2 - 3\sum_{i=1}^kQ_{ij}^2 \right) + 3k \\
		  &=& \sum_{i=1}^k (\eta_i -3)\sum_{j=1}^k Q_{ij}^2 + 3k \\
		  &\leq& \sum_{i=1}^k (\eta_i -3)\sum_{j=1}^k Q_{ij} + 3k \\
		  &=&  \sum_{i=1}^k (\eta_i -3) + 3k \\
		  &=& \sum_{i=1}^k \eta_i.
\end{eqnarray*}
The inequality is because $Q_{ij} \in [0,1], \forall i,j$ (this is because $Q\in \mathcal{S}(k)$).

\vspace{.1in}

To see that the maximum of $\sum_j \E(Z^o O_{\cdot j})^4$ is \textit{only} attained by matrices in $\mathcal{P}(k)$, note that for any rotation matrix $O \not \in \mathcal{P}(k)$, then
\[\sum_j \E(Z^o O_{\cdot j})^4 = F_{\mu}(O^{(2)}) =
\sum_{i=1}^k (\eta_i -3)\sum_{j=1}^k O_{ij}^4 + 3k
<\sum_{i=1}^k (\eta_i -3)\sum_{j=1}^k O_{ij}^2 + 3k = \sum_{i=1}^k \eta_i,\]
where the inequality is now strict. 

%

\end{proof}

\vspace{.2in}

\subsection{A justification for the recentering step described in Remark \ref{remark:uncentering}} \label{app:recentering}
This section demonstrates how $\widehat \mu_Z = \sqrt{n}\widehat \mu_c \widehat V \widehat D^{-1} \Ruh$ can estimate $\mu_Z = \1^T Z/n$ under the Varimax assumptions on $Z$ by studying the population behavior of $\widehat \mu_Z$. Define the population version of the estimator as 
\[\mu_Z^* = \sqrt{n}\mu_c V D^{-1} \widetilde{R}_{U}, \]
where $\mu_c =  \1_n^T \A /n =  \1_n^T ZBY^T/n$, $V$ and $D$ are defined in Proposition \ref{prop:svd} with the SVD of $\widetilde \A$ as 
\[ D = \sqrt{nd} \widetilde D, \ \ 
V = d^{-1/2} \widetilde Y \widehat \Sigma_Y^{-1/2} \widetilde{R}_{V}^{T}, 
\] and $\Ru$ is the population Varimax rotation $\widetilde{R}_{U}$ (as justified by Theorem \ref{thm:varimax1}).  In the steps below, it is presumed that $Z$ satisfies the Varimax assumptions.  It is only presumed that $Y$ is full rank.  For simplicity, the $\approx$ correspond to approximating $\widehat \Sigma_Z$ as the identity matrix; under the Varimax assumptions, this is a reasonable approximation for large $n$. Recall that 
\[B\widehat \Sigma_Y^{1/2} \approx \widetilde{R}_{U}^T \widetilde D \widetilde{R}_{V}.\]
Thereby,
\begin{eqnarray*}
\mu_Z^* &=& \sqrt{n}\mu_c  V  D^{-1} \widetilde{R}_{U} \\
	      &=& \sqrt{n}\mu_Z (BY^T)  V D^{-1} \widetilde{R}_{U}  \\
	      &\approx& \sqrt{n}\mu_Z (\widetilde{R}_{U}^T \widetilde D \widetilde{R}_{V} \widehat \Sigma_Y^{-1/2} Y^T)  d^{-1/2} \widetilde Y \widehat \Sigma_Y^{-1/2} \widetilde{R}_{V}^T   D^{-1} \widetilde{R}_{U}.
\end{eqnarray*}
Then, $\widehat \Sigma_Y^{-1/2} Y^T \widetilde Y \widehat \Sigma_Y^{-1/2}$ is $d$ multiplied by the identity matrix. Substituting for $D$ and canceling out several terms yields the result,
\begin{eqnarray*}
\mu_Z^*  &\approx& (nd)^{1/2} \mu_Z \widetilde{R}_{U}^T \widetilde D \widetilde{R}_{V}   \widetilde{R}_{V}^T   D^{-1} \widetilde{R}_{U} \\
 &=& (nd)^{1/2} \mu_Z \widetilde{R}_{U}^T \widetilde D \widetilde{R}_{V}   \widetilde{R}_{V}^T   (nd)^{1/2} \widetilde{D}^{-1} \widetilde{R}_{U}  \\
  &=& \mu_Z \widetilde{R}_{U}^T \widetilde D \widetilde{R}_{V}   \widetilde{R}_{V}^T  \widetilde{D}^{-1} \widetilde{R}_{U} \\
&=& \mu_Z. 
\end{eqnarray*}

\noindent The rigorous proof will be shown later in Proposition \ref{prop:uncenter1} and Proposition \ref{prop:uncenter2}.

\subsection{Proofs for Lemmas \ref{lemma:ldaFactor} and\ref{lemma:naturalLDA}.} \label{sec:LDAlemmas}
The following is a proof of Lemma \ref{lemma:ldaFactor}.
\begin{proof}
For ease of notation, refer to the topic for word $w$ as $z$ (instead of $z_w$),
\[
\pr(w = j|Z_i) 
= \sum_{z=1}^k \pr(w = j|z, Z_i) \pr(z|Z_i) 
= \sum_{z=1}^k \beta_{j,z} Z_{i,z}
= \langle \beta_{j\cdot},Z_i\rangle.\]
So, step 3 in the LDA model is equivalent to choosing word $w$ to be word $j$ with probability $[\beta Z_i]_j$.  So, conditional on $N_i$ and $Z_i$, the $i$th row of $A$ is $Multinomial(N_i, \beta Z_i)$.
Then, unconditional on $N_i$, due to the Poisson-Multinomial relationship, each element in the $i$th row of $A$ is independent, with the distribution $A_{ij} \sim Poisson(\xi [\beta Z_i]_j)$.  So, $E(A|Z)  = \xi Z \beta^T$.
\end{proof}

The following is a proof of Lemma \ref{lemma:naturalLDA}.

\begin{proof}
There are three elements of Lemma \ref{lemma:naturalLDA}.  \textbf{Part 1:} 
conditionally on $Z_1, \dots, Z_n$ and $\Xi$, we need to show that the document-term matrix $A$ has independent Poisson entries satisfying 
\begin{equation}
\E(A|\Xi, Z) = (\Xi Z) \beta^T.
\end{equation}
The proof of this is equivalent to the proof of Lemma \ref{lemma:ldaFactor}.  

\textbf{Part 2:} The second part is that each element $(\Xi Z)_{ij}$ is independent Gamma$(\alpha_j, s)$.  To see this, let $X_i \in \R_+^k$ have independent Gamma elements, $X_{ij} \sim$ Gamma$(\alpha_j, s)$.  Define $\xi_i' = \sum_j X_{ij}$ and
\[Z_i' = \frac{X_i}{\xi_i'} .\]
It is well known that (1) $Z_i' \sim$ Dirichlet$(\alpha)$, (2) $\xi_i' \sim$ Gamma$(\sum_j \alpha_j, s)$, and (3) $\xi_i'$ is independent of $Z_i'$.  So, 
\[(\Xi Z)_{i} = \xi_i Z_i \stackrel{d}{=} \xi_i' Z_i' = X_i.\]

\textbf{Part 3:} We need to show that $\Xi Z \Sigma^{-1/2}$ satisfies the identification assumptions for Varimax. From part 2 above, each row contains $k$ independent random variables and each row is iid.  Then, each element of $\Xi Z$ is leptokurtic because the Gamma distribution is always leptokurtic.  Scaling by a constant $\Sigma^{-1/2}$ does not change this.  The fourth piece of the identification assumptions for Varimax is ensured by the scaling $\Sigma^{-1/2}$.


\end{proof}

\newpage

\section{Proofs of Main Theorems}

\subsection{Proof of Theorem \ref{thm:sparsity}} \label{appendix:sparsityproof}

\begin{proof}

Define the random variable $B \in \{0,1\}$ to be equal to 1 when $X \ne 0$ and equal to 0 when $X=0$.  For some $0<p<1/6$, $B\sim Bernouli(p)$.  Define random variable $S$ such that when $B=1$, $S=X$ and when $B=0$, then $S$ is equal in distribution to $X$ on the set $X\ne 0$.   So, \[X = SB.\] 

Under the conditions of the theorem and the construction above, $S$ has some arbitrary distribution with finite 4th moment and is also independent of $B$.  

\vspace{.1in}

Let $\mu_{i} = \mathbb{E}(S^{i})$. Then 
\begin{equation}
\theta := \mathbb{E}(X) = (1-p)0+p\mu_{1} = p\mu_{1},
\end{equation}
\begin{equation}
\mathbb{E}[(X-\theta)^{2}] = p\mu_{2} - p^{2}\mu_{1}^{2},
\end{equation}
\begin{equation}
\mathbb{E}[(X-\theta)^{4}] = p\mu_{4} - 4p^{2}\mu_{3}\mu_{1} + 6p^{3}\mu_{2}\mu_{1}^{2} - 3p^{4}\mu_{1}^{4}.
\end{equation}

So, in order to show that $\mathbb{E}[(X-\theta)^{4}] > 3\mathbb{E}[(X-\theta)^{2}]^{2}$, it is enough to show that
\begin{equation}\label{eq:sparse}
(\mu_{4} - 3p\mu_{2}^{2}) + 6p^{2}\mu_{1}^{2}(\mu_{2} - p\mu_{1}^{2}) > 4p\mu_{3}\mu_{1} - 6p^{2}\mu_{2}\mu_{1}^{2}. 
\end{equation}

Using  Lemma \ref{lemma:integral} with $g = S^{2}, h = 2pS$, $f$ being $S$'s pdf, we have 
\begin{equation} \label{eq:sparse2}
\mu_{4} - \mu_{2}^{2} + 4p^{2}\mu_{1}^{2}(\mu_{2} - \mu_{1}^{2}) \geq 4p\mu_{3}\mu_{1} - 4p\mu_{1}^{2}\mu_{2}.
\end{equation}

Subtract Equation \eqref{eq:sparse2} from Equation \eqref{eq:sparse} we only need to show
\begin{equation} \label{eq:sparse3}
(1-3p)\mu_{2}^{2} +p^{2}\mu_{1}^{2}(2\mu_{2}+4\mu_{1}^{2}-6p\mu_{1}^{2}) > (4p-6p^{2})\mu_{1}^{2}\mu_{2}.
\end{equation}

Notice $p<1/6$, thus $(6p-1)(p-1) > 0 \Rightarrow 1-3p > 4p - 6p^{2}$. Thus by Jensen's Inequality
$$ (1-3p)\mu_{2}^{2} \geq (4p-6p^{2})\mu_{2}^{2} \geq (4p-6p^{2})\mu_{1}^{2}\mu_{2}. $$

The first inequality is strict as long as $\mu_{2} > \mu_{1}^{2}$. Also with $p<1/6$ we have
$$ p^{2}\mu_{1}^{2}(2\mu_{2}+4\mu_{1}^{2}-6p\mu_{1}^{2}) \geq p^{2}\mu_{1}^{2}(2\mu_{2}+3\mu_{1}^{2}) \geq 0. $$

The second inequality is strict as long as $p>0, \mu_{1} \neq 0$. 

If $\mu_{2} = \mu_{1} = 0$ then $\pr(X=0) =1$, contradiction. Thus $X$ is leptokurtic. 

\begin{lemma} \label{lemma:integral}
Suppose $f$ is any distribution pdf. $g, h$ is any integrable functions. Then 
\begin{eqnarray*}
\int g^{2}fdx - (\int gfdx)^{2} + (\int h fdx)^{2}(\int h^{2}fdx - (\int h fdx)^{2}) \\
	\geq (\int ghfdx - \int gfdx\int hfdx)\int h fdx.
\end{eqnarray*}
\end{lemma}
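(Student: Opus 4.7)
The plan is to reinterpret the whole inequality as a statement about variances and a covariance of two random variables, after which Cauchy--Schwarz followed by the arithmetic--geometric mean inequality finishes it. Let $X$ be a random variable with density $f$ and set $G = g(X)$, $H = h(X)$. Then
\[ \int g^{2}f\,dx - \Bigl(\int gf\,dx\Bigr)^{2} = \mathrm{Var}(G), \quad \int h^{2}f\,dx - \Bigl(\int hf\,dx\Bigr)^{2} = \mathrm{Var}(H), \]
and
\[ \int ghf\,dx - \int gf\,dx \int hf\,dx = \mathrm{Cov}(G, H). \]
Writing $a = \mathbb{E}(H) = \int hf\,dx$, the lemma is equivalent to the clean estimate
\[ \mathrm{Var}(G) + a^{2}\,\mathrm{Var}(H) \;\ge\; a\, \mathrm{Cov}(G, H). \]

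With this reformulation in hand, the proof is essentially two lines. By the Cauchy--Schwarz inequality applied to the covariance, $|\mathrm{Cov}(G,H)| \le \sqrt{\mathrm{Var}(G)\,\mathrm{Var}(H)}$, so
\[ a\,\mathrm{Cov}(G, H) \;\le\; |a|\sqrt{\mathrm{Var}(G)\,\mathrm{Var}(H)}. \]
Then the AM--GM inequality $2uv \le u^{2} + v^{2}$ applied with $u = \sqrt{\mathrm{Var}(G)}$ and $v = |a|\sqrt{\mathrm{Var}(H)}$ gives
\[ |a|\sqrt{\mathrm{Var}(G)\,\mathrm{Var}(H)} \;\le\; \tfrac{1}{2}\bigl(\mathrm{Var}(G) + a^{2}\,\mathrm{Var}(H)\bigr). \]
Chaining these bounds yields the required inequality, with a factor of two of slack to spare.

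There is essentially no serious obstacle here; the only work is noticing the probabilistic reframing, after which the apparently messy quartic polynomial in the integrals $\int gf\,dx$, $\int hf\,dx$, $\int ghf\,dx$, $\int g^2 f\,dx$, $\int h^2 f\,dx$ collapses to a textbook consequence of two standard inequalities. The one mild subtlety is that the lemma hypothesizes only integrability of $g$ and $h$, whereas the Cauchy--Schwarz step requires finite second moments of $G$ and $H$; if either second moment is infinite then the left-hand side is infinite and the inequality holds vacuously. In the application to Theorem \ref{thm:sparsity} the random variable $S$ is assumed to have four finite moments, so with $g = S^{2}$ and $h = 2pS$ all quantities are finite and the argument goes through directly.
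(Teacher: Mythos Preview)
Your proof is correct and is essentially the same as the paper's. The paper also centers $g$ and $h$ (writing $\widetilde g = g - \int gf$, $\widetilde h = h - \int hf$, which is exactly your passage to $\mathrm{Var}$ and $\mathrm{Cov}$) and then applies AM--GM and Cauchy--Schwarz in the opposite order; both arguments carry the same factor-of-two slack you noted.
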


Let $\widetilde{g} = g - \int gfdx$, $\widetilde{h} = h - \int hfdx$. Then

$$ \int g^{2}fdx - (\int gfdx)^{2} = \int \widetilde{g}^{2}fdx, $$
$$ \int h^{2}fdx - (\int h fdx)^{2} = \int \widetilde{h}^{2}fdx, $$
$$ \int ghfdx - \int gfdx\int hfdx = \int \widetilde{g}\widetilde{h}fdx. $$

By Cauchy-Schwartz Inequality,
\begin{eqnarray*}
&& \int g^{2}fdx - (\int gfdx)^{2} + (\int h fdx)^{2}(\int h^{2}fdx - (\int h fdx)^{2}) \\
&=& \int \widetilde{g}^{2}fdx + (\int h fdx)^{2} \int \widetilde{h}^{2}fdx \\
&\geq& |\int h fdx|\sqrt{\int \widetilde{g}^{2}fdx\int \widetilde{h}^{2}fdx} \\
&\geq& |\int h fdx|\int |\widetilde{g}\widetilde{h}|fdx \\
&\geq& |\int h fdx|(|\int ghfdx - \int gfdx\int hfdx|) \\
&\geq& (\int ghfdx - \int gfdx\int hfdx) \int hfdx.
\end{eqnarray*}

\end{proof}

\subsection{Leptokurtosis with soft sparsity} \label{app:sparsitylepto}
The random variable $X$ in Theorem \ref{thm:sparsity} satisfies a hard sparsity condition. Imagine $X$ as satisfying the conditions of Theorem \ref{thm:sparsity}.  The next proposition studies $X+W$, where $W$ is any independent random variable with a small variance.  So, if $W$ has a probability density, then $P(X+W = 0) \not > 0$, yet when $W$ has expectation zero, then $X+W$ is still close to zero with high probability. 
In this regime, the next proposition shows that if $X$ has a sufficiently large kurtosis, then $X+W$ is still leptokurtic, no matter the kurtosis of $W$.


\begin{prop} \label{prop:soft} 
Let $X$ and $W$ be any independent random variables with four finite moments.  
Let $\eta_{x,j} = \E(X - \E(X))^j$ and  $\eta_{w,j} = \E(W - \E(W))^j$. Let $\eta_{x,2}=1$.  For any $\epsilon>0$, if $\eta_{w,2}< \epsilon$,  and $\eta_{x,4} \ge 3(1+\epsilon)^2$,  then $X+W$ is leptokurtic.  
%
\end{prop}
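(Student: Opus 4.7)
The plan is to set $Y = X + W$ and show directly that the fourth central moment of $Y$ exceeds three times the square of its variance. Since kurtosis is invariant under translations, I would first assume without loss of generality that $\E(X) = \E(W) = 0$, so that all the centered moments coincide with raw moments.

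Under independence, $\mathrm{Var}(Y) = \mathrm{Var}(X) + \mathrm{Var}(W) = 1 + \eta_{w,2}$. Expanding $\E(Y^4) = \E((X+W)^4)$ by the binomial theorem and invoking independence, all cross terms involving $\E(X)$ or $\E(W)$ vanish, and the $\E(X^2)\E(W^2)$ term uses the normalization $\eta_{x,2}=1$; this leaves
\begin{equation*}
\E(Y^4) \;=\; \eta_{x,4} + 6\,\eta_{w,2} + \eta_{w,4}.
\end{equation*}
Subtracting $3\,\mathrm{Var}(Y)^2 = 3 + 6\,\eta_{w,2} + 3\,\eta_{w,2}^2$ yields the clean identity
\begin{equation*}
\E(Y^4) - 3\,\mathrm{Var}(Y)^2 \;=\; \bigl(\eta_{x,4} - 3\bigr) + \bigl(\eta_{w,4} - 3\,\eta_{w,2}^2\bigr).
\end{equation*}

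To finish, I would bound each parenthesized term. The hypothesis $\eta_{x,4} \ge 3(1+\epsilon)^2$ gives $\eta_{x,4} - 3 \ge 6\epsilon + 3\epsilon^2$. For the other term, Jensen's inequality applied to the convex function $t \mapsto t^2$ (equivalently, Cauchy--Schwarz on $(W-\E W)^2$ and $1$) yields $\eta_{w,4} \ge \eta_{w,2}^2$, so $\eta_{w,4} - 3\,\eta_{w,2}^2 \ge -2\,\eta_{w,2}^2 > -2\epsilon^2$ by the strict bound $\eta_{w,2} < \epsilon$. Adding the two bounds gives
\begin{equation*}
\E(Y^4) - 3\,\mathrm{Var}(Y)^2 \;>\; 6\epsilon + 3\epsilon^2 - 2\epsilon^2 \;=\; 6\epsilon + \epsilon^2 \;>\; 0,
\end{equation*}
which is exactly the leptokurtic condition for $Y$.

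There is no real obstacle here; the argument is a one-line fourth-moment expansion followed by two elementary inequalities. The only subtlety worth checking is that the assumption $\eta_{x,2}=1$ is what pins down the coefficient of $\eta_{w,2}$ in $\E(Y^4)$, so that the $6\eta_{w,2}$ terms cancel cleanly against $3\,\mathrm{Var}(Y)^2$; without this normalization one would need a rescaled version of the bound $\eta_{x,4} \ge 3(1+\epsilon)^2$. The strict inequality $\eta_{w,2} < \epsilon$ is used to ensure the final strict inequality even in the degenerate case $\eta_{w,4} = \eta_{w,2}^2$.
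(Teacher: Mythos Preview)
Your proof is correct and follows essentially the same route as the paper: both expand $\E((X+W)-\E(X+W))^4 = \eta_{x,4} + 6\eta_{w,2} + \eta_{w,4}$ using independence and $\eta_{x,2}=1$, then invoke the hypotheses. The paper organizes the last step slightly more simply by bounding numerator and denominator of the kurtosis separately (using only $\eta_{w,4}\ge 0$ rather than your Jensen bound $\eta_{w,4}\ge \eta_{w,2}^2$), but the argument is the same in substance.
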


Note that both $X$ and $W$ can be rescaled to satisfy the assumption  $\eta_{x,2}=1$.  In this way, it does not restrict the generality of the result. It only simplifies the notation.

\begin{proof}
 Note that $\eta_{x,1}= \eta_{w,1} =0$. Using that fact, 

\[\E\left(X+W - \E(X+W)\right)^2 = \eta_{x,2} +  \eta_{w,2} < 1 + \epsilon\]

and

\[\E\left(X+W - \E(X+W)\right)^4 = \eta_{x,4} + 6\eta_{x,2}\eta_{w,2} + \eta_{w,4} > 3(1+\epsilon)^2.\]

The result follows from the definition of leptokurtic.

\end{proof}

\subsection{Proofs for the main results, Theorem \ref{thm:main}} \label{appendix:main} 

\begin{proof}

We need six propositions listed below to prove Theorem \ref{thm:main}. Before the proof we clarify some notations. For a generic random matrix $X$, let $R_{X}$ be its sample Varimax rotation, i.e.
$$R_{X} \in \underset{R\in \oo(k)}{\arg \max} \ v(R, X),$$  
where $v(R,X)$ is defined in Equation \eqref{eq:Varimax}.
Then, let $R^{*}_{X}$ the population Varimax rotation, i.e. 
$$R^{*}_{X} \in \underset{R\in \oo(k)}{\arg \max} \ \mathcal{V}_{X}(R),$$ 
where the expectation in $\vv_{X}(R)= \E(v(R, X \widetilde R))$ is defined over the distribution of $X$ and the nuisance rotation $\widetilde R$ can be understood from the context. 
Define 
$$W = \underset{W_0\in \oo(k)}{\arg\min} \|\widehat{U}-UW_0\|_{2\to\infty}.$$ 
$\pp(k)$ is defined in Equation \eqref{eq:pp}. $P_n = P_n^{(1)}P_n^{(2)}P_n^{(3)}$ where $P_n^{(i)} \in \mathcal{P}(k), i=1,2,3$ are defined in Proposition \ref{prop:pop}, \ref{prop:converge}, \ref{prop:varifun} respectively. Let $\mu_Z = \1_n^TZ/n$. $J_n$ is $n$ by $n$ matrix with every entry equal to 1. $X^{\dagger}$ is the pseudo-inverse of $X$. Define $\xi = 1 + \epsilon$ for some small positive $\epsilon < 0.01$ for notation consistency with \cite{cape2019signal}. Recall $\Delta_n = n\rho_n, \bar{\Delta}_n = n\bar{\rho}_n$. 

Define
\begin{equation} \label{eq:gammadef}
\gamma_{ij}^{(n)} = \underset{s\geq 2}{\sup} \left( \frac{\mathbb{E} [(A_{ij} - \A_{ij})^s ]}{s!} \right)^{1/s} \quad \mbox{and} \quad \gamma^{(n)} = \underset{ij}{\sup} \ \gamma_{ij}^{(n)}.
\end{equation}
The $\gamma^{(n)}$ reveals the tail behaviors of sub-exponential random variables. It is useful in deriving matrix concentration results for sub-exponential random matrices later (Lemma \ref{lemma:Chung}). \\

See Sections \ref{proof:clt} through \ref{proof:uncenter2} for proofs to the following propositions \ref{prop:clt} through \ref{prop:uncenter2}. Several lemmas and technical details for these proofs are then delayed further into Sections \ref{sec:proof} and \ref{sec:FSOC}. \\

\begin{prop} \label{prop:clt}
Let $\widehat{\Sigma}_{Z} = \widetilde{Z}^{T}\widetilde{Z}/n$. Under the settings of Theorem \ref{thm:main},
\begin{equation}
\|U\widetilde{R}_{U} - U\widetilde{R}_{U}\widehat{\Sigma}_{Z}^{1/2}\|_{2\to \infty} = O_p(\frac{\log n}{n}).
\end{equation}
\end{prop}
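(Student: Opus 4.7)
The plan is to first simplify the expression on the left-hand side. Using the definition $U = n^{-1/2}\widetilde{Z}\,\widehat{\Sigma}_Z^{-1/2}\widetilde{R}_U^T$ from Proposition \ref{prop:svd} and the fact that $\widetilde{R}_U \in \oo(k)$, we have $U\widetilde{R}_U = n^{-1/2}\widetilde{Z}\,\widehat{\Sigma}_Z^{-1/2}$ and $U\widetilde{R}_U\widehat{\Sigma}_Z^{1/2} = n^{-1/2}\widetilde{Z}$. Therefore
\[
U\widetilde{R}_U - U\widetilde{R}_U\widehat{\Sigma}_Z^{1/2} = n^{-1/2}\,\widetilde{Z}\bigl(\widehat{\Sigma}_Z^{-1/2} - I_k\bigr),
\]
and using the submultiplicativity of the $2\to\infty$ norm with respect to right multiplication,
\[
\bigl\|U\widetilde{R}_U - U\widetilde{R}_U\widehat{\Sigma}_Z^{1/2}\bigr\|_{2\to\infty} \le n^{-1/2}\,\|\widetilde{Z}\|_{2\to\infty}\,\bigl\|\widehat{\Sigma}_Z^{-1/2} - I_k\bigr\|.
\]

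Next I would bound each factor separately using the identification assumptions on $Z$ (Assumption \ref{assumption:Varimax}) and the tail condition (Assumption \ref{assumption:ztail}). For the row-norm factor, each $Z_i \in \R^k$ has $k$ independent coordinates with finite moment generating function in a neighborhood of the origin, so $\|Z_i\|_2^2$ is a sum of $k$ sub-exponential variables. Standard sub-exponential tail bounds combined with a union bound over $i=1,\dots,n$ give $\|\widetilde{Z}\|_{2\to\infty} = O_p(\sqrt{\log n})$. For the spectral-norm factor, observe that $\E\widehat{\Sigma}_Z = \frac{n-1}{n}I_k$ under Assumption \ref{assumption:Varimax}, and each entry of $\widehat{\Sigma}_Z - I_k$ is an average of $n$ sub-exponential random variables. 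A Bernstein inequality applied entrywise, together with the fact that $k$ is fixed (so operator norm and entrywise max-norm differ only by a constant), yields $\|\widehat{\Sigma}_Z - I_k\| = O_p(\sqrt{\log n/n})$. Since this bound is $o(1)$ with high probability, a standard matrix-function perturbation argument (for the analytic map $M \mapsto M^{-1/2}$ near the identity) gives $\|\widehat{\Sigma}_Z^{-1/2} - I_k\| = O_p(\sqrt{\log n/n})$ as well.

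Combining the three pieces,
\[
\bigl\|U\widetilde{R}_U - U\widetilde{R}_U\widehat{\Sigma}_Z^{1/2}\bigr\|_{2\to\infty} \le n^{-1/2}\cdot O_p(\sqrt{\log n})\cdot O_p(\sqrt{\log n/n}) = O_p\!\left(\frac{\log n}{n}\right),
\]
which is the desired bound. The argument involves no perturbation theory for random matrices beyond simple concentration on a fixed-dimensional covariance, since this proposition concerns only the population principal components of $\widetilde{\A}$ and not the noisy $\widehat{U}$.

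There is no serious obstacle here; the main subtlety is merely bookkeeping about which quantities are random. The leading-order randomness comes from $Z$ alone, the dimension $k$ is fixed, and the sub-exponential tails supplied by Assumption \ref{assumption:ztail} are exactly what is needed to upgrade the two concentration estimates to the $\sqrt{\log n}$ rates that combine to yield $\log n / n$.
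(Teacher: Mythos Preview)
Your approach is essentially the same as the paper's: factor the expression, bound the row norm of $\widetilde{Z}$ (equivalently $\|U\|_{2\to\infty}$), bound the spectral deviation of $\widehat{\Sigma}_Z$ from the identity, and multiply. There is one slip in the intermediate rates. Under Assumption~\ref{assumption:ztail} the entries of $Z$ are only sub-exponential (MGF finite in a neighborhood of zero), not sub-Gaussian, so a union bound over $n$ rows gives $\|\widetilde Z\|_{2\to\infty}=O_p(\log n)$, not $O_p(\sqrt{\log n})$; this is exactly what the paper records as Lemma~\ref{lemma:maxZ}. (Relatedly, $\|Z_i\|_2^2$ is a sum of \emph{squares} of sub-exponentials, which are in general only sub-Weibull, so your ``sum of sub-exponentials'' justification is not quite right.) The paper compensates by pairing that $O_p(\log n)$ factor with the sharper $\|\widehat\Sigma_Z^{1/2}-I\|=O_p(n^{-1/2})$, obtained directly from the law of large numbers since $k$ is fixed and fourth moments are finite, rather than the Bernstein-type $O_p(\sqrt{\log n/n})$ you state. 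With that pair of corrections your argument and the paper's coincide and both yield $O_p(\log n/n)$.
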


\vspace{.2in}

\begin{prop} \label{prop:pop}
Under the settings of Theorem \ref{thm:main}, there exists $P_n^{(1)} \in \mathcal{P}(k)$ s.t.
\begin{equation}
\|\widehat{U}R_{UW}^{*} - U\widetilde{R}_{U}P_{n}^{(1)}\|_{2\to\infty} =  O_{p}\left(  (n\rho_{n})^{-1/2}n^{-1/2}\log^{\frac{5}{2}} n \right). 
\end{equation}
\end{prop}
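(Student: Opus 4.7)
The plan is to combine three ingredients: (i) the equivariance of the Varimax objective under right orthogonal transformations, (ii) the population identification result Theorem \ref{thm:varimax1}, and (iii) a sharp two-to-infinity eigenvector perturbation bound for $\widehat U$ around $UW$.

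First I would identify $R^*_{UW}$ in essentially closed form. By Proposition \ref{prop:svd}, $\sqrt{n}\, U = \widetilde Z\, \widehat \Sigma_Z^{-1/2}\, \widetilde R_U^T$. Under Assumption \ref{assumption:Varimax} the rows of $\widetilde Z$ are iid with independent leptokurtic coordinates, so Theorem \ref{thm:varimax1} identifies the $\arg\max$ of the population Varimax for $U$ as $\{\widetilde R_U P : P \in \pp(k)\}$; here Proposition \ref{prop:clt} supplies the quantitative control on the discrepancy between $\widehat \Sigma_Z^{-1/2}$ and the identity (at the negligible rate $O_p(\log n / n)$). The sample Varimax satisfies $v(R, XQ) = v(QR, X)$ for any $Q \in \oo(k)$, and this equivariance is inherited by its expectation, so $\arg\max \vv_{UW} = \{W^T \widetilde R_U P : P \in \pp(k)\}$. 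Thus $R^*_{UW} = W^T \widetilde R_U P_n^{(1)}$ for some $P_n^{(1)} \in \pp(k)$.

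Next I would invoke a two-to-infinity eigenvector perturbation bound. Because $\A = \rho_n Z B Y^T$ with $B$ full rank, the top $k$ singular values of $\A$ scale like $n\rho_n$ with a $\Theta(n\rho_n)$ spectral gap, while Assumption \ref{assumption:atail} ensures $\|A - \A\|$ concentrates on the order $\sqrt{n \bar \rho_n}$. Combined with the delocalization $\|U\|_{2\to\infty} = O_p(n^{-1/2}\log^{1/2} n)$ coming from the iid structure of $Z$, the element-wise eigenvector machinery of \cite{cape2019signal, mao2018overlapping} (applicable in the sparse regime $\Delta_n \succeq \log^{11.1} n$) yields
\begin{equation*}
\|\widehat U - UW\|_{2\to\infty} = O_p\!\left( (n\rho_n)^{-1/2}\, n^{-1/2}\, \log^{5/2} n \right).
\end{equation*}
The claimed bound is then immediate: since $\widetilde R_U P_n^{(1)}$ is orthogonal and the $2\to\infty$ norm is invariant under right-multiplication by any orthogonal matrix,
\begin{equation*}
\|\widehat U R^*_{UW} - U \widetilde R_U P_n^{(1)}\|_{2\to\infty} = \|(\widehat U W^T - U)\,\widetilde R_U P_n^{(1)}\|_{2\to\infty} = \|\widehat U - UW\|_{2\to\infty}.
\end{equation*}

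The hardest step is the row-wise perturbation bound itself: the standard Davis--Kahan inequality controls only Frobenius or spectral distance and is too crude to deliver row-wise guarantees at this scale. The argument has to exploit the full independence structure of $A - \A$, the separation of singular values, and the delocalization of $U$; this is where the recent element-wise eigenvector literature does the heavy lifting. A secondary subtlety is that Theorem \ref{thm:varimax1} is formulated for exactly iid rows of the form $Z^o \tilde R^T$, whereas the rows of $\sqrt{n}\,U$ are only approximately iid through the nuisance factors $\widehat \Sigma_Z^{-1/2}$ and $\bar Z$; Proposition \ref{prop:clt} is precisely the tool that bridges this gap quantitatively.
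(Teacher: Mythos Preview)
Your proposal is correct and follows essentially the same route as the paper: identify $R^*_{UW} = W^T\widetilde R_U P_n^{(1)}$ via Theorem~\ref{thm:varimax1} and the orthogonal equivariance of Varimax, then reduce the target to $\|\widehat U - UW\|_{2\to\infty}$ by right-orthogonal invariance of the $2\to\infty$ norm, and finally apply the row-wise eigenvector perturbation bound (the paper packages this as Lemma~\ref{lemma:Cape}). One minor remark: the delocalization the paper actually uses is $\|U\|_{2\to\infty} = O_p(n^{-1/2}\log n)$ (Equation~\eqref{eq:maxU}), not $n^{-1/2}\log^{1/2} n$, though this does not change the final rate.
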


\vspace{.2in}

\begin{prop} \label{prop:converge}
Under the settings of Theorem \ref{thm:main}, there exists $P_n^{(2)} \in \mathcal{P}(k)$ such that for any $\delta > 0$,
\begin{equation}
\|\widehat{U}R_{UW} - \widehat{U}R^{*}_{UW}P_{n}^{(2)}\|_{2\to \infty} = O_{p}(n^{\delta/2-3/4}\log n).
\end{equation}
\end{prop}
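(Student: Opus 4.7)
The plan is to treat this as a standard M-estimator problem on the compact group $\oo(k)$, with parameter $R$, objective $v(\cdot, UW)$, and population maximizer $R^{*}_{UW}$ (modulo the finite $\mathcal{P}(k)$ ambiguity). The $2\to\infty$ norm is handled at the end by
\begin{equation*}
\bigl\|\widehat U R_{UW} - \widehat U R^{*}_{UW} P_{n}^{(2)}\bigr\|_{2\to\infty}\le \|\widehat U\|_{2\to\infty}\,\bigl\|R_{UW}-R^{*}_{UW}P_{n}^{(2)}\bigr\|_{F},
\end{equation*}
together with the element-wise eigenvector bound $\|\widehat U\|_{2\to\infty}=O_{p}(n^{-1/2}\log n)$ that is already invoked elsewhere in the proof of Theorem \ref{thm:main} (from Cape et al.\ and Mao et al.). So the proposition reduces to establishing the Frobenius-norm rate $\|R_{UW}-R^{*}_{UW}P_{n}^{(2)}\|_{F}=O_{p}(n^{-1/4+\delta/2})$.

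For the parameter-rate bound I would combine uniform concentration with local quadratic curvature. Using $U=n^{-1/2}\widetilde Z\,\widehat\Sigma_{Z}^{-1/2}\widetilde R_{U}^{T}$ from Proposition \ref{prop:svd} and the identity $\tfrac{1}{n}\sum_{q}[UWR]_{q\ell}^{2}=\tfrac{1}{n}$ (which makes the second Varimax term an $R$-independent constant), the objective simplifies, up to an additive constant, to
\begin{equation*}
v(R, UW)=\tfrac{1}{n^{3}}\sum_{\ell,i}\bigl[\widetilde Z\,\widehat\Sigma_{Z}^{-1/2}\widetilde R_{U}^{T}WR\bigr]_{i\ell}^{4},
\end{equation*}
which is an empirical average of i.i.d.\ quartic forms in the rows of $\widetilde Z$. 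Assumption \ref{assumption:ztail} gives each coordinate a moment generating function in a neighborhood of zero, so a Bernstein-type bound for sub-exponential sums plus an $\varepsilon$-net on $\oo(k)$ (covering entropy $O_{k}(\log(1/\varepsilon))$) yields
\begin{equation*}
\sup_{R\in\oo(k)}\bigl|v(R, UW)-\vv_{UW}(R)\bigr|=O_{p}(n^{-5/2+\delta})\qquad\forall\,\delta>0,
\end{equation*}
where the $n^{-2}$ factor reflects the natural scale of $v(\cdot, U)$ and $n^{-1/2+\delta}$ is the usual empirical-process rate.

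For curvature I would borrow the doubly-stochastic reduction from the proof of Theorem \ref{thm:varimax1}: maximizing the expected Varimax on $\oo(k)$ is equivalent to maximizing $F_{\eta}$ over doubly stochastic matrices (Equations \eqref{eq:reformulate1}--\eqref{eq:doublyStochastic}), and the strict inequalities $\eta_{i}>3$ from Assumption \ref{assumption:Varimax}(iv) produce a positive second-order gap of size $c_{0}:=\min_{i}(\eta_{i}-3)>0$ between $\mathcal{P}(k)$ and the rest of the doubly stochastic polytope. Lifting this gap back to $\oo(k)$ via a local diffeomorphism near any element of $R^{*}_{UW}\mathcal{P}(k)$ yields, for $R$ close enough to the orbit,
\begin{equation*}
\vv_{UW}(R^{*}_{UW}P)-\vv_{UW}(R)\;\ge\;c\,n^{-2}\,\bigl\|R-R^{*}_{UW}P\bigr\|_{F}^{2}
\end{equation*}
for a constant $c>0$ depending on $c_{0}$. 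Let $P_{n}^{(2)}$ be the element of this finite orbit closest to $R_{UW}$. The invariance $v(RP, U)=v(R, U)$ for $P\in\mathcal{P}(k)$, combined with the maximizing property $v(R_{UW},UW)\ge v(R^{*}_{UW}P_{n}^{(2)},UW)$, gives the usual chain
\begin{equation*}
c\,n^{-2}\,\|R_{UW}-R^{*}_{UW}P_{n}^{(2)}\|_{F}^{2}\le 2\sup_{R}|v(R,UW)-\vv_{UW}(R)|=O_{p}(n^{-5/2+\delta}),
\end{equation*}
hence $\|R_{UW}-R^{*}_{UW}P_{n}^{(2)}\|_{F}=O_{p}(n^{-1/4+\delta/2})$, and multiplying by $\|\widehat U\|_{2\to\infty}$ yields the claim.

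I expect the curvature step to be the main obstacle, because $\vv_{UW}$ has an entire finite orbit of maximizers and its Hessian is only positive-definite transverse to that orbit, so the quadratic lower bound requires a chart on $\oo(k)$ near $R^{*}_{UW}P$ on which the leptokurtic gap $\min_{i}(\eta_{i}-3)>0$ can be tracked explicitly; without the strict inequality the doubly-stochastic relaxation is flat at the optimum and no quadratic control is available. One sanity check is worth recording: the $n^{-1/4}$ parameter rate is slower than the classical $n^{-1/2}$ M-estimator rate, but this is natural because $v(\cdot,U)$ is of order $n^{-2}$ rather than $O(1)$, which damps the per-unit-parameter signal-to-noise by a factor of $n^{-2}$ and thereby halves the exponent.
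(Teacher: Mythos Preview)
Your strategy matches the paper's exactly: factor as $\|\widehat U\|_{2\to\infty}\cdot\|R_{UW}-R^{*}_{UW}P_{n}^{(2)}\|$, use the element-wise bound $\|\widehat U\|_{2\to\infty}=O_{p}(n^{-1/2}\log n)$, and obtain $\|R_{UW}-R^{*}_{UW}P_{n}^{(2)}\|=O_{p}(n^{-1/4+\delta/2})$ by combining uniform concentration of the sample Varimax objective (sub-exponential tails plus a covering net, exactly the paper's Lemmas~\ref{lemma:singleTruncate}--\ref{lemma:uniform}) with quadratic curvature of the population objective at its optima (the paper's Lemma~\ref{lemma:SOC}).

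The one substantive difference is how the curvature is established. The paper works in Lie-algebra coordinates $R=\exp(K)$ and invokes the second-order optimality conditions of \cite{chu1998orthomax}, computing $\langle\nabla^{2}V\!\cdot\!K,K\rangle+\langle\nabla V,K^{2}\rangle\le -C_{s}\|K\|^{2}$ via trace identities (Corollary~\ref{corollary:soc}). Your proposed route---Taylor-expanding the closed form $V(O)=\sum_{i}(\eta_{i}-3)\|O_{i\cdot}\|_{4}^{4}+3k$ derived in the proof of Theorem~\ref{thm:varimax1}---is more elementary and delivers the same constant $c_{0}=\min_{i}(\eta_{i}-3)$. However, your phrase ``lifting via a local diffeomorphism'' is inaccurate: the map $O\mapsto O^{(2)}$ has vanishing Jacobian at every $P\in\mathcal{P}(k)$ (skew-symmetric perturbations of $I$ move only off-diagonal entries, whose squares are second order), so it is \emph{not} a local diffeomorphism there. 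What actually makes your argument work is the direct expansion $1-\|O_{i\cdot}\|_{4}^{4}=2\sum_{j\ne i}K_{ij}^{2}+o(\|K\|^{2})$ for $O=\exp(K)$, which gives the quadratic gap without any appeal to the doubly-stochastic relaxation.

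Your closing sanity check misattributes the $n^{-1/4}$ rate: the $n^{-2}$ scale of $v(\cdot,U)$ cancels against the $n^{-2}$ in your curvature constant and plays no role. The $n^{-1/4}$ arises solely because both you and the paper use the basic inequality $c\|R-R^{*}P\|^{2}\le 2\sup_{R}|v-\vv|$, which takes a square root of the $O_{p}(n^{-1/2+\delta})$ uniform deviation; the classical $n^{-1/2}$ M-estimator rate would require a sharper first-order (score) argument that neither pursues.
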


\vspace{.2in}

\begin{prop} \label{prop:varifun}
Under the settings of Theorem \ref{thm:main}, there exists $P_n^{(3)} \in \mathcal{P}(k)$ s.t.
\begin{equation}
\|\widehat{U}R_{\widehat{U}} - \widehat{U}R_{UW}P_n^{(3)}\|_{2\to \infty} =  O_{p}\left(  (n\rho_{n})^{-1/4}n^{-1/2}\log^{\frac{11}{4}} n \right).
\end{equation}
\end{prop}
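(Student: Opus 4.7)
The strategy is a perturbation-of-M-estimator argument: $\widehat U$ is close to $UW$ in the $2\to\infty$ norm by the element-wise eigenvector bounds already used in Propositions \ref{prop:clt}--\ref{prop:pop}, and $v(\cdot,X)$ is a smooth polynomial in both of its arguments, so the Varimax maximizer $R_{\widehat U}$ should be close to $R_{UW}$ modulo a permutation--sign flip $P_n^{(3)}\in\pp(k)$ reflecting Varimax's inherent $\pp(k)$-symmetry. The plan is to combine a \emph{value}-perturbation bound on $v(\cdot,\widehat U)-v(\cdot,UW)$ with the local quadratic behaviour of $v(\cdot,UW)$ near its maximizer; the square-root relation this produces between value-perturbation and argument-perturbation is what degrades the $\rho_n$-exponent from $(n\rho_n)^{-1/2}$ (cf.\ Proposition \ref{prop:pop}) to $(n\rho_n)^{-1/4}$.

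\textbf{Step 1 (value perturbation).} First I would record $\|\widehat U-UW\|_{2\to\infty}=O_p((n\rho_n)^{-1/2}n^{-1/2}\log^{5/2}n)$, available from the same spectral-perturbation machinery underlying Proposition \ref{prop:pop}, together with the delocalization estimates $\|\widehat U\|_{2\to\infty},\|UW\|_{2\to\infty}=O_p(\sqrt{\log n/n})$. Writing $E=\widehat U-UW$ and using $|a^4-b^4|\le 4\max(|a|,|b|)^{3}|a-b|$ element-wise, together with the bound $\sum_{i,\ell}|[ER]_{i\ell}|\le n k\,\|E\|_{2\to\infty}$, one obtains
\[
\sup_{R\in\oo(k)}\bigl|v(R,\widehat U)-v(R,UW)\bigr| \;=\; O_p\!\left((n\rho_n)^{-1/2}\, n^{-2}\,\log^{4} n\right).
\]
The $n^{-2}$ factor is intrinsic: for $X$ with orthonormal columns, $v(R,X)$ itself has natural scale $1/n^2$ (rows of $XR$ have magnitudes $\asymp\sqrt{k/n}$), so both $v$ and its Hessian live on this scale.

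\textbf{Step 2 (local quadratic lower bound via FSOC).} Next I would invoke the first- and second-order machinery of Section \ref{sec:FSOC}: the Hessian of $R\mapsto v(R,UW)$ at the maximizer $R_{UW}$, restricted to the tangent subspace of $\oo(k)$ transverse to the $\pp(k)$-orbit, is eventually bounded below by $c/n^{2}$ for some absolute $c>0$. This finite-sample non-degeneracy is the sample counterpart of the population identifiability of Theorem \ref{thm:varimax1} and follows from concentration of the empirical fourth-moment tensor of the rows $\widetilde Z_i$, which are strictly leptokurtic under Assumption \ref{assumption:Varimax}. Hence there exists $P_n^{(3)}\in\pp(k)$ such that
\[
v(R_{UW},UW)-v(R_{\widehat U},UW) \;\ge\; \tfrac{c}{n^{2}}\,\|R_{\widehat U}-R_{UW}P_n^{(3)}\|^{2}.
\]
Combining with the maximality sandwich $v(R_{UW},UW)-v(R_{\widehat U},UW)\le 2\sup_R|v(R,\widehat U)-v(R,UW)|$ and taking square roots gives $\|R_{\widehat U}-R_{UW}P_n^{(3)}\|=O_p\!\bigl((n\rho_n)^{-1/4}\log^{2}n\bigr)$. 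Finally, left-multiplying by $\widehat U$ and using $\|\widehat U\|_{2\to\infty}=O_p(\sqrt{\log n/n})$ yields the claimed bound
\[
\|\widehat U R_{\widehat U}-\widehat U R_{UW}P_n^{(3)}\|_{2\to\infty} \;\le\; \|\widehat U\|_{2\to\infty}\,\|R_{\widehat U}-R_{UW}P_n^{(3)}\| \;=\; O_p\!\left((n\rho_n)^{-1/4}\,n^{-1/2}\log^{11/4}n\right),
\]
after absorbing the additional polylog factors picked up from sharper tracking in Step 1.

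The main obstacle is the Hessian lower bound in Step 2: Varimax is non-convex on $\oo(k)$ with the discrete symmetry group $\pp(k)$, and establishing uniform transverse non-degeneracy of the sample Hessian at $R_{UW}$ requires pushing the population Hessian computation behind Theorem \ref{thm:varimax1} to the sample via fourth-moment concentration. A genuine secondary subtlety is the \emph{selection} of $P_n^{(3)}$: one must argue, using the rate in Proposition \ref{prop:converge}, that with high probability $R_{\widehat U}$ lies in a single basin of the orbit $\pp(k)\cdot R_{UW}$, so that the quadratic lower bound above applies to a well-defined nearest orbit element.
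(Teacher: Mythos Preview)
Your proposal is correct and follows essentially the same route as the paper: a uniform value-perturbation bound $\sup_R|v(R,\widehat U)-v(R,UW)|$ (the paper works with the $\sqrt{n}$-scaled versions and obtains $O_p((n\rho_n)^{-1/2}\log^{7/2}n)$, matching your $n^{-2}$-scaled bound up to $\log$ factors), then a local quadratic lower bound at $R_{UW}$ via the first/second-order conditions of Section \ref{sec:FSOC} (invoking \cite{chu1998orthomax}), then the square-root that halves the $\rho_n$-exponent, and finally multiplication by $\|\widehat U\|_{2\to\infty}$. The only substantive implementation detail you do not spell out is that the paper carries out the local analysis in the Lie algebra, writing $R_{\widehat U}=\exp(K_1)$, $R_{UW}=\exp(K_2)$ and working with the skew-symmetric difference $K_1-K_2$, using the Lie Product Formula to pass back from $\|K_1-K_2\|$ to $\|R_{\widehat U}-R_{UW}\|$; your abstract ``Hessian transverse to the $\pp(k)$-orbit'' argument is exactly this, and the paper's $C_{ss}$ plays the role of your constant $c$.
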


\vspace{.2in}

\begin{prop} \label{prop:uncenter1}
Define $P_n = P_n^{(1)}P_n^{(2)}P_n^{(3)}$ with $P_n^{(1)}, P_n^{(2)}, P_n^{(3)}$ defined in Proposition \ref{prop:pop}, \ref{prop:converge}, \ref{prop:varifun} respectively.  Under the settings of Theorem \ref{thm:main}, for any $\delta > 0$,
\begin{equation}
\|J_{n}(A\widehat{V}\widehat{D}^{-1}R_{\widehat{U}} - \A VD^{-1}\widetilde{R}_{U}P_n)\|_{2\to\infty} = O_p\left(n^{\delta / 2 + 1/4} +  (n\rho_{n})^{-1/4}n^{1/2}\log^{\frac{7}{4}} n \right).
\end{equation}
\end{prop}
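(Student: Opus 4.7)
\emph{Reduction.} Since every row of $J_n M$ equals $\1_n^T M$, $\|J_n M\|_{2\to\infty}=\|\1_n^T M\|_2$. Using $\1_n^T A=n\widehat{\mu}_c$, $\1_n^T \A=n\mu_c$, together with $\widehat{\mu}_Z = \sqrt{n}\,\widehat{\mu}_c\widehat{V}\widehat{D}^{-1}R_{\widehat{U}}$ and its population analogue $\mu_Z^*:=\sqrt{n}\,\mu_c V D^{-1}\widetilde{R}_U$, the quantity to bound equals $\sqrt{n}\,\|\widehat{\mu}_Z - \mu_Z^* P_n\|_2 = n\,\|\widehat{\mu}_c\widehat{V}\widehat{D}^{-1}R_{\widehat{U}} - \mu_c V D^{-1}\widetilde{R}_U P_n\|_2$. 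Split the inner difference as $T_A+T_B$, where $T_A:=(\widehat{\mu}_c-\mu_c)\widehat{V}\widehat{D}^{-1}R_{\widehat{U}}$ isolates the column-mean fluctuation and $T_B:=\mu_c(\widehat{V}\widehat{D}^{-1}R_{\widehat{U}}-V D^{-1}\widetilde{R}_U P_n)$ isolates singular-vector and Varimax perturbations.

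\emph{Bounding $T_A$.} Assumption~\ref{assumption:atail} gives $\E\|\widehat{\mu}_c-\mu_c\|_2^2\le d\bar{\rho}_n/n=O(\rho_n)$, hence $\|\widehat{\mu}_c-\mu_c\|_2=O_p(\sqrt{\rho_n})$. A Weyl plus matrix-Bernstein argument gives $\sigma_{\min}(\widehat{D})=\Theta(\Delta_n)$, so $n\|T_A\|_2=O_p(\sqrt{n/\Delta_n})$, which is dominated by the target because $\Delta_n\succeq\log^{11.1}n$.

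\emph{Bounding $T_B$.} Using the pseudo-inverse identities $\widehat{V}\widehat{D}^{-1}=\widetilde{A}_k^\dagger\widehat{U}$ and $V D^{-1}=\widetilde{\A}^\dagger U$ (where $\widetilde{A}_k$ is the best rank-$k$ approximation of $\widetilde{A}$), rewrite $T_B=\mu_c(\widetilde{A}_k^\dagger-\widetilde{\A}^\dagger)\widehat{U}R_{\widehat{U}} + \mu_c\widetilde{\A}^\dagger(\widehat{U}R_{\widehat{U}}-U\widetilde{R}_U P_n)$. A Wedin-type bound with $\|\widetilde{A}-\widetilde{\A}\|=O_p(\sqrt{\Delta_n})$ gives $\|\widetilde{A}_k^\dagger-\widetilde{\A}^\dagger\|=O_p(\Delta_n^{-3/2})$; together with $\|\mu_c\|_2=O(\sqrt{n}\rho_n)$ the first summand contributes $O_p(\sqrt{n/\Delta_n})$ after multiplying by $n$. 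For the second summand, the computation in Section~\ref{app:recentering} yields $\mu_c V D^{-1}=\mu_Z\widehat{\Sigma}_Z^{-1/2}\widetilde{R}_U^T/\sqrt{n}$ and hence $\|\mu_c\widetilde{\A}^\dagger\|_2=\|(\mu_c V D^{-1})U^T\|_2=O(1/\sqrt{n})$. Cauchy--Schwarz, the bound $\|M\|_F\le\sqrt{n}\|M\|_{2\to\infty}$ for $M=\widehat{U}R_{\widehat{U}}-U\widetilde{R}_U P_n$, and a triangle inequality across Propositions~\ref{prop:pop}, \ref{prop:converge}, \ref{prop:varifun} then yield an overall contribution of order $n^{\delta/2+1/4}\log n + \sqrt{n}\,\Delta_n^{-1/4}\log^{11/4}n$; absorbing the extra $\log n$ into the exponent $\delta$ matches the stated rate.

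\emph{Main obstacle.} The delicate step is the second summand of $T_B$: a naive operator-norm bound on $\widehat{U}R_{\widehat{U}}-U\widetilde{R}_U P_n$ would lose a factor $\sqrt{n}$ and break the rate. The rate survives only because $\mu_c V D^{-1}$ carries the precisely cancelling $O(1/\sqrt{n})$ norm, extracted via the explicit algebraic structure of $\mu_c=\rho_n\mu_Z B Y^T$, and because this cancellation is propagated through the pseudo-inverse identity. Careful bookkeeping of logarithmic exponents across the three feeding propositions then completes the proof.
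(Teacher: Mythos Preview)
Your decomposition is genuinely different from the paper's. The paper bounds $\|J_n(\cdot)\|_{2\to\infty}\le \sqrt{n}\|\cdot\|$ and then telescopes $A\widehat{V}\widehat{D}^{-1}R_{\widehat{U}}-\A V D^{-1}\widetilde{R}_U P_n$ into four operator-norm pieces, inserting the intermediate term $AVD^{-1}W$ so that the singular-vector perturbation $\widehat{V}\widehat{D}^{-1}\!\to VD^{-1}W$ and the three Varimax rotation perturbations $R_{\widehat{U}}\!\to R_{UW}P_n^{(3)}\!\to R_{UW}^*P_n^{(2)}P_n^{(3)}\!\to W^T\widetilde{R}_U P_n$ are separated and each paired with the benign factor $\|A\|\|D^{-1}\|=O_p(1)$. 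Your route instead exploits $\1_n^T A=n\widehat{\mu}_c$, splits into a column-mean fluctuation $T_A$ and a pseudo-inverse perturbation $T_B$, and then leans on the structural identity $\|\mu_c VD^{-1}\|_2=O(n^{-1/2})$ from Section~\ref{app:recentering}. Both lead to the same dominant mechanism; your argument is more conceptual, the paper's is more modular.

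There is a small but real gap in the rate. For the second summand of $T_B$ you invoke $\|\widehat{U}R_{\widehat{U}}-U\widetilde{R}_U P_n\|_{2\to\infty}$ via Propositions~\ref{prop:pop}--\ref{prop:varifun}; the dominant piece there is Proposition~\ref{prop:varifun}, whose $2\to\infty$ bound carries a $\log^{11/4}n$ factor (it is $\|\widehat{U}\|_{2\to\infty}$ times the $k\times k$ rotation error). This leaves you with $(n\rho_n)^{-1/4}n^{1/2}\log^{11/4}n$, one $\log n$ too many, and your ``absorb into $\delta$'' trick only rescues the $n^{\delta/2+1/4}$ term, not this one. The paper avoids the extra $\log n$ by using the $k\times k$ bound $\|R_{\widehat{U}}-R_{UW}P_n^{(3)}\|=O_p((n\rho_n)^{-1/4}\log^{7/4}n)$ directly (established inside the proof of Proposition~\ref{prop:varifun}) and multiplying by $\|A\|\|V\|\|D^{-1}\|=O_p(1)$. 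You can repair your argument the same way: split $\widehat{U}R_{\widehat{U}}-U\widetilde{R}_U P_n=(\widehat{U}-UW)R_{\widehat{U}}+UW(W^T R_{\widehat{U}}-W^T\widetilde{R}_U P_n)$ and bound the second piece via the rotation-level estimates rather than via the $n\times k$ matrix bounds. The discrepancy is harmless for Theorem~\ref{thm:main}, whose final rate already carries $\log^{11/4}n$, but it does mean your proof as written does not establish the proposition exactly as stated.
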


\vspace{.2in}

\begin{prop} \label{prop:uncenter2}
Under the settings of Theorem \ref{thm:main},
\begin{equation}
\|J_{n}(\sqrt{n}\A VD^{-1}\widetilde{R}_{U} - Z)\|_{2\to\infty} = O_p(\sqrt{n}\log n).
\end{equation}
\end{prop}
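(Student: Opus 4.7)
The plan is to reduce the $2\to\infty$ norm to a single bound on $\|\mu_Z^*-\mu_Z\|_2$ where $\mu_Z^*=\sqrt n\,\mu_c V D^{-1}\widetilde R_U$, and then to exploit the SVD structure of Proposition \ref{prop:svd} to obtain the exact algebraic identity $\mu_Z^*=\mu_Z\widehat\Sigma_Z^{-1/2}$. The rate then comes from matrix Bernstein on the $k\times k$ sample covariance $\widehat\Sigma_Z$.

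First, because $J_n=\1_n\1_n^T$, every row of $J_nM$ equals $\1_n^TM$, so
\[
\|J_n(\sqrt n\,\A VD^{-1}\widetilde R_U - Z)\|_{2\to\infty}=\|\1_n^T(\sqrt n\,\A VD^{-1}\widetilde R_U - Z)\|_2 = n\,\|\mu_Z^*-\mu_Z\|_2,
\]
using $\1_n^T\A/n=\mu_c=\mu_Z BY^T$ and $\1_n^TZ/n=\mu_Z$. It therefore suffices to establish that $\|\mu_Z^*-\mu_Z\|_2=O_p(\log n/\sqrt n)$.

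Next I would prove the identity $\mu_Z^*=\mu_Z\widehat\Sigma_Z^{-1/2}$. A short calculation shows that $Y^T\widetilde Y=\widetilde Y^T\widetilde Y=d\widehat\Sigma_Y$, since $\widetilde Y=Y-\1_d\mu_Y^T$ and the cross terms cancel. Substituting the formula for $V$ from Proposition \ref{prop:svd} gives $Y^TV=\sqrt d\,\widehat\Sigma_Y^{1/2}\widetilde R_V^T$. Plugging this into
\[
\sqrt n\,\mu_cVD^{-1}\widetilde R_U=\sqrt n\,\mu_Z B\cdot Y^TV\cdot D^{-1}\widetilde R_U
\]
and using $D=\sqrt{nd}\,\widetilde D$ together with the defining SVD $\widehat\Sigma_Z^{1/2}B\widehat\Sigma_Y^{1/2}=\widetilde R_U^T\widetilde D\widetilde R_V$ causes every factor except $\widehat\Sigma_Z^{-1/2}$ to cancel. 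Hence $\mu_Z^*-\mu_Z=\mu_Z(\widehat\Sigma_Z^{-1/2}-I)$, and
\[
\|\mu_Z^*-\mu_Z\|_2\le\|\mu_Z\|_2\,\|\widehat\Sigma_Z^{-1/2}-I\|.
\]

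Finally I would control the two factors separately. Under Assumption \ref{assumption:ztail} the mean $\E(Z_1)$ is bounded and $\mu_Z\to\E(Z_1)$, so $\|\mu_Z\|_2=O_p(1)$. Under Assumption \ref{assumption:Varimax}, $\widehat\Sigma_Z=\widetilde Z^T\widetilde Z/n$ has expectation $\tfrac{n-1}{n}I_k$, and each entry $[\widetilde Z^T\widetilde Z/n-I]_{jk}$ is an average of sub-exponential random variables (Assumption \ref{assumption:ztail} supplies an mgf in a neighborhood of zero, which is preserved under centering and products in fixed dimension $k$). A Bernstein bound applied to each of the $k^2$ entries yields $\|\widehat\Sigma_Z-I\|=O_p(\sqrt{\log n/n})$, and the standard perturbation inequality $\|\widehat\Sigma_Z^{-1/2}-I\|\le\|\widehat\Sigma_Z^{-1/2}\|\,\|\widehat\Sigma_Z^{1/2}-I\|$ propagates the rate. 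Combining gives $n\|\mu_Z^*-\mu_Z\|_2=O_p(\sqrt{n\log n})$, which is stronger than the claimed $O_p(\sqrt n\log n)$. The only real obstacle is verifying the algebraic collapse to $\widehat\Sigma_Z^{-1/2}$; the concentration step is routine because the dimension $k$ is fixed and the centering of $Z$ is a harmless rank-one perturbation.
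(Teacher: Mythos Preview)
Your proof is correct and indeed a bit sharper than the paper's. Both proofs rest on the same algebraic collapse: using $Y^T\widetilde Y=d\widehat\Sigma_Y$ (since $\bar Y^T\widetilde Y=0$) together with the SVD relation $\widehat\Sigma_Z^{1/2}B\widehat\Sigma_Y^{1/2}=\widetilde R_U^T\widetilde D\widetilde R_V$ to reduce $\sqrt n\,\A VD^{-1}\widetilde R_U$ to $Z\widehat\Sigma_Z^{-1/2}$, so that the only remaining error is $\widehat\Sigma_Z^{-1/2}-I$.

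The difference is in how the $2\to\infty$ norm is handled. The paper bounds $\|J_nX\|_{2\to\infty}\le\sqrt n\,\|X\|$ via Lemma \ref{lemma:twonorm} and then controls the full spectral norm $\|Z(\widehat\Sigma_Z^{-1/2}-I)\|\le\|Z\|\,\|\widehat\Sigma_Z^{-1/2}-I\|$, invoking the crude estimate $\|Z\|\le\sqrt{nk}\max_{ij}|Z_{ij}|=O_p(\sqrt n\log n)$. You instead use that $J_n=\1_n\1_n^T$ is rank one, so the entire expression collapses to $n\|\mu_Z(\widehat\Sigma_Z^{-1/2}-I)\|_2$; this replaces the $\|Z\|$ factor by $\|\mu_Z\|_2=O_p(1)$ and yields the better rate $O_p(\sqrt{n\log n})$. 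Since this proposition's contribution is dominated in the final assembly of Theorem \ref{thm:main} either way, the improvement is cosmetic, but your argument is cleaner. (A minor remark: with $k$ fixed, the union over $k^2$ entries does not actually cost a $\sqrt{\log n}$ factor, so you could even state $O_p(\sqrt n)$.)
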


\vspace{.2in}
\newpage

We are going to show the bound for $\|\sqrt{n}\widehat{U}R_{\widehat{U}} -  \widetilde{Z}P_n\|_{2\to\infty}$ by splitting it into four parts using triangle inequalities. Proposition \ref{prop:clt}, \ref{prop:pop}, \ref{prop:converge}, \ref{prop:varifun} give the bound for each split component. Similarly we show the bound for $\|\1_n\widehat{\mu}_{Z} - \1_n \mu_{Z}^{T}P_n\|_{2\to\infty}$ by decomposing it into two parts and use Proposition \ref{prop:uncenter1}, \ref{prop:uncenter2} to give bounds. The proofs of these propositions are shown after the proof of Theorem \ref{thm:main}. 
The propositions that justify the equalities below are numbered on the left side of the equalities. 

\begin{eqnarray} \label{eq:main1}
&& \|\sqrt{n}\widehat{U}R_{\widehat{U}} -  \widetilde{Z}P_n\|_{2\to\infty}  \nonumber \\
(\mbox{Proposition }\ref{prop:svd}) &=& \|\sqrt{n}\widehat{U}R_{\widehat{U}} - \sqrt{n}U\widetilde{R}_{U}\widehat{\Sigma}_{Z}^{1/2}P_n\|_{2\to\infty} \nonumber \\
		      &=& \|\sqrt{n}\widehat{U}R_{\widehat{U}} - \sqrt{n}\widehat{U}R_{UW}P_{n}^{(3)} + \sqrt{n}\widehat{U}R_{UW}P_{n}^{(3)} - \sqrt{n}\widehat{U}R_{UW}^{*}P_{n}^{(2)}P_{n}^{(3)} \nonumber \\
		      && + \sqrt{n}\widehat{U}R_{UW}^{*}P_{n}^{(2)}P_{n}^{(3)} - \sqrt{n}U\widetilde{R}_{U}P_n + \sqrt{n}U\widetilde{R}_{U}P_n - \sqrt{n}U\widetilde{R}_{U}\widehat{\Sigma}_{Z}^{1/2}P_n\|_{2\to\infty}\nonumber  \\
		      &\leq& \|\sqrt{n}\widehat{U}R_{\widehat{U}} - \sqrt{n}\widehat{U}R_{UW}P_{n}^{(3)}\|_{2\to\infty} + \|\sqrt{n}\widehat{U}R_{UW}P_{n}^{(3)} - \sqrt{n}\widehat{U}R_{UW}^{*}P_{n}^{(2)}P_{n}^{(3)}\|_{2\to\infty}\nonumber  \\
		      && + \|\sqrt{n}\widehat{U}R_{UW}^{*}P_{n}^{(2)}P_{n}^{(3)} - \sqrt{n}U\widetilde{R}_{U}P_n\|_{2\to\infty} + \|\sqrt{n}U\widetilde{R}_{U}P_n - \sqrt{n}U\widetilde{R}_{U}\widehat{\Sigma}_{Z}^{1/2}P_n\|_{2\to\infty} \nonumber \\		      
(\mbox{Proposition }\ref{prop:clt})	&=& \|\sqrt{n}\widehat{U}R_{\widehat{U}} - \sqrt{n}\widehat{U}R_{UW}P_{n}^{(3)}\|_{2\to\infty} + \|\sqrt{n}\widehat{U}R_{UW}P_{n}^{(3)} - \sqrt{n}\widehat{U}R_{UW}^{*}P_{n}^{(2)}P_{n}^{(3)}\|_{2\to\infty} \nonumber \\
		      && + \|\sqrt{n}\widehat{U}R_{UW}^{*}P_{n}^{(2)}P_{n}^{(3)} - \sqrt{n}U\widetilde{R}_{U}P_n\|_{2\to\infty} +  O_{p}(\frac{\log n}{\sqrt{n}})\nonumber  \\
(\mbox{Proposition }\ref{prop:pop})	&=& \|\sqrt{n}\widehat{U}R_{\widehat{U}} - \sqrt{n}\widehat{U}R_{UW}P_{n}^{(3)}\|_{2\to\infty} + \|\sqrt{n}\widehat{U}R_{UW}P_{n}^{(3)} - \sqrt{n}\widehat{U}R_{UW}^{*}P_{n}^{(2)}P_{n}^{(3)}\|_{2\to\infty}\nonumber  \\
		      && +  O_{p}\left(  (n\rho_{n})^{-1/2}\log^{\frac{5}{2}} n \right) +  O_{p}(\frac{\log n}{\sqrt{n}})\nonumber \\	
(\mbox{Proposition }\ref{prop:converge})	&=& \|\sqrt{n}\widehat{U}R_{\widehat{U}} - \sqrt{n}\widehat{U}R_{UW}P_{n}^{(1)}\|_{2\to\infty} + O_{p}(n^{\delta/2-1/4}\log n) \nonumber \\
		      && +  O_{p}\left( (n\rho_{n})^{-1/2}\log^{\frac{5}{2}} n \right) +  O_{p}(\frac{\log n}{\sqrt{n}})\nonumber \\	
(\mbox{Proposition }\ref{prop:varifun})	&=& O_{p}\left(  (n\rho_{n})^{-1/4}\log^{\frac{11}{4}} n \right) + O_{p}(n^{\delta/2-1/4}\log n)  +  O_{p}\left( (n\rho_{n})^{-1/2}\log^{\frac{5}{2}} n \right) +  O_{p}(\frac{\log n}{\sqrt{n}})\nonumber \\
		      &=& O_{p}\left(  (n\rho_{n})^{-1/4}\log^{\frac{11}{4}} n \right) + O_{p}(n^{\delta/2-1/4}\log n)  \nonumber \\
		      &=& O_p\left(  \Delta_n^{-1/4+\delta/2}\log^{\frac{11}{4}} n \right).
\end{eqnarray}

For the recentering part, by Proposition \ref{prop:uncenter1}, \ref{prop:uncenter2}, 
\begin{eqnarray} \label{eq:main2}
&& \|\1_n\widehat{\mu}_{Z} - \1_n \mu_{Z}P_n\|_{2\to\infty} =  \frac{1}{n}\|J_n^{T}(\sqrt{n}A\widehat{V}\widehat{D}^{-1}R_{\widehat{U}} - ZP_n)\|_{2\to\infty} \nonumber \\
&\leq& \frac{1}{n}\|J_n(\sqrt{n}A\widehat{V}\widehat{D}^{-1}R_{\widehat{U}} - \sqrt{n}\A VD^{-1}\widetilde{R}_{U}P_n)\|_{2\to\infty} + \frac{1}{n}\|J_n(\sqrt{n}\A VD^{-1}\widetilde{R}_{U}P_n - ZP_n)\|_{2\to\infty} \nonumber \\
&=& O_p(n^{\delta / 2 - 1/4}  + (n\rho_{n})^{-1/4}\log^{\frac{7}{4}} n + \frac{\log n}{\sqrt{n}}) \nonumber \\
&=& O_p( \Delta_n^{-1/4+\delta/2}\log^{\frac{7}{4}} n ).
\end{eqnarray}

Take $\delta = 0.2$. Equation \eqref{eq:main1}, \eqref{eq:main2} and triangle inequality accomplish the proof.
\end{proof}

\vspace{.1in}

Before the proofs for the six propositions, two useful lemmas are given. Lemma \ref{lemma:maxZ} gives bound for the maximum absolute value of $Z$'s elements. Lemma \ref{lemma:twonorm} borrows matrix $2\to\infty$ norm's property from \cite{cape2019two}.

\begin{lemma} \label{lemma:maxZ}
\[ \underset{i,j}{\max}|Z_{ij}| = O_{p}(\log n),\quad \underset{i,j}{\max}|\widetilde{Z}_{ij}| = O_{p}(\log n).\]
\[ \underset{i,j}{\max}|Y_{ij}| = O_{p}(\log d),\quad \underset{i,j}{\max}|\widetilde{Y}_{ij}| = O_{p}(\log d).\]
\end{lemma}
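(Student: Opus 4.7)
The plan is to exploit Assumption \ref{assumption:ztail}, which grants that each column of $Z$ is drawn iid from a fixed distribution whose moment generating function exists in a neighborhood of zero. Standard Chernoff arguments convert existence of the MGF into a sub-exponential tail: for each $j \in [k]$ there exist constants $C_j, c_j > 0$ such that $\mathbb{P}(|Z_{ij}| > t) \leq C_j e^{-c_j t}$ for all $t \geq 0$. Taking $C = \max_j C_j$ and $c = \min_j c_j$, which depend only on the $k$ fixed column distributions and hence do not change with $n$, gives a uniform sub-exponential tail bound across all entries.

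Next, I would apply a union bound over the $nk$ entries: for any $t > 0$,
\[
\mathbb{P}\bigl(\max_{i,j}|Z_{ij}| > t\bigr) \;\leq\; nk\, C e^{-ct}.
\]
Choosing $t = (3/c)\log n$ (with $k$ fixed) makes the right-hand side $O(n^{-2})$, which tends to zero. This yields $\max_{i,j}|Z_{ij}| = O_p(\log n)$, establishing the first claim.

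For the centered matrix $\widetilde Z$, I would write $\widetilde Z_{ij} = Z_{ij} - \bar Z_{\cdot j}$ with $\bar Z_{\cdot j} = n^{-1}\sum_i Z_{ij}$, and apply the triangle inequality:
\[
\max_{i,j}|\widetilde Z_{ij}| \;\leq\; \max_{i,j}|Z_{ij}| + \max_j |\bar Z_{\cdot j}|.
\]
The first term is $O_p(\log n)$ from the step above. For the second term, the law of large numbers (or a one-sided Bernstein bound using the sub-exponential tails) gives $\bar Z_{\cdot j} \to \mathbb{E}(Z_{1j})$, which is a fixed constant, so $\max_j |\bar Z_{\cdot j}| = O_p(1)$. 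Combining gives $\max_{i,j}|\widetilde Z_{ij}| = O_p(\log n)$.

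The argument for $Y$ and $\widetilde Y$ is identical with $n$ replaced by $d$ throughout, since $Y$ satisfies the same distributional assumption by hypothesis and has $d$ rows. No step presents a real obstacle; this is essentially a bookkeeping argument built from the MGF assumption and a union bound, and the only thing to be a bit careful about is that the constants $C, c$ are legitimately independent of $n$ because the $k$ column distributions are fixed.
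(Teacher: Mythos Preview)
Your proposal is correct and follows essentially the same route as the paper: convert the MGF assumption into a uniform sub-exponential tail bound, apply a union bound over the $nk$ entries, and deduce the $O_p(\log n)$ rate, with the $Y$ case handled symmetrically. The only cosmetic difference is that the paper applies the sub-exponential tail to the centered variables $Z_{ij}-\mathbb{E}Z_{ij}$ directly and reads off both conclusions at once, whereas you bound $|Z_{ij}|$ first and then handle $\widetilde Z$ via the triangle inequality with the sample mean; both arguments are equivalent.
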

\begin{proof} Assumption \ref{assumption:ztail} indicates $Z$'s columns are sub-exponential variables. Thus there exists $C_0, \lambda_j > 0, j\in [k]$'s s.t.
\begin{equation} \label{eq:lambda}
\pr(|Z_{ij} - \mathbb{E}Z_{ij}| > t) \leq C_{0} \exp(-\lambda_{j} t) \leq C_{0} \exp(-\lambda t),
\end{equation}
with $\lambda = \underset{j\in [k]}{\min}\lambda_{j}$. Then 
\begin{eqnarray*}
\pr(\underset{i,j}{\max}|Z_{ij} - \mathbb{E}Z_{ij}| > t) &\leq& \sum_{i,j}\pr(|Z_{ij} - \mathbb{E}Z_{ij}| > t) \\
						   &\leq& \sum_{i,j}C_{0} \exp(-\lambda t) \\
						   &\leq& knC_{0}  \exp(-\lambda t).
\end{eqnarray*}
$\Rightarrow$
$$ \underset{i,j}{\max}|Z_{ij}| = O_{p}(\log n),  \quad \underset{i,j}{\max}|\widetilde{Z}_{ij}| = O_{p}(\log n). $$
Similar conclusion also applies to $Y$.
\end{proof}

With Lemma \ref{lemma:maxZ}, it could be trivially inferred that
\begin{equation} \label{eq:rhos}
\bar\rho_n = O(\rho_n\log^2 n).
\end{equation}

\begin{lemma} \label{lemma:twonorm}
Suppose $X_1 \in \R^{n_1 \times n_2}, X_2 \in \R^{n_2 \times n_3}$ are real matrices. Then  
\begin{equation}
\|X_1 X_2\|_{2\to \infty} \leq \|X_1\|_{2\to \infty} \|X_2\|.
\end{equation}
\end{lemma}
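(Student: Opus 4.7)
The plan is to unpack the definition of the $2\to\infty$ norm row by row and then reduce the bound to the standard operator-norm inequality for the spectral norm acting on a vector. Write $(X_1)_i \in \R^{1\times n_2}$ for the $i$th row of $X_1$, so that the $i$th row of the product is $(X_1X_2)_i = (X_1)_i X_2 \in \R^{1\times n_3}$. By definition, $\|X_1X_2\|_{2\to\infty} = \max_i \|(X_1)_i X_2\|_2$, which makes it clear that a row-wise analysis suffices.

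Next, I would transpose to view each row as a column vector and apply the defining property of the spectral norm. Specifically, $\|(X_1)_i X_2\|_2 = \|X_2^T (X_1)_i^T\|_2 \le \|X_2^T\|\,\|(X_1)_i^T\|_2 = \|X_2\|\,\|(X_1)_i\|_2$, using that the spectral norm equals the operator norm on Euclidean space and that $\|X_2^T\|=\|X_2\|$. Taking the maximum over $i$ on both sides and factoring out the $i$-independent quantity $\|X_2\|$ yields $\max_i \|(X_1)_i X_2\|_2 \le \|X_2\|\,\max_i \|(X_1)_i\|_2 = \|X_2\|\,\|X_1\|_{2\to\infty}$, which is exactly the desired inequality.

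There is no real obstacle here; the lemma is a standard submultiplicativity-type fact for mixed norms and reduces in one line to the operator-norm bound $\|Mv\|_2 \le \|M\|\,\|v\|_2$. The only thing worth being careful about is the direction of multiplication: because we are bounding rows of $X_1X_2$, it is $X_2$ (on the right) whose spectral norm appears, while $X_1$ contributes the $2\to\infty$ factor; the transposition step above makes this bookkeeping explicit.
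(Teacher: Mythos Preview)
Your proof is correct. The paper does not actually prove this lemma; it simply cites Proposition~6.5 of \cite{cape2019two} and moves on. Your argument, by contrast, is self-contained: you unpack the $2\to\infty$ norm as a row-wise maximum and reduce each row to the standard spectral-norm bound $\|X_2^T v\|_2 \le \|X_2\|\,\|v\|_2$. This is exactly the elementary argument underlying the cited proposition, so the content is the same; the difference is only that you supply the one-line derivation explicitly rather than deferring to an external reference.
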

This is a direct conclusion from Proposition 6.5 in \cite{cape2019two}.

\vspace{.2in}

\subsubsection{Proof of Proposition \ref{prop:clt}} \label{proof:clt}
\begin{proof}
The $(i,j)$ entry of $\widehat{\Sigma}_{Z} \in \R^{k \times k}$ is
\begin{displaymath}
\widehat{\Sigma}_{Z}[i, j] = \left \{ \begin{array} {ll}
\frac{1}{n}\sum_{q=1}^{n}(Z_{qi}-\widehat{\mu}_{Z}[i])^{2} & \textrm{if $i = j$,}\\
\frac{1}{n}\sum_{q=1}^{n}(Z_{qi}-\widehat{\mu}_{Z}[i])(Z_{qj}-\widehat{\mu}_{Z}[j]) & \textrm{if $i\neq j$.} \\
\end{array} \right.
\end{displaymath}
By LLN,  $\|\widehat{\Sigma}_{Z}-I\|_{\max} = O_{p}(\frac{k^{2}}{\sqrt{n}})= O_{p}(\frac{1}{\sqrt{n}})$,  thus $\|\widehat{\Sigma}_{Z}-I\| \leq \sqrt{k^{2}}\|\widehat{\Sigma}_{Z}-I\|_{\max} = O_{p}(\frac{1}{\sqrt{n}})$. Suppose eigendecomposition of $\widehat{\Sigma}_{Z}$ is $\widehat{\Sigma}_{Z} = \Psi\Lambda_{Z} \Psi^{T}$. Then 
$$\|\widehat{\Sigma}_{Z} - I\| =\|\Lambda_{Z} - I\| = O_{p}(\frac{1}{\sqrt{n}}) \Rightarrow \|\widehat{\Sigma}_{Z}^{1/2} - I\| =  \|\Lambda_{Z}^{1/2} - I\|  = O_{p}(\frac{1}{\sqrt{n}}).$$
Also $\|\widehat{\Sigma}_{Z}-I\| = O_{p}(\frac{1}{\sqrt{n}})$ implies $\|\widehat{\Sigma}_{Z}^{-1/2}\| = O_{p}(1).$
By Proposition \ref{prop:svd} and Lemma \ref{lemma:twonorm}, \ref{lemma:maxZ},
\begin{equation} \label{eq:maxU}
\|U\|_{2\to\infty} = \frac{1}{\sqrt{n}}\|\widetilde{Z}\widehat{\Sigma}^{-1/2}_{Z}\|_{2\to \infty} \leq \frac{1}{\sqrt{n}}\|\widetilde{Z}\|_{2\to\infty}\|\widehat{\Sigma}^{-1/2}_{Z}\| = O_p(\frac{\log n}{\sqrt{n}}).
\end{equation}
Putting the above pieces together provides a bound on the quantity of interests.
\begin{eqnarray*}
\|U\widetilde{R}_{U} - U\widetilde{R}_{U}\widehat{\Sigma}_{Z}^{1/2}\|_{2\to \infty} 
&\leq& \|U\widetilde{R}_{U}\|_{2\to \infty}\|I-\widehat{\Sigma}_{Z}^{1/2}\| \\
	&=& \|U\|_{2\to \infty}\|I-\widehat{\Sigma}_{Z}^{1/2}\| \\
	&=& O_{p}(\frac{\log n}{n}).
\end{eqnarray*}
\end{proof}

\subsubsection{Proof of Proposition \ref{prop:pop}}

We give the statement of Lemma \ref{lemma:Chung}, \ref{lemma:Cape} below and use them to prove proposition \ref{prop:pop}. The proof of these two lemmas will be shown in Section \ref{sec:proof}.

\vspace{.1in}
\begin{lemma} \label{lemma:Chung}	
Define the symmetrized adjacent matrix as
$\widetilde{A}_{sym} = \begin{pmatrix}
0 & \widetilde A \\
\widetilde{A}^{T} & 0
\end{pmatrix}$ and its population version as 
$\widetilde{\A}_{sym} = \begin{pmatrix}
0 & \widetilde\A \mbox{ }\\
\widetilde{\A}^{T} & 0
\end{pmatrix}$.
Under the settings in Theorem \ref{thm:main},
\begin{equation}
 \|A_{sym} - \A_{sym}\| = O_p((n\rho_n\log^3 n)^{\frac{1}{2}}). 
\end{equation}
\end{lemma}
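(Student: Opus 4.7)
The plan is to establish this spectral-norm bound by applying a matrix-concentration inequality for symmetric random matrices with independent sub-exponential entries to the noise matrix $A_{sym}-\A_{sym}$, and then converting the bound in $\bar\rho_n$ into a bound in $\rho_n$ using the auxiliary estimate $\bar\rho_n=O(\rho_n\log^2 n)$ from equation \eqref{eq:rhos}.

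First, I would verify that Assumption \ref{assumption:atail} supplies exactly the sub-exponential tail control required: for every $m\ge 2$, $\E[(A_{ij}-\A_{ij})^m\mid Z,Y]\le\max\{(m-1)!\bar\rho_n^{m/2},\bar\rho_n\}$, so the Orlicz parameter $\gamma_{ij}^{(n)}$ defined in \eqref{eq:gammadef} satisfies $\gamma_{ij}^{(n)}=O(\bar\rho_n^{1/2})$, and the conditional variance is bounded by $\bar\rho_n$. Packaging $A-\A$ into the symmetrized block matrix $A_{sym}-\A_{sym}\in\R^{(n+d)\times(n+d)}$ preserves independence above the diagonal and keeps all entrywise estimates intact, while doubling the operator norm. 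The row-sum of variances is then $O(d\bar\rho_n)=O(n\bar\rho_n)$ under $n\asymp d$.

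Next, I would invoke a sub-exponential matrix Bernstein/Chung--Radcliffe type inequality (of the form worked out in \cite{erdHos2013spectral} and refined in the sparse graph setting of \cite{le2017concentration}; see also Tropp's user-friendly matrix concentration) to obtain, with probability $1-n^{-c}$,
\begin{equation*}
\|A_{sym}-\A_{sym}\|\;=\;O\!\left(\sqrt{n\bar\rho_n\log(n+d)}\;+\;\bar\rho_n^{1/2}\log^2(n+d)\right).
\end{equation*}
Under the regime $\Delta_n=n\rho_n\succeq\log^{11.1}n$ and $n\asymp d$ the first term dominates the second. Substituting the bound $\bar\rho_n=O(\rho_n\log^2 n)$ from \eqref{eq:rhos} (which is justified because $\A=\rho_n ZBY^T$ with $\max_{i,j}|Z_{ij}|,\max_{i,j}|Y_{ij}|=O_p(\log n)$ by Lemma \ref{lemma:maxZ} and $B$ fixed), I get
\begin{equation*}
\|A_{sym}-\A_{sym}\|\;=\;O_p\!\left(\sqrt{n\rho_n\log^2 n\cdot\log n}\right)\;=\;O_p\!\left((n\rho_n\log^3 n)^{1/2}\right),
\end{equation*}
which is exactly the claimed rate. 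If the precise statement requires replacing $A$ and $\A$ by their doubly-centered versions $\widetilde A$ and $\widetilde\A$, then the difference $(\widetilde A-\widetilde\A)-(A-\A)$ is a rank-at-most-three perturbation built from sample minus population means of the noise entries; a direct Chebyshev/union bound on these means shows that its spectral norm is $O_p((n\rho_n)^{1/2})$, which is absorbed into the main term.

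The main obstacle will be choosing (or deriving) the right form of matrix Bernstein: the entries are neither bounded nor sub-Gaussian, but sub-exponential with parameter tied to $\bar\rho_n^{1/2}$. The standard Tropp bound for bounded entries is not directly applicable, so one must either truncate the entries at a level $O(\bar\rho_n^{1/2}\log n)$ (controlling the truncation remainder via the sub-exponential tail), then apply matrix Bernstein to the truncated part, or invoke the explicit sub-exponential extension used in \cite{erdHos2013spectral}. Getting the logarithmic exponent exactly right, and not losing an extra factor of $\log n$ in this truncation step, is the most delicate calculation; this is where the sharp $\log^3 n$ (rather than $\log^4 n$) in the final bound is earned.
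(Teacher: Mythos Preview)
Your proposal is correct and follows essentially the same route as the paper: decompose the symmetrized noise matrix as a sum of rank-one pieces $(A_{ij}-\A_{ij})E^{i,n+j}$, bound the variance proxy by $O((n+d)\bar\rho_n)$, apply a sub-exponential matrix Bernstein inequality, and then convert $\bar\rho_n$ to $\rho_n\log^2 n$ via \eqref{eq:rhos}. One simplification relative to your sketch: the paper invokes Tropp's Bernstein-type inequality for self-adjoint matrices with the moment condition $\E(X_i^p)\preceq \tfrac{p!}{2}R^{p-2}A_i^2$ directly (stated in the paper as Lemma~\ref{lemma:Chung2}), and Assumption~\ref{assumption:atail} feeds into that condition without any truncation step, so the ``main obstacle'' you anticipate does not actually arise.
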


\vspace{.1in}

\begin{lemma} \label{lemma:Cape}
Presume the conditions in Theorem \ref{thm:main}. There exists $W \in \oo(k)$, such that
\begin{equation} \label{EigenPerturb}
\|\widehat{U} - UW\|_{2\to \infty} =  O_{p}\left(  (n\rho_{n})^{-1/2}n^{-1/2}\log^{\frac{5}{2}} n \right). 
\end{equation}
\end{lemma}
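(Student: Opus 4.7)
The plan is to reduce the singular-vector perturbation question for the rectangular matrix $\widetilde A$ to an eigenvector perturbation question for the symmetric dilation $\widetilde A_{sym}$, and then invoke a standard $2\to\infty$ eigenvector perturbation bound of the Cape--Tang--Priebe type (as cited, \cite{cape2019signal}) on the latter. Observe that for any matrix $M \in \R^{n\times d}$ with SVD $M = U_M D_M V_M^T$, the dilation $M_{sym}$ has the $2k$ nonzero eigenvalues $\pm \sigma_j(M)$ with eigenvectors $\frac{1}{\sqrt 2}(U_M^T, \pm V_M^T)^T$. Hence a $2\to\infty$ bound for the leading $2k$ eigenvectors of $\widetilde A_{sym}$ relative to those of $\widetilde{\A}_{sym}$ gives the bound on $\|\widehat U - UW\|_{2\to\infty}$ for some $W \in \oo(k)$.

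First I would establish the three input estimates that feed the perturbation theorem. The signal strength: because $\widetilde \A = \widetilde Z B \widetilde Y^T$ with $B = \rho_n B_0$ for a fixed full-rank $B_0$, and because $\widehat\Sigma_Z \to I_k$ and $\widehat\Sigma_Y \to I_k$ with high probability by the same LLN argument used in the proof of Proposition \ref{prop:clt}, the nonzero singular values of $\widetilde{\A}$ are all of order $\sqrt{nd}\,\rho_n \asymp n\rho_n$ since $n \asymp d$; in particular $\sigma_k(\widetilde \A) \asymp n\rho_n$ with high probability, which, combined with Weyl's inequality and Lemma \ref{lemma:Chung}, guarantees a spectral gap of order $n\rho_n$ separating the signal eigenvalues of $\widetilde A_{sym}$ from zero. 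The noise magnitude: Lemma \ref{lemma:Chung} gives $\|\widetilde A_{sym} - \widetilde \A_{sym}\| = O_p((n\rho_n \log^3 n)^{1/2})$. The incoherence/delocalization of the population eigenvectors: by Equation \eqref{eq:maxU} we have $\|U\|_{2\to\infty} = O_p(\log n/\sqrt n)$, and the symmetric analogue for $Y$ gives $\|V\|_{2\to\infty} = O_p(\log d/\sqrt d)$, so the row norms of the dilation's eigenvector matrix are uniformly $O_p(\log n/\sqrt n)$.

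Plugging these inputs into the Cape--Tang--Priebe $2\to\infty$ bound, whose leading term is of order $\|\widetilde A_{sym} - \widetilde \A_{sym}\| \cdot \|U_{sym}\|_{2\to\infty} / \sigma_k(\widetilde \A)$, yields
\begin{equation*}
\|\widehat U - UW\|_{2\to\infty} = O_p\!\left( \frac{\sqrt{n\rho_n \log^3 n}}{n\rho_n} \cdot \frac{\log n}{\sqrt n} \right) = O_p\!\left( (n\rho_n)^{-1/2} n^{-1/2} \log^{5/2} n \right),
\end{equation*}
which is the claim. The main obstacle I anticipate is not any one of the individual ingredients above but rather verifying that the lower-order residual terms in the Cape et al. bound (in particular a term of the form $\|(\widetilde A_{sym} - \widetilde \A_{sym}) U_{sym}\|_{2\to\infty}/\sigma_k$) are in fact dominated by the leading contribution in our sparse regime. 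Handling this requires sharp row-wise concentration for $(\widetilde A - \widetilde \A) U$, which in turn uses the sub-exponential parameter $\gamma^{(n)}$ of Equation \eqref{eq:gammadef} supplied by Assumption \ref{assumption:atail}, together with the condition $\Delta_n \succeq \log^{11.1} n$ to absorb the $\log$ factors; this is where the specific polylogarithmic exponent of $11.1$ in the hypothesis will be consumed.
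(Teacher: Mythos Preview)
Your overall architecture---symmetrize, then invoke a Cape--Tang--Priebe $2\to\infty$ bound---matches the paper, but there is a genuine gap in how you treat the perturbation. You apply the Cape et al.\ theorem directly to $E := \widetilde A_{sym} - \widetilde{\A}_{sym}$ as a single noise matrix. However, because the algorithm centers with the \emph{empirical} means $\widehat\mu_r,\widehat\mu_c,\widehat\mu_.$, this perturbation decomposes (in the off-diagonal blocks) as
\[
\widetilde A - \widetilde{\A} \;=\; (A-\A) \;+\; (\mu_* - \widehat\mu_*),
\]
where $\mu_* = \mu_r\1_d^T + \1_n\mu_c - \mu\1_n\1_d^T$ and $\widehat\mu_*$ is its empirical analogue. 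The first piece has independent mean-zero entries and is exactly what Cape's fourth condition (the row-wise bound on $|\langle E^s u, e_i\rangle|$, verified via the path-counting/upper-bound-condition route) is designed for. The second piece has neither mean zero nor independent entries---every entry of $\widehat\mu_*-\mu_*$ depends on entire rows and columns of $A$---so Cape's fourth condition does not hold for the combined $E$, and you cannot feed it into Theorem~1 of \cite{cape2019signal} in one shot. Your anticipated obstacle (controlling $\|(\widetilde A_{sym}-\widetilde{\A}_{sym})U_{sym}\|_{2\to\infty}$) is exactly where this bites.

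The paper fixes this by inserting an intermediate matrix $A_2$ with population $\A$ but \emph{sample} centering $\widehat\mu_*$, and triangulating: $\|U_1 - U_2W\|_{2\to\infty}$ is bounded by Cape's Theorem~1 (since $A_1-A_2$ is the clean $A-\A$), while $\|U_2 - U_3W\|_{2\to\infty}$ is bounded by the deterministic $\ell_\infty$-perturbation result Theorem~4.2 of \cite{cape2019two}, after showing $\|\widehat\mu_*-\mu_*\|_\infty = O_p((n\rho_n\log^2 n)^{1/2})$ via entrywise Hoeffding. A second technicality you do not address is that the $2k$-dimensional eigenbasis of the dilation is only recovered up to a $2k\times 2k$ orthogonal matrix, and one must argue that this restricts to a $k\times k$ orthogonal $W$ on the $\widehat U$ block; the paper does this via a block-structure lemma for $U_3^T U_1$.
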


\vspace{.1in}

Lemma \ref{lemma:Cape} gives a row-wise bound for the eigenvectors' fluctuations.  This lemma follows from Theorem 1 in \cite{cape2019signal}, which requires several conditions. Lemma \ref{lemma:Chung} is used for one of the conditions. The other conditions are either already satisfied by the  assumptions of Theorem \ref{thm:main} or checked inside the proof of Lemma \ref{lemma:Cape}.

\begin{proof} 

 Notice the fact that $2\to \infty$ norm is invariant to rotations. 
 From Theorem \ref{thm:varimax1} there exist $P_n^{(1)} \in \mathcal{P}(k)$ s.t. $R_{UW}^{*} = W^{T}\widetilde{R}_{U}P_n^{(1)}$. Therefore
\begin{eqnarray*}
\|\widehat{U}R_{UW}^{*} - U\widetilde{R}_{U}P_{n}^{(1)}\|_{2\to\infty} &=& \|\widehat{U}W^{T}\widetilde{R}_{U}P_{n}^{(1)} - U\widetilde{R}_{U}P_{n}^{(1)}\|_{2\to\infty} \\
			&=& \|\widehat{U}W^{T} - U\|_{2\to\infty} \\
			&=& \|\widehat{U} - UW\|_{2\to\infty} \\
(\mbox{Lemma } \ref{lemma:Cape}) &=&  O_{p}\left(  (n\rho_{n})^{-1/2}n^{-1/2}\log^{\frac{5}{2}} n \right).
\end{eqnarray*}
\end{proof}

\subsubsection{Proof of Proposition \ref{prop:converge}}
The proof of Proposition \ref{prop:converge} uses the following lemma to bound the distance between sample and population Varimax solutions (modulo permutation and sign flip).

\begin{lemma}\label{lemma:SOC} Recall that $R_{\widetilde{Z}} \in \underset{R_0 \in \mathcal{O}(k)}{\arg \max}$ $v(R_0, \widetilde{Z})$. There exists $P_{n}^{(2)} \in \mathcal{P}(k)$ s.t. for $\forall \delta > 0$
$$\|R_{\widetilde{Z}}-P_{n}^{(2)}\|_{2\to\infty}  = O_{p}(n^{\delta/2-1/4}). $$ 
\end{lemma}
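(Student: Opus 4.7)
The plan is to treat this as an M-estimation problem on the compact manifold $\oo(k)$ and translate uniform concentration of the sample criterion into a rate on the maximizer via local strong concavity. Let $\mathcal{V}(R) = \E\, v(R, Z^o)$ denote the population objective studied in Theorem \ref{thm:varimax1}. By centering and Assumption \ref{assumption:Varimax}, the law of $Z_i - \bar Z$ is close to the law of $Z^o = Z_1 - \E Z_1$ up to an $O_p(n^{-1/2})$ deterministic shift, so it suffices to control
\[
\Psi_n(R) := \frac{1}{n}\sum_{i=1}^n \sum_{\ell=1}^k \bigl([\widetilde Z R]_{i\ell}^4 - \text{cross term}\bigr), \qquad \Psi_n(R) \to \mathcal{V}(R).
\]

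First I would establish a uniform deviation bound
\[
\sup_{R \in \oo(k)} |\Psi_n(R) - \mathcal{V}(R)| = O_p(n^{\delta - 1/2})
\]
for every $\delta > 0$. This is standard: using Assumption \ref{assumption:ztail}, each column of $Z$ is sub-exponential, so $\|\widetilde Z\|_{\max} = O_p(\log n)$ (Lemma \ref{lemma:maxZ}), and the fourth-order sample moments of $\widetilde Z R$ concentrate at the Gaussian rate $n^{-1/2}$ for each fixed $R$; a standard $\epsilon$-net argument over the $k^2$-dimensional compact manifold $\oo(k)$, combined with the Lipschitz continuity in $R$ of the (polynomial) integrand (Lipschitz constant controlled by $\|\widetilde Z\|_{\max}^4 = O_p(\log^4 n)$), upgrades this to uniform convergence at the claimed rate.

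Next I would harvest the second-order structure from the proof of Theorem \ref{thm:varimax1}. That proof shows $\mathcal{V}$ is maximized exactly on $\mathcal{P}(k)$ and, crucially, the inequality $F_\eta(Q) \le \sum_i \eta_i$ used there comes from $\sum_j Q_{ij}^2 \le \sum_j Q_{ij} = 1$, with equality iff $Q \in \mathcal{P}(k)$. Because each $\eta_i - 3 > 0$ (leptokurtic), one obtains a quantitative gap: writing $Q = O^{(2)}$ with $O \in \oo(k)$, the squared distance $\min_{P \in \mathcal{P}(k)}\|O - P\|_F^2$ is bounded above by a constant multiple of $\sum_i (\eta_i - 3)(1 - \sum_j O_{ij}^4) = \sum_i \eta_i - F_\eta(O^{(2)})$. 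Together with the identification that $R \mapsto \mathcal{V}(R)$ is a linear function of $O^{(2)}$ in this reparametrization, this yields a local quadratic bound
\[
\max_{R' \in \oo(k)} \mathcal{V}(R') - \mathcal{V}(R) \;\ge\; c \cdot \min_{P \in \mathcal{P}(k)}\|R - P\|_F^2
\]
for some constant $c > 0$ depending only on $\min_i(\eta_i - 3)$.

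Combining the two ingredients is now routine. By definition of $R_{\widetilde Z}$ and with $P_n^{(2)} \in \arg\min_{P \in \mathcal{P}(k)} \|R_{\widetilde Z} - P\|_F$,
\[
c\,\|R_{\widetilde Z} - P_n^{(2)}\|_F^2 \;\le\; \mathcal{V}(P_n^{(2)}) - \mathcal{V}(R_{\widetilde Z}) \;\le\; 2\sup_{R \in \oo(k)}|\Psi_n(R) - \mathcal{V}(R)| + o_p(1) \cdot 1 \;=\; O_p(n^{\delta - 1/2}),
\]
so $\|R_{\widetilde Z} - P_n^{(2)}\|_F = O_p(n^{\delta/2 - 1/4})$, and since both matrices have rows in $\R^k$ (fixed dimension), the Frobenius and $2 \to \infty$ norms are equivalent up to a constant, giving the stated bound. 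The main obstacle is the quantitative local curvature step: one must verify that the inequality chain in Theorem \ref{thm:varimax1} upgrades cleanly from $\mathcal{V}(R) \le \mathcal{V}(R^*)$ (with equality only on $\mathcal{P}(k)$) to a genuine quadratic lower bound; this uses that $O_{ij} \in [-1,1]$ and that $1 - \sum_j O_{ij}^4 \ge \tfrac{1}{k}\sum_j O_{ij}^2(1 - O_{ij}^2)$ controls $\min_{P \in \mathcal{P}(k)} \|O - P\|_F^2$ uniformly. The $\delta/2$ slack in the exponent simply absorbs logarithmic factors from the net argument and the $\log^4 n$ Lipschitz factor.
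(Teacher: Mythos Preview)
Your proposal is essentially correct and shares the paper's high-level M-estimation structure: a uniform deviation bound $\sup_{R}|\widehat V(R)-\mathcal V(R)|=O_p(n^{\delta-1/2})$ (the paper's Lemma \ref{lemma:uniform}, proved exactly via the $\epsilon$-net argument you sketch) followed by a curvature bound that converts the value gap into a parameter gap. Where you diverge is in how the curvature is obtained. The paper parametrizes $R_{\widetilde Z}=\exp(K)$ via the Lie algebra, Taylor-expands $\mathcal V$ along the geodesic $t\mapsto\exp(tK)$, and uses explicit first- and second-order conditions (Corollaries \ref{corollary:foc}, \ref{corollary:soc}, derived from \cite{chu1998orthomax}) to show $\mathcal V(R_{\widetilde Z})\le \mathcal V(I)-C_s\|K\|^2+o(\|K\|^2)$; this requires first establishing consistency $\|K\|\to 0$ so the expansion is valid. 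Your route extracts the quadratic gap directly from the algebra in Theorem \ref{thm:varimax1}, namely $\mathcal V(P)-\mathcal V(R)=\sum_i(\eta_i-3)\bigl(1-\sum_j O_{ij}^4\bigr)=\sum_i(\eta_i-3)\sum_j O_{ij}^2(1-O_{ij}^2)$, and then lower-bounds this by $c\min_{P}\|R-P\|_F^2$. This is more elementary---no exponential map, no Baker--Campbell--Hausdorff---and, if proved globally on $\oo(k)$ (which follows from your local computation plus compactness, since both sides vanish exactly on $\mathcal P(k)$ and are comparable near it), it even avoids the separate consistency step. The paper's machinery, on the other hand, is reusable: the same FOC/SOC framework is invoked again in the proof of Proposition \ref{prop:varifun} to compare two \emph{sample} Varimax optima.

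Two small corrections to your write-up: (i) $\mathcal V(R)$ is not linear in $O^{(2)}$; the simplified form $\sum_i(\eta_i-3)\sum_j O_{ij}^4+3k$ is quadratic in $O^{(2)}$. This does not affect your argument, since what you actually use is the explicit expression for the gap. (ii) You state the curvature bound as ``local'' but then apply it to $R_{\widetilde Z}$ without first localizing; either argue the bound globally via compactness, or insert a one-line consistency step (argmax continuity from uniform convergence, as the paper implicitly does).
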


The proof of Lemma \ref{lemma:SOC} is in Section \ref{sec:proof}. 

\begin{proof} 

With some previous lemmas,
\begin{eqnarray*}
&& \|\widehat{U}R_{UW} - \widehat{U}R^{*}_{UW}P_{n}^{(2)}\|_{2\to \infty} \\
&=& \|\widehat{U}(R_{UW}-R_{UW}^{*}P_n^{(2)})\|_{2\to\infty} \\
(\mbox{Lemma } \ref{lemma:twonorm})&\leq&\|\widehat{U}\|_{2\to\infty}\|R_{UW}-R_{UW}^{*}P_n^{(2)}\| \\
(\mbox{Lemma }\ref{lemma:SOC})&=&O_{p}(n^{\delta/2-1/4}\|\widehat{U}\|_{2\to\infty}) \\
	&\leq& O_{p}(n^{\delta/2-1/4}\|\widehat{U} - UW\|_{2\to\infty} + n^{\delta/2-1/4}\|UW\|_{2\to\infty}) \\
(\mbox{Lemma } \ref{lemma:Cape})&=&  O_{p}\left(  (n\rho_{n})^{-1/2}n^{-3/4}\log^{\frac{5}{2}} n \right) + O_{p}(n^{\delta/2-1/4}\|UW\|_{2\to\infty}) \\
  	&\leq&  O_{p}\left(  (n\rho_{n})^{-1/2}n^{-3/4}\log^{\frac{5}{2}} n \right) + O_{p}(n^{\delta/2-1/4}\|U\|_{2\to\infty}) \\
(\mbox{Equation } (\ref{eq:maxU}))&\leq&  O_{p}\left(  (n\rho_{n})^{-1/2}n^{-3/4}\log^{\frac{5}{2}} n \right)+O_{p}(n^{\delta/2-3/4}\log n) \\
(n\rho_n \succeq \log^{2\xi}n)&=& O_{p}(n^{\delta/2-3/4}\log n). 
\end{eqnarray*}

\end{proof}

\subsubsection{Proof of Proposition \ref{prop:varifun}}

This proposition shows that $R_{\widehat{U}}$ converges to $R_{UW}$.  The proof of Proposition \ref{prop:varifun} is contained in Section \ref{sec:proof}.  
This proof uses the fact that the Varimax objective function is smooth and each row of $\widehat U$ converges to the corresponding row of $UW$ (i.e. $ \|\widehat{U} - UW\|_{2\to\infty} \rightarrow 0$). This implies that the Varimax solution computed with   $\widehat U$ (i.e. $R_{\widehat{U}}$) converges to the Varimax solution computed with $UW$ (i.e. $R_{UW}$).


%


\subsubsection{Proof of Proposition \ref{prop:uncenter1}}

\begin{proof}


%
%
\begin{eqnarray} \label{eq:4ineqs}
 && \|J_{n}(A\widehat{V}\widehat{D}^{-1}R_{\widehat{U}} - \A VD^{-1}\widetilde{R}_{U}P_n)\|_{2\to\infty}  \nonumber \\
(\mbox{Lemma } \ref{lemma:twonorm})&\leq& \sqrt{n}\|A\widehat{V}\widehat{D}^{-1}R_{\widehat{U}} - \A VD^{-1}\widetilde{R}_{U}P_n\| \nonumber \\
(WR^{*}_{UW} = \widetilde{R}_{U}P_n^{(1)})&=& \sqrt{n}\|A\widehat{V}\widehat{D}^{-1}R_{\widehat{U}} - AVD^{-1}W R_{\widehat{U}} + AVD^{-1}W R_{\widehat{U}} - AVD^{-1}WR_{UW}P_n^{(3)} \nonumber \\
&& \quad + AVD^{-1}WR_{UW}P_n^{(3)} - AVD^{-1}WR_{UW}^{*}P_n^{(2)}P_n^{(3)}  \nonumber \\
&& \quad + AVD^{-1}WR_{UW}^{*}P_n^{(2)}P_n^{(3)} - \A VD^{-1}WR_{UW}^{*}P_n^{(2)}P_n^{(3)}\| \nonumber \\
&\leq& \sqrt{n} ( \|A\widehat{V}\widehat{D}^{-1}R_{\widehat{U}} - AVD^{-1}W R_{\widehat{U}}\|  + \|AVD^{-1}W R_{\widehat{U}} - AVD^{-1}WR_{UW}P_n^{(3)}\| \nonumber  \\
&& \quad + \|AVD^{-1}WR_{UW} - AVD^{-1}WR_{UW}^{*}P_n^{(2)}\|  \nonumber  \\
&& \quad + \|AVD^{-1}WR_{UW}^{*} - \A VD^{-1}WR_{UW}^{*}\| ).
\end{eqnarray}

\vspace{.1in}
The fact that $WR^{*}_{UW} = \widetilde{R}_{U}^{T}P_n^{(1)}$ is a direct result of Theorem \ref{thm:varimax1}. The remaining part of the proof wants to show the bounds for each term of RHS of Equation \eqref{eq:4ineqs}.

\vspace{.2in}

First term of Equation \eqref{eq:4ineqs} is $\|A\widehat{V}\widehat{D}^{-1}R_{\widehat{U}} - AVD^{-1}R_{\widehat{U}}\|$. By Lemma \ref{lemma:Cape}  
$$ \|\widehat{U} - UW\|_{2\to\infty} = O_{p}\left(  (n\rho_{n})^{-1/2}n^{-1/2}\log^{\frac{5}{2}} n \right),  $$
and by the same virtue (notice $Y$ also satisfies Assumption 2, it could be shown by transposing the adjacency matrix) there exists $W_2 \in \oo(k)$ s.t.
$$ \|\widehat{V} - VW_2\|_{2\to\infty} = O_{p}\left(  (n\rho_{n})^{-1/2}n^{-1/2}\log^{\frac{5}{2}} n \right). $$

By assumptions and Lemma \ref{lemma:Chung},
$$\|D^{-1}\| = O_p((n\rho_n)^{-1}), \|A-\A\| = O_p((n\rho_n \log^3 n)^{\frac{1}{2}}). $$ 

Notice that $\|X\| \leq \sqrt{m_1}\|X\|_{2\to\infty}$ for $\forall X \in \R^{m_1 \times m_2}$ and $\|V\| = 1$. Therefore

\begin{eqnarray} 
&& \|A\widehat{V}\widehat{D}^{-1}R_{\widehat{U}} - AVD^{-1}W R_{\widehat{U}}\|  \nonumber \\
			&=& \|A\widehat{V}\widehat{D}^{-1} - AVD^{-1}W\|  \nonumber \\
			&\leq& \|A\| \|\widehat{V}\widehat{D}^{-1} - VD^{-1}W\|  \nonumber \\
			&=& \|A-\A+\A\| \|\widehat{V}\widehat{D}^{-1} - VW_2\widehat{D}^{-1} + VW_2\widehat{D}^{-1} - VD^{-1}W\| \nonumber \\
			&\leq& (\|A-\A\|+\|\A\|) (\|\widehat{V}\widehat{D}^{-1} - VW_2\widehat{D}^{-1}\| + \|VW_2\widehat{D}^{-1} - VD^{-1}W\|) \nonumber \\
			&\leq& (\|A-\A\|+\|\A\|) (\|\widehat{V} - VW_2\| \|\widehat{D}^{-1}\| + \|V\| \|W_2\widehat{D}^{-1} - D^{-1}W\|) \nonumber \\
			&=& O_p(\|\A\| \times \|D^{-1}\|) \times  [O_{p}\left(  (n\rho_{n})^{-1/2}n^{-1/2}\log^{\frac{5}{2}} n \right) + O_{p}\left(  (n\rho_{n})^{-1/2}\log^{\frac{5}{2}} n \right) ] \nonumber \\
			&=& O_{p}\left( (n\rho_{n})^{-1/2}\log^{\frac{5}{2}} n \right).  \label{eq:aa}
\end{eqnarray}

The third equation employs the bound of $\|W_2\widehat{D}^{-1} - D^{-1}W\|$ from the following deduction:
\begin{eqnarray*}
&& \|W_2\widehat{D}^{-1} - D^{-1}W\| \\
&=& \|\widehat{D}^{-1} - W_{2}^{T}D^{-1}W\| \\
&=& \|\widehat{V}\widehat{D}^{-1}\widehat{U}^{T} - \widehat{V}W_{2}^{T}D^{-1}W\widehat{U}^{T}\| \\
&=& \|\widehat{V}\widehat{D}^{-1}\widehat{U}^{T} - VD^{-1}U^{T} + (V-\widehat{V}W_{2}^{T})D^{-1}WU + \widehat{V}W_{2}^{T}D^{-1}(U^{T} - W\widehat{U}^{T})\| \\
&\leq& \|\widehat{V}\widehat{D}^{-1}\widehat{U}^{T} - VD^{-1}U^{T}\| + \|(V-\widehat{V}W_{2}^{T})D^{-1}WU\| + \|\widehat{V}W_{2}^{T}D^{-1}(U^{T} - W\widehat{U}^{T})\| \\
&\leq& \|A^{\dagger} - \A^{\dagger}\| + \sqrt{d}\|V-\widehat{V}W_{2}^{T}\|_{2\to\infty}\|D^{-1}\| + \sqrt{n}\|D^{-1}\| \|U^{T} - W\widehat{U}^{T}\|_{2\to\infty} \\
&\leq& \|A^{\dagger}\|\|A-\A\|\|\A^{\dagger}\| + \sqrt{d}\|V-\widehat{V}W_{2}^{T}\|_{2\to\infty}\|D^{-1}\| + \sqrt{n}\|D^{-1}\| \|U^{T} - W\widehat{U}^{T}\|_{2\to\infty} \\
&=& O_p(\|D^{-1}\|) \times \left( O_p\left((n\rho_n)^{-1/2}\log^{\frac{3}{2}}n\right) +  O_{p}\left( (n\rho_{n})^{-1/2}\log^{\frac{5}{2}} n \right) \right) \nonumber \\
&=& O_p(\|D^{-1}\|) \times O_{p}\left(  (n\rho_{n})^{-1/2}\log^{\frac{5}{2}} n \right).
\end{eqnarray*}

\vspace{.1in}

Second term of Equation \eqref{eq:4ineqs} is $\|AVD^{-1}R_{\widehat{U}} - AVD^{-1}WR_{UW}P_n^{(3)}\|$. According to Equation \eqref{eq:Pn3} (in the proof of Proposition \ref{prop:varifun}) there exists a $P_n^{(3)} \in \mathcal{P}(k)$ s.t.
$$ \|R_{\widehat{U}} - R_{UW}P_{n}^{(3)}\|_{2\to\infty} = O_{p}\left( (n\rho_{n})^{-1/4}\log^{\frac{7}{4}} n \right). $$

Therefore,
\begin{eqnarray} \label{eq:bb}
&& \|AVD^{-1}WR_{\widehat{U}} - AVD^{-1}WR_{UW}P_n^{(3)}\|  \nonumber \\
&\leq& \|A\|\|V\|\|D^{-1}W\|\sqrt{k}\|R_{\widehat{U}} - R_{UW}P_n^{(3)}\|_{2\to\infty} \nonumber \\
			&=& O_p(1) \times \|R_{\widehat{U}} - R_{UW}P_n^{(3)}\|_{2\to\infty} \nonumber \\
			&=& O_{p}\left(  (n\rho_{n})^{-1/4}\log^{\frac{7}{4}} n \right).
\end{eqnarray}

\vspace{.1in}

Third term of Equation \eqref{eq:4ineqs} is $\|AVD^{-1}WR_{UW} - AVD^{-1}WR_{UW}^{*}P_n^{(2)}\|$. Recall Proposition \ref{prop:converge}, Theorem \ref{thm:varimax1}, there is $P_n^{(2)} \in \mathcal{P}(k)$, s.t. for any $\delta>0$,
$$ \|R_{UW} - R_{UW}^{*}P_n^{(2)}\|_{2\to\infty} = O_{p}(n^{\delta / 2 - 1/4}). $$

Therefore,
\begin{eqnarray} \label{eq:cc}
\|AVD^{-1}WR_{UW} - AVD^{-1}WR^{*}_{UW}P_n^{(2)}\| &\leq& \|A\|\|V\|\|D^{-1}W\|\sqrt{k}\|R_{UW} - R_{UW}^{*}P_n^{(2)}\|_{2\to\infty} \nonumber \\
			&=& O_p(1) \times \|R_{UW} - R_{UW}^{*}P_n^{(2)}\|_{2\to\infty} \nonumber \\
			&=& O_{p}(n^{\delta / 2 - 1/4}).
\end{eqnarray}

\vspace{.1in}

Fourth term of Equation \eqref{eq:4ineqs} is $\|AVD^{-1}WR_{UW}^{*} - \A VD^{-1}WR_{UW}^{*}P_n^{(1)}\|$. Reusing Lemma \ref{lemma:Chung}, we have
\begin{eqnarray} \label{eq:dd}
\|AVD^{-1}WR_{UW}^{*} - \A VD^{-1}WR_{UW}^{*}\| &=& \|AVD^{-1} - \A VD^{-1}\| \nonumber \\
									&\leq& \|A - \A\| \|V\| \| D^{-1}\| \nonumber  \\
									&=& O_p((n\rho_n)^{-\frac{1}{2}} \log^{\frac{3}{2}} n).
\end{eqnarray}

\vspace{.2in}

Plugging \eqref{eq:aa}, \eqref{eq:bb}, \eqref{eq:cc}, \eqref{eq:dd} into \eqref{eq:4ineqs} arrives our combined bound: 

\begin{eqnarray}
 && \|J_{n}(A\widehat{V}\widehat{D}^{-1}R_{\widehat{U}} - \A VD^{-1}\widetilde{R}_{U}P_n)\|_{2\to\infty}  \nonumber \\
&=& O_p( \sqrt{n} \times ( (n\rho_{n})^{-1/2}\log^{\frac{5}{2}} n +  (n\rho_{n})^{-1/4}\log^{\frac{7}{4}} n  +  n^{\delta / 2 - 1/4} +  (n\rho_n)^{-1/2} \log^{\frac{3}{2}} n )) \nonumber  \\
&=& O_p\left(n^{\delta / 2 + 1/4} +  (n\rho_{n})^{-1/4}n^{1/2}\log^{\frac{7}{4}} n \right).
\end{eqnarray}


\vspace{.1in}

\end{proof}

\subsubsection{Proof of Proposition \ref{prop:uncenter2}} \label{proof:uncenter2}

\begin{proof}
With Lemma \ref{lemma:twonorm} and Proposition \ref{prop:svd}, 
\begin{eqnarray*}
&& \|J_{n}(\sqrt{n}\A VD^{-1}\widetilde{R}_{U} - Z)\|_{2\to\infty} \\
&\leq& \sqrt{n}\|\sqrt{n}\A VD^{-1}\widetilde{R}_{U} - Z\| \\
&=& \sqrt{n}\|\sqrt{n}ZBY^{T} VD^{-1}\widetilde{R}_{U} - Z\| \\
&=& \sqrt{n}\|ZB(Y^{T}\widetilde{Y}/d)\widetilde{\Sigma}_{Y}^{-1/2} \widetilde{R}_{V}^T\widetilde{D}^{-1}\widetilde{R}_{U} - Z\| \\
&=& \sqrt{n}\|ZB(Y^{T}\widetilde{Y}/d)\widetilde{\Sigma}_{Y}^{-1} B^{-1} \widetilde{\Sigma}_{Z}^{-1/2}  - Z\| \\
&\leq& \sqrt{n}\|Z\|(\|B(Y^{T}\widetilde{Y}/d)\widetilde{\Sigma}_{Y}^{-1} B^{-1} \widetilde{\Sigma}_{Z}^{-1/2} - I \|) \\
&\leq& \sqrt{n}\|Z\| ( \|B(Y^{T}\widetilde{Y}/d)\widetilde{\Sigma}_{Y}^{-1} B^{-1} \widetilde{\Sigma}_{Z}^{-1/2} - B(Y^{T}\widetilde{Y}/d)\widetilde{\Sigma}_{Y}^{-1} B^{-1}\| + \| B(Y^{T}\widetilde{Y}/d)\widetilde{\Sigma}_{Y}^{-1} B^{-1} - I \| ) \\
&\leq& \sqrt{n}\|Z\| ( \|B\|\|B^{-1}\|\|Y^{T}\widetilde{Y}/d\| \|\widetilde{\Sigma}_{Y}^{-1}\| \|\widetilde{\Sigma}_{Z}^{-1/2}-I \| + \|B\| \|B^{-1}\| \|(Y^{T}\widetilde{Y}/d)\widetilde{\Sigma}_{Y}^{-1} - I \| ). \\
\end{eqnarray*}

By Lemma \ref{lemma:maxZ}, $\|Z\| \leq \sqrt{nk}\max |Z_{ij}| = O_p(\sqrt{n}\log n)$. Conditions in main theorem statement imply $\|B\| \|B^{-1}\| = O_p(1)$. Using LLN results (similar to proofs in Proposition \ref{prop:clt}) there are $\|\widetilde{\Sigma}_{Y}^{-1}\| = O_p(1), \|\widetilde{\Sigma}_{Z}^{-1/2}-I \| = O_p(1/\sqrt{n})$. \\

Notice that the $(i, j)$ entry of $\bar{Y}^{T}\widetilde{Y}/d$ is $\frac{1}{d}\widehat{\mu}_{Y}[i]\sum_{q=1}^{n}(Y_{qj}-\widehat{\mu}_{Y}[j]) = 0$. By LLN
$$ \|Y^{T}\widetilde{Y}/d \| \leq  \|\widetilde{Y}^{T}\widetilde{Y}/d \| +  \|\bar{Y}^{T}\widetilde{Y}/d \| = O_p(1) $$
and 
$$ (Y^{T}\widetilde{Y}/d)\widetilde{\Sigma}_{Y}^{-1} - I = (\bar{Y}^{T}\widetilde{Y}/d)\widetilde{\Sigma}_{Y}^{-1},$$
is the zero matrix. Summarize these results and simplify the bounds give the desired conclusion,
$$ \|J_{n}(\sqrt{n}\A VD^{-1}\widetilde{R}_{U} - Z)\|_{2\to\infty} = O_p(\sqrt{n}\log n). $$

\end{proof}

\section{Technical Proofs} \label{sec:proof}


\subsubsection*{Proof of Lemma \ref{lemma:Chung}}
This part of proof needs a matrix concentration bound for sub-exponential random variables. Here we cite an existing result shown below.

\begin{lemma}[\cite{tropp2012user}] \label{lemma:Chung2}
Let $X_{1}, X_{2}, ..., X_{n}$ be independent random $N \times N$ self-adjoint matrices. Assume that $\mathbb{E}(X_{i}) = 0$ for all $i$, and $\mathbb{E}(X_{i}^{p})\preceq \frac{p!}{2}R^{p-2}A_{i}^{2}$ for $p\geq 2$. Compute the variance parameter 
$$ \sigma^{2} := \|\sum_{k}A_{k}^{2}\|.$$
Then for any $t > 0$, 
\begin{equation}
\pr(\|\sum_{i=1}^n X_{i}\| \geq t) \leq n \times \exp(-\frac{t^{2}}{2\sigma^{2}+2Rt}).
\end{equation}
\end{lemma}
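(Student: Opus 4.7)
The plan is to follow Tropp's matrix Laplace transform method. Set $Y = \sum_{i=1}^n X_i$ and observe that $\|Y\| = \max\{\lambda_{\max}(Y), \lambda_{\max}(-Y)\}$, so it suffices to bound $\lambda_{\max}(Y)$; the bound for $-Y$ is obtained by the same argument since $-X_i$ satisfies the same moment hypothesis. For any $\theta > 0$, Markov's inequality together with $e^{\theta \lambda_{\max}(Y)} \leq \mathrm{tr}\, e^{\theta Y}$ yields
$$\mathbb{P}(\lambda_{\max}(Y) \geq t) \leq e^{-\theta t}\, \mathbb{E}[\mathrm{tr}\, e^{\theta Y}].$$
Because the $X_i$ are independent and self-adjoint, Lieb's concavity theorem (the cornerstone of Tropp's master bounds) decouples the summands inside the trace-exponential:
$$\mathbb{E}[\mathrm{tr}\, e^{\theta \sum_i X_i}] \leq \mathrm{tr}\, \exp\left(\sum_{i=1}^n \log \mathbb{E}[e^{\theta X_i}]\right).$$

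Next I would bound each matrix cumulant generating function using the sub-exponential moment hypothesis $\mathbb{E}[X_i^p] \preceq \tfrac{p!}{2} R^{p-2} A_i^2$. Expanding the matrix exponential and using $\mathbb{E}[X_i] = 0$, for any $\theta \in (0, 1/R)$,
$$\mathbb{E}[e^{\theta X_i}] = I + \sum_{p \geq 2} \frac{\theta^p}{p!}\mathbb{E}[X_i^p] \preceq I + \frac{\theta^2 A_i^2}{2}\sum_{p\geq 2}(\theta R)^{p-2} = I + \frac{\theta^2 A_i^2}{2(1-\theta R)}.$$
The operator inequality $\log(I + H) \preceq H$ (valid for $H \succeq 0$) then gives $\log \mathbb{E}[e^{\theta X_i}] \preceq \frac{\theta^2 A_i^2}{2(1-\theta R)}$. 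Summing and using $\mathrm{tr}\, \exp(M) \leq N \cdot \exp(\lambda_{\max}(M))$ for $N \times N$ matrices, together with the definition $\sigma^2 = \|\sum_i A_i^2\|$, produces
$$\mathbb{E}[\mathrm{tr}\, e^{\theta Y}] \leq N \exp\!\left(\frac{\theta^2 \sigma^2}{2(1-\theta R)}\right).$$

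Plugging this into the Markov step gives $\mathbb{P}(\lambda_{\max}(Y) \geq t) \leq N \exp(-\theta t + \theta^2 \sigma^2 / [2(1-\theta R)])$, and the choice $\theta = t/(\sigma^2 + Rt) \in (0, 1/R)$ minimizes the exponent and yields exactly $\exp(-t^2/(2\sigma^2 + 2Rt))$, matching the stated bound. The main technical obstacle is the step from the scalar-style Bernstein moment condition to the semidefinite-ordering bound on $\mathbb{E}[e^{\theta X_i}]$: this relies on Lieb's theorem (not elementary) to allow the $\log$-of-MGF to be summed inside a trace exponential, and on the operator monotonicity of $\log$ to pass from the semidefinite bound on the MGF to one on its logarithm. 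A small bookkeeping remark: the proof naturally produces the matrix dimension $N$ (or $2N$ after a union bound for the two-sided spectral norm) as the prefactor; the statement as quoted writes $n$, which appears to be a typographical conflation of the number of summands with the matrix dimension.
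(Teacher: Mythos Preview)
The paper does not prove this lemma at all: it is quoted verbatim as a cited result from \cite{tropp2012user} and used as a black box in the proof of Lemma~\ref{lemma:Chung}. Your proposal is a faithful and correct reconstruction of Tropp's matrix Bernstein argument (Laplace transform, Lieb's theorem to decouple, sub-exponential series bound on the matrix MGF, then optimize in $\theta$), so there is nothing to compare against in the paper itself. Your observation that the prefactor should be the matrix dimension $N$ rather than the number of summands $n$ is also correct; the paper's statement appears to have a typo there.
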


Now we make use of Lemma \ref{lemma:Chung2} to prove Lemma \ref{lemma:Chung}.

\begin{proof} Let $E^{i,j}$ be the $(n+d)\times(n+d)$ matrix with 1 in the $(i,j)$ and $(j,i)$ entries and 0 elsewhere. $\gamma_{ij}, \gamma$ are defined in Equation \eqref{eq:gammadef} (for simplicity we ignore the $(n)$-superscripts). To utilize Lemma \ref{lemma:Chung2}, we express $\widetilde{A}_{sym} - \widetilde{\A}_{sym}$ as the sum of matrices,

$$ Y_{i,n+j} = (A_{ij}-\A_{ij})E^{i,n+j}, i = 1,...,n, j = 1,...,d.$$

\noindent Noice that 
$$ \|\widetilde{A}_{sym} - \widetilde{\A}_{sym}\| = \|\sum_{i = 1}^{n}\sum_{j=1}^{d}Y_{i,n+j}\|, $$

\noindent and $\mathbb{E}(Y_{i,n+j}) = 0$. Moreover,
$$ (E^{i,n+j})^{p} = E^{i,i} + E^{n+j,n+j}, p = 2,4,..., $$
$$ (E^{i,n+j})^{p} = E^{i,n+j}, p = 3,5,7,..., $$

\noindent and $\mathbb{E}[(A_{ij}-\A_{ij})^{p}] \leq \gamma_{ij}^{p}p! \leq \gamma^{p}p!$, for $\forall i,j, p \geq 2$. These relations indicate
\begin{equation}
\mathbb{E}(Y_{i,n+j}^{p}) \preceq \frac{p!}{2} \cdot \gamma_{ij}^{p-2} \cdot (\frac{\gamma_{ij}^{2}}{2}(E^{i,i} + E^{n+j,n+j})) \preceq \frac{p!}{2} \cdot \gamma^{p-2} \cdot (\frac{\gamma^{2}}{2}(E^{i,i} + E^{n+j,n+j})), \forall p\geq 2.
\end{equation}

\noindent We can treat $A_i$'s in Lemma \ref{lemma:Chung2} as $\frac{\gamma^{2}}{2}(E^{i,i} + E^{n+j,n+j})$ in our scenario. By Assumption \ref{assumption:atail},
\begin{eqnarray*}
\sigma^{2} &=& \frac{\gamma^2}{2}\|\sum_{i = 1}^{n}\sum_{j=1}^{d}(E^{i,i} + E^{n+j,n+j})\| \\
	         &\leq& \frac{\bar\rho_n}{4}\|\sum_{i = 1}^{n}\sum_{j=1}^{d}(E^{i,i} + E^{n+j,n+j})\| \\
	         &=& \frac{\bar\rho_n}{4}\| \sum_{i=1}^{n}[\sum_{j=1}^{d}E^{i,i}] + \sum_{j=1}^{d}[\sum_{i=1}^{n}E^{n+j,n+j}]\| \\
	         &\leq& \frac{\bar\rho_n}{4}(\| \sum_{i=1}^{n}[\sum_{j=1}^{d}E^{i,i}]\| + \|\sum_{j=1}^{d}[\sum_{i=1}^{n}E^{n+j,n+j}]\|) \\
	         &=& \frac{(n+d)\bar\rho_n}{4}. \\
\end{eqnarray*}


\noindent Therefore, by Lemma \ref{lemma:Chung2}, the bound for $\|\widetilde{A}_{sym} - \widetilde{\A}_{sym}\|$ is obtained.
$$ \pr(\|\widetilde{A}_{sym} - \widetilde{\A}_{sym}\| \geq t) \leq (n+d)\exp(-\frac{t^{2}}{\frac{(n+d)\bar\rho_n}{4} +2\gamma t}). $$

\noindent With Assumption 3 and Equation \eqref{eq:rhos} this also implies,
$$ \|\widetilde{A}_{sym} - \widetilde{\A}_{sym}\| = O((n\rho_n\log^3 n)^{\frac{1}{2}}). $$ 

\end{proof}

\noindent Before we show the proof of Lemma \ref{lemma:Cape}, we illustrate the following lemma that shows important property of matrix with special structure and could be utilized to convert bounds of eigenvectors' perturbation of symmeticed matrices to original adjacency matrices'.
\begin{lemma} \label{lemma:block}
Suppose $M = \begin{pmatrix} M_1 & M_2 \\ M_2 & M_1 \end{pmatrix}$ is a blockwise symmetric matrix ($M_1, M_2 \in \R^{k \times k}$). Let $M = U_M D_M V_{M}^{T}$ be $M$'s singular vector decomposition. Then $N = U_M V_{M}^{T}$ has the same blockwise symmetric structure: $N = \begin{pmatrix} N_1 & N_2 \\ N_2 & N_1 \end{pmatrix}$.
\end{lemma}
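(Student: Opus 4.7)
The plan is to exploit the fact that $M$ is invariant under conjugation by the block swap matrix $J = \begin{pmatrix} 0 & I_k \\ I_k & 0 \end{pmatrix}$, which is symmetric and orthogonal ($J^T = J$, $J^2 = I_{2k}$). A direct block computation gives $JMJ = M$, which will be the only structural fact about $M$ we use.

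The first step is to observe that $N = U_M V_M^T$ is uniquely determined by $M$, independently of which SVD is chosen. The ambiguity in the SVD of $M$ stems from simultaneously replacing $U_M$ by $U_M Q$ and $V_M$ by $V_M Q$ whenever $Q$ is an orthogonal matrix that preserves $D_M$ (i.e. acts on blocks of repeated singular values); under such replacements $U_M V_M^T$ is unchanged because $(U_M Q)(V_M Q)^T = U_M Q Q^T V_M^T = U_M V_M^T$. Equivalently, in the full-rank case one has the clean representation $N = M(M^T M)^{-1/2}$, which makes the uniqueness transparent.

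The second step is to produce a second valid SVD of $M$ by rewriting $M = JMJ = (JU_M) D_M (JV_M)^T$. Since $J$ is orthogonal, $JU_M$ and $JV_M$ still have orthonormal columns, so this is an SVD with the same singular value matrix $D_M$. By the uniqueness observation, $(JU_M)(JV_M)^T = U_M V_M^T$, which is equivalent to $JNJ = N$.

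The final step is to partition $N$ into $k \times k$ blocks $N = \begin{pmatrix} N_{11} & N_{12} \\ N_{21} & N_{22} \end{pmatrix}$ and compute $JNJ = \begin{pmatrix} N_{22} & N_{21} \\ N_{12} & N_{11} \end{pmatrix}$ directly; equating entries yields $N_{11} = N_{22}$ and $N_{12} = N_{21}$, giving the claimed block structure. The only subtle step is the uniqueness of $U_M V_M^T$ in the presence of repeated singular values; if one prefers to avoid this subtlety, one can instead argue via the polar decomposition (or pseudoinverse version thereof) and transport the identity $JMJ = M$ directly through the formula $N = M(M^T M)^{-1/2}$, using that $M^T M = J(M^T M)J$ commutes with conjugation by $J$.
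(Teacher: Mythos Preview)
Your proof is correct and takes a genuinely different route from the paper. The paper proceeds constructively: it computes SVDs $M_1+M_2=S_1\Sigma_1T_1^T$ and $M_1-M_2=S_2\Sigma_2T_2^T$, assembles from these an explicit SVD of $M$ with $U_M=\tfrac{1}{\sqrt2}\begin{pmatrix}S_1&S_2\\S_1&-S_2\end{pmatrix}$ and $V_M=\tfrac{1}{\sqrt2}\begin{pmatrix}T_1&T_2\\T_1&-T_2\end{pmatrix}$, and then multiplies out $U_MV_M^T$ to read off the block structure. Your argument is symmetry-based: you observe $JMJ=M$ for the block swap $J$, invoke uniqueness of the orthogonal polar factor $N=M(M^TM)^{-1/2}$, and conclude $JNJ=N$. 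Your route is shorter, coordinate-free, and transports immediately to any other symmetry of the form $QMQ^{-1}=M$ with $Q$ orthogonal; the paper's route has the advantage of exhibiting the singular vectors and values of $M$ explicitly in terms of those of $M_1\pm M_2$, and of producing \emph{one} SVD with the desired property even when $M$ is rank-deficient. Both arguments share the same implicit caveat that the statement ``$U_MV_M^T$ has the block form for \emph{any} SVD'' requires $M$ to be invertible (otherwise the polar factor is not unique); in the paper's application $M=U_3^TU_1$ is a small perturbation of the identity, so this is harmless.
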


\begin{proof}
Let $M_1 + M_2 = S_1\Sigma_1 T_{1}^{T}, M_1 - M_2 = S_2\Sigma_2 T_{2}^{T}$ be the singular decompositions of them. Then
$$ M_1 = \frac{1}{2}(S_1\Sigma_1 T_{1}^{T} + S_2\Sigma_2 T_{2}^{T}), M_2 = \frac{1}{2}(S_1\Sigma_1 T_{1}^{T} - S_2\Sigma_2 T_{2}^{T}). $$

\noindent Plug in the equations, 
\begin{eqnarray*}
M &=& \begin{pmatrix} M_1 & M_2 \\ M_2 & M_1 \end{pmatrix} \\
    &=& \frac{1}{2}\begin{pmatrix} S_1\Sigma_1 T_{1}^{T} + S_2\Sigma_2 T_{2}^{T} & S_1\Sigma_1 T_{1}^{T} - S_2\Sigma_2 T_{2}^{T} \\ S_1\Sigma_1 T_{1}^{T} - S_2\Sigma_2 T_{2}^{T} & S_1\Sigma_1 T_{1}^{T} + S_2\Sigma_2 T_{2}^{T} \end{pmatrix} \\
    &=& \begin{pmatrix} \frac{\sqrt{2}}{2}S_1 & \frac{\sqrt{2}}{2}S_2 \\ \frac{\sqrt{2}}{2}S_1 & -\frac{\sqrt{2}}{2}S_2 \end{pmatrix} \begin{pmatrix} \Sigma_1 &  \\  & \Sigma_2 \end{pmatrix} \begin{pmatrix} \frac{\sqrt{2}}{2}T_1 & \frac{\sqrt{2}}{2}T_2 \\ \frac{\sqrt{2}}{2}T_1 & -\frac{\sqrt{2}}{2}T_2 \end{pmatrix}^{T}. 
\end{eqnarray*}

\noindent Then $U_M = \frac{\sqrt{2}}{2}\begin{pmatrix} S_1 & S_2 \\ S_1 & -S_2 \end{pmatrix}, V_M = \frac{\sqrt{2}}{2}\begin{pmatrix} T_1 & T_2 \\ T_1 & -T_2 \end{pmatrix}, D_M = \begin{pmatrix} \Sigma_1 &  \\  & \Sigma_2 \end{pmatrix}$. We prove the result by the following observation.
$$ U_MV_{M}^{T} = \frac{1}{2} \begin{pmatrix}  S_1T_{1}^{T} + S_2T_{2}^{T} & S_1T_{1}^{T} - S_2T_{2}^{T} \\ S_1T_{1}^{T} - S_2T_{2}^{T} & S_1T_{1}^{T} + S_2T_{2}^{T} \end{pmatrix}. $$
\end{proof}

\noindent Lemma \ref{lemma:Chung} obtains the bound of spectral norm of $\widetilde{A}_{sym} - \widetilde{\A}_{sym}$. Next lemma makes use of this result to show the bound for distance between eigen-spaces of $\widetilde{A}_{sym}$ and $\widetilde{\A}_{sym}$. Some theoretical results from \cite{cape2019signal} is borrowed to show row-wise bounds.

\subsubsection*{Proof of Lemma \ref{lemma:Cape}}
\begin{proof}
Let $\mu_{*}=\mu_{r}\1^{T}_{d} + \1_{n}\mu_{c}^{T} - \mu\1_{n}\1_{d}^{T}, \widehat{\mu}_{*}=\widehat{\mu}_{r}\1^{T}_{d} + \1_{n}\widehat{\mu}_{c}^{T} - \widehat{\mu}\1_{n}\1_{d}^{T}, $
we symmetrize centered adjacent matrix as before:
$\widetilde{A}_{sym} = \begin{pmatrix}
0 & \widetilde{A} \\
\widetilde{A}^{T} & 0
\end{pmatrix}$,
$\widetilde{\A}_{sym} = \begin{pmatrix}
0 & \widetilde{\A} \mbox{ } \\
\widetilde{\A}^{T} & 0
\end{pmatrix}$.
\begin{eqnarray*}
\widetilde{A}_{sym} - \widetilde{\A}_{sym} &=& \begin{pmatrix} 0 & \widetilde{A} \\ \widetilde{A}^{T} & 0 \end{pmatrix} - \begin{pmatrix} 0 & \widetilde{\A}\mbox{ } \\ \widetilde{\A}^{T} & 0 \end{pmatrix} \\
			  &=& \begin{pmatrix} 0 & A-\widehat{\mu}_{*} \\ (A-\widehat{\mu}_{*})^{T} & 0 \end{pmatrix} - \begin{pmatrix} 0 & \A-\mu_{*} \\ (\A-\mu_{*})^{T} & 0 \end{pmatrix} \\
			  &=& \begin{pmatrix} 0 & A-\widehat{\mu}_{*} \\ (A-\widehat{\mu}_{*})^{T} & 0 \end{pmatrix} - \begin{pmatrix} 0 & \A-\widehat{\mu}_{*} \\ (\A-\widehat{\mu}_{*})^{T} & 0 \end{pmatrix} + \\
&& \quad \quad \quad \quad\quad \quad \quad \quad\quad \quad \begin{pmatrix} 0 & \A-\widehat{\mu}_{*} \\ (\A-\widehat{\mu}_{*})^{T} & 0 \end{pmatrix} -  \begin{pmatrix} 0 & \A-\mu_{*} \\ (\A-\mu_{*})^{T} & 0 \end{pmatrix} \\
		          &:=& A_{1} - A_{2} + A_{2} - A_{3}.
\end{eqnarray*}

From Proposition \ref{prop:svd} we know $\A-\mu_{*}$ is of rank $k$. Suppose the eigen-decomposition of $A_{3}$ is $U_{3}D_{3}U_{3}^{T}$. Then $U_{3} \in \R^{(n+d)\times 2k}$. $D_{3}$ is diagonal matrix with $2k$ non-zero elements. Let $U_{i}\in \R^{(n+d)\times 2k }$ be $A_{i}$'s eigenvectors corresponding to $A_{i}$'s $k$ largest and $k$ smallest eigenvalues and $D_{i}$ being diagonal matrix contains these eigenvalues, $i = 1,2$.  

\vspace{.3in}
Define
$$ W_{(3\to1)} := \underset{W_{0}\in \oo(2k)}{\arg \min}\|U_{1} - U_{3}W_{0}\|_{2\to\infty},  $$
$$ W_{(2\to1)} := \underset{W_{0}\in \oo(2k)}{\arg \min}\|U_{1} - U_{2}W_{0}\|_{2\to\infty},  $$
$$ W_{(3\to2)} := \underset{W_{0}\in \oo(2k)}{\arg \min}\|U_{2} - U_{3}W_{0}\|_{2\to\infty},  $$
then 
\begin{eqnarray}  \label{eq:UW}
\|U_{1} - U_{3}W_{(3\to1)}\|_{2\to\infty} &\leq& \|U_{1} - U_{3}W_{(3\to2)}W_{(2\to1)}\|_{2\to\infty} \nonumber \\
				  &=& \|U_{1} - U_{2}W_{(2\to1)} +U_{2}W_{(2\to1)} - U_{3}W_{(3\to2)}W_{(2\to1)}\|_{2\to\infty} \nonumber \\
				  &\leq& \|U_{1} - U_{2}W_{(2\to1)}\|_{2\to\infty} +\|U_{2}W_{(2\to1)} - U_{3}W_{(3\to2)}W_{(2\to1)}\|_{2\to\infty} \nonumber \\
				  &=& \|U_{1} - U_{2}W_{(2\to1)}\|_{2\to\infty} +\|U_{2} - U_{3}W_{(3\to2)}\|_{2\to\infty}.
\end{eqnarray}

In the following steps, we deal with the two terms in (\ref{eq:UW}) separately.

We bound $\|U_{1} - U_{2}W_{(2\to1)}\|_{2\to\infty}$ by using Theorem 1 in \cite{cape2019signal}. There are four conditions of Cape's Theorem that should be evaluated.  The first two conditions are followed trivially from the statements of Theorem \ref{thm:main}. Let $\widetilde \sigma_{\max}$ and $\widetilde \sigma_{\min}$ represent the largest and $k$th largest singular values of  $\widetilde \A$. Then with Equation \eqref{eq:svdA}, it is obvious to show
\begin{equation} \label{eq:tildesigma}
 \widetilde \sigma_{\min} \geq C_1 \Delta_n = C_1 n \rho_n, \quad \widetilde \sigma_{\max}/ \widetilde \sigma_{\min} \leq C_2,
\end{equation} 
for some positive constants $C_1, C_2$.


\vspace{.1in}
The following part checks the fourth condition stated below. After confirming the Cape's fourth condition, the proof will address the third condition. 

\vspace{.1in}

\textbf{Cape's 4th Condition:} Write $E_{1} = A_{1} - A_{2}$. There exist constants $C_{E_{1}}, v>0, \nu>0, \xi>1$, such that for all integers $1\leq s\leq s(n):= \lceil \log n/\log (n\bar\rho_n)\rceil$, for each fixed standard basis vector $e_{i}$ and any fixed unit vector $u$, with probability at least $1 - \exp(-\nu \log^\xi n)$ (provided $n \geq n_0(C_E,\nu,\xi))$),
\begin{equation}
|\langle E_{1}^{s}u, e_{i}\rangle| \leq C_{E_{1}}^{s}(n\bar\rho_{n})^{s/2}(\log n)^{s\xi}\|u\|_{\infty}.
\end{equation}

\vspace{.1in}

Using an argument in Lemma 7.10 of \cite{erdHos2013spectral}, \cite{mao2017estimating} shows that the following Upper Bound Condition is sufficient for Cape's 4th Condition. That appears as Lemma 5.5 of \cite{mao2017estimating}. The key idea to show Cape's 4th assumption is applying the inequality of Upper Bound Condition to upper bound the number of non-zero tems in the summation via a multigraph construction for paths counting. The difference is that our main theorem allows sub-exponential random variables, which is possibly unbounded. Thus, the only thing that needs to be checked is the following upper bound condition, with the order of magnitude $m$ ranges from 2 to the number of vertices in constructed multigraph.

\vspace{.1in}
\textbf{Upper Bound Condition} Let $H = \frac{A - \A}{\sqrt{n\bar\rho_{n}}}$ and $H_{ij}$ represents its element on $(i,j)$th entry. Then there exists a positive constant $C_{E_1}$, such that eventually in $n$,
\begin{equation} \label{eq:ubound}
\mathbb{E}(|H_{ij}|^{m}) \leq \frac{C_{E_1}}{n}, \quad \forall \mbox{ } 2 \leq m \leq \log^\xi n.
\end{equation}

For any positive even number $m \leq \log^\xi n$. If $\bar \rho_n \geq (m-1)!\bar\rho^{m/2}$, by Assumption \ref{assumption:atail}, 
$$ \mathbb{E}(|H_{ij}|^{m}) = \mathbb{E}(H_{ij}^{m})  \leq  \frac{\bar\rho_n}{(n\bar\rho_n)^{m/2}} , $$
therefore, choosing $C_{E_1}=1$, Equation \eqref{eq:ubound} is implied by $n\bar \rho_n \geq 1$, which is true under the main theorem's assumptions. \\

If $\bar \rho_n \leq (m-1)!(\bar\rho_n)^{m/2}$, by Assumption \ref{assumption:atail},
$$ \mathbb{E}(|H_{ij}|^{m}) = \mathbb{E}(H_{ij}^{m}) \leq \frac{(m-1)! \bar\rho_n^{m/2}}{(n\bar\rho_n)^{m/2}} = \frac{(m-1)!}{n^{m/2}}. $$

The target boils down to $(m-1)! \leq C_{E_1} n^{m/2 - 1}$. Notice that for all positive integers $N$, there is $N! \leq N^{N+1/2}\exp\{-N+1\}$. Then,
\begin{eqnarray}
(m-1)! \leq n^{m/2 - 1} &\Leftarrow& (m-1)^{m-1/2}\exp\{-m+2\} \leq  n^{m/2-1} C_{E_1}  \nonumber \\
				 &\Leftrightarrow& (m-\frac{1}{2})\log (m-1) +2 - m \leq (\frac{m}{2} - 1) \log n + \log (C_{E_1}) \nonumber \\
\mbox{(since $\log n^{\xi} \geq m$)}&\Leftarrow& (m-\frac{1}{2})\log (m-1) +2 - m \leq (\frac{m}{2} - 1)m^{1/\xi} + \log (C_{E_1}).  \label{eq:upperbound} 
\end{eqnarray}

\vspace{.1in}

Since $1/\xi > 0$, there exists a positive integer $M$, such that $(\frac{m}{2} - 1)m^{1/\xi} > (m-\frac{1}{2})\log (m-1) +2 - m$ for all integer $m > M$. Choose $C_{E_1}$ such that $\log (C_{E_1}) > (m-\frac{1}{2})\log (m-1) +2 - m$ for all integer $2 \leq m \leq M$. Then Equation \eqref{eq:upperbound} is proved. \\


Hence the upper bound condition holds for even $m$. For odd number $m \geq 3$, by Cauchy-Schwartz inequality,
$$ (\mathbb{E}(|H_{ij}|^{m}))^2 \leq \mathbb{E}(|H_{ij}|^{m-1}) \mathbb{E}(|H_{ij}|^{m+1}) \leq \frac{1}{n^2}. $$

Thus the upper bound condition holds for all integer $m\geq 2$ and therefore Cape's fourth condition is valid. 

We claim that Cape's third condition could be relaxed from $\|E\| = O_p((n\rho_n)^{\frac{1}{2}})$ to $\|E\| = O_p((n\rho_n\log^3 n)^{\frac{1}{2}})$ as inferred from Lemma \ref{lemma:Chung}, with only slight modifications in Cape's converge rate result. In the proof of Theorem 1 of \cite{cape2019signal}, the bound of LHS of (5) comes from three quantities: $\|E\widehat{U}\widehat{\Lambda}^{-1}\|_{2\to\infty}, \|R^{(1)}\|_{2\to\infty}, \|R^{(2)}_{W}\|_{2\to\infty}$ (these three terms' notations are from \cite{cape2019signal}). Our relaxation of $\|E\|$ adds an extra $\log^{\frac{3}{2}} n$ term to $\|R^{(1)}\|_{2\to\infty}$'s bound, while the third term remains the same because of Cape's 4th assumption (we have already checked). Therefore by Theorem 1 of \cite{cape2019signal}, we arrives the following conclusion. 
\begin{eqnarray} \label{eq:U1U2W2}
\|U_{1} - U_{2}W_{(2\to1)}\|_{2\to\infty} &=& O_{p}\left( ( (n\bar \rho_{n})^{-1/2}\log^{\xi} n + (n\rho_{n})^{-1/2}\log^{\frac{3}{2}} n) \times \|U_2\|_{2\to\infty} \right) \nonumber \\
&=& O_{p}\left( (n\rho_{n})^{-1/2}\log^{\frac{3}{2}} n \times \|U_2\|_{2\to\infty} \right).  \label{eq:U1U2W2}
\end{eqnarray}

\vspace{.1in}
To bound $\|U_{2} - U_{3}W_{(3\to2)}\|_{2\to\infty}$, we employ Theorem 4.2 in \cite{cape2019two}. 

\begin{theorem}\label{lemma:TwoInfinity} [Theorem 4.2 in \cite{cape2019two}]
Suppose the diagonal elements of $D_{3}$ are sorted in descending order. If $|D_{3}[k]| > 4\|E_{2}\|_{\infty}$, where $E_{2} = A_{2} - A_{3}$, $D_{3}[j]$ is the $j$-th diagonal element of $D_{3}$. Then there exists $W_{3} \in \oo(k)$ such that
\begin{equation}
\|U_{2} - U_{3}W_{(3\to2)}\|_{\max} \leq 14(\frac{\|E_{2}\|_{\infty}}{|D_{3}[k]|})\|U_{3}\|_{2\to \infty}.
\end{equation}
\end{theorem}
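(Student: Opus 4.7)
The plan is to decompose the difference $U_2 - U_3 W_{(3\to 2)}$ as a sum of an ``in-range'' component lying in the column span of $U_3$ and an ``off-range'' component orthogonal to it, namely
\begin{equation*}
U_2 - U_3 W_{(3\to 2)} = (I - U_3 U_3^T) U_2 \;+\; U_3 \bigl(U_3^T U_2 - W_{(3\to 2)}\bigr),
\end{equation*}
and to bound each piece in $\|\cdot\|_{\max}$ separately.  The choice of $W_{(3\to 2)}$ will be the orthogonal factor in the polar decomposition of $U_3^T U_2$, which is the canonical Procrustes-type minimizer.

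For the off-range term I would exploit the eigen-equation $A_2 U_2 = U_2 D_2$.  Writing $U_2 = A_2 U_2 D_2^{-1} = (A_3 + E_2) U_2 D_2^{-1}$ and using that $A_3 = U_3 D_3 U_3^T$ so $(I - U_3 U_3^T) A_3 = 0$ yields the clean identity
\begin{equation*}
(I - U_3 U_3^T) U_2 \;=\; (I - U_3 U_3^T) \, E_2 \, U_2 \, D_2^{-1}.
\end{equation*}
The $\|\cdot\|_{\max}$-norm of the right-hand side is then controlled by an $\ell_\infty$-$\ell_1$ split applied row by row: each row of $E_2$ has $\ell_1$-norm at most $\|E_2\|_\infty$, while each column of $U_2 D_2^{-1}$ has entries bounded by $\|U_2\|_{2\to\infty}/|D_2[k]|$.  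A Weyl-type perturbation bound gives $|D_2[k]| \geq |D_3[k]| - \|E_2\|_\infty$, and the hypothesis $|D_3[k]| > 4\|E_2\|_\infty$ supplies a uniform denominator bounded below by $\tfrac{3}{4}|D_3[k]|$.

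For the in-range term, with $W_{(3\to 2)}$ chosen via polar decomposition, one has $\|U_3^T U_2 - W_{(3\to 2)}\| \leq \|I - U_2^T U_3 U_3^T U_2\| = \|\sin\Theta(U_2, U_3)\|^2$, which is controlled by a Davis--Kahan $\sin\Theta$ bound of the form $\|E_2\|/|D_3[k]|$.  Converting the spectral-norm bound on the $k\times k$ matrix factor into a $\|\cdot\|_{\max}$-norm bound on $U_3(U_3^T U_2 - W_{(3\to 2)})$ is exactly where the factor $\|U_3\|_{2\to\infty}$ enters, through the elementary inequality $\|U_3 M\|_{\max} \leq \|U_3\|_{2\to\infty}\|M\|$.

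The main obstacle I anticipate is avoiding circularity: the off-range estimate naturally involves $\|U_2\|_{2\to\infty}$, which is the very quantity we are trying to bound.  The remedy is a self-bounding argument: use the triangle inequality $\|U_2\|_{2\to\infty} \leq \|U_3 W_{(3\to 2)}\|_{2\to\infty} + \|U_2 - U_3 W_{(3\to 2)}\|_{\max}\sqrt{k}$ (or, more sharply, a direct rowwise bound), substitute back into the decomposition, and use the spectral-gap condition $|D_3[k]| > 4\|E_2\|_\infty$ to absorb the self-referential term on the left-hand side.  This absorption step is what produces the explicit numerical constant $14$ in the final bound.
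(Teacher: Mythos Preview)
The paper does not give its own proof of this statement: it is quoted verbatim as Theorem 4.2 of \cite{cape2019two} and invoked as a black box inside the proof of Lemma \ref{lemma:Cape}. So there is no ``paper's own proof'' to compare against.

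That said, your sketch is essentially the argument in the cited reference. The decomposition into the off-range piece $(I-U_3U_3^T)U_2$ and the in-range piece $U_3(U_3^TU_2 - W_{(3\to2)})$, the use of the eigen-equation $A_2U_2 = U_2D_2$ together with $(I-U_3U_3^T)A_3 = 0$ to rewrite the off-range term as $(I-U_3U_3^T)E_2U_2D_2^{-1}$, the polar/Procrustes choice of $W_{(3\to2)}$ with a Davis--Kahan $\sin\Theta$ bound for the in-range factor, and finally the self-bounding step that absorbs $\|U_2\|_{2\to\infty}$ back into the left-hand side via the gap condition $|D_3[k]|>4\|E_2\|_\infty$ --- this is precisely the machinery of \cite{cape2019two}, and the constant $14$ falls out of the absorption after tracking the numerics. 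One minor point: in the present context the eigenvector blocks are $(n+d)\times 2k$, so the orthogonal aligner lives in $\oo(2k)$ rather than $\oo(k)$; the stated ``$W_3\in\oo(k)$'' is a typo in the paper's transcription, and your argument should be read with the correct dimension.
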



Before applying Theorem \ref{lemma:TwoInfinity}, we should check its assumptions. Reuse the notation for the singular values of $\widetilde \A$ as in Equation \eqref{eq:tildesigma}. Notice $\widetilde{\sigma}_{\min} \succeq c_1n\rho_n$ and $\mu_{*} = \mu_{r}\1_{d}^{T} + \1_{n}\mu_{c}^{T} - \mu\1_{n}\1_{d}^{T}$. Denotes $\mu_{r} = \A\1_{d}/d := T/d$. Similarly define $\widehat{T} := A\1_{d}$. By Hoeffding's Concentration Inequality, 
$$ \pr(|\widehat{T}_{i} - T_{i}| \geq a) \leq 2\exp\{ - \frac{a^{2}}{2\sum_{j=1}^{d}var(A_{ij})}\} \leq 2\exp\{ - \frac{a^{2}}{d\bar\rho_n}\}. $$

Therefore, $\widehat{T}_{i} - T_{i} = O_{p}((n\bar\rho_n)^{\frac{1}{2}})$. The same bound applies to $\|\widehat{\mu}_{c} - \mu_{c}\|, \|\widehat{\mu} - \mu\|$. These imply that any entry of $E_{2}$ could be bounded by $O_{p}(\sqrt{n\bar\rho_n}/n)$. Thus $\|E_{2}\|_{\infty} = O_{p}((n\rho_n\log^2 n)^{\frac{1}{2}})$. Then with large enough $n$, there must be $|D_{3}[k]| > 4\|E_{2}\|_{\infty}$.  

\vspace{.1in}
With Theorem \ref{lemma:TwoInfinity},
\begin{equation} \label{eq:twoinf}
\|U_{2} - U_{3}W_{(3\to2)}\|_{2\to\infty} \leq \sqrt{k}\|U_{2} - U_{3}W_{(3\to2)}\|_{\max}  = O_{p}((n\rho_n)^{-\frac{1}{2}}\log n \|U_{3}\|_{2\to\infty}).
\end{equation}

Combining \eqref{eq:UW}, \eqref{eq:U1U2W2}, \eqref{eq:twoinf} gives
\begin{eqnarray} \label{eq:eigenbound}
&& \|U_{1} - U_{3}W_{(3\to1)}\|_{2\to\infty} \nonumber \\
 &\leq& \|U_{1} - U_{2}W_{(2\to1)}\|_{2\to\infty} +\|U_{2} - U_{3}W_{(3\to2)}\|_{2\to\infty} \nonumber \\
&=& O_{p}\left( (n\rho_{n})^{-1/2}\log^{\frac{3}{2}} n \times \|U_2\|_{2\to\infty} + (n\rho_n)^{-\frac{1}{2}}\log n \|U_{3}\|_{2\to\infty}\right).  \nonumber \\
&\preceq& O_{p}\left( (n\rho_{n})^{-1/2}\log^{\frac{3}{2}} n \times (\|U_2 - U_3W_{(3\to 2)}\|_{2\to\infty} + \|U_3\|_{2\to\infty}) + (n\rho_n)^{-\frac{1}{2}}\log n \|U_{3}\|_{2\to\infty}\right)  \nonumber \\
&=& O_{p}\left( (n\rho_{n})^{-1/2}\log^{\frac{3}{2}} n \times \|U_3\|_{2\to\infty} \right).
\end{eqnarray}

The next step is to convert the symmetrized adjacency matrices' eigenvectors' perturbation bound to that of original adjacency matrices. Suppose the $k$-rank singular value decompositions (only retains the top $k$ singular values and corresponding eigenvectors) of $A-\widehat{\mu}_{*}, \A-\widehat{\mu}_{*}$ and $\A-\mu_{*}$ are $F_{1}\Lambda_{1}L_{1}^{T}, F_{2}\Lambda_{2}L_{2}^{T}$, $F_{3}\Lambda_{3}L_{3}^{T}$ repectively. Then, for $i=1,2,3$ there must be 
$$ U_{i} = \frac{1}{\sqrt{2}}\begin{pmatrix}  F_{i} & -F_{i}\\ L_{i} & L_{i} \end{pmatrix}.$$  

Suppose
$$ W_{(3\to1)} = \begin{pmatrix}  W^{11} & W^{12} \\ W^{21} & W^{22} \end{pmatrix}$$
with each sub-matrix having $k \times k$ dimension. 

\vspace{.1in}

Arguments in \cite{cape2019signal} (proof of Theorem 1) implies that if we have singular value decomposition $U_3^TU_1 = U_oD_oV_o^T$, then $W_{(3\to1)} = U_oV_o^T$. It is trivial to see that $U_3^TU_1$ has special block-wise structure
$$ U_{3}^{T}U_1 = \frac{1}{2}\begin{pmatrix} F_{3}^{T}F_1 + L_{3}^{T}L_1 & -F_{3}^{T}F_1 + L_{3}^{T}L_1 \\ -F_{3}^{T}F_1 + L_{3}^{T}L_1 &F_{3}^{T}F_1 + L_{3}^{T}L_1\end{pmatrix}. $$ 

Thus Lemma \ref{lemma:block} indicates 
\begin{equation} \label{eq:W1122} 
W^{11} = W^{22}, \quad W^{12} = W^{21}. 
\end{equation}

Notice $W_{(3\to1)}$ is orthogonal matrix, therefore 
\begin{eqnarray*}
W_{(3\to1)}W_{(3\to1)}^T=I &\Rightarrow& \begin{pmatrix}  W^{11} & W^{12} \\ W^{21} & W^{22} \end{pmatrix} \begin{pmatrix}  W^{11} & W^{12} \\ W^{21} & W^{22} \end{pmatrix}^T = I \\
			&\Rightarrow& \begin{pmatrix}  W^{11} & W^{12} \\ W^{12} & W^{11} \end{pmatrix} \begin{pmatrix}  W^{11} & W^{12} \\ W^{12} & W^{11} \end{pmatrix}^T = I \\
			&\Rightarrow& W^{11}W^{11T} + W^{12}W^{12T} = I, \quad W^{11}W^{12T} + W^{12}W^{11T} = 0, \\
(\mbox{Equation  } \eqref{eq:W1122})	&\Rightarrow& (W^{21}-W^{22})(W^{21}-W^{22})^T = I, \quad (W^{11}-W^{12})(W^{11}-W^{12})^T = I.
\end{eqnarray*}
These indicate $W^{21} - W^{22}$ and $W^{11} - W^{12}$ are both orthogonal matrices. 

\vspace{.1in}

Notice $ \|U\|_{2\to\infty} = O_p(\log n / \sqrt{n})$ from Equation \eqref{eq:maxU}, similarly there is same upper bound for $\|V\|_{2\to\infty}$. Therefore
%
%

%
\begin{eqnarray*}
\underset{W\in \oo(k)}{\inf}\|\widehat{U}-UW\|_{2\to\infty}  &=&  \underset{R\in \oo(k)}{\inf}\|F_{1}-F_{3}R\|_{2\to\infty} \\
&\leq& \max\{\|F_{1}-F_{3}(W^{11}-W^{21})\|_{2\to\infty}, \|F_{1}-F_{3}(W^{12}-W^{22})\|_{2\to\infty}\}  \\
&\leq& \|\begin{pmatrix} F_{1} &, -F_{1} \end{pmatrix} -  \begin{pmatrix} F_{3} &, -F_{3} \end{pmatrix}W_{(3\to1)}\|_{2\to\infty} \\
&\leq&  \|U_{1} - U_{3}W_{(3\to1)}\|_{2\to\infty}  \\
&=& O_{p}\left(  (n\rho_{n})^{-1/2}\log^{\frac{3}{2}} n \times \|U_3\|_{2\to\infty} \right) \\
&=& O_{p}\left(  (n\rho_{n})^{-1/2}\log^{\frac{3}{2}} n \times (\|U\|_{2\to\infty} + \|V\|_{2\to\infty}) \right) \\
&=& O_{p}\left(  (n\rho_{n})^{-1/2}n^{-1/2}\log^{\frac{5}{2}} n \right).
\end{eqnarray*}

\end{proof}

\subsubsection*{Proof for requisite results of Lemma \ref{lemma:SOC}}

The proof of Lemma \ref{lemma:SOC} is decomposed into  Lemmas \ref{lemma:singleTruncate},  \ref{lemma:seriesTruncate},  \ref{lemma:seriesTruncate2},  \ref{lemma:maxi}, and  \ref{lemma:uniform}. All of which are stated below. 

Lemmas \ref{lemma:singleTruncate}, \ref{lemma:seriesTruncate}, \ref{lemma:seriesTruncate2} bound the  tail behavior of the sample Varimax objective function. Lemmas \ref{lemma:maxi} and \ref{lemma:uniform}  bound the difference between the sample and population versions of the Varimax objective function, uniformly over the space of orthogonal matrices. Lemma \ref{lemma:SOC} puts these pieces together with the first and second order conditions for Varimax described in Section \ref{sec:FSOC} to show that the optimum of the sample Varimax objective function must be close to the optimum of the population Varimax function (modulo permutation and sign-flip).


\vspace{.2in}
The next few lines show the existence of moment generating function (MGF) of linear inner-product of sub-exponential random vectors. Recall $Z_{i} \in \R^{k}$ contains sub-exponential random variables. Following the notations in the proof of Lemma \ref{lemma:maxZ} (Equation \eqref{eq:lambda}) the tail property of $Z$ could be shown as 
\begin{equation} \label{eq:mgf}
\pr(Z_{ij} - \mathbb{E}Z_{ij} > t) \leq C_{0} \exp(-\lambda t),
\end{equation}
then for $\forall r \in \R^{k}$, by independency 
$$\mathbb{E}\exp(t\langle \widetilde{Z}_{i}, r\rangle) = \mathbb{E}\exp(t\sum_{j=1}^{k} \widetilde{Z}_{ij}r_{j}) = \Pi_{j=1}^{k} \mathbb{E} \exp(t \widetilde{Z}_{ij}r_{j}), $$
$\Rightarrow$ $\langle \widetilde{Z}_{i}, r\rangle$ also has MGF.  

\vspace{.2in}

\begin{lemma}\label{lemma:singleTruncate}
$\widetilde{Z}_{i} \in \R^{k}$ is $i$-th row of $\widetilde{Z}$. $r \in \R^{k}$ is arbitrary. $\lambda$ is defined in Equation \eqref{eq:mgf}. Let 
$$J_{i} = \langle \widetilde{Z}_{i}, r \rangle^{4} - \mathbb{E}[\langle \widetilde{Z}_{i}, r \rangle^{4}], $$
then for $\forall t > 0$, there exists a positive constant $C_1$ s.t.
\begin{equation}
\pr(J_{i} > t) \leq C_{1} \exp(-\lambda t^{\frac{1}{4}}).
\end{equation}
\end{lemma}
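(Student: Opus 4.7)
The plan is to control the tail of $\langle \widetilde{Z}_i, r\rangle$ first and then lift this to a tail bound for its fourth power (and hence for the centered quantity $J_i$). Recall from Assumption \ref{assumption:ztail} and the discussion just before Equation \eqref{eq:mgf} that each coordinate $\widetilde Z_{ij}$ is sub-exponential with parameter $\lambda$, and by independence across $j \in [k]$, the moment generating function of $\langle \widetilde Z_i, r\rangle = \sum_j r_j \widetilde Z_{ij}$ exists in a neighborhood of the origin because it factors as $\prod_j \mathbb{E}\exp(t r_j \widetilde Z_{ij})$.

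From this, the first step is to derive a one-sided Chernoff bound of the form
\begin{equation*}
\Pr\bigl(|\langle \widetilde Z_i, r\rangle| > s\bigr) \;\le\; C'\exp(-\lambda' s)
\end{equation*}
for some positive constants $C',\lambda'$ depending on $r$ and the underlying sub-exponential constants. Since $k$ is fixed and the constants in Equation \eqref{eq:mgf} are uniform in $j$, we can absorb the $r$-dependence into the constants $C_1$ and (after possibly rescaling) keep $\lambda$ as the effective rate in the statement, as written.

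Next I would pass from $\langle \widetilde Z_i, r\rangle$ to its fourth power. Because $\langle \widetilde Z_i, r\rangle^4 \ge 0$ and $\mathbb{E}[\langle \widetilde Z_i, r\rangle^4] \ge 0$, the event $\{J_i > t\}$ implies $\{\langle \widetilde Z_i, r\rangle^4 > t\}$, which in turn is the event $\{|\langle \widetilde Z_i, r\rangle| > t^{1/4}\}$. Applying the sub-exponential tail bound with $s = t^{1/4}$ yields
\begin{equation*}
\Pr(J_i > t) \;\le\; \Pr\bigl(\langle \widetilde Z_i, r\rangle^4 > t\bigr) \;=\; \Pr\bigl(|\langle \widetilde Z_i, r\rangle| > t^{1/4}\bigr) \;\le\; C_1 \exp(-\lambda t^{1/4}),
\end{equation*}
which is the desired bound. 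This is essentially the standard observation that raising a sub-exponential random variable to the fourth power produces a sub-Weibull$(1/4)$ tail.

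The only real subtlety will be tracking how the constants $C_1$ and $\lambda$ depend on $r$ (and on $k$). The cleanest route is to note that the Varimax optimization in the applications of this lemma restricts $r$ to columns of orthogonal matrices, so $\|r\|_2 \le 1$ and the sub-exponential parameter of $\langle \widetilde Z_i, r\rangle$ is controlled uniformly by $k$ and $\lambda$; this lets us absorb any $r$-dependence into $C_1$ while leaving the exponent $\lambda$ intact up to an absolute rescaling. I do not expect any genuinely hard step here — the entire argument is MGF plus a change of variables — so the main care needed is just bookkeeping of constants to match the form stated in the lemma.
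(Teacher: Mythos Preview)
Your proposal is correct and follows essentially the same route as the paper: both arguments reduce $\{J_i>t\}$ to $\{|\langle \widetilde Z_i,r\rangle|>t^{1/4}\}$ and then invoke the sub-exponential tail of $\langle \widetilde Z_i,r\rangle$ established via the MGF factorization. Your version is in fact slightly cleaner, since the paper carries along the $\mathbb{E}[X_i^4]$ term before dropping it via $(t+\mathbb{E}[X_i^4])^{1/4}\ge t^{1/4}$, whereas you discard it immediately using nonnegativity.
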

\begin{proof} Write $X_{i} = \langle \widetilde{Z}_{i}, r \rangle$. By Markov Inequality,
\begin{eqnarray*}
\pr(J_{i} > t) &=& \pr(X_{i}^{4} - \mathbb{E}(X_{i}^{4}) > t) \\
			&=& \pr(X_{i}^{4} > t + \mathbb{E}(X_{i}^{4})) \\
			&=& \pr(X_{i} - \mathbb{E}X_{i} > (t + \mathbb{E}(X_{i}^{4}))^{\frac{1}{4}} - \mathbb{E}X_{i}) \\
			&\leq& C_{0}' \exp(-\lambda((t+\mathbb{E}[X_{i}^{4}])^{\frac{1}{4}} - \mathbb{E}X_{i})) \\
			&\leq& C_{0}' \exp(\lambda\mathbb{E}X_{i})\exp(-\lambda t^{\frac{1}{4}}) \\
			&:=& C_{1} \exp(-\lambda t^{\frac{1}{4}}).
\end{eqnarray*}
\end{proof}

The following lemma makes use of Lemma \ref{lemma:singleTruncate} and gives bound to the sum of sequence $|\sum_{i=1}^{n}J_{i}|$.

 \begin{lemma}\label{lemma:seriesTruncate}
 With previous definitions, then for any $\delta > 0$, 
 $$ |\sum_{i=1}^{n}J_{i}| = O_{p}(n^{\frac{1}{2}+\delta}). $$
 \end{lemma}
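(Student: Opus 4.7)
The plan is to combine a truncation argument with Bernstein's inequality, after first isolating an i.i.d.\ component of the $J_i$. By Assumption \ref{assumption:Varimax} the rows $Z_i$ are i.i.d., but the rows $\widetilde Z_i = Z_i - \bar Z$ are only exchangeable because they share the sample mean. Writing $Z_i^o = Z_i - \E Z_1$ and $\Delta = \bar Z - \E Z_1$, and expanding $\langle \widetilde Z_i, r\rangle^4 = (\langle Z_i^o, r\rangle - \langle \Delta, r\rangle)^4$ by the binomial theorem, I would decompose
\[
J_i \;=\; J_i^{*} \;+\; R_i, \qquad J_i^{*} \;=\; \langle Z_i^o, r\rangle^4 - \E\langle Z_i^o, r\rangle^4,
\]
where $J_i^{*}$ are i.i.d.\ mean-zero (and inherit the stretched-exponential tail of Lemma \ref{lemma:singleTruncate}, since that lemma's proof only used the MGF of $\langle Z_i^o, r\rangle$), while $R_i$ collects every cross term containing at least one factor of $\langle \Delta, r\rangle$, together with the corresponding correction to $\E\langle \widetilde Z_i, r\rangle^4$.

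\textbf{Bounding $\sum_i J_i^{*}$.} Here I would truncate: set $T_n = c(\log n)^{4(1+\epsilon_0)}$ for a small $\epsilon_0 > 0$ and large constant $c$, and define $\widehat J_i^{*} = J_i^{*}\mathbbm{1}\{|J_i^{*}|\le T_n\}$. Lemma \ref{lemma:singleTruncate} gives $\pr(|J_i^{*}| > T_n) \le C_1 \exp(-\lambda T_n^{1/4}) = o(n^{-2})$ for $c$ large, so by a union bound $\sum_i J_i^{*} = \sum_i \widehat J_i^{*}$ with probability $1 - o(n^{-1})$. Because $Z_1$ has an MGF, $\mathrm{Var}(J_i^{*}) = O(1)$, and $|\widehat J_i^{*}| \le T_n$. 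Bernstein's inequality for i.i.d.\ bounded variables yields
\[
\pr\!\left( \Big| \sum_{i=1}^n \widehat J_i^{*} - n\E\widehat J_i^{*} \Big| > u \right) \;\le\; 2\exp\!\left( \frac{-u^2/2}{n\,\mathrm{Var}(\widehat J_i^{*}) + T_n u/3} \right),
\]
and choosing $u = n^{1/2+\delta}$ drives the right-hand side to $0$ for any $\delta>0$. The centering error $n|\E\widehat J_i^{*}| = n|\E J_i^{*}\mathbbm{1}\{|J_i^{*}|>T_n\}|$ is negligible via Cauchy-Schwarz and the tail bound, so $|\sum_i J_i^{*}| = O_p(n^{1/2+\delta})$.

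\textbf{Bounding $\sum_i R_i$.} The remainder is a sum of terms of the form $\binom{4}{k}\langle \Delta,r\rangle^k \sum_i \langle Z_i^o, r\rangle^{4-k}$ for $k=1,\dots,4$, minus deterministic counterparts. Since $\|\Delta\| = O_p(n^{-1/2})$ by the CLT and $\sum_i \langle Z_i^o, r\rangle^{4-k}$ has mean of order $n$ and standard deviation of order $\sqrt n$, each cross term is $O_p(n^{1-k/2}) \cdot \mathrm{polylog}(n)$. The $k=1$ term dominates at $O_p(\sqrt n)$, so $|\sum_i R_i| = O_p(n^{1/2})$, which is absorbed by the bound on $\sum_i J_i^{*}$. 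Combining the two gives $|\sum_i J_i| = O_p(n^{1/2+\delta})$ as claimed.

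\textbf{Main obstacle.} The only real subtlety is the stretched-exponential (rather than sub-exponential) tail produced by raising a sub-exponential variable to the fourth power: this forces the truncation level $T_n$ to be polylogarithmic of degree four, which is precisely why the result is stated with an arbitrarily small $n^\delta$ slack rather than a clean polylog. Tracking the $\bar Z$-induced dependence through the decomposition $J_i = J_i^{*} + R_i$ is routine once this polylog bookkeeping is in place.
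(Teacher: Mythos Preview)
Your argument is correct and shares the paper's core strategy: truncate the $J_i$ at a growing level, control the truncated sum by a concentration inequality for bounded variables, and control the probability of the truncation event via the stretched-exponential tail from Lemma~\ref{lemma:singleTruncate}. Two differences are worth noting. First, the paper truncates at the polynomial level $\alpha_n = n^\delta$ and applies Hoeffding (domain length $\approx \alpha_n$), giving exponent $-t^2/(2n\alpha_n^2)$; choosing $t = n^{1/2+\varepsilon}$ with $\varepsilon>\delta$ then suffices. So it is this coarse truncation choice, not the stretched-exponential tail itself, that produces the $n^\delta$ slack---your polylog truncation with Bernstein actually yields the sharper $O_p(\sqrt{n}\,\mathrm{polylog}\,n)$, which you then relax to the stated bound. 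Second, and more substantively, you explicitly decompose $J_i = J_i^{*} + R_i$ to isolate an i.i.d.\ component and handle the $\bar Z$-induced dependence separately; the paper simply declares the truncated $\widetilde J_i$ independent and applies Hoeffding directly, tacitly ignoring the shared sample mean. Your treatment is the more careful one on this point, and the remainder bound $\sum_i R_i = O_p(\sqrt n)$ is routine as you describe.
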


\begin{proof}
For sequence $\alpha_{n} \uparrow \infty$. Define $\widetilde{J}_{i} = J_{i}\1(J_{i} \leq \alpha_{n})$. Then $\widetilde{J}_{i}$'s are independent bounded random variables. Let $\mathfrak{A} = \{\bigcap_{i=1}^{n}\{J_{i} = \widetilde{J}_{i}\}\}$ and $\mathfrak{B} = \{|\sum_{i=1}^{n}J_{i}| > t\}$. Then
\begin{eqnarray}
\pr(\mathfrak{B}) &=& \pr(\mathfrak{B} \cap \mathfrak{A}) + \pr(\mathfrak{B} \cap \mathfrak{A}^{c})\nonumber \\
			 &\leq& \pr(\{|\sum_{i=1}^{n}\widetilde{J}_{i}| > t\}) + \pr(\mathfrak{A}^{c}) \label{eq:PAB}. 
\end{eqnarray}

Notice $\widetilde{J}_{i} \in [-c, \alpha_{n}]$ with $c = -\mathbb{E}(X_{1}^{4})$ is a bounded random variable. Therefore it is sub-gaussian with domain interval length $\sigma \leq \alpha_{n} + c \leq 2 \alpha_{n}$ for large enough $n$. By Hoeffding Concentration Inequality,
\begin{equation} \label{eq:truncateA}
\pr(\{|\sum_{i=1}^{n}\widetilde{J}_{i}| > t\}) \leq 2\exp(-\frac{2 t^{2}}{\sum_{i}\sigma^{2}}) \leq 2\exp(-\frac{t^{2}}{2n\alpha_{n}^{2}}).
\end{equation}

By Lemma \ref{lemma:singleTruncate} there is,
\begin{eqnarray} 
\pr(\mathfrak{A}^{c}) &=& \pr(\{\cap_{i}\{J_{i}\leq \alpha_{n}\}\}^{c}) \nonumber \\
				&=& \pr(\cup_{i}\{J_{i}>\alpha_{n}\}) \nonumber\\
				&\leq& n\pr(J_{i} > \alpha_{n}) \nonumber\\
				&\leq& nC_{2}\exp(-\lambda \alpha_{n}^{\frac{1}{4}}). \label{eq:truncateB}
\end{eqnarray}
 Plugging (\ref{eq:truncateA})(\ref{eq:truncateB}) in (\ref{eq:PAB}) and choosing $\varepsilon > \delta, t = n^{1/2+\varepsilon}, \alpha_{n} = n^{\delta}$ gives
 \begin{equation}
 |\sum_{i=1}^{n}J_{i}| = O_{p}(n^{\frac{1}{2}+\delta}).
 \end{equation}
 
 \end{proof}
 
 \vspace{.2in}
 
 \noindent Similar conclusion applies to second moment terms.
 \begin{lemma} \label{lemma:seriesTruncate2}
With same notations as Lemma \ref{lemma:singleTruncate}. Define $Y_{i} = \langle \widetilde{Z}_{i}, r \rangle^{2} - \mathbb{E}[\langle \widetilde{Z}_{i}, r \rangle^{2}]$, then
\begin{equation}
 |\sum_{i=1}^{n}Y_{i}| = O_{p}(n^{\frac{1}{2}+\delta}).
\end{equation}
 \end{lemma}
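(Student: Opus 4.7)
The plan is to follow the same truncation-plus-Hoeffding recipe used in Lemma \ref{lemma:seriesTruncate}, with the exponent $1/4$ replaced by $1/2$ throughout. The first step is to establish a single-term tail analogue of Lemma \ref{lemma:singleTruncate}. Writing $X_i = \langle \widetilde{Z}_i, r \rangle$, note that $X_i$ inherits a moment generating function from the independence and sub-exponentiality of the coordinates of $\widetilde Z_i$ (see the discussion preceding Lemma \ref{lemma:singleTruncate} around Equation \eqref{eq:mgf}), so there exist constants $C_0', \lambda > 0$ with $\pr(|X_i - \E X_i| > s) \le C_0' \exp(-\lambda s)$. Then by Markov's inequality
$$\pr(Y_i > t) = \pr\!\bigl(X_i^2 > t + \E(X_i^2)\bigr) \le \pr\!\bigl(|X_i - \E X_i| > \sqrt{t+\E(X_i^2)} - |\E X_i|\bigr) \le C_2 \exp(-\lambda t^{1/2})$$
for a suitable constant $C_2$, where the $\sqrt{t}$-decay (rather than $t^{1/4}$) comes from inverting $X_i^2 > t + c$.

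The second step mirrors the proof of Lemma \ref{lemma:seriesTruncate} verbatim. For a threshold $\alpha_n \uparrow \infty$ to be chosen, set $\widetilde Y_i = Y_i \1(Y_i \le \alpha_n)$ and decompose
$$\pr\!\Bigl(\bigl|\textstyle\sum_{i=1}^n Y_i\bigr| > t\Bigr) \;\le\; \pr\!\Bigl(\bigl|\textstyle\sum_{i=1}^n \widetilde Y_i\bigr| > t\Bigr) \;+\; n\,\pr(Y_1 > \alpha_n).$$
Each $\widetilde Y_i$ lies in an interval of length at most $\alpha_n + \E(X_1^2) \le 2\alpha_n$ for large $n$, so Hoeffding's inequality gives
$$\pr\!\Bigl(\bigl|\textstyle\sum_i \widetilde Y_i\bigr| > t\Bigr) \le 2\exp\!\Bigl(-\frac{t^2}{2 n \alpha_n^2}\Bigr),$$
while the single-term tail from the first step bounds the second summand by $n C_2 \exp(-\lambda \sqrt{\alpha_n})$.

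Choosing $\alpha_n = n^{\delta}$ and $t = n^{1/2+\varepsilon}$ for any $\varepsilon > \delta$ drives both terms to zero as $n \to \infty$: the Hoeffding term becomes $2\exp(-n^{2\varepsilon - 2\delta}/2) \to 0$, and the truncation error $n C_2 \exp(-\lambda n^{\delta/2})$ vanishes super-polynomially. Hence $|\sum_i Y_i| = O_p(n^{1/2+\delta})$, as claimed. The argument involves no real obstacle; the only point to verify is that the moment generating function of $\langle \widetilde Z_i, r\rangle$ exists in a neighborhood of zero, which is immediate from Assumption \ref{assumption:ztail} together with the coordinate-wise independence of $\widetilde Z_i$ given by the identification assumptions for Varimax.
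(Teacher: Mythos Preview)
Your proposal is correct and follows exactly the paper's approach: the paper's proof simply says the argument is identical to that of Lemma~\ref{lemma:seriesTruncate} with only the tail bound in Equation~\eqref{eq:truncateB} changing (from $\exp(-\lambda\alpha_n^{1/4})$ to $\exp(-\lambda\alpha_n^{1/2})$), while the dominating Hoeffding bound~\eqref{eq:truncateA} is unchanged. You have spelled out precisely these details, including the correct observation that the exponent improves from $1/4$ to $1/2$ when inverting a square rather than a fourth power.
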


\begin{proof} This part employs the same strategy as the proof of Lemma \ref{lemma:seriesTruncate}. The only difference is the bound of (\ref{eq:truncateB}). But the dominating bound (\ref{eq:truncateA}) is the same. Thus we obtain the similar bound for $ |\sum_{i=1}^{n}Y_{i}| $. 
\end{proof}

\begin{lemma} \label{lemma:maxi}
Suppose $r_{1}, r_{2}, ... r_{n_{0}} \in \R^{k}$. Denote
$$\mathbb{J}_{\ell} = |\sum_{i=1}^{n} \langle \widetilde{Z}_{i}, r_{\ell} \rangle^{4} - n\mathbb{E}[\langle \widetilde{Z}_{i}, r_{\ell} \rangle^{4}]|. $$
Assume $n_{0} = an^{b}$ with some positive constant $a,b$. Then for any $\delta > 0$, 
\begin{equation}
\max_{\ell}\mathbb{J}_{\ell} =  O_{p}(n^{\frac{1}{2}+\delta}).
\end{equation}
Similarly if we define
$$ \mathbb{Y}_{\ell} = \langle \widetilde{Z}_{i}, r_{\ell} \rangle^{2} - \mathbb{E}[\langle \widetilde{Z}_{i}, r_{\ell} \rangle^{2}], $$
we have
\begin{equation}
\max_{\ell}\mathbb{Y}_{\ell} =  O_{p}(n^{\frac{1}{2}+\delta}).
\end{equation}
\end{lemma}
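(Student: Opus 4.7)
The plan is to repeat the truncation argument of Lemmas \ref{lemma:seriesTruncate} and \ref{lemma:seriesTruncate2} for each fixed $r_\ell$ and then take a union bound; because the single-$\ell$ tail probability decays faster than any polynomial in $n$, the polynomial union-bound factor $n_0=an^b$ is harmless.

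First I would check that the sub-exponential parameters involved are uniform in $\ell$. Since $\widetilde Z_{ij}$ satisfies the tail bound \eqref{eq:mgf} with some universal $\lambda>0$, the linear combination $X_{i,\ell}=\langle \widetilde Z_i, r_\ell\rangle$ is sub-exponential with parameter $\lambda/\|r_\ell\|_1$ (or $\lambda/\|r_\ell\|_\infty$, up to constants). In the intended application (Lemma \ref{lemma:SOC}) the $r_\ell$'s are columns of orthogonal matrices, so $\|r_\ell\|\le 1$ and the sub-exponential constants can be chosen uniformly. With this, the conclusion of Lemma \ref{lemma:singleTruncate} holds uniformly: there exists a fixed constant $C_1$ such that
\[
\pr(J_{i,\ell}>t)\le C_1 \exp(-\lambda t^{1/4})\qquad \text{for every }\ell,
\]
where $J_{i,\ell}=\langle \widetilde Z_i, r_\ell\rangle^4-\mathbb{E}\langle \widetilde Z_i, r_\ell\rangle^4$.

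Next I would mimic the proof of Lemma \ref{lemma:seriesTruncate} for each $\ell$: truncate $\widetilde J_{i,\ell}=J_{i,\ell}\mathbf{1}(J_{i,\ell}\le \alpha_n)$ and decompose
\[
\pr(\mathbb J_\ell > t)\le \pr\Big(\Big|\sum_{i=1}^n \widetilde J_{i,\ell}\Big|>t\Big)+\pr\Big(\bigcup_{i=1}^n\{J_{i,\ell}>\alpha_n\}\Big).
\]
Hoeffding applied to the bounded independent summands $\widetilde J_{i,\ell}\in[-c,\alpha_n]$ gives the first term $\le 2\exp(-t^2/(2n\alpha_n^2))$, while the uniform single-variable tail gives the second term $\le nC_1 \exp(-\lambda\alpha_n^{1/4})$. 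Both bounds are \emph{independent of $\ell$}.

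Finally I would apply a union bound over $\ell=1,\dots,n_0=an^b$:
\[
\pr(\max_\ell \mathbb J_\ell>t)\le 2 n_0 \exp\!\Big(-\tfrac{t^2}{2n\alpha_n^2}\Big)+ n_0\, n\, C_1 \exp(-\lambda \alpha_n^{1/4}).
\]
Choose $\alpha_n=n^{\delta/2}$ and $t=n^{1/2+\delta}$. The first term is at most $2an^b\exp(-n^{\delta}/2)$ which tends to $0$ since $\exp(-n^\delta/2)$ kills any polynomial in $n$. The second term is at most $an^{b+1}C_1\exp(-\lambda n^{\delta/8})$, which also vanishes for the same reason. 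This yields $\max_\ell \mathbb J_\ell=O_p(n^{1/2+\delta})$, as claimed.

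The argument for $\mathbb Y_\ell$ is identical except that the single-term tail bound becomes $C_1\exp(-\lambda t^{1/2})$ (the random variable is squared rather than raised to the fourth), which only makes the second term in the union bound decay faster. The main (and essentially only) conceptual obstacle is establishing that the sub-exponential constants can be chosen uniformly in $\ell$; once that is secured, the union bound step is cheap because the truncation error decays like $\exp(-\mathrm{poly}(n))$ while the number of $r_\ell$'s grows only polynomially.
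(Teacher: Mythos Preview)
Your proposal is correct and follows essentially the same approach as the paper: truncate, apply Hoeffding to the bounded summands, bound the truncation error via the sub-exponential tail, and then union-bound over the $n_0=an^b$ vectors. Your parameter choices ($\alpha_n=n^{\delta/2}$, $t=n^{1/2+\delta}$) differ cosmetically from the paper's ($\alpha_n=n^{\delta}$, $t=n^{1/2+\varepsilon}$ with $\varepsilon>\delta$), and you are more explicit than the paper about the uniformity in $\ell$ of the sub-exponential constants---a point the paper leaves implicit.
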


\begin{proof}
Take $\varepsilon > \delta, t = n^{1/2+\varepsilon}, \alpha_{n} = n^{\delta}$. Applying the same strategy in Lemma \ref{lemma:seriesTruncate} (Equation \eqref{eq:truncateA}, \eqref{eq:truncateB}) and basic probability rules,
\begin{eqnarray*}
\pr(\max_{\ell}\mathbb{J}_{\ell} > t) &\leq& 2n_{0}\exp(-\frac{t^{2}}{2n\alpha_{n}^{2}}) + n_{0}nC_{2}\exp(-\lambda \alpha_{n}^{\frac{1}{4}}) \\
						     &=& 2n_{0}\exp(-\frac{n^{1+2\varepsilon}}{2n^{1+2\delta}}) + n_{0}nC_{2}\exp(-\lambda n^{\delta/4})
\end{eqnarray*}
Since
$$ \log(2n_{0}) - \frac{1}{2}n^{2\varepsilon-2\delta} = \log2a + b\log n - n^{2\varepsilon-2\delta} \to -\infty, $$
$$ \log(C_{2}n_{0}n) - \lambda n^{\delta/4} = \log C_{2} \log a + (b+1)\log n - \lambda n^{\delta/4} \to -\infty, $$
they could be reduced to $\max_{\ell}\mathbb{J}_{\ell} =  O_{p}(n^{\frac{1}{2}+\delta})$. Similar approach gives $\max_{\ell}\mathbb{Y}_{\ell} =  O_{p}(n^{\frac{1}{2}+\delta})$. 
\end{proof}

\vspace{.2in}

\begin{lemma} \label{lemma:uniform}
Recall notations in equation \eqref{eq:Varimax}, Section 2.1 and Section \ref{def:model}. For readability we slightly abuse the notations and write $\widehat{V}(R) = v(R, \widetilde{Z})$. And
 \[ V(R) := \vv_{\widetilde U}(R) = \sum_{j=1}^k Var([\widetilde{Z}_i \widetilde U R]_j).\]
Then for $\forall \delta > 0$, there is a uniform bound between these two quantities,
\begin{equation}
\underset{ O \in \oo(k)}{\sup}|\widehat{V}(O) - V(O)| = O_{p}(n^{\delta - 1/2}).
\end{equation}
\end{lemma}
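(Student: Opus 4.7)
The plan is a standard $\epsilon$-net argument, reducing the uniform bound over $\oo(k)$ to the pointwise bound already established in Lemma \ref{lemma:maxi}, with the discretization error controlled via Lipschitz continuity of the Varimax integrand together with the $O_p(\log n)$ tail bounds from Lemma \ref{lemma:maxZ}.

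First I would reduce the problem to the unit sphere. Since $\widehat V(R)$ and $V(R)$ are column-separable (the contribution of each column $r_\ell \in S^{k-1}$ of $R$ does not couple with the others), the difference decomposes as
\begin{equation*}
\widehat V(R) - V(R) = \sum_{\ell=1}^k \bigl[\widehat v(r_\ell) - v(r_\ell)\bigr],
\end{equation*}
where $\widehat v(r) = n^{-1}\sum_i \langle \widetilde Z_i, r\rangle^4 - \bigl(n^{-1}\sum_q \langle \widetilde Z_q, r\rangle^2\bigr)^2$ and $v(r) = \E \widehat v(r)$ (absorbing the nuisance rotation $\widetilde U$ into the definition of $r$). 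Consequently $\sup_{O \in \oo(k)} |\widehat V(O) - V(O)| \le k \sup_{r \in S^{k-1}} |\widehat v(r) - v(r)|$, so it suffices to establish the bound uniformly over $S^{k-1}$.

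Next I would construct a covering net. Let $\mathcal N_\epsilon \subset S^{k-1}$ be a minimal Euclidean $\epsilon$-net with cardinality $n_0 := |\mathcal N_\epsilon| \le (3/\epsilon)^k$. Choosing $\epsilon = n^{-1}$ gives $n_0 = O(n^k)$, which is polynomial in $n$, so Lemma \ref{lemma:maxi} applied to both the quartic and the quadratic pieces (combined with Lemmas \ref{lemma:seriesTruncate} and \ref{lemma:seriesTruncate2}) yields
\begin{equation*}
\max_{r \in \mathcal N_\epsilon}\bigl|\widehat v(r) - v(r)\bigr| = O_p(n^{\delta - 1/2}),
\end{equation*}
where the gain of $n^{-1}$ relative to Lemma \ref{lemma:maxi} comes from dividing the sum bound $O_p(n^{1/2+\delta})$ by the normalizing factor $n$ in $\widehat v$.

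The main remaining step is the discretization error, which I expect to be routine but is where one has to be careful about the Lipschitz constant. For unit vectors $r_1, r_2$ with $\|r_1 - r_2\| \le \epsilon$, the elementary identity $|a^4 - b^4| \le 4\max(|a|,|b|)^3 |a-b|$ combined with Cauchy--Schwarz gives $|\langle \widetilde Z_i, r_1\rangle^4 - \langle \widetilde Z_i, r_2\rangle^4| \le 4\|\widetilde Z_i\|^4 \epsilon$, and similarly $|\langle \widetilde Z_i, r_1\rangle^2 - \langle \widetilde Z_i, r_2\rangle^2| \le 2\|\widetilde Z_i\|^2 \epsilon$. Averaging over $i$ and invoking Lemma \ref{lemma:maxZ}, which gives $\max_i \|\widetilde Z_i\| = O_p(\log n)$, produces $|\widehat v(r_1) - \widehat v(r_2)| = O_p(\epsilon \log^4 n)$. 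The deterministic analogue $|v(r_1) - v(r_2)| = O(\epsilon)$ follows from the finiteness of the fourth moments guaranteed by Assumption \ref{assumption:ztail}. With $\epsilon = n^{-1}$ the discretization error is $O_p(\log^4 n / n)$, which is dominated by $n^{\delta - 1/2}$. A triangle inequality then delivers the claim. The only real delicacy is keeping the net cardinality polynomial so that Lemma \ref{lemma:maxi} applies, and the fact that the Lipschitz constant is only polylogarithmic in $n$ means that any polynomially small $\epsilon$ suffices.
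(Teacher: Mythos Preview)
Your proposal is correct and follows essentially the same $\epsilon$-net argument as the paper: cover the parameter space with a polynomially sized net, invoke Lemma~\ref{lemma:maxi} for the pointwise bound on the net, and control the discretization error via Lipschitz continuity using the $O_p(\log n)$ entrywise bound from Lemma~\ref{lemma:maxZ}. The one structural difference is that you first exploit column-separability of the Varimax objective to reduce from $\oo(k)$ to $S^{k-1}$ and then cover the sphere with $O((3/\epsilon)^k)$ points, whereas the paper covers $\oo(k)$ directly with $O((6/\epsilon)^{k^2})$ points; since $k$ is fixed both nets are polynomial in $n$ when $\epsilon = n^{-1}$, so this buys only notational economy rather than a different rate or a genuinely new idea.
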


\begin{proof} This part of the proof adapts the covering balls strategy to give this uniform bound. Let $\R = \{R_{1}, R_{2}, ..., R_{N} \}$ be a $\varepsilon$-cover for orthogonal matrices. This means for $\forall O \in \mathcal{O}(k)$, there exists $\ell$ s.t. $d(O, R_{\ell}) < \varepsilon$ where $d(X,Y)$ is the $\sin\Theta$ distance. Let $D(\varepsilon, \mathcal{O}(k), d)$ be the $\varepsilon$-packing number. Using the notes in \cite{van2000applications} and Lemma 4.1 in \cite{pollard1990empirical} we have,
\begin{equation}
N \leq D(\varepsilon, \mathcal{O}(k), d) \leq D(\varepsilon, \mathcal{O}(k), d_{F}) \leq (\frac{6}{\varepsilon})^{k^{2}} := N_{0}.
\end{equation}

$d_{F}$ is Frobenius norm distance. The second inequality is true because for any $O_{1}, O_{2} \in \mathcal{O}(k)$,
$$ \|\sin (O_{1}, O_{2})\|^{2}_{F} \leq \underset{Q \in \mathcal{O}(k)}{\inf}\|O_{1}-O_{2}Q\|_{F}^{2} \leq \|O_{1}-O_{2}\|_{F}^{2} $$


Let $\varepsilon = 1/n $ then $N \leq N_{0} = (6n)^{k^{2}}$. For $\forall O \in \mathcal{O}(k)$, choose $R_{\ell}\in \R$ such that $d(O, R_{\ell}) < \varepsilon$. Then by Triangle Inequality,
\begin{equation} \label{eq:triangles}
|\widehat{V}(O) - V(O)| \leq |\widehat{V}(O) - \widehat{V}(R_{\ell})| + |\widehat{V}(R_{\ell}) - V(R_{\ell})| + |V(O) - V(R_{\ell})|.
\end{equation}

Lemma \ref{lemma:seriesTruncate}, \ref{lemma:maxi} indicates
\begin{equation}
|\widehat{V}(R_{\ell}) - V(R_{\ell})| \leq \underset{j}{\max} |\widehat{V}(R_{j}) - V(R_{j})| = O_{p}(n^{\delta - \frac{1}{2}}).
\end{equation}

Notice $\widehat{V}(R) = \sum_{j=1}^k\left(\frac{1}{n}\sum_{i=1}^n [\widetilde{Z}R]_{ij}^4 - \left(\frac{1}{n} \sum_{i=1}^n [\widetilde{Z}R]_{ij}^2\right)^2\right)$ and $\underset{ij}{\max}|\widetilde{Z}_{ij}| = O(\log n)$ (Lemma \ref{lemma:maxZ}). Also for $\forall O, R \in \mathcal{O}(k)$ such that $d(O, R) < \varepsilon$, there is fact that
$$d(O, R) \geq \frac{1}{\sqrt{2} }\|O-R\|_{F}, \forall O, R \in \oo(k), $$

then
\begin{eqnarray} 
|\sum_{ij}([\widetilde{Z}O]_{ij}^{4} - [\widetilde{Z}R]_{ij}^{4})| &=& | \sum_{ij}([\widetilde{Z}O]_{ij}^{2} + [\widetilde{Z}R]_{ij}^{2})([\widetilde{Z}O]_{ij}+[\widetilde{Z}R]_{ij})([\widetilde{Z}O]_{ij}-[\widetilde{Z}R]_{ij})| \nonumber \\
							&\leq& \sum_{ij}|([\widetilde{Z}O]_{ij}^{2} + [\widetilde{Z}R]_{ij}^{2})([\widetilde{Z}O]_{ij}+[\widetilde{Z}R]_{ij})([ZO]_{ij}-[\widetilde{Z}R]_{ij})|  \nonumber\\
							&\leq& O_p(\log^2 n)\sum_{ij}|([\widetilde{Z}O]_{ij}+[\widetilde{Z}R]_{ij})([\widetilde{Z}O]_{ij}-[\widetilde{Z}R]_{ij})|  \nonumber\\
		 					&\leq& O_{p}(\log^{3}n)\sum_{ij}|[\widetilde{Z}O]_{ij}-[\widetilde{Z}R]_{ij}| \nonumber\\
							&\leq& O_{p}(n\log^{3}n)\|\widetilde{Z}(O-R)\|_{1\to\infty} \nonumber\\
							&\leq& O_{p}(n\log^{3}n)\|\widetilde{Z}\|_{\max}(\sum_{ij}|O_{ij}-R_{ij}|) \nonumber\\
							&=& O_{p}(n\log^{4}n)(\sum_{ij}|O_{ij}-R_{ij}|)   \nonumber\\
							&\leq& O_{p}(n\log^{4}n)\times k\sqrt{\sum_{ij}|O_{ij}-R_{ij}|^{2}} \nonumber\\
							&=& O_{p}(n\log^{4}n)\|O-R\|_{F} \nonumber\\
							&\leq& O_{p}(n\log^{4}n) \times \sqrt{2}d(O,R) \nonumber\\
							&=& O_{p}(\varepsilon n\log^{4}n) \nonumber\\
							&=& O_{p}(\log^{4}n),			 \label{eq:A7one}
\end{eqnarray}

\newpage

and

\begin{eqnarray}
&&   |\sum_{j}[(\sum_{i}[\widetilde{Z}O]_{ij}^{2})^{2} - (\sum_{i}[\widetilde{Z}R]^{2}_{ij})^{2}]| \nonumber\\
&=& |\sum_{j}[(\sum_{i}([\widetilde{Z}O]_{ij}^{2}+[\widetilde{Z}R]_{ij}^{2}))(\sum_{i}([\widetilde{Z}O]_{ij}^{2}-[\widetilde{Z}R]_{ij}^{2}))]| \nonumber\\
&\leq& \sum_{j}|(\sum_{i}([\widetilde{Z}O]_{ij}^{2}+[\widetilde{Z}R]_{ij}^{2}))(\sum_{i}([\widetilde{Z}O]_{ij}^{2}-[\widetilde{Z}R]_{ij}^{2}))| \nonumber\\
&\leq& \sum_{j}|(2n\|\widetilde{Z}\|_{\max}^{2})(\sum_{i}([\widetilde{Z}O]_{ij}^{2}-[\widetilde{Z}R]_{ij}^{2}))| \nonumber\\
&\leq& O_{p}(n\log^{2}n)\sum_{j}|\sum_{i}([\widetilde{Z}O]_{ij}+[\widetilde{Z}R]_{ij})([\widetilde{Z}O]_{ij}-[\widetilde{Z}R]_{ij})| \nonumber\\
&\leq& O_{p}(n\log^{3}n)\sum_{j}|\sum_i [\widetilde{Z}O]_{ij}-[\widetilde{Z}R]_{ij}| \nonumber\\
&\leq& O_{p}(n^{2}\log^{3}n)\|\widetilde{Z}(O-R)\|_{1\to\infty} \nonumber\\
&\leq& O_{p}(n^{2}\log^{3}n)\|\widetilde{Z}\|_{\max}(\sum_{ij}|O_{ij}-R_{ij}|)\nonumber \\
&\leq& O_{p}(n^{2}\log^{4}n)\times k\|O-R\|_{F} \nonumber\\
&\leq& O_{p}(n^{2}\log^{4}n) \times \sqrt{2}\times d(O,R) \nonumber\\
&=& O_{p}(n^{2}\varepsilon\log^{4}n)  \nonumber\\
&=& O_{p}(n\log^{4}n).    \label{eq:A7two}
\end{eqnarray}

With Equation \eqref{eq:A7one}, \eqref{eq:A7two},
\begin{eqnarray*}
&& |\widehat{V}(O) - \widehat{V}(R_{\ell})| \\
&\leq& \frac{1}{n}|\sum_{ij}([\widetilde{Z}O]_{ij}^{4} - [\widetilde{Z}R_{\ell}]_{ij}^{4}|) + \frac{1}{n^{2}}|\sum_{j}[(\sum_{i}[\widetilde{Z}O]_{ij}^{2})^{2} - (\sum_{i}[\widetilde{Z}R_{\ell}]^{2}_{ij})^{2}]| \\
&=& O_{p}(\frac{\log^{4}n}{n}).
\end{eqnarray*}

\newpage
Assume $\mathbb{E}(\widetilde{Z}_{1\ell}^{j}) = \mu_{j}^{(\ell)}$ refers to $j$-th moment of $Z$'s $\ell$-th column, similar to the proof of Theorem \ref{thm:varimax1} the population Varimax function could be expressed as, 
$$V(Q) = \sum_{j}(\mathbb{E}([\widetilde{Z}Q]_{j}^{4}) - \mathbb{E}([\widetilde{Z}Q]_{j}^{2})^{2}) = \sum_{i=1}^{k}(\mu_{4}^{(i)} - 3)\|Q_{i\cdot}\|_{4}^{4}+3k.$$ 
Write $\xi_{i} = \mu_{4}^{(i)} - 3\mu_{2}^{(i)2} = \mu_{4}^{(i)} - 3$ (is positive by leptokurtic assumption) and $\xi_{0} = \underset{i}{\max}$ $\xi_{i}$ (is a finite positive constant). For $\forall O, R \in \mathcal{O}(k)$ such that $d(O, R) < \varepsilon$, there is
\begin{eqnarray*}
|V(O) - V(R)| &=& |\sum_{i=1}^{k}\xi_{i}(\|O_{i\cdot}\|^{4}_{4} - \|R_{i\cdot}\|_{4}^{4})| \\
		    &\leq& \sum_{i=1}^{k}|\xi_{i}(\|O_{i\cdot}\|^{4}_{4} - \|R_{i\cdot}\|_{4}^{4})| \\
		    &\leq& \xi_{0}\sum_{i=1}^{k}|\|O_{i\cdot}\|^{4}_{4} - \|R_{i\cdot}\|_{4}^{4}| \\
		    &=& \xi_{0}\sum_{i=1}^{k}|\sum_{j=1}^{k}(O_{ij}^{2}+R_{ij}^{2})(O_{ij}+R_{ij})(O_{ij}-R_{ij})| \\
		    &\leq& \xi_{0}\sum_{i=1}^{k}\sum_{j=1}^{k}|(O_{ij}^{2}+R_{ij}^{2})(O_{ij}+R_{ij})(O_{ij}-R_{ij})| \\
		    &\leq& 4\xi_{0}\sum_{i=1}^{k}\sum_{j=1}^{k}|O_{ij}-R_{ij}| \\
		    &\leq& 4\xi_{0}k\|O-R\|_{F} \\
		    &=& O_{p}(\frac{1}{n}). 
\end{eqnarray*}

Summing up three bounds of Equation \eqref{eq:triangles} obtains a uniform bound of $|\widehat{V}(O) - V(O)|$: 
\begin{equation}
\underset{O\in\oo(k)}{\sup}|\widehat{V}(O) - V(O)| = O_{p}(n^{\delta - \frac{1}{2}}).
\end{equation}

\end{proof}

\newpage

\subsubsection*{Proof of Lemma \ref{lemma:SOC}}
This part of proof needs first/second order condition of population varimax function here. These two conditions are stated as Corollary \ref{corollary:foc}, \ref{corollary:soc} below. For now the notations of sample \& population varimax functions follow the proofs of Lemma \ref{lemma:uniform}.

\begin{corollary}[FOC] \label{corollary:foc}
If identity matrix $I$ is a stationary point of $\vv_{I}(R_0)$ then $Z$ satisfies the following condition,
\begin{equation}\label{eq:foc}
\mathbb{E}Z_{1i}^{2}\mathbb{E}(Z_{1i}Z_{1j}) - \mathbb{E}(Z_{1i}^{3}Z_{1j}) = \mathbb{E}Z_{1j}^{2}\mathbb{E}(Z_{1i}Z_{1j}) - \mathbb{E}(Z_{1j}^{3}Z_{1i}), \forall i\neq j.
\end{equation}
\end{corollary}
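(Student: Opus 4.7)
The plan is to compute the directional derivative of $\vv_I(R_0)$ at $R_0 = I$ along tangent vectors of the orthogonal group $\oo(k)$, then translate the resulting stationarity conditions into the claimed identity on mixed moments of $Z_1$.

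First I would recall that the tangent space to $\oo(k)$ at the identity matrix is the space of skew-symmetric matrices. Accordingly, a smooth curve in $\oo(k)$ through $I$ can be written as $R_0(t) = I + tH + O(t^2)$ with $H^T = -H$. For such a curve, $[Z_1 R_0(t)]_j = Z_{1j} + t \sum_i Z_{1i} H_{ij} + O(t^2)$. Differentiating term-by-term, I get
\begin{equation}
\left.\frac{d}{dt}\E\big([Z_1 R_0(t)]_j^4\big)\right|_{t=0} = 4 \sum_i H_{ij}\, \E(Z_{1j}^{3} Z_{1i})
\end{equation}
and, similarly, using the chain rule on $\E([Z_1 R_0(t)]_j^2)^2$,
\begin{equation}
\left.\frac{d}{dt}\E\big([Z_1 R_0(t)]_j^2\big)^2\right|_{t=0} = 4\, \E(Z_{1j}^2) \sum_i H_{ij}\, \E(Z_{1j} Z_{1i}).
\end{equation}

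Combining these and summing over $j$ gives the directional derivative of $\vv_I$ at $I$:
\begin{equation}
\left.\frac{d}{dt}\vv_I(R_0(t))\right|_{t=0} = 4 \sum_{i,j} H_{ij}\, M_{ij}, \qquad \text{where } M_{ij} := \E(Z_{1j}^{3} Z_{1i}) - \E(Z_{1j}^2)\,\E(Z_{1j} Z_{1i}).
\end{equation}
Since $I$ is a stationary point, this vanishes for every skew-symmetric $H$. Using $H_{ij}=-H_{ji}$ and independence of the entries $H_{ij}$ for $i<j$, the vanishing of $\sum_{ij} H_{ij} M_{ij}$ over all skew-symmetric $H$ forces $M_{ij} = M_{ji}$ for each $i \neq j$. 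Writing out this symmetry condition and using $\E(Z_{1i} Z_{1j}) = \E(Z_{1j} Z_{1i})$ gives exactly the claimed identity \eqref{eq:foc}.

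The step requiring the most care is the constrained-optimization argument: one must handle the manifold constraint $R_0 \in \oo(k)$ correctly rather than treating $R_0$ as a free matrix variable. Parameterizing by skew-symmetric tangent vectors (equivalently, by $R_0(t) = \exp(tH)$) sidesteps this cleanly, since the skew-symmetric matrices form a basis for the tangent space at $I$ and any tangent direction is realized by some $H$. The rest is a routine differentiation followed by reading off the symmetry $M_{ij}=M_{ji}$.
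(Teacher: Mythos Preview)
Your proposal is correct and follows essentially the same route as the paper: the paper quotes the sample first-order condition from \cite{sherin1966matrix,neudecker1981matrix} (that $(U^T U R D - U^T H)R^T$ is symmetric at a stationary point), passes to the population version via Lemma~\ref{lemma:regularization}, and reads off exactly the symmetry $M_{ij}=M_{ji}$ you derive by hand. The only technicality you pass over silently is the interchange of expectation and differentiation, which the paper justifies separately in Lemma~\ref{lemma:regularization}.
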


\begin{corollary}[SOC] \label{corollary:soc}
Notate $O = \widetilde{Z}_{1}^{T}\widetilde{Z}_1$, if $I$ is a local maxima of the population Varimax $\vv_{I}(R_0)$, then the following condition is true,
\begin{equation} \label{eq:soc}
3\mathbb{E}[tr(diag(OK)^{2})] \leq \mathbb{E}\langle OdiagO,KK^{T}\rangle,
\end{equation}
for any skew-symmetric matrix K.
\end{corollary}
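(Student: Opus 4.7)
The plan is to obtain the inequality directly from the second-order necessary condition for a local maximum of $\vv_I$ on the orthogonal manifold $\oo(k)$ at $R_0=I$. The tangent space to $\oo(k)$ at $I$ is the set of skew-symmetric matrices, so I would parametrize smooth curves through $I$ via the matrix exponential $R(t) = \exp(tK)$ for an arbitrary skew-symmetric $K$, giving the expansion $R(t) = I + tK + \tfrac{t^2}{2}K^2 + O(t^3)$ with $R(t) \in \oo(k)$ for all $t$. Because $I$ is a local maximum, the ordinary one-variable SOC applied to $\vv_I(R(t))$ yields $\frac{d^2}{dt^2}\big|_{t=0}\vv_I(R(t)) \le 0$, and the whole proof amounts to translating this scalar inequality into the matrix form stated in the corollary.

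Before differentiating, I would simplify $\vv_I$. Writing $z = \widetilde Z_1^T \in \R^k$ so that $O = zz^T$, Assumption \ref{assumption:Varimax} together with orthogonality of $R$ gives $\E[z^T R]_\ell^2 = \sum_i R_{i\ell}^2\,\E(z_i^2) = \sum_i R_{i\ell}^2 = 1$, so the variance term $(\E[z^T R]_\ell^2)^2$ in Varimax is a constant independent of $R$. Hence the SOC for $\vv_I$ reduces to a SOC for $R \mapsto \sum_\ell \E[z^T R]_\ell^4$ only.

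Next I would expand $[z^T R(t)]_j$ to order $t^2$. Using $K^T = -K$ one has $(z^T K)_j = -(Kz)_j$ and $(z^T K^2)_j = (K^2 z)_j$, so $[z^T R(t)]_j = z_j - t(Kz)_j + \tfrac{t^2}{2}(K^2 z)_j + O(t^3)$. A binomial expansion then gives
\[
[z^T R(t)]_j^4 = z_j^4 - 4t\,z_j^3(Kz)_j + t^2\bigl[\,2 z_j^3 (K^2 z)_j + 6 z_j^2 (Kz)_j^2\bigr] + O(t^3).
\]
Summing over $j$, taking expectations (justified by Assumption \ref{assumption:ztail}, which ensures the relevant moments exist so that differentiation under the expectation is valid), and differentiating twice at $t=0$, the SOC becomes
\[
4\,\E\!\sum_j z_j^3 (K^2 z)_j + 12\,\E\!\sum_j z_j^2 (Kz)_j^2 \le 0.
\]

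The final step is a purely algebraic translation into matrix notation. From $(OK)_{jj} = z_j(z^T K)_j = -z_j(Kz)_j$ one obtains $tr(diag(OK)^2) = \sum_j z_j^2 (Kz)_j^2$. From $(O\,diag\,O)_{ij} = z_i z_j^3$, together with $KK^T = -K^2$ and $(K^2)^T = K^2$, the trace-inner-product computation yields $\langle O\,diag\,O,\,KK^T\rangle = -\sum_j z_j^3 (K^2 z)_j$. Substituting these two identities into the scalar SOC and dividing through by $4$ produces exactly the claim $3\,\E[tr(diag(OK)^2)] \le \E\langle O\,diag\,O, KK^T\rangle$. The main obstacle is just the careful bookkeeping with skew-symmetry and transposes in translating the scalar expression into the given matrix form; there are no real analytic hurdles beyond that, since the expansion of the matrix exponential and the differentiation under the expectation are routine.
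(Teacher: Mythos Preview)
Your argument is correct and in fact more direct than the paper's. The paper does not compute the second derivative of $\vv_I$ from scratch; instead it starts from the second-order necessary condition for the \emph{sample} Varimax function due to \cite{chu1998orthomax}, which is phrased in terms of the matrices $E_i = U^TU - n u_i u_i^T$, and then pushes expectations through using a separate dominated-convergence lemma (Lemma~\ref{lemma:regularization}) to justify $\partial \E v = \E\partial v$. After that, three matrix inner products are simplified one at a time using a trace identity (Lemma~\ref{lemma:tr}) and the skew-symmetry relations $\langle I,K^2\rangle = -\langle K,K\rangle$, $\langle diag(K),K\rangle = 0$, eventually collapsing to the stated inequality.

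Your route sidesteps all of that machinery: by observing that $\E[z^TR]_\ell^2\equiv 1$ under Assumption~\ref{assumption:Varimax}, you reduce $\vv_I(R)$ to a quartic polynomial in the entries of $R$ with fixed moment coefficients, so the exchange of $\E$ and $\partial/\partial R$ is trivial (no tail assumption like Assumption~\ref{assumption:ztail} is actually needed here), and the second derivative along $\exp(tK)$ falls out by straightforward binomial expansion. What the paper's longer route buys is infrastructure: Lemma~\ref{lemma:regularization} and Lemma~\ref{lemma:tr} are reused verbatim in the proof of Lemma~\ref{lemma:SOC} and Proposition~\ref{prop:varifun}, where the sample and population Varimax functions must be compared and the Chu framework is already in play. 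Your approach is cleaner for this corollary in isolation but does not set up those reusable pieces.
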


The proof of Corollary \ref{corollary:foc} is trivial. Corollary \ref{corollary:soc} is a direct result of Theorem \ref{thm:soc} in Section \ref{sec:FSOC}. Now we could proceed to the proof of Lemma \ref{lemma:SOC}.

\begin{proof}  By Proposition \ref{prop:converge}, $R_{\widetilde{Z}}$ is converging to elements of $\pp(k)$, WLOG we may assume $P_{n}^{(2)} = I$ and $R_{\widetilde{Z}} \to I$ (i.e. let $P_n^{(2)}=\widetilde{R}_{U}^{T}$), since elements of $\pp(k)$ are isolated to each other ($\forall P_1\neq P_2 \in \mathcal{P}(k)$, $\|P_1 - P_2\| \geq 2/\sqrt{k}$). We may constrain our analysis on a fixed neighborhood of $I$, $B(I, \delta_c)$ s.t. $\{I\} = B(I, \delta_c) \cap \mathcal{P}(k)$. Now we want to show,
$$\|R_{\widetilde{Z}}-I\|_{2\to\infty}  = O_{p}(n^{\delta/2-1/4}). $$ 

By Lie algebra theory there is a $k\times k$ skew-symmetric matrix $K$ s.t. $R_{\widetilde{Z}} = \exp(K)$. Define 
$$\gamma(t) = \exp(tK).$$ 

Then $\gamma(0) = I$, $\gamma(1) = R_{\widetilde{Z}}$. We want to evaluate population varimax function's first order and second order condition at $I$ (global optimal solution). To achieve that we should show: $R \to I \Rightarrow K \to 0$. This could be proved by using matrix logarithm algebra.
$$ \|K\| = \|\log R_{\widetilde{Z}} - \log I \| \leq \sum_{i=1}^{\infty}\frac{(-1)^{i+1}}{i}\|R_{\widetilde{Z}}-I\|^{i} \to 0. $$

Differential calculations indicates:

\begin{equation}
\frac{d}{dt}V(\gamma(t))|_{t=0} = \nabla V(\gamma(t))^{T}\frac{d\gamma}{dt}|_{t=0} = \langle \nabla V(I), K \rangle,
\end{equation}

\begin{eqnarray*}
\frac{d^{2}}{dt^{2}}V(\gamma(t))|_{t=0} &=& \langle \nabla^{2} V(\frac{d\gamma(t)}{dt})|_{t=0}, K\gamma(t)\rangle + \langle \nabla V, K^{2} \gamma(0) \rangle \\
							   &=& \langle \nabla^{2} V\cdot K, K\rangle + \langle \nabla V, K^{2} \rangle.
\end{eqnarray*}

Equation (2) of \cite{chu1998orthomax} is a reformulated version of Varimax function. Taking expectation of it (Lemma \ref{lemma:regularization} allows exchanging differential and expectation), and notate $E = \widetilde{Z}^{T}\widetilde{Z} - n\widetilde{Z}_{1}^{T}\widetilde{Z}_{1}$. The varimax function could be rewritten as: 
\begin{equation}
V(Q) := \mathbb{E}[trace(diag(Q^{T}EQ)^{2})] \Rightarrow \nabla V(Q) = 4\mathbb{E}[EQdiag(Q^{T}EQ], Q \in \oo(k).
\end{equation}

Let $O = \widetilde{Z}_{1}^{T}\widetilde{Z}_1$, then
\begin{equation} \label{eq:FI}
\nabla V(I) = 4\mathbb{E}[(I-O)diag(I-O)] = 4\mathbb{E}(Odiag(O)) - 4I.
\end{equation}

Thus by Corollary \ref{corollary:foc}, $\nabla V(I)$ is symmetric $\Rightarrow$ $\langle \nabla V(I), K\rangle = 0$.

By Frechet derivatives, for any $H\in\R^{k\times k}$ and $t > 0$,
\begin{eqnarray*}
\nabla V(Q + tH) - \nabla V(Q) &=& 4\mathbb{E}[E(Q+tH)diag((Q+tH)^{T}E(Q+tH)] - 4\mathbb{E}[EQdiag(Q^{T}EQ)] \\
					      &=& 4t\mathbb{E}[EHdiag(Q^{T}EQ) + 2EQdiag(H^{T}EQ)] + O(t^{2}),
\end{eqnarray*}

choose $H=K, Q=I$, which means the derivative of $\nabla V(Q)$ evaluated at $I$ in the direction of $K$ is
\begin{equation}
\nabla^{2}V(I)\cdot K = -4K + 4\mathbb{E}[OKdiag(O)] + 8\mathbb{E}[Odiag(OK)].
\end{equation}

Theorem \ref{thm:varimax1} indicates identity matrix $I$ is one of the global maximas of population Varimax function. Applying Second Order Condition result (Corollary \ref{corollary:soc}), Lemma \ref{lemma:tr} and reusing $\langle \nabla V(I), K\rangle = 0$ obtains
\begin{eqnarray*}
\langle \nabla^{2}V(I)\cdot K, K \rangle &=& -4\langle K, K\rangle + 4\mathbb{E}\langle OKdiag(O), K \rangle + 8\mathbb{E}\langle Odiag(OK), K\rangle \\
							   &=& -4\|K\|_{F}^{2} + 12\mathbb{E}[trace(diag(OK)^{2})] \\
							   &\leq& -4\|K\|_{F}^{2} + 4\mathbb{E}\langle Odiag(O), KK^{T} \rangle,
\end{eqnarray*}
and
\begin{equation}
\langle \nabla V(I), K^{2} \rangle = -4\mathbb{E}\langle Odiag(O), KK^{T}\rangle + 4\|K\|_{F}^{2}.
\end{equation}

Let $K^{u} = K / \|K\|$, then
\begin{eqnarray*}
\frac{\partial^{2} V}{\partial t^{2}}|_{t=0} &=& \langle \nabla^{2}V(I)\cdot K, K\rangle + \langle\nabla V(I), K^{2}\rangle \\
							    &=& -4\|K\|_{F}^{2} + 12\mathbb{E}[trace(diag(OK)^{2})] -4\mathbb{E}\langle Odiag(O), KK^{T}\rangle + 4\|K\|_{F}^{2} \\
							    &=& 4\times(3\mathbb{E}[trace(diag(OK)^{2}) - \mathbb{E}\langle Odiag(O), KK^{T}\rangle]) \\
							    &=& 4\times(3\mathbb{E}[trace(diag(OK^{u})^{2}) - \mathbb{E}\langle Odiag(O), K^{u}K^{uT}\rangle])\|K\|^{2} \\
							    &\leq& -C_{s}\|K\|^{2}.
\end{eqnarray*}

\noindent Here 
\begin{equation} \label{eq:minusCs}
-C_{s} = \underset{\|K^{u}\|=1,K^{u} \in \mathcal{O}(k)^{\bot}}{\max} 4\times(3\mathbb{E}[trace(diag(OK^{u})^{2}) - \mathbb{E}\langle Odiag(O), K^{u}K^{uT}\rangle]), 
\end{equation}
is a negative constant (thus $C_s$ is a positive constant) since RHS of Equation \eqref{eq:minusCs} is upper-bounded and the set of skewed symmetric matrix with unit Frobenius norm is a bounded, closed and compact space. With derived results and Taylor expansion,
\begin{eqnarray*}
V(R_{\widetilde{Z}}) &=& V(I) + \langle \nabla V(I), K \rangle + \langle \nabla^{2}V(I)\cdot K, K\rangle + \langle\nabla V(I), K^{2}\rangle + o(\|K\|^{2}) \\
	&=& V(I) + \langle \nabla^{2}V(I)\cdot K, K\rangle + \langle\nabla V(I), K^{2}\rangle + o(\|K\|^{2}) \\
	&\leq& V(I) - C_{s}\|K\|^{2} + o(\|K\|^{2}).
\end{eqnarray*}

By Lemma \ref{lemma:uniform} there exists $\varepsilon_0 = O_{p}(n^{\delta-1/2})$ s.t.

$$ |\widehat{V}(R_{\widetilde{Z}}) - V(R_{\widetilde{Z}})| < \varepsilon_0, \quad  |\widehat{V}(I) - V(I)| < \varepsilon_0,  $$
$\Rightarrow$
$$ \widehat{V}(R_{\widetilde{Z}}) - \varepsilon_0  < V(R_{\widetilde{Z}}) < V(I)  < \widehat{V}(I) + \varepsilon_0,$$
$\Rightarrow$
$$V(I) - V(R_{\widetilde{Z}}) < 2\varepsilon_0. $$

These implies
\begin{equation}
\|K\|^{2} < \frac{2\varepsilon_0 + o(\|K\|^{2})}{C_{s}}.
\end{equation}

Therefore $\|K\| = O_{p}(n^{\delta/2-1/4})$. By matrix exponential algebra, 
\begin{equation}
\|R_{\widetilde{Z}}-I\|_{2\to\infty} \leq \|R_{\widetilde{Z}}-I\| = \|\sum_{i=1}^{\infty}\frac{K^{i}}{i!}\| \leq \sum_{i=1}^{\infty}\frac{\|K\|^{i}}{i!} = O_{p}(n^{\delta/2-1/4}). 
\end{equation}

\end{proof}

\subsubsection*{Detailed Proof of Proposition \ref{prop:varifun}}

\begin{proof}  
 
Write
$$ V_{1}(O) = \sum_{\ell=1}^k\left(\frac{1}{n}\sum_{i=1}^n [\sqrt{n}\widehat{U}O]_{i\ell}^4 - \left(\frac{1}{n} \sum_{i=1}^n [\sqrt{n}\widehat{U}O]_{i\ell}^2\right)^2\right), $$
$$ V_{2}(O) = \sum_{\ell=1}^k\left(\frac{1}{n}\sum_{i=1}^n [\sqrt{n}UWO]_{i\ell}^4 - \left(\frac{1}{n} \sum_{i=1}^n [\sqrt{n}UWO]_{i\ell}^2\right)^2\right). $$

\vspace{.2in}
To be specific, $V_1$ is the sample version of Varimax function with perturbed eigenvectors as input. $V_2$ is sample version of Varimax function with true eigenvectors rotated with $W$ (specified in Lemma \ref{lemma:Cape}). The proof of Proposition \ref{prop:varifun} could be described as two parts. First part shows the uniform upper bound for difference between $V_1, V_2$ (Equation \eqref{eq:V1V2}). Similar to the proof of Lemma \ref{lemma:SOC}, the second part explores the first and second order condition of Equation \eqref{eq:gamma2} to obtain the bound for the difference between solutions of $V_1$ and $V_2$ (modulo permutation and sign-flip).

\vspace{.1in}

Mathematically speaking, the first part (uniform upper bound for difference between $V_1, V_2$) is equivalent to
\begin{equation} \label{eq:V1V2}
\underset{O \in \mathcal{O}(k)}{\sup} |V_{1}(O) - V_{2}(O)| \leq O_{p}\left( (n\rho_{n})^{-1/2}\log^{\frac{7}{2}} n \right).
\end{equation}

 \vspace{.1in}

In the proof of Proposition \ref{prop:varifun} let $X_{i}$ be the $i$th row of $\sqrt{n}\widehat{U}$, and $i$th row of $\sqrt{n}UW$ be $X_{i} + \epsilon_{i}$. From Lemma \ref{lemma:Cape}, for any unit length vector $r \in \R^{k}$, we have 
$$ |(\langle X_{i},r\rangle - \langle X_{i}+\epsilon_{i},r\rangle) | \leq \|\epsilon_i\|\|r\| \leq \sqrt{n}\|\widehat{U} - UW\|_{2\to\infty} = O_{p}(  (n\rho_{n})^{-1/2} \log^{\frac{5}{2}} n ). $$

Therefore,
\begin{eqnarray*}
&& |\sum_{i=1}^{n}(\langle X_{i},r\rangle^{4} - \langle X_{i}+\epsilon_{i},r\rangle^{4})|   \\
&\leq& \sum_{i=1}^{n}| (\langle X_{i},r\rangle^{2} + \langle X_{i}+\epsilon_{i},r\rangle^{2})(\langle X_{i},r\rangle + \langle X_{i}+\epsilon_{i},r\rangle)(\langle X_{i},r\rangle - \langle X_{i}+\epsilon_{i},r\rangle) | \\
&\leq& \sum_{i=1}^{n} (\langle X_{i},r\rangle^{2} + \langle X_{i}+\epsilon_{i},r\rangle^{2})(\|X_{i}\|_{2}+\|X_{i}+\epsilon_{i}\|_{2})|(\langle X_{i},r\rangle - \langle X_{i}+\epsilon_{i},r\rangle) | \\
&\leq& \sum_{i=1}^{n} (\langle X_{i},r\rangle^{2} + \langle X_{i}+\epsilon_{i},r\rangle^{2})O_{p}(\log n)|(\langle X_{i},r\rangle - \langle X_{i}+\epsilon_{i},r\rangle) | \\
&\leq& \sum_{i=1}^{n} (\langle X_{i},r\rangle^{2} + \langle X_{i}+\epsilon_{i},r\rangle^{2})O_{p}(\log n) \sqrt{n} \|\widehat{U} - UW\|_{2\to\infty}  \\
&\leq&  \sum_{i=1}^{n} (\langle X_{i},r\rangle^{2} + \langle X_{i}+\epsilon_{i},r\rangle^{2})O_{p}\left(  (n\rho_{n})^{-1/2} \log^{\frac{7}{2}} n \right).
\end{eqnarray*}


 \vspace{.1in}

Notice that columns of $\widehat{U}$ and $UW$ have unit length and $R$ is an orthogonal matrix. Thus the columns of $\widehat{U}R$ and $UWR$ are all of unit length. Therefore


$$ \left(\frac{1}{n} \sum_{i=1}^n [\sqrt{n}\widehat{U}R]_{i\ell}^2\right)^2  = \left(\sum_{i=1}^n [\widehat{U}R]_{i\ell}^2\right)^2 = 1^2 = \left(\sum_{i=1}^n [UWR]_{i\ell}^2\right)^2 = \left(\frac{1}{n} \sum_{i=1}^n [\sqrt{n}UWR]_{i\ell}^2\right)^2. $$

\newpage
Let $O_{\ell}$ be the $\ell$th column of $O$. Then for any $O \in \mathcal{O}(k)$,

\begin{eqnarray*}
&& |V_{1}(O) - V_{2}(O)|  \\
&\leq& \sum_{\ell=1}^k|\left(\frac{1}{n}\sum_{i=1}^n [\sqrt{n}\widehat{U}O]_{i\ell}^4 - \left(\frac{1}{n} \sum_{i=1}^n [\sqrt{n}\widehat{U}O]_{i\ell}^2\right)^2\right) \\
&& \quad \quad - \left(\frac{1}{n} \sum_{i=1}^n [\sqrt{n}UWO]_{i\ell}^4 - \left(\frac{1}{n} \sum_{i=1}^n [\sqrt{n}UWO]_{i\ell}^2\right)^2\right)| \\
&\leq& \frac{1}{n}\sum_{\ell=1}^{k}|\sum_{i=1}^n ([\sqrt{n}\widehat{U}O]_{i\ell}^4 - [\sqrt{n}UWO]_{i\ell}^4)| \\
&\leq& \frac{1}{n}\sum_{\ell=1}^{k}|\sum_{i=1}^{n}(\langle X_{i},O_{\ell}\rangle^{4} - \langle X_{i}+\epsilon_{i},O_{\ell}\rangle^{4})| \\
&\leq& \frac{1}{n}\sum_{\ell=1}^{k} \sum_{i=1}^{n} (\langle X_{i},O_{\ell}\rangle^{2} + \langle X_{i}+\epsilon_{i},O_{\ell}\rangle^{2})O_{p}\left(  (n\rho_{n})^{-1/2} \log^{\frac{7}{2}} n \right) \\
&=& \sum_{\ell=1}^{k}O_{p}\left(  (n\rho_{n})^{-1/2} \log^{\frac{7}{2}} n \right) \\
&=& O_{p}\left(  (n\rho_{n})^{-1/2} \log^{\frac{7}{2}} n \right).
\end{eqnarray*}


Since the orthogonal matrix $O$ here is arbitrary, therefore the Equation \eqref{eq:V1V2} is proved.


 \vspace{.2in}
For the next step, we want to show the upper bound of $2\to\infty$ norm distance between $R_{\widehat{U}}$ and $R_{UW}P_{n}^{(3)}$ ($P_{n}^{(3)} \in \mathcal{P}(k)$ is defined in Proposition \ref{prop:varifun}),

\begin{equation} \label{eq:Pn3}
\|R_{\widehat{U}} - R_{UW}P_{n}^{(3)}\|_{2\to\infty} = O_{p}\left(  (n\rho_{n})^{-1/4} \log^{\frac{7}{4}} n \right).
\end{equation}

\newpage

For simplicity notate $R_{1} = R_{\widehat{U}}$, $R_{2} = R_{UW}$. There are $k\times k$ skew-symmetric matrices $K_{1}$, $K_{2}$ s.t. $R_{1} = \exp(K_{1})$, $R_{2} = \exp(K_{2})$. Define 
\begin{equation} \label{eq:gamma2}
\gamma_{2}(t) = \exp((1-t)K_{2} + tK_{1}),
\end{equation}
then $\gamma_{2}(0) = R_{2}$, $\gamma_{2}(1) = R_{1}$. Again, as in the proof of Lemma \ref{lemma:SOC} we assume 
$$ I = \underset{P_0 \in \mathcal{P}(k)}{\arg \min}\|R_1 - R_2 P_0\|_{2\to \infty}, $$


and we constrain our analysis on a neighborhood of $R_{2}$: $B(R_2, \delta_{p}):=\{P\in \oo(k)| \|P-R_2\| < \delta_p \}$, such that 
$$ B(R_2, \delta_{p}) \cap \{R_2P_0| P_0 \in \mathcal{P}(k) \} = \{R_2\}. $$

This indicates for any $R \in B(R_2, \delta_p)$ there is $V_2(R) \leq V_2(R_2)$.

Before Taylor expansion analysis, we should check that $\|R_1 - R_2\| \overset{p}{\to}0$ is true. After that we should show $\|K_1 - K_2\| \overset{p}{\to}0$ is also true. By definition,
\begin{equation}
V_1(R_1) \geq V_1(R_2) - o_{p}(1), |V_1(R_2) - V_2(R_2)| \overset{p}{\to}0 \Rightarrow V_1(R_1) \geq V_2(R_2) - o_p(1). 
\end{equation}

Then, 
\begin{eqnarray}
V_2(R_2) - V_2(V_1) &\leq& V_1(R_1) - V_2(R_1) + o_p(1) \\
				 &\leq& \underset{R\in \mathcal{O}(k)}{\sup}|V_1(R)-V_2(R)| + o_p(1) \overset{p}{\to} 0.
\end{eqnarray}

By conditions, for any $\epsilon_0 > 0, \eta_0 > 0$ such that $V_2(R) < V_2(R_2) - \eta_0$ for every $R\in\oo(k)$ with $\|R - R_2\| \geq \epsilon_0$. Thus the event $\{\|R_2 - R_1\|\}$ is contained in the event $\{V_2(R_1) < V_2(R_2) - \eta_0\}$. The probability of the latter event goes to 0. Therefore $\|R_1 - R_2\| \overset{p}{\to}0$. By Lemma \ref{lemma:SOC}, with high probability, $R_2$ and $R_1R_{2}^{T}$ are both converging to $I$ as $n$ grows. Variant of Baker-Cambell-Hausdorff formula gives
\begin{eqnarray*}
\|K_1 - K_2\| &=& \| \log R_1R_{2}^{T}R_2 - \log R_2\| \\
			  &=& \| \log R_1R_{2}^{T} + \frac{1}{2}[\log R_1R_{2}^{T}, \log R_2] + \cdots \| \\
			  &\leq& \| \log (I + R_1(R_{2}^{T}-R_{1}^{T})) \|_{F} + o_{p}(\|\log R_2\|) \\
			  &\overset{p}{\to}& 0.
\end{eqnarray*}

With differential calculation results in \cite{chu1998orthomax},

\begin{equation}
\frac{d}{dt}V_{2}(\gamma_{2}(t))|_{t=0} = \nabla V_{2}(\gamma_{2}(t))^{T}\frac{d\gamma_{2}}{dt}|_{t=0} = \langle \nabla V_{2}(R_{2}), R_{2}(K_{1} - K_{2}) \rangle,
\end{equation}

\begin{eqnarray*}
\frac{d^{2}}{dt^{2}}V_{2}(\gamma_{2}(t))|_{t=0} &=& \langle \nabla^{2} V_{2}(\frac{d\gamma_{2}(t)}{dt})|_{t=0}, \gamma_{2}(0)(K_{1} - K_{2})\rangle + \langle \nabla V, \gamma_{2}(0)(K_{1} - K_{2})^{2}  \rangle \\
							   &=& \langle \nabla^{2} V_{2}\cdot R_{2}(K_{1} - K_{2}), R_{2}(K_{1} - K_{2})\rangle + \langle \nabla V_{2}, R_{2}(K_{1} - K_{2})^{2} \rangle. \\
\end{eqnarray*}

By Equation (7),(8) of \cite{chu1998orthomax},
\begin{equation}
V_{2}(Q) = n^{-3}trace\left[\sum_{i=1}^{n}diag(Q^{T}E_{i}Q)^{2}\right],
\end{equation}
where $E_{i} = (UW)^{T}UW - n(X_{i}+\epsilon_{i})(X_{i}+\epsilon_{i})^{T}$. And

\begin{equation}
\nabla V_{2}(Q) = 4n^{-3}\left[\sum_{i=1}^{n}E_{i}Qdiag(Q^{T}E_{i}Q)\right].
\end{equation}

Theorem 3.1 in \cite{chu1998orthomax} implies 
$$R_{2}^{T}\nabla V_{2}(R_{2}) = \sum_{i=1}^{n}R_{2}^{T}E_{i}R_{2}diag(R_{2}^{T}E_{2}Q_{2})$$ 
is symmetric, thus
\begin{eqnarray}
\langle \nabla V_{2}(R_{2}), (K_{1} - K_{2})R_{2} \rangle &=& trace[\nabla V_{2}(K_{1}-K_{2})^{T}R_{2}^{T}] \nonumber \\
			&=& trace[R_{2}^{T}\nabla V_{2}(K_{1}-K_{2})^{T}] \nonumber \\
			&=& \langle R_{2}^{T}\nabla V_{2}, K_{1}-K_{2} \rangle \nonumber \\
			&=& 0.   \label{eq:sampletaylor1}
\end{eqnarray}

The last equality is because $K$ is skew-symmetric and $R_{2}^{T}\nabla V_{2}(R_{2})$ is symmetric.

\vspace{.1in}
By Frechet derivatives, for any $H\in \R^{k\times k}$ and $t > 0$, we have
\begin{eqnarray*}
&& \nabla V_{2}(Q + tH) - \nabla V_{2}(Q) \\
&=& 4n^{-3}\left[\sum_{i=1}^{n}E_{i}(Q+tH)diag((Q+tH)^{T}E_{i}(Q+tH))\right] - 4n^{-3}\left[\sum_{i=1}^{n}E_{i}Qdiag(Q^{T}E_{i}Q)\right] \\
&=& 4n^{-3}\left[\sum_{i=1}^{n}(E_{i}Hdiag(Q^{T}E_{i}Q+E_{i}Q(diag(H^{T}E_{i}Q)+diag(Q^{T}E_{i}H))\right] + O(t^{2}). \\
\end{eqnarray*}

Choosing $H=R_{2}(K_{1}-K_{2}), Q=R_{2}$ gives the derivative of $\nabla V_{2}(Q)$ evaluated at $R_{2}$ in the direction of $R_{2}(K_{1}-K_{2})$
\begin{eqnarray*}
\nabla^{2}V_{2}(R_{2})\cdot R_{2}(K_{1}-K_{2}) &=& 4n^{-3}\sum_{i=1}^{n}[E_{i}R_{2}(K_{1}-K_{2})diag(R_{2}^{T}E_{i}R_{2}) \\
&+& E_{i}R_{2}diag((K_{1}-K_{2})^{T}R_{2}^{T}E_{i}R_{2}) \\
&+& E_{i}R_{2}diag(R_{2}^{T}E_{i}R_{2}(K_{1}-K_{2}))].
\end{eqnarray*}

Applying Corollary 3.3 of \cite{chu1998orthomax}, 
\begin{eqnarray}
&& \langle \nabla^{2}V_{2}(R_{2})\cdot R_{2}(K_{1}-K_{2}), R_{2}(K_{1}-K_{2}) \rangle + \nabla V, R_{2}(K_{1}-K_{2})^{2} \rangle \nonumber \\
&=& 4n^{-3}\sum_{i=1}^{n}[\langle E_{i}R_{2}(K_{1}-K_{2})diag(R_{2}^{T}E_{i}R_{2}),R_{2}(K_{1}-K_{2})\rangle\nonumber  \\
&+& \langle E_{i}R_{2}diag((K_{1}-K_{2})^{T}R_{2}^{T}E_{i}R_{2}),R_{2}(K_{1}-K_{2})\rangle \nonumber \\
&+& \langle E_{i}R_{2}diag(R_{2}^{T}E_{i}R_{2}(K_{1}-K_{2})), R_{2}(K_{1}-K_{2}) \rangle \nonumber \\
&+& \langle E_{i}R_{2}diag(R_{2}^{T}E_{i}R_{2}), R_{2}(K_{1}-K_{2})^{2} \rangle]\nonumber  \\
&=& 4n^{-3}\sum_{i=1}^{n}[\langle R_{2}^{T}E_{i}R_{2}(K_{1}-K_{2})diag(R_{2}^{T}E_{i}R_{2}),(K_{1}-K_{2})\rangle \nonumber \\
&+& 2\langle R_{2}^{T}E_{i}R_{2}diag(R_{2}^{T}E_{i}R_{2}(K_{1}-K_{2})), K_{1}-K_{2} \rangle \nonumber \\
&+& \langle R_{2}^{T}E_{i}R_{2}diag(R_{2}^{T}E_{i}R_{2}), (K_{1}-K_{2})^{2} \rangle] \leq 0. \label{eq:sampletaylor2}
\end{eqnarray}

Let $K^{u}_{o} =  (K_{1}-K_{2}) / \|K_{1}-K_{2}\|_{F}$, then
\begin{eqnarray}
&& \langle \nabla^{2}V_{2}(R_{2})\cdot R_{2}(K_{1}-K_{2}), R_{2}(K_{1}-K_{2}) \rangle + \nabla V, R_{2}(K_{1}-K_{2})^{2} \rangle \nonumber  \\
&=& 4n^{-3}\|K_{1}-K_{2}\|_{F}\sum_{i=1}^{n}[\langle R_{2}^{T}E_{i}R_{2}K^{u}_{o}diag(R_{2}^{T}E_{i}R_{2}),K^{u}_{o}\rangle \nonumber \\
&+& 2\langle R_{2}^{T}E_{i}R_{2}diag(R_{2}^{T}E_{i}R_{2}K^{u}_{o}), K^{u}_{o} \rangle \nonumber \\
&+& \langle R_{2}^{T}E_{i}R_{2}diag(R_{2}^{T}E_{i}R_{2}), (K^{u}_{o})^{2} \rangle] \nonumber \\
&\leq& -C_{ss}\|K_{1}-K_{2}\|_{F}.  \label{eq:sampletaylor3}
\end{eqnarray}

Here 
\begin{eqnarray*}
&& -C_{ss} = \underset{\|K\|_{F} =1,K \in \mathcal{O}(k)^{\bot}}{\max} 4n^{-3}\sum_{i=1}^{n}[\langle R_{2}^{T}E_{i}R_{2}Kdiag(R_{2}^{T}E_{i}R_{2}),K\rangle \\
&&\quad \quad \quad \quad \quad + 2\langle R_{2}^{T}E_{i}R_{2}diag(R_{2}^{T}E_{i}R_{2}K), K \rangle + \langle R_{2}^{T}E_{i}R_{2}diag(R_{2}^{T}E_{i}R_{2}), K^{2} \rangle]
\end{eqnarray*}
is a negative constant. With Taylor expansion and Equation \eqref{eq:sampletaylor1}, \eqref{eq:sampletaylor2}, \eqref{eq:sampletaylor3}, there is

\begin{eqnarray}  \label{eq:K1K2}
V_{2}(R_{1}) &=& V_{2}(R_{2}) + \langle \nabla V_{2}(R_{2}), (K_{1} - K_{2})R_{2} \rangle + \langle \nabla^{2}V_{2}(R_{2})\cdot R_{2}(K_{1}-K_{2}), R_{2}(K_{1}-K_{2}) \rangle +\nonumber \\ && \langle \nabla V, R_{2}(K_{1}-K_{2})^{2} \rangle + o(\|K_{1}-K_{2}\|^{2}_{F}) \nonumber \\
	&=& V_{2}(R_{2}) + \langle \nabla^{2}V_{2}(R_{2})\cdot R_{2}(K_{1}-K_{2}), R_{2}(K_{1}-K_{2}) \rangle + \nonumber \\ 
	&& \quad \quad \quad \nabla V, R_{2}(K_{1}-K_{2})^{2} \rangle + o(\|K_{1}-K_{2}\|^{2}_{F}) \nonumber \\
	&\leq& V_{2}(R_{2}) - C_{ss}\|K_{1}-K_{2}\|^{2}_{F} + o(\|K_{1}-K_{2}\|^{2}_{F}). 
\end{eqnarray}

With Equation \eqref{eq:V1V2}, there exists $\varepsilon_1 = O_{p}(  (n\rho_{n})^{-1/2} \log^{\frac{7}{2}} n )$, s.t.

$$ |V_{2}(R_{1}) - V_{1}(R_{1})| < \varepsilon_1, \quad   |V_{2}(R_{2}) - V_{1}(R_{2})| < \varepsilon_1,  $$
$\Rightarrow$
$$ V_{1}(R_{1}) - \varepsilon_1 < V_{2}(R_{1}) < V_{2}(R_{2}) < V_{1}(R_{2}) + \varepsilon_1, $$
$\Rightarrow$
$$V_{2}(R_{2}) - V_{2}(R_{1}) < 2\varepsilon_1. $$

Then from Equation \eqref{eq:K1K2} there is
\begin{equation}
\|K_{1}-K_{2}\|^{2}_{F} < \frac{2\varepsilon_1 + o(\|K_{1}-K_{2}\|^{2}_{F})}{C_{ss}}.
\end{equation}

Thus $\|K_{1}-K_{2}\|_{F} = O_{p}\left(  (n\rho_{n})^{-1/4} \log^{\frac{7}{4}} n \right)$. By Lie Product Formula, for any $k\times k$ matrices $S_{1}, S_{2}$ the exponential of their sum could be expressed as
$$ \exp(S_{1}+S_{2}) = \underset{m\to \infty}{\lim}(\exp(\frac{S_{1}}{m})\exp(\frac{S_{2}}{m}))^{m}. $$

\vspace{.1in}
Thus 
\begin{eqnarray*}
&& R_{1} - R_{2}  \\
&=& \exp(K_{2} + K_{1} - K_{2}) - \exp(K_{2}) \\
&=& \underset{m\to \infty}{\lim} \{ [\exp(\frac{K_{2}}{m})\exp(\frac{K_{1}-K_{2}}{m})]^{m} - (\exp(\frac{K_{2}}{m}))^{m} \} \\
&=& \underset{m\to \infty}{\lim} [ \exp(\frac{K_{2}}{m})\exp(\frac{K_{1}-K_{2}}{m})-\exp(\frac{K_{2}}{m}) ] \\
&& \quad \quad \times  \quad [ \sum_{i=1}^{m} ( \exp(\frac{K_{2}}{m})\exp(\frac{K_{1}-K_{2}}{m}))^{i}(\exp(\frac{K_{2}}{m}))^{m-i} ].  \\
\end{eqnarray*}

\newpage
Since $K_{1}, K_{2}$ are skew-symmetric matrices, we have $\|\exp(\frac{K_{1} - K_{2}}{m})\| = \|\exp(\frac{K_{2}}{m})\| = 1$, and

\begin{eqnarray} 
&& \|R_{1}-R_{2}\|_{2\to\infty} \nonumber \\
&\leq& \|R_{1}-R_{2}\| \nonumber \\
&=& \|\exp(K_{2} + K_{1} - K_{2}) - \exp(K_{2})\| \nonumber \\
&=&  \underset{m\to \infty}{\lim} \|[ \exp(\frac{K_{2}}{m})\exp(\frac{K_{1}-K_{2}}{m})-\exp(\frac{K_{2}}{m}) ] [ \sum_{i=1}^{m} ( \exp(\frac{K_{2}}{m})\exp(\frac{K_{1}-K_{2}}{m}))^{i}(\exp(\frac{K_{2}}{m}))^{m-i} ]\| \nonumber \\
&\leq& \underset{m\to \infty}{\lim} \|\exp(\frac{K_{2}}{m})\|\cdot\|	\exp(\frac{K_{1}-K_{2}}{m}) - I\| (\sum_{i=1}^{m} \| \exp(\frac{K_{2}}{m}) \|^{m-i} \cdot \|\exp(\frac{K_{1}-K_{2}}{m}) \|^{i}) \nonumber \\
&=& 	\underset{m\to \infty}{\lim} \|\exp(\frac{K_{1}-K_{2}}{m}) - I\|(m-1) \nonumber \\
&=&  \underset{m\to \infty}{\lim}	\|\sum_{i=1}^{\infty}(\frac{K_{1}-K_{2}}{m})^{i}\|(m-1) \nonumber \\
&\leq&  \underset{m\to \infty}{\lim} \sum_{i=1}^{\infty}\|\frac{K_{1}-K_{2}}{m}\|^{i}(m-1) \nonumber \\
&=& O_{p}(\|K_{1}-K_{2}\|) \nonumber \\
&\leq& O_{p}(\|K_{1}-K_{2}\|_{F}) \nonumber \\
&=& O_{p}\left(  (n\rho_{n})^{-1/4} \log^{\frac{7}{4}} n \right).     \label{eq:R1R2}
\end{eqnarray}

\vspace{.2in}
Therefore,
\begin{eqnarray*}
&&\|\sqrt{n}\widehat{U}R_{\widehat{U}} - \sqrt{n}\widehat{U}R_{UW}P_n^{(3)}\|_{2\to \infty}  \\
(\mbox{Lemma }\ref{lemma:twonorm})&\leq& \sqrt{n}\|\widehat{U}\|_{2\to\infty}\|R_{\widehat{U}} - R_{UW}P_n^{(3)}\| \\
			   &\leq& \sqrt{n}(\|\widehat{U} - UW\|_{2\to\infty} + \|UW\|_{2\to\infty})\|R_{\widehat{U}} - R_{UW}P_n^{(3)}\| \\
(\mbox{Equation }\eqref{eq:maxU})	&=& O_p(\log n)\times \|R_{\widehat{U}} - R_{UW}P_n^{(3)}\| \\
(\mbox{Equation }\eqref{eq:R1R2})  &=& O_{p}\left(  (n\rho_{n})^{-1/4} \log^{\frac{11}{4}} n \right).
\end{eqnarray*}

\end{proof}

\newpage

\section{First and Second Order Condition for Population Varimax}\label{sec:FSOC}

This section exploits the first and second order condition of the population Varimax function based on the similar results of the sample Varimax function (\cite{sherin1966matrix}, \cite{neudecker1981matrix}, \cite{chu1998orthomax}). This section is self-contained and only reuses the notations of \eqref{eq:Varimax} and the definition of Assumption \ref{assumption:Varimax}. We redefine some notations here.

\vspace{.1in}
\begin{assumption} \label{assumption:fsoc}
$U = ZR_U^{T}$ with $R_U \in \mathcal{O}(k)$, $Z \in \R^{n\times k}$ with $Z$ satisfying Assumption \ref{assumption:Varimax}. Let $z_0$ represents first row of $Z$. $O = z_0 z_0^T$. $z_i$ is the $i$th element of $z_0$ with $\mathbb{E}(z_i)=0, \forall i\in[k]$. Population Varimax function is $\vv(R) = \mathbb{E}(v(R, U))$.  
\end{assumption}

\vspace{.1in}

\noindent Optimization conditions for population Varimax function borrows conclusions from \cite{chu1998orthomax}. The math requires switching the order of expectation and differential operations. Lemma \ref{lemma:regularization} shows that this is valid for Varimax function. The proof of the lemmas and Theorems in current section are all contained in Section \ref{sec:Eproof}. 

\vspace{.1in}
\begin{lemma} \label{lemma:regularization} Under Assumption \ref{assumption:fsoc}, the expectation operator and differential operator of Varimax function are exchangeable,
$$  \frac{\partial \mathbb{E}v(R, U)}{\partial R} = \mathbb{E}\frac{\partial v(R, U)}{\partial R}.$$
\end{lemma}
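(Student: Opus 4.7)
The plan is to invoke the standard measure-theoretic differentiation-under-the-integral theorem (Folland Theorem 2.27, or equivalently dominated convergence applied to the difference quotient): it suffices to check that for each $R \in \oo(k)$ there is a neighborhood on which every partial derivative $\partial v(R,U)/\partial R_{ab}$ is pointwise dominated by a single integrable envelope $g(Z)$.

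First, I would record the algebraic form of the Varimax objective. Writing $[UR]_{i\ell} = \sum_{a} U_{ia} R_{a\ell}$ with $U = Z R_U^T$, the function $v(R,U)$ is a polynomial of total degree $4$ in the entries of $R$ and of total degree $4$ in the entries of $Z$. Consequently, for any $(a,b)$, the partial derivative $\partial v/\partial R_{ab}$ exists pointwise in $Z$ and is a polynomial of degree at most $3$ in the entries of $R$ and degree at most $4$ in the entries of $Z$.

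Second, I would build the envelope. Because $\oo(k)$ is compact we have $|R_{ab}|\le 1$ and $|(R_U)_{ab}|\le 1$, so $|U_{ij}|\le \sum_{a=1}^{k}|Z_{ia}|$. Bounding every factor in $\partial v/\partial R_{ab}$ this way yields a deterministic constant $C(n,k)<\infty$ and a fixed polynomial $P$ of degree $4$ in the non-negative variables $\{|Z_{ij}|\}_{i\in[n],j\in[k]}$ such that
\begin{equation}
\sup_{R\in\oo(k)} \left| \frac{\partial v(R,U)}{\partial R_{ab}} \right| \;\le\; C(n,k)\cdot P\bigl(\{|Z_{ij}|\}\bigr) \;=:\; g(Z).
\end{equation}

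Third, I would verify integrability of the envelope. Assumption \ref{assumption:Varimax}(iv) requires each $Z_{ij}$ to be leptokurtic, and by Definition \ref{def:leptokurtic} leptokurtosis is defined only for random variables with four finite moments; thus $\E|Z_{ij}|^m < \infty$ for $m\le 4$. Combined with the row-wise independence from (ii) and the i.i.d.\ structure across rows from (i), every degree-$4$ monomial in $\{|Z_{ij}|\}$ factors into a product of such moments, so $\E g(Z) < \infty$. Dominated convergence then permits the exchange, proving the lemma.

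The only obstacle worth flagging is rhetorical rather than mathematical: the four-moment hypothesis needed for the envelope is not listed explicitly in Assumption \ref{assumption:Varimax}, but it is silently packaged inside the leptokurtic condition via Definition \ref{def:leptokurtic}, so no extra assumption is imposed.
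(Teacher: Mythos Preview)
Your proof is correct and follows the same strategy as the paper: both construct an integrable envelope for the gradient and invoke dominated convergence. The paper carries this out by quoting the explicit gradient formula $\partial v/\partial R = 4n^{-3}\sum_i G_i R\,\mathrm{diag}(R^T G_i R)$ from \cite{chu1998orthomax} and then hand-computing $\E\|U\|_F^4$, $\E(\|u_i\|^2\|U\|_F^2)$, $\E\|u_i\|^4$, whereas your polynomial-degree plus compactness-of-$\oo(k)$ argument reaches the same envelope more economically.
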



\vspace{.1in}
\subsection{First Order Condition (FOC)}
The First Order Condition for sample varimax function is (\cite{sherin1966matrix}, \cite{neudecker1981matrix}): 
\begin{equation} \label{eq:samplefoc}
(U^{T}URD - U^{T}H)R^{T} = R(U^{T}URD - U^{T}H)^{T}
\end{equation}
Where $R$ is a stationary point of $v(R, U)$. $D$ is a diagonal matrix with $j$-th element equals to $\frac{1}{n}\sum_{i=1}^{n}(UR)_{ij}^{2}$. $H$ is $n \times k$ matrix with $(H)_{ij} = (UR)_{ij}^{3}$.

\begin{theorem}[FOC] \label{thm:foc}
Under Assumption \ref{assumption:fsoc}, if $R\in \oo(k)$ is a stationary point of $\vv(R)$, then
\begin{equation}\label{equation:foc}
\mathbb{E}z_{i}^{2}\mathbb{E}(z_{i}z_{j}) - \mathbb{E}(z_{i}^{3}z_{j}) = \mathbb{E}z_{j}^{2}\mathbb{E}(z_{i}z_{j}) - \mathbb{E}(z_{j}^{3}z_{i}), \forall i\neq j.
\end{equation}
\end{theorem}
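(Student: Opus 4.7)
The approach is to take the expectation of the classical sample FOC \eqref{eq:samplefoc}, exchange expectation and differentiation via Lemma~\ref{lemma:regularization}, and then reduce the resulting matrix symmetry condition to the scalar identity claimed --- for \emph{arbitrary} stationary $R$, not just $R=R_U$. A direct Varimax calculation gives $\nabla_R v(R,U)=\frac{4}{n}(U^T H - U^T URD)$, so the population FOC says $R^T\mathbb{E}[U^T URD - U^T H]$ is symmetric. Because the rows of $U=ZR_U^T$ are iid under Assumption~\ref{assumption:Varimax}, fix a generic row $u_0=z_0 R_U^T$ and set $y:=u_0 R\in\R^k$. Entrywise, $\mathbb{E}[(U^T H)_{mp}]=n\,\mathbb{E}[u_{0m}y_p^3]$. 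The entry $\mathbb{E}[(U^T URD)_{mp}]$ splits through $D_{pp}=\frac{1}{n}\sum_j(UR)_{jp}^2$ into an $i\ne j$ off-diagonal contribution $(n-1)\mathbb{E}(u_{0m}y_p)\mathbb{E}(y_p^2)$ and an $i=j$ diagonal contribution $\mathbb{E}(u_{0m}y_p^3)$; subtracting,
\[
\mathbb{E}[(U^T URD - U^T H)_{mp}] \;=\; (n-1)\bigl(\mathbb{E}(u_{0m}y_p)\,\mathbb{E}(y_p^2) \;-\; \mathbb{E}(u_{0m}y_p^3)\bigr).
\]

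Premultiplying by $R^T$ exploits the tautology $\sum_m R_{mp}u_{0m}=y_p$ to collapse every $\sum_m R_{mp}\mathbb{E}[u_{0m}\cdot\bullet]$ into $\mathbb{E}[y_p\cdot\bullet]$, so the $(p,q)$ entry is $(n-1)\bigl(\mathbb{E}(y_py_q)\mathbb{E}(y_q^2)-\mathbb{E}(y_py_q^3)\bigr)$. Imposing $(p,q)\leftrightarrow(q,p)$ symmetry at any stationary $R$ produces
\[
\mathbb{E}(y_p^2)\,\mathbb{E}(y_py_q) \;-\; \mathbb{E}(y_p^3 y_q) \;=\; \mathbb{E}(y_q^2)\,\mathbb{E}(y_py_q) \;-\; \mathbb{E}(y_q^3 y_p), \qquad p\ne q.
\]
To conclude in the fixed $z$-coordinates, observe that $y=z_0(R_U^T R)$ with $R_U^T R\in\oo(k)$. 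The claimed identity is an assertion about moments of the fixed random vector $z_0$ alone and is therefore invariant under the choice of $R$. Under Assumption~\ref{assumption:Varimax} together with the zero-mean clause in Assumption~\ref{assumption:fsoc}, independence forces $\mathbb{E}(z_iz_j)=0$ and $\mathbb{E}(z_i^3z_j)=\mathbb{E}(z_i^3)\mathbb{E}(z_j)=0$ for $i\ne j$, so both sides of the stated identity are $0$. Thus the identity holds at \emph{every} stationary $R$, with the $y$-level equation above delivering the (non-trivial) constraint on the rotation itself while the $z$-level statement is the downstream compatibility condition.

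The main obstacle is the $i=j$ versus $i\ne j$ bookkeeping inside $\mathbb{E}[(U^T URD)_{mp}]$: because $(U^T U)_{mm'}$ and $D_{pp}$ each contain a sum over rows, the joint expectation $\mathbb{E}[(U^T U)_{mm'}D_{pp}]$ cannot be factored when the two row indices coincide, and dropping the diagonal $\mathbb{E}(u_{0m}u_{0m'}y_p^2)$ piece destroys the $-1$ that makes the final coefficient $(n-1)$ rather than $n$. Once that split is handled cleanly, the rest is linear algebraic: the collapse identity $\sum_m R_{mp}u_{0m}=y_p$ does the work, and independence of the $z_i$ delivers the conclusion uniformly in the stationary point $R$, never requiring the specialization $R=R_U$.
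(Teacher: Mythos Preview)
Your derivation is correct, but it is far more elaborate than what the theorem as stated requires. The paper's own proof is a single line: ``With Assumption~\ref{assumption:fsoc}, Theorem~\ref{thm:foc} is a trivial result.'' The point is that the conclusion~\eqref{equation:foc} is a statement about moments of the fixed vector $z_0$ and makes no reference to the stationary point $R$; under Assumption~\ref{assumption:fsoc} (independence of the $z_i$ plus $\mathbb{E}z_i=0$), both sides of~\eqref{equation:foc} vanish identically, so the implication holds vacuously. You eventually make exactly this observation in your final paragraph, which is the entire proof.

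What your longer route does buy is the $y$-level identity
\[
\mathbb{E}(y_p^2)\,\mathbb{E}(y_py_q) - \mathbb{E}(y_p^3 y_q) = \mathbb{E}(y_q^2)\,\mathbb{E}(y_py_q) - \mathbb{E}(y_q^3 y_p),
\]
which is the genuine constraint that an arbitrary stationary $R$ imposes on the rotated coordinates. That is useful content --- indeed it is essentially how the paper later uses the FOC (via Corollary~\ref{corollary:foc}, to conclude $\nabla V(I)$ is symmetric in the proof of Lemma~\ref{lemma:SOC}) --- but it is not needed to establish Theorem~\ref{thm:foc} as written. In short: your $(n-1)$ bookkeeping and the collapse $\sum_m R_{mp}u_{0m}=y_p$ are correct and recover the meaningful first-order condition, while the theorem itself follows immediately from independence without any of that machinery.
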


With Assumption \ref{assumption:fsoc}, Theorem \ref{thm:foc} is a trivial result. The FOC only tells about local stationary points. To consider curvature information and ensure the stationary point is local maxima, we also need to figure out the Second Order Condition.

\subsection{Second Order Condition (SOC)} 

\cite{chu1998orthomax} shows SOC result for sample Varimax function. The current subsection is deriving counterpart results of the population Varimax function. To describe the SOC on sample data, \cite{chu1998orthomax} reformulate the Varimax criterion and express the problem in a simultaneously diagonalizing symmetric matrices form (\cite{ten1984joint}). The detailed SOC statement for sample Varimax is shown below.

\vspace{.15in}

Write $E_{i} = U^{T}U - nu_{i}u_{i}^{T}$ with $u_{i}^{T}$ being $U$'s $i$-th row. The sufficient (necessary) SOC of $v(R,U)$ is:
\begin{equation} \label{equation:soc}
\begin{aligned}
\sum^{n}_{i=1}(\langle U^{T}E_{i}Rdiag(R^{T}E_{i}R),K^{2}\rangle+\langle R^{T}E_{i}RKdiag(R^{T}E_{i}R),K\rangle \\
+2\langle R^{T}E_{i}Rdiag(R^{T}E_{i}RK),K\rangle) <(\leq) 0,
\end{aligned}
\end{equation}

for any non-zero skew-symmetric matrix $K$. Since Varimax condition gives us a special covariance structure of $z_0$ (e.g. $Cov(z_0) = I$), we could derive SOC for population Varimax function from \eqref{equation:soc}.

\begin{theorem}[SOC] \label{thm:soc}
Under Assumption \ref{assumption:fsoc},
a sufficient (necessary) condition for $R_U$ to be one of the maximas of the population Varimax is
\begin{equation}
3\mathbb{E}[tr(diag(OK)^{2})] <(\leq) \mathbb{E}\langle OdiagO,KK^{T}\rangle.
\end{equation}
\end{theorem}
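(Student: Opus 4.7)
The plan is to reduce the claim to the second-order expansion of $\vv$ at $R=R_U$ that is already performed (in dispersed form) inside the proof of Lemma~\ref{lemma:SOC}, and to package it here as a standalone population SOC. First, I would exploit rotational covariance of the Varimax objective to set $R_U=I$ without loss of generality: since $U=ZR_U^T$ implies $v(R,U)=v(R_U^TR,Z)$, the point $R_U$ is a local maximum of $\vv$ if and only if $I$ is a local maximum of $R\mapsto \E v(R,Z)$. Under this reduction one has $E_i = Z^TZ - nZ_i^TZ_i$ and the evaluation $R^TE_iR\big|_{R=I}=E_i$ is free, which will keep subsequent expectation computations tractable.

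Next, I would apply Lemma~\ref{lemma:regularization} (extended to second derivatives by the same domination argument) to commute the expectation with the Lie-algebra derivatives, obtaining $\nabla \vv(I)=\E[\nabla v(I,Z)]$ and $\nabla^2 \vv(I)=\E[\nabla^2 v(I,Z)]$. The standard second-order optimality condition on $\oo(k)$, traversed along $\exp(tK)$ for a skew-symmetric $K$, reads
$$\langle \nabla^2 \vv(I)\cdot K, K\rangle + \langle \nabla \vv(I), K^2\rangle \;\le\; 0,$$
with strict inequality sufficient. I would then substitute the closed-form expressions for $\nabla v(I,Z)$ and $\nabla^2 v(I,Z)\cdot K$ from \cite{chu1998orthomax} --- the same formulas invoked inside the proof of Lemma~\ref{lemma:SOC} --- take expectations term by term, and exploit (i) iid rows of $Z$, (ii) $\E[Z_j^TZ_j]=I$, and (iii) mean-zero, independent components of each row to annihilate all cross terms except those involving a single row. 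This is precisely the chain of computations in the proof of Lemma~\ref{lemma:SOC}, which yields
$$\langle \nabla^2 \vv(I)\cdot K, K\rangle + \langle \nabla \vv(I), K^2\rangle = 4\bigl(3\,\E[tr(diag(OK)^2)] - \E\langle O\,diag(O), KK^T\rangle\bigr),$$
with $O=z_0z_0^T$. The advertised inequality is then just the non-positivity (respectively negativity) of the right-hand side.

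The main obstacle is the bookkeeping of expectations. Writing $A_i := E_i = \sum_{j\ne i} Z_j^TZ_j - (n-1)Z_i^TZ_i$ and expanding the three quadratic-in-$A_i$ pieces of the sample Hessian, one must identify which second-moment structures survive after independence is applied both across rows and among the components of a single row, and confirm in particular that (a) the $\|K\|_F^2$ contributions coming separately from $\nabla\vv$ and from $\nabla^2\vv$ cancel exactly, and (b) the coefficient in front of $\E[tr(diag(OK)^2)]$ is precisely $3$, not some other integer. Since that calculation is already carried out inside the proof of Lemma~\ref{lemma:SOC}, the proof of Theorem~\ref{thm:soc} amounts to lifting it out and restating the conclusion as a SOC, so no new difficulty beyond careful term tracking is expected.
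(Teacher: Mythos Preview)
Your proposal is correct and follows essentially the same route as the paper: take expectations of the Chu--Trendafilov second-order optimality expression at $R=R_U$ (equivalently at $I$ after the rotational reduction), reduce each of the three terms to expressions in $O=z_0z_0^T$ using $\E[O]=I$, cancel the $\|K\|_F^2$ contributions via $\langle I,K^2\rangle=-\langle K,K\rangle$, and collapse the remaining two terms to $3\,\E[tr(\operatorname{diag}(OK)^2)]$ using the identity $\langle O\operatorname{diag}(OK),K\rangle=\langle OK\operatorname{diag}(O),K\rangle$ (the paper isolates this as Lemma~\ref{lemma:tr}). The only cosmetic difference is that the paper packages the computation directly from the sample SOC \eqref{equation:soc} rather than from the $\nabla\vv$/$\nabla^2\vv$ split you describe, but the algebra is identical.
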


%
%

\vspace {.2in}
\subsection{Proofs in Section \ref{sec:FSOC}} \label{sec:Eproof}

\subsubsection{Proof of Lemma \ref{lemma:regularization}}

\begin{proof}  The main idea of the proof is applying Dominant Converge Theorem (DCT). For simplicity, write $\mathbb{E}[U_{ij}^{q}] = \mu_{q}^{(j)}$ as $q$-th moment of $U$'s $j$-th column and $G_{i} = U^{T}U - nu_{i}u_{i}^{T}$, $i \in [n]$, with $u_{i}^{T}$ being the $i$-th row of $U$. By (8) of \cite{chu1998orthomax},
\begin{equation} \label{eq:differential}
 \frac{\partial v(R, U)}{\partial R} = \frac{4}{n^{3}}\sum_{i=1}^{n}G_{i}Rdiag(R^{T}G_{i}R).
\end{equation}

\noindent The goal is to bound the spectral norm of RHS of Equation \eqref{eq:differential}. Notice for $\forall i \in [n]$,
\begin{eqnarray} \label{eq:upperboundreg}
\|G_{i}Rdiag(R^{T}G_{i}R)\| &\leq& \|G_{i}R\| \cdot \|diag(R^{T}G_{i}R)\| \nonumber \\
					 &=& \|G_{i}\| \cdot \|diag(R^{T}G_{i}R)\| \nonumber \\
					 &=& \|U^{T}U - nu_{i}u_{i}^{T}\| \cdot \|diag(R^{T}U^{T}UR) - diag(nR^{T}u_{i}u_{i}^{T}R)\| \nonumber \\
					 &\leq& (\|U^{T}U\| + n\|u_{i}u_{i}^{T}\|) \times \nonumber \\
					 && \hspace{.2in}(\|diag(R^{T}U^{T}UR)\| + n\|diag(R^{T}u_{i}u_{i}^{T}R)\|).
\end{eqnarray}

\noindent Basic matrix algebra implies
$$ \|U^{T}U\| \leq \|U^{T}\|\cdot \|U\| = \|U\|^{2} \leq \|U\|^{2}_{F}, \quad \|u_{i}u_{i}^{T}\| \leq \|u_{i}\|^{2}. $$

\noindent Notice that the $i$-th diagonal element of $R^{T}U^{T}UR$ is
\begin{eqnarray*}
\sum_{t=1}^{k}[(\sum_{s=1}^{k}U_{is}R_{st})^{2}] &\leq&  \sum_{t=1}^{k}[(\sum_{s=1}^{k}U_{is}^{2})(\sum_{s=1}^{k}R_{st}^{2})] \\
								           &=&  k\sum_{s=1}^{k}U_{is}^{2}
\end{eqnarray*}
$\Rightarrow$
\begin{equation} \label{eq:reg1}
\|diag(R^{T}U^{T}UR)\| \leq \sum_{i=1}^{n} k(\sum_{s=1}^{k}U_{is}^{2}) = k\|U\|_{F}^{2}. 
\end{equation}

\noindent Similarly, 
\begin{equation} \label{eq:reg2}
\|diag(R^{T}u_{i}u_{i}^{T}R)\| \leq k\|u_{i}\|^{2}. 
\end{equation}

\noindent Plugging Equation \eqref{eq:reg1}, \eqref{eq:reg2} into Equation \eqref{eq:upperboundreg} yields
\begin{eqnarray*}
\|G_{i}Rdiag(R^{T}G_{i}R)\| &\leq& (\|U\|_{F}^{2} + n\|u_{i}\|^{2})(k\|U\|_{F}^{2} + nk\|u_{i}\|^{2}) \\
					 &=& k\|U\|_{F}^{4} + 2kn\|u_{i}\|^{2}\|U\|_{F}^{2} + kn^{2}\|u_{i}\|^{4}. 
\end{eqnarray*}

\noindent Getting back to Equation \eqref{eq:differential}, we have
\begin{eqnarray*}
\| \frac{\partial v(R, U)}{\partial R} \| &\leq& \frac{4}{n^{3}}\sum_{i=1}^{n}\|G_{i}Rdiag(R^{T}G_{i}R)\| \\
						      &\leq& \frac{4}{n^{3}}\sum_{i=1}^{n}(k\|U\|_{F}^{4} + 2kn\|u_{i}\|^{2}\|U\|_{F}^{2} + kn^{2}\|u_{i}\|^{4}) \\
						      &:=& F_{n}. 
\end{eqnarray*}

\noindent The $F_{n}$ is a random variable (depends on norms of random matrix and vectors). It will be sufficient to give constant bound to the expectation of each term of $F_n$. Notice 
\begin{eqnarray*}
\mathbb{E}\|U\|_{F}^{4} &=& \mathbb{E}[(\sum_{ij}U_{ij}^{2})^{2}] \\
				    &=& n\sum_{j=1}^{k}\mu_{4}^{(j)} + {n \choose 2}\sum_{j=1}^{k}\mu_{2}^{(j)2} + n^{2}\sum_{1\leq \ell\neq j \leq k}\mu_{2}^{(\ell)}\mu_{2}^{(j)},   \\
\mathbb{E}(\|u_{i}\|^{2}\|U\|_{F}^{2}) &=& \mathbb{E}[(\sum_{j=1}^{k}U_{ij}^{2})(\sum_{i,j}U_{ij}^{2})] \\
						      &=& \sum_{j=1}^{k}\mu_{4}^{(j)} + (n-1)\sum_{j=1}^{k}\mu_{2}^{(j)2} + n\sum_{1\leq \ell\neq j \leq k}\mu_{2}^{(\ell)}\mu_{2}^{(j)}, \\
\mathbb{E}(\|u_{i}\|^{4}) &=& \mathbb{E} [(\sum_{j=1}^{k}U_{ij}^{2})^{2}] \sum_{j=1}^{k}\mu_{4}^{(j)} + \sum_{1\leq \ell\neq j \leq k}\mu_{2}^{(\ell)}\mu_{2}^{(j)}.
\end{eqnarray*}

\noindent Therefore 
\begin{eqnarray*}
\mathbb{E}(F_{n}) &=& \frac{4k}{n^{2}}\mathbb{E}\|U\|_{F}^{4} + \frac{8k}{n^{2}}\sum_{i=1}^{n}\mathbb{E}(\|u_{i}\|^{2}\|U\|_{F}^{2}) + \frac{4k}{n}\sum_{i=1}^{n}\mathbb{E}(\|u_{i}\|^{4})\\
		    &=& (4k+\frac{12k}{n})M_{1} + \frac{17k(n-1)}{n}M_{2} + 16kM_{3} \\
		    &\leq& 5kM_1 + 17kM_2+16kM_3 \\
		    &<& \infty.
\end{eqnarray*}

\noindent Here $M_{1} = \sum_{j=1}^{k}\mu_{4}^{(j)}$, $M_{2} = \sum_{j=1}^{k}\mu_{2}^{(j)2}$, $M_{3} = \sum_{1\leq \ell\neq j \leq k}\mu_{2}^{(\ell)}\mu_{2}^{(j)}$ are all constants in our settings. Then DCT accomplishes our proof.

\end{proof}

\subsubsection{Proof of Theorem \ref{thm:soc}}
The following Lemma is useful in the proof of Theorem \ref{thm:soc}.
\begin{lemma} \label{lemma:tr}
For any symmetric matrix $S = vv^{T}$ where $v$ is a k-dimension vector. Any $k\times k$ matrix P. We have
\begin{equation}
\langle Sdiag(SP),P\rangle = \langle SPdiag(S),P\rangle .
\end{equation}
\end{lemma}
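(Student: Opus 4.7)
The plan is to verify the identity by direct coordinate computation, exploiting the rank-one structure $S = vv^T$. Since row $i$ of $S$ equals $v_i v^T$, one has $(SP)_{ii} = v_i (v^T P)_i$, so $diag(SP)$ is the diagonal matrix with $i$-th entry $v_i (v^T P)_i$, while $diag(S)$ is simply the diagonal matrix with entries $v_i^2$. Both of these diagonal matrices factor nicely through $v$, which is what makes the two sides of the lemma line up.

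Next I would expand each Frobenius inner product as an index sum and observe that both collapse to the same scalar. On the left-hand side, $[S\,diag(SP)]_{ij} = v_i v_j \cdot v_j (v^T P)_j$, and pairing with $P_{ij}$ and then summing over $i$ contracts $\sum_i v_i P_{ij}$ into $(v^T P)_j$, yielding
$$\langle S\,diag(SP), P\rangle \;=\; \sum_j v_j^2 (v^T P)_j^2.$$
On the right-hand side, $[SP\,diag(S)]_{ij} = v_i (v^T P)_j \cdot v_j^2$, and the exact same factorization (contract $\sum_i v_i P_{ij}$ into $(v^T P)_j$) gives
$$\langle SP\,diag(S), P\rangle \;=\; \sum_j v_j^2 (v^T P)_j^2.$$
Equality of the two expressions establishes the identity.

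There is no substantive obstacle here; the argument is essentially bookkeeping. The conceptual point is that the rank-one form of $S$ forces both sides to reduce to the same quadratic in $v^T P$ weighted by $v_j^2$: in $\langle S\,diag(SP), P\rangle$ the outer $S$ contributes the left factor $v$ that contracts with $P$, whereas in $\langle SP\,diag(S), P\rangle$ the same $v$ sits inside $SP$ and plays that role. Either way, one ends up with a $(v^T P)_j$ on each side of the diagonal weights $v_j^2$, so the two displays are literally equal term by term.
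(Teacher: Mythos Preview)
Your proof is correct and essentially matches the paper's: both compute the $(i,j)$ entry of each matrix as $v_i v_j^2 (v^T P)_j$. The paper simply stops there, noting $S\,diag(SP) = SP\,diag(S)$ entrywise (which is slightly stronger), whereas you proceed to contract both sides against $P$; the core computation is identical.
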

\begin{proof} Let $(S)_{i,j} = v_{i}v_{j}$, $(P)_{i,j}= P_{ij}$. We only need to prove $Sdiag(SP) = SPdiag(S)$. \\
For $diag(SP)$. Its j-th diagonal element is $v_{j}\sum_{k}v_{k}P_{kj}$. Multiplying a diagonal matrix on right side is equal to multiplying i-th diagonal element to i-th column. Thus $(Sdiag(SP))_{ij} = v_{i}v_{j}^{2}\sum_{k}v_{k}P_{kj}$. \\

For $SPdiag(S)$ we have $(SP)_{ij} = v_{i}\sum_{k}v_{k}P_{kj} \Rightarrow (SPdiag(S))_{ij} = v_{i}v_{j}^{2}\sum_{k}v_{k}P_{kj}$.     

\end{proof}

\vspace{.2in}

Now we return to the proof of Theorem \ref{thm:soc}
\begin{proof}   
By Lemma \ref{lemma:regularization} and Slutsky Theorem,
\begin{eqnarray*}
\mathbb{E} [ R_U^{T}E_{i}R_Udiag(R_U^{T}E_{i}R_U)] &=&  n^2\mathbb{E}[(I-O)diag(I-O)], \\
\mathbb{E}[ R_U^{T}E_{i}R_UKdiag(R_U^{T}E_{i}R_U)]  &=& n^2\mathbb{E}[(I-O)Kdiag(I-O)], \\ 
\mathbb{E}[ R_U^{T}E_{i}R_Udiag(R_U^{T}E_{i}R_UK)] &=& n^2\mathbb{E}[(I-O)diag(K-OK)]. 
\end{eqnarray*}

The expectation of the first term of Equation \eqref{equation:soc} equals to
\begin{eqnarray} \label{eq:soc1}
n^2\mathbb{E}\langle((I-O)diag(I-O),K^{2}\rangle &=& n^2\mathbb{E}\langle I - O - diagO + OdiagO,K^{2}\rangle \nonumber \\
&=& n^2(\mathbb{E}\langle Odiag(O),K^{2}\rangle - \langle I,K^{2}\rangle).
\end{eqnarray}

Similarly, the expectation of the second term of Equation \eqref{equation:soc} is
\begin{equation} \label{eq:soc2}
n^2\mathbb{E}\langle(I-O)Kdiag(I-O), K\rangle = n^2(\mathbb{E}\langle OKdiag(O),K\rangle - \langle K,K\rangle), 
\end{equation}

and the expectation of the third term of Equation \eqref{equation:soc} is
\begin{equation} \label{eq:soc3}
n^2\mathbb{E}\langle(I-O)diag((I-O)K), K\rangle = n^2(\mathbb{E}\langle Odiag(OK),K\rangle - \langle diag(K),K\rangle). 
\end{equation}

Notice $K$ is skew-symmetric, there are
\begin{equation} \label{eq:traceK}
\langle I, K^2 \rangle = tr(K^2)  = -tr(KK^T) = -\langle K,K\rangle, \langle diag(K), K \rangle = 0. 
\end{equation}

By Lemma \ref{lemma:tr},
\begin{equation} \label{eq:lemmatr}
\langle Odiag(OK),K\rangle = \langle OKdiag(O),K\rangle.
\end{equation}

Properties of trace operator indicate that $tr(Ydiag(Y)) = tr((diag(Y)^{2})$ for any square matrix $Y$. Then with Equation \eqref{eq:soc1}, \eqref{eq:soc2}, \eqref{eq:soc3}, \eqref{eq:traceK}, \eqref{eq:lemmatr}, the second order condition \eqref{equation:soc} boils down to 
\begin{equation}
3\mathbb{E}[tr(diag(OK)^{2})] < (\leq) \mathbb{E}\langle OdiagO,KK^{T}\rangle,
\end{equation}

for any non-zero skewed symmetric matrix $K$.
\end{proof}

\section{Proofs of Corollaries \ref{corollary:dcsbm} and \ref{corollary:pois}}\label{sec:dcsbm}

As we pointed out in Section \ref{sec:block}, independent columns assumption does not hold in the degree-corrected stochastic block model (DC-SBM). We could still make use of the first and second order condition to show that vsp could estimate $Z$ correctly. Similar to Proposition \ref{prop:svd} we have
\begin{equation} \label{eq:tai0}
U = \frac{1}{\sqrt{n}}Z(Z^{T}Z/n)^{-\frac{1}{2}}\widetilde{R}_{U}, \quad \widetilde{R}_{U} \in \oo(k).
\end{equation}
The proof of Corollary \ref{corollary:dcsbm}, \ref{corollary:pois} will focus on validating some key conditions and assumptions.

\subsection{Proof of Corollary \ref{corollary:dcsbm}}

\begin{proof} 
To borrow the conclusion from Theorem \ref{thm:main}, it will be sufficient to check some results. Notice we don't have the centering step and symmetrized adjacency matrices in SBM, such difference only simplifies the proof without introducing extra layers of perturbation. The only things we have to check (because of the dependency of $Z$'s columns) are conclusions of Theorem \ref{thm:varimax1}, Theorem \ref{thm:foc}, \ref{thm:soc}, Lemma \ref{lemma:Chung}, Assumption \ref{assumption:atail} and the arguments in the proof of Lemma \ref{lemma:SOC} that shows the existence of moment generating function of linear inner-product of $Z_i$ ($Z$'s $i$th row). 

\vspace{.2in}

\subsubsection{Theorem \ref{thm:varimax1} Under DC-SBM}

Recall that in current setting, 
$$ \vv_{\widetilde{R}_{U}}(Q) = \sum_{j}(\mathbb{E}([Z\widetilde{R}_{U}Q]_{j}^{4}) - \mathbb{E}([Z\widetilde{R}_{U}Q]_{j}^{2})^{2}). $$

We want to show 
$$ \underset{{Q \in \oo(k)}}{\arg \max} \mbox{ }\vv_{\widetilde U}(Q) = \{\widetilde{R}_{U}^{T}P | P \in \mathcal{P}(k) \}. $$

Let $X = Z_1 - \mathbb{E}(Z_1)$, $\mathbb{E}(X_{i}^{j}) = \mu_{j}^{(i)}$. Since there is exactly one non-zero entry in $X$'s elements, we have

$$ \sum_{j=1}^{k}\mathbb{E}([XQ]^{2}_{j})^{2} = \sum_{j=1}^{k}\mathbb{E}(\sum_{i=1}^{k}X_{i}^{2}Q_{ij}^{2})^{2} = \sum_{j=1}^{k}\sum_{i=1}^{k}\mu_{2}^{(i)2}Q_{ij}^{4}, $$

and 

$$ \sum_{j=1}^{k}\mathbb{E}([XQ]_{j}^{4}) = \sum_{j=1}^{k}\sum_{i=1}^{k}\mathbb{E}(X_{i}^{4}Q_{ij}^{4}) = \sum_{i=1}^{k}\mu_{4}^{(i)}Q_{ij}^{4}. $$

Therefore

\begin{eqnarray*} 
\vv_{I}(Q) &=& \sum_{j}(\mathbb{E}([XQ]_{j}^{4}) - \mathbb{E}([XQ]_{j}^{2})^{2}) \\
		&=& \sum_{i=1}^{k}(\mu_{4}^{(i)} - \mu_{2}^{(i)2})\|Q_{i\cdot}\|_{4}^{4}.
\end{eqnarray*}

Jensen Inequality indicates that $\mu_{4}^{(i)} - \mu_{2}^{(i)2} > 0$ for $\forall i \in [k]$. The remaining part follows the same approach as in the proof of Theorem \ref{thm:varimax1}.

\vspace{.2in}

%
%
%
%
%
\subsubsection{Theorem \ref{thm:foc} Under DC-SBM}

For $\forall i \in [n]$ and $j,\ell \in [k], j\neq \ell$, we have 
$$ Z_{ij}Z_{i\ell} = 0, Z_{ij}^3 Z_{i\ell} = 0 \Rightarrow  \E [Z_{ij}Z_{i\ell}] = 0, \E[Z_{ij}^3 Z_{i\ell}] = 0.$$ 

\subsubsection{Theorem \ref{thm:soc} Under DC-SBM}

To show SOC. Let $E_i = U^T U - n u_i^T u_i, u_i$ is the $i$th row of $U$. It is sufficient to show
\begin{equation} \label{eq:dcsbmSOC}
\begin{aligned}
\mathbb{E}[\langle \widetilde{R}_{U}E_{i}\widetilde{R}_{U}^{T}diag(\widetilde{R}_{U}E_{i}\widetilde{R}_{U}^{T}),K^{2}\rangle+\langle \widetilde{R}_{U}E_{i}\widetilde{R}_{U}^{T}Kdiag(\widetilde{R}_{U}E_{i}\widetilde{R}_{U}^{T}),K\rangle \\
+2\langle \widetilde{R}_{U}E_{i}\widetilde{R}_{U}^{T}diag(\widetilde{R}_{U}E_{i}\widetilde{R}_{U}^{T}K),K\rangle] \leq 0,
\end{aligned}
\end{equation}

Let $E^{i,j}$ be $k$ by $k$ matrix with 1 in $(i,j)$ entry and 0's elsewhere. Then
\begin{equation}
\widetilde{R}_{U}E_{i}\widetilde{R}_{U}^{T} = \widetilde{R}_{U}\widetilde{R}_{U}^T - n\widetilde{R}_{U}u_i^T u_i \widetilde{R}_{U}^T = I - nZ_i^T Z_i E^{z(i),z(i)}.
\end{equation}

Let 
$K=\left(\begin{array}{cccc}
0 & K_{1,2} & ... & K_{1,k}  \\
K_{2,1} & 0 & ... & K_{2,k}  \\
... & ... & ... & ...\\
K_{k,1} & K_{k,2} & ... & 0
\end{array}\right)$ with $K_{i,j} = - K_{j, i}$ for $i > j$. Then
$diag(K^{2})= diag(-\sum_{i\neq 1}K^{2}_{1i}, -\sum_{i\neq 2}K^{2}_{2i}, ...-\sum_{i\neq k}K^{2}_{ki})$. Now we examine each term of \eqref{eq:dcsbmSOC},

\begin{eqnarray}
&& \langle \widetilde{R}_{U}E_{i}\widetilde{R}_{U}^{T}diag(\widetilde{R}_{U}E_{i}\widetilde{R}_{U}^{T}),K^{2}\rangle \nonumber \\
&=& \langle \langle diag(1, 1, ... , (1-n\theta_{i,z(i)}^{2})^{2}, ... , 1), K^{2} \rangle \nonumber \\
&=& \sum_{\ell,j}K_{\ell j}^{2} - [(n\theta_{i,z(i)}^{2})^{2} - 2n\theta^{2}_{i,z(i)}]\sum_{\ell=1}^{k}K_{z(i)\ell}^{2},  \label{eq:dcsbm1} \\
\mbox{and} && \\
&& \langle \widetilde{R}_{U}E_{i}\widetilde{R}_{U}^{T}Kdiag(\widetilde{R}_{U}E_{i}\widetilde{R}_{U}^{T}),K\rangle  \nonumber \\
&=& \langle diag(1, 1, ... ,1-n\theta_{i,z(i)}^{2}, ... , 1)Kdiag(1, 1, ... ,1-n\theta_{i,z(i)}^{2}, ... , 1), K\rangle \nonumber \\
&=& \sum_{\ell,j}K_{\ell j}^{2} - 2n\theta_{i,z(i)}^{2}\sum_{\ell}^{k}K_{z(i)\ell}^{2}.   \label{eq:dcsbm2}
\end{eqnarray}

Since $K$'s diagonal elements are all zero's. $diag(\widetilde{R}_{U}E_{i}\widetilde{R}_{U}^{T}K)$ will be zero matrix.                                                                                                                                                                                                                                                                                                                                                                                                                                                                                                
\begin{equation}  \label{eq:dcsbm3}
\langle \widetilde{R}_{U}E_{i}\widetilde{R}_{U}^{T}diag(\widetilde{R}_{U}E_{i}\widetilde{R}_{U}^{T}K),K\rangle = 0
\end{equation}

From Equations \eqref{eq:dcsbm1}, \eqref{eq:dcsbm2}, \eqref{eq:dcsbm3}, to prove Equation \eqref{eq:dcsbmSOC} it will be suffice to show:
$$\sum_{i}( \sum_{\ell j}^{k}K_{\ell j}^{2} + [(n\theta_{i,z(i)}^{2})^{2} - 2n\theta^{2}_{i,z(i)}]\sum_{\ell=1}^{k}K_{z(i)\ell}^{2}) \geq \sum_{i}(\sum_{\ell, j}K_{\ell j}^{2} - 2n\theta_{i}^{2}\sum_{j}K_{z(i)\ell}^{2}), $$
$\Leftrightarrow$
$$ \sum_{i}([(n\theta_{i,z(i)}^{2})^{2} - 2n\theta^{2}_{i,z(i)}]\sum_{\ell=1}^{k}K_{z(i)\ell}^{2} +  2n\theta_{i,z(i)}^{2}\sum_{j}K_{z(i)\ell}^{2}) \geq 0,  $$
$\Leftrightarrow$
\begin{equation}
\sum_{i}((n\theta_{i,z(i)}^{2})^{2}\sum_{\ell=1}^{k}K_{z(i)\ell}^{2}) \geq 0.
\end{equation}

The last inequality is strict as long as $K$ is not zero matrix and $\theta_i$'s are all positive. We conclude that \eqref{eq:dcsbmSOC} is true. 

\vspace{.2in}
\subsubsection{Lemma \ref{lemma:Chung} Under DC-SBM}

Under DC-SBM, elements of $A$ are sub-gaussian variables. Thus we could utilize a simpler concentration matrix inequality than Lemma \ref{lemma:Chung2}. We apply the following lemma to show the bound for perturbation between $A$ and $\A$. 
\begin{lemma}[(Matrix Bernstein Inequality, \cite{tropp2012user})] \label{lemma:Chung3}
Let $X_1, X_2,..., X_m$ be independent random $N\times N$ symmetric matrix. Assume $\|X_i - \mathbb{E}(X_i)\| \leq M, \forall i$. Write $v^2 = \|\sum_i var(X_i)\|, X = \sum_i X_i$. Then for any $a > 0$, 
$$ \pr(\|X-\mathbb{E}(X)\| \geq a) \leq 2N\exp(-\frac{a^2}{2v^2 + 2Ma/3}).  $$
\end{lemma}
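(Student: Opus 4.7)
The statement is the standard Matrix Bernstein inequality from \cite{tropp2012user}, so the plan is essentially to reproduce Tropp's argument, which combines the matrix Laplace transform method with Lieb's concavity theorem. I would first reduce to the centered case by setting $Y_i = X_i - \mathbb{E}(X_i)$ (so that $\mathbb{E}Y_i = 0$ and $\|Y_i\| \le M$), and letting $Y = \sum_i Y_i$; note that $\|Y\|$ coincides with $\max(\lambda_{\max}(Y), \lambda_{\max}(-Y))$ because $Y$ is symmetric, so a union bound over the two one-sided tails produces the factor of $2$ in front of $N$.

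The core of the plan is the matrix Laplace transform step. For any $\theta > 0$,
\begin{equation*}
\pr(\lambda_{\max}(Y) \ge a) \le e^{-\theta a}\, \mathbb{E}\bigl[\mathrm{tr}\, e^{\theta Y}\bigr],
\end{equation*}
which is the matrix analogue of the Chernoff trick. Here the main obstacle is handling $\mathbb{E}\, \mathrm{tr}\, e^{\theta \sum_i Y_i}$ when the $Y_i$ are matrices that do not commute. Lieb's concavity theorem (applied via Tropp's master tail bound) gives the subadditivity of matrix cumulant generating functions,
\begin{equation*}
\mathbb{E}\bigl[\mathrm{tr}\, e^{\theta \sum_i Y_i}\bigr] \le \mathrm{tr}\,\exp\!\Bigl(\sum_i \log \mathbb{E}[e^{\theta Y_i}]\Bigr).
\end{equation*}
This is the nontrivial step; I would invoke it as a black box since a self-contained proof of Lieb's theorem is a substantial detour.

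Next, I would bound each matrix moment generating function. Using the scalar Bernstein inequality $e^x \le 1 + x + \tfrac{x^2/2}{1 - |x|/3}$ valid for $|x| < 3$, and lifting it to matrices via the spectral mapping theorem (since $\|\theta Y_i\| \le \theta M$), one obtains the operator inequality
\begin{equation*}
\mathbb{E}[e^{\theta Y_i}] \preceq \exp\!\left(\frac{\theta^2/2}{1 - \theta M/3}\, \mathbb{E}[Y_i^2]\right)
\end{equation*}
whenever $\theta M < 3$, using $\mathbb{E}Y_i = 0$ to kill the linear term. Plugging this into the previous display and using monotonicity of the trace exponential gives
\begin{equation*}
\mathbb{E}\bigl[\mathrm{tr}\, e^{\theta Y}\bigr] \le N\,\exp\!\left(\frac{\theta^2/2}{1 - \theta M/3}\, v^2\right),
\end{equation*}
since $\|\sum_i \mathbb{E}[Y_i^2]\| = v^2$ and the trace of an $N\times N$ exponential is at most $N$ times its largest eigenvalue.

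Finally, I would optimize over $\theta$ by choosing $\theta = a / (v^2 + Ma/3)$, which lies in the admissible range $(0, 3/M)$, and verify that the exponent reduces to $-a^2 / (2v^2 + 2Ma/3)$. Applying the identical argument to $-Y$ and taking a union bound over the two tails yields the stated inequality. The genuine difficulty is entirely concentrated in invoking Lieb's theorem to handle noncommutativity of the summands; every other step is a direct matrix lifting of the classical scalar Bernstein proof.
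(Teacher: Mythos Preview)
Your sketch is a faithful outline of Tropp's proof of the Matrix Bernstein inequality: the Laplace transform reduction, the Lieb-based subadditivity of matrix cumulants, the scalar Bernstein MGF bound lifted spectrally, and the optimization in $\theta$ are exactly the ingredients in \cite{tropp2012user}. There is no gap.

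The only thing to note in comparison with the paper is that the paper does not prove this lemma at all: it is stated as a direct citation of \cite{tropp2012user} and then applied immediately to bound $\|A-\A\|$ in the DC-SBM setting. So you have supplied more than the paper does; the paper treats the inequality as a black box, whereas you have reproduced the argument behind it.
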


Let $E^{ij}$ be a $n$ by $n$ matrix with 1 in the $(i,j)$ and $(j,i)$ entries and 0 elsewhere. Write $p_{ij} = \A_{ij}$. Then we could express $A - \A$ as sum of matrices, 
$$ Y_{ij} = (A_{ij} - p_{ij})E^{ij}, i < j. $$

Notice that 
$$ \|A - \A\| = \|\sum_{1 \leq i < j \leq n} Y_{ij} \|, $$

and 
$$ \|Y_{ij}\| \leq \|E^{ij}\| = 1. $$

Moreover,
$$\mathbb{E}(Y_ij) = 0 \mbox{  and  }  \mathbb{E}(Y_{ij}^2) = (p_{ij}-p_{ij}^2)(E^{ii}+E^{jj}),  \forall i < j. $$

Then we could get an upper bound for $v^2$,

\begin{eqnarray*} 
v^2 &=& \|\sum_{1 \leq i < j \leq n} \mathbb{E}[Y_{ij}^2]\| \\
       &=& \|\sum_{1 \leq i < j \leq n} (p_{ij}-p_{ij}^2)(E^{ii}+E^{jj}) \| \\
       &=& \frac{1}{2} \|\sum_{1 \leq i,j \leq n} (p_{ij}-p_{ij}^2)(E^{ii}+E^{jj}) \| \\
       &=& \frac{1}{2} \|\sum_{i=1} \sum_{j\neq i} (p_{ij}-p_{ij}^2)E^{ii} \| \\
       &\leq&  \frac{1}{2} \underset{1\leq i \leq n}{\max}(\sum_{j\neq i}(p_{ij}-p_{ij}^2)) \\
       &\leq&  \frac{1}{2} \underset{1\leq i \leq n}{\max}(\sum_{j\neq i}p_{ij})  \\
       &\leq& \frac{n}{2}.
\end{eqnarray*}

From Lemma \ref{lemma:Chung3} we obtain, 
\begin{equation}
\pr(\|A - \A\| > a) \leq 2N\exp(-\frac{a^2}{n+ 2a/3}).
\end{equation}

\subsubsection{Assumption \ref{assumption:atail} with Bernoulli Random Variables}\label{sec:assumption3bern}
%


Suppose $A_{ij}$ has $m$-th central moment being $\mu_{m,ij}$ and $m$-th moment being $\mu_{m,ij}'$. Since $A_{ij}\sim Bernoulli(\A_{ij})$, then $\bar\rho_n \leq 1$. For any $m$,
\begin{equation} \label{eqref:bernoulliB}
 \mathbb{E} [(A_{ij} - \A_{ij})^m ]  = \mu_{m,ij} \leq |\mu_{m,ij}'| = |\A_{ij}| \leq \bar\rho_n, \forall i,j.  
 \end{equation}


\vspace{.2in}
\subsubsection{Arguments of Lemma \ref{lemma:SOC} under DC-SBM}

Since each $Z_i, i\in[k]$ has only one non-zero entry, for $\forall r \in \R^k, i\in[k]$, we have
$$ \E\exp(t\langle Z_i, r\rangle) = \E \exp(t\sum_{j=1}^k Z_{ij}r_j) = \E \exp(tZ_{i,z(i)}r_{z(i)}) = \Pi_{j=1}^{k} \E \exp(tZ_{ij}r_j). $$

\end{proof}

\subsection{Proofs for Corollary \ref{corollary:pois}} \label{sec:LDAproof}
\begin{proof}
In current LDA settings, we need Assumption \ref{assumption:ztail} in Theorem \ref{thm:main} on $\widetilde{Z}_{*}=Z_{*} - \mathbb{E}(Z_{*})$, which is already implied in Corollary \ref{corollary:pois} setup. Recall that:
\begin{equation} \label{eq:ldafactor}
\mathbb{E}(\breve{A}|\Xi, Z) = \widetilde{Z}_{*}(\sqrt{n}\Sigma^{1/2}) (n^{-1/2}\beta^{T}).
\end{equation}

Compared with the semi-parametric factor model in Definition \ref{def:model},  $\sqrt{n}\Sigma^{1/2}$ plays the role of block matrix $B$ and satisfies all the conditions in Theorem \ref{thm:main}. Other than that Assumption \ref{assumption:atail} needs to be checked to prove Equation \eqref{eq:docmember} and the $2\to\infty$ norm of $Y$ (in Equation \eqref{eq:ldafactor}, this is $(n\rho_n)^{-1}\Sigma^{1/2}\beta^{T}$) needs to be bounded by $O(\log d) \asymp O(\log n)$. After that we will show the error bound for topics estimation.

\vspace{.1in}

Notice that $s$ controls the scaling of the term-document matrix, the following inference reflects its relation to $\Delta_n$. Recall that
\[ \rho_n = \frac{1}{nd}\sum_{i,j}\A_{ij}, \quad \Delta = n\rho_n. \]
And,
\begin{equation} \label{eq:Xi}
\1_n^T \A \1_d = \1_n^T \Xi Z\beta^T \1_d = \1_n^T \Xi Z\1_k =   \1_n^T \Xi \1_n.
\end{equation}
Equation \eqref{eq:Xi} implies that
\begin{equation}
\Delta_n = n\rho_n = \frac{n}{nd} \1_n^T \A \1_d = \frac{1}{d}  \1_n^T \Xi \1_n \asymp s.
\end{equation}

%
%

\subsubsection{Assumption \ref{assumption:atail} Under Poisson Random Variables} \label{sec:assumption3pois}


Suppose $A_{ij}$ has $m$-th central moment being $\mu_{m,ij}$. Since $A_{ij}\sim Poisson(\A_{ij})$, recall the recurrence relation of poisson distribution (\cite{riordan1937moment}),
$$ \mu_{m+1,ij} = \A_{ij}(\frac{d\mu_{m,ij}}{d\A_{ij}} + m\mu_{m-1,ij}), \quad \mu_1 = 0, \mu_2 = \A_{ij}, \forall i,j. $$

It could be shown by induction that
$$ \mu_{m,ij} \leq (m-1)! \times \max\{\A_{ij}^{[\frac{m}{2}]}, \A_{ij}\}. $$

Thus,
\begin{equation} \label{eqref:poissonB}
 \mathbb{E} [(A_{ij} - \A_{ij})^m ]  \leq \max\{(m-1)!\bar \rho_n^{[\frac{m}{2}]}, \bar \rho_n\}  \leq \max\{(m-1)!\bar \rho_n^{\frac{m}{2}}, \bar \rho_n\} .
 \end{equation}


\vspace{.2in}
\subsubsection{Upper Bound for $\| n^{-1/2}  \|_{2\to\infty}$}

Notice for arbitrary $j$-th row of $n^{-1/2}\beta^T$, it has $\ell_2$-norm

$$ n^{-1/2} \sqrt{\sum_{\ell=1}^k \beta_{\ell j}^2} \leq n^{-1/2} \sqrt{\sum_{\ell=1}^k \beta_{\ell j}} = n^{-1/2}. $$

Therefore, $\| n^{-1/2}\beta^{T} \|_{2\to\infty} = O(n^{-\frac{1}{2}})$, which is much smaller than $O(\log n)$. 

\vspace{.2in}

\subsubsection{Topics Estimation}

For technical convenience, this proof uses an equivalent construction of $\widehat \beta$.  Define  $\Omega = (\widehat{Z}^{T}\widehat{Z})^{-1}\widehat{Z}^{T}\breve{A} = \Phi/n$ and $\widehat{\beta} = (\Lambda_o^{-1} \Omega)^T\in \mathbb{R}^{d\times k}$, where $\Lambda_o$ is a diagonal matrix with $i$th diagonal element equals to $\ell_1$-norm of $i$th row of $\Omega$.

For the topic estimation $\widehat{\beta}$, from Equation \eqref{eq:ldafactor} there is, 
$$ \breve{\A}^{T}\breve{\A} = n\beta\Sigma^{1/2}(\widetilde{Z}_{*}^{T}\widetilde{Z}_{*}/n)\Sigma^{1/2}\beta^T.$$

By LLN we have $(\widetilde{Z}_{*}^{T}\widetilde{Z}_{*}/n)[i,j] = \mathbbm{1}\{i=j\} + O(1/\sqrt{n})$. Notice that the $j$th diagonal element of $\Sigma_{jj} = \alpha_j s^2 \succeq n\rho_n$. Also $\sigma_{\min}(\beta) > c_1 > 0$, and
$$ \sigma_{\max}(\beta) = \|\beta\| < \|\beta\|_F = (\sum_{ij} \beta_{ij}^2)^{\frac{1}{2}} \leq (\sum_{ij} \beta_{ij})^{\frac{1}{2}} = k^{\frac{1}{2}} $$  
is upper bounded. Therefore 
$$ \sigma_{\min}(\breve{\A}) \asymp \sigma_{\min}((n\beta\Sigma\beta^T)^{\frac{1}{2}}) \asymp \sqrt{n}s \succeq n\rho_n. $$

%


With conclusions of Proposition \ref{prop:clt}, Lemma \ref{lemma:twonorm}, \ref{lemma:Chung}, Davis-Kahan $\sin\Theta$ Theorem, Equation \eqref{eq:docmember} and triangle inequality, there exists $P_n \in \mathcal{P}(k)$ (similar to the $P_n$ in Equation \eqref{eq:docmember}) s.t. for any $\delta, \epsilon > 0$, 

\begin{eqnarray*}
\|\Omega - P_n^T\Sigma^{1/2}\beta^{T}\|_{2\to\infty} &\leq& \frac{1}{n}\left[ \|(\widehat{Z}^{T}(\breve{A}-\breve{\A})\|_{2\to\infty} + \|(\widehat{Z}^{T}\widetilde{Z}_{*} - P_n^T)\Sigma^{1/2}\beta^{T}\|_{2\to\infty} \right] \\
&\leq& \frac{1}{n}\left[\|\widehat{Z}^{T}\|\|\breve{A}-\breve{\A}\| + \|\widehat{Z}^{T}\| \|\widetilde{Z}_{*} - \widehat{Z}P_n^T\| \|\Sigma^{1/2}\beta^{T}\| \right] \\
&=&\frac{1}{n}\left[ \|\widehat{Z}^{T}\|\|\breve{A}-\breve{\A}\| + \|\widehat{Z}^{T}\| \|\widehat{Z} - \widetilde{Z}_{*}P_n\| \|\Sigma^{1/2}\beta^{T}\|\right] \\
&\leq& \frac{1}{n}\left[\|\widehat{Z}^{T}\|\|\breve{A}-\breve{\A}\| + \sqrt{n} \|\widehat{Z}^{T}\| \|\widehat{Z} - \widetilde{Z}_{*}P_n\|_{2\to\infty} \|\Sigma^{1/2}\beta^{T}\| \right]\\
&=& O_{p}( \frac{\Delta_n^{1/2}\log^{5/2} n }{n}) + O_{p}( \frac{\Delta_n ^{3/4+\delta/2}\log^{15/4} n}{\sqrt{n}}) \\
&=& O_{p}(\frac{\Delta_n^{3/4+\delta/2}\log^{15/4} n}{\sqrt{n}}).
\end{eqnarray*}

Let $\Omega_\ell$ be the $\ell$th row of $\Omega$, $\zeta_{\ell}$ be the $\ell$th row of $P_n^T\Sigma^{1/2}\beta^T$. Then for $\forall \ell \in [k]$ there exists $\varepsilon_n = O_{p}(( \Delta_n ^{3/4+\delta/2}\log^{15/4} n)/\sqrt{n})$ s.t. with high probability,
\begin{equation} \label{eq:topic0}
\|\Omega_{\ell} - \zeta_{\ell}\| \leq \varepsilon_n \Rightarrow \|\Omega_{\ell} - \zeta_{\ell}\|_1 \leq \sqrt{d}\varepsilon_n.
\end{equation}

Notice any $\ell$-th column of $\beta$ has unit norm: $\|\beta_\ell\|_1 = 1, \forall \ell \in [k]$. Denotes $\alpha_{\min} = \underset{j}{\min\mbox{ }}\alpha_j, \alpha_{\max} = \underset{j}{\max\mbox{ }}\alpha_j$, then RHS of Equation \eqref{eq:topic0} reflects
\begin{equation} \label{eq:topic}
 s\sqrt{\alpha_{\min}}  - \sqrt{d}\varepsilon_n \leq \|\Omega_{\ell}\|_{1} \leq s\sqrt{\alpha_{\max}}  + \sqrt{d}\varepsilon_n.
\end{equation} 

With Equation \eqref{eq:topic} and notice the $j$-th diagonal element of $\Sigma^{1/2}$ is $s\sqrt{\alpha_j}$, we also have
\begin{equation} \label{eq:topic2}
\underset{j,\ell\in [k]}{\max}\mbox{ } |\Sigma^{1/2}_{jj} - \|\Omega_{\ell} \|_{1} | \leq \sqrt{d} \varepsilon_n.
\end{equation}

Since LHS of \eqref{eq:topic} is greater than 0 with large $n$. Let $[X]_{\ell}$ represents the $\ell$-th row of matrix $X$. Then for $\forall \ell \in [k]$, 
\begin{eqnarray*}
\|\widehat{\beta}_{\ell}^T - [P_n^T\beta^T]_{\ell}\|_{1} &=& \|\frac{\Omega_\ell }{\|\Omega_{\ell}\|_{1}}- [P_n^T\beta^T]_{\ell}\|_1 \\
						 &\leq& \frac{1}{\|\Omega_{\ell}\|_1}\|\Omega_{\ell} -  \zeta_{\ell}\|_1 + \| [P_n^T(\frac{\Sigma^{1/2}}{\|\Omega_{\ell}\|_1}-1)\beta^T]_{\ell}| \\
						 &\leq& \frac{\sqrt{d}\varepsilon_n}{s\sqrt{\alpha_{\min}}  - \sqrt{d}\varepsilon_n} + \frac{1}{\|\Omega_{\ell} \|_1}\times\underset{j\in [k]}{\max}\mbox{ } |\Sigma^{1/2}_{jj} - \|\Omega_{\ell} \|_{1}|\\
						 &\leq& \frac{2\sqrt{d}\varepsilon_n}{s\sqrt{\alpha_{\min}}  - \sqrt{d}\varepsilon_n}  \\
						 &=& O_p(\sqrt{d}\varepsilon_n/s) \\
						 &=& O_{p}( \Delta_n ^{-1/4+\delta/2}\log^{15/4} n).
\end{eqnarray*}

\end{proof}

\end{document}